\documentclass[11pt,a4paper]{article}
\gdef\@fpheader{}
\pdfoutput=1

\usepackage{jheppub}
\usepackage[T1]{fontenc}
\usepackage{lmodern}
\usepackage{amsthm}
\usepackage{braket}

\newtheorem{definition}{Definition}[section]
\newtheorem{theorem}{Theorem}[section]
\newtheorem{lemma}{Lemma}[section]
\newtheorem{corollary}{Corollary}[section]
\DeclareMathOperator{\Tr}{Tr}
\title{A sharp bound on spacetime distance from quantum entanglement}

\author[a,1]{Zhi-Wei Wang,\note{Zhi-Wei Wang and Arshid Shabir contributed equally to this work.}}
\author[b,1]{Arshid Shabir,}
\author[b,c,d,e]{Mir Faizal,}
\author[f]{and Samuel L. Braunstein}

\affiliation[a]{College of Physics, Jilin University,\\
Changchun 130012, Jilin, People's Republic of China}
\affiliation[b]{Canadian Quantum Research Center,\\
460 Doyle Ave 106, Kelowna, BC V1Y 0C2, Canada}
\affiliation[c]{Irving K. Barber School of Arts and Sciences, University of British Columbia Okanagan,\\
Kelowna, BC V1V 1V7, Canada}
\affiliation[d]{Department of Mathematical Sciences, Durham University,\\
Lower Mountjoy, Stockton Road, Durham DH1 3LE, United Kingdom}
\affiliation[e]{Computational Mathematics Group, Faculty of Sciences, Hasselt University,\\
Agoralaan Gebouw D, Diepenbeek 3590, Belgium}
\affiliation[f]{Computer Science, University of York,\\
Deramore Lane, York YO10 5GH, United Kingdom}

\emailAdd{zhiweiwang.phy@gmail.com}
\emailAdd{aslone186@gmail.com}
\emailAdd{mirfaizalmir@googlemail.com}
\emailAdd{sam.braunstein@york.ac.uk}

\abstract{Ryu-Takayanagi established how boundary entanglement encodes bulk
area. We provide the metric counterpart: boundary mutual information
imposes a rigorous lower bound on bulk geodesic separation that
diverges logarithmically as correlations vanish. A multiscale
iteration promotes this local inequality to a global obstruction to
bulk connectivity. For parallel strips in AdS$_5$/CFT$_4$, the bound
necessitates a quantum resolution of the classical mutual-information
transition and fixes the asymptotic growth of geodesic distance.}

\keywords{AdS-CFT Correspondence, Gauge-Gravity Correspondence, Models of Quantum Gravity, Quantum Information}

\hypersetup{
  pdftitle={A sharp bound on spacetime distance from quantum entanglement},
  pdfauthor={Zhi-Wei Wang, Arshid Shabir, Mir Faizal, and Samuel L. Braunstein},
  pdfsubject={A mutual-information lower bound on renormalized bulk geodesic length},
  pdfkeywords={AdS-CFT Correspondence; Gauge-Gravity Correspondence; Models of Quantum Gravity; Quantum Information}
}

\begin{document}
\maketitle
\flushbottom

\section{Introduction}\label{sec:introduction}

If spacetime is emergent, its metric data must
be recoverable from nongeometric quantum degrees of freedom.
Entanglement is widely understood to encode bulk connectivity and
geometry~\cite{VanRaamsdonk2020Science,Qi2018NatPhys,Jacobson2016PRL,vanRaamsdonk2010,Maldacena1998},
and holography makes this connection precise for area: the
Ryu-Takayanagi formula~\cite{RyuTakayanagi2006} and its quantum
extremal-surface
extension~\cite{EngelhardtWall2015,Dong2016NatCommun}
relate boundary entropy to the area of a bulk extremal surface. Yet
area does not determine point-to-point separation. Despite important
relations between mixed-state correlations and bulk structures such as
entanglement-wedge cross
sections~\cite{UmemotoTakayanagi2018NatPhys,Caputa2019PRL,Tamaoka2019PRL,Suzuki2019PRL},
no comparably direct inequality has quantified how boundary mutual
information constrains bulk geodesic distance.

This gap is not only conceptual.  Without such an
inequality, one cannot quantify how much boundary correlation is
required to support a connected bulk geometry, nor determine how
spatial separation grows when that correlation is reduced.
Entropy-area relations constrain the areas of extremal surfaces, but
they do not by themselves fix the distances between points.  The
emergence of geometry from entanglement is therefore incomplete at the
metric level: it explains how entanglement determines area, but not
yet how correlation constrains distance.

We derive such an information-distance relation. Boundary mutual
information yields an explicit lower bound on the renormalized bulk
geodesic distance between boundary-accessible probes. Pinsker's
inequality bounds connected measurement correlations in terms of mutual
information, while the semiclassical decay of heavy-operator two-point
functions converts this correlation bound into a distance bound. As the
mutual information vanishes, the resulting lower bound diverges
logarithmically. A multiscale iteration then promotes the pairwise
inequality to a global obstruction: sufficiently weak correlations
across neighboring boundary cells force large bulk separation and
preclude uniformly connected bulk geometry. The complete derivations,
assumptions, and constants are provided in appendices~\ref{app:assumptions}-\ref{app:ads5}.

For parallel strips in the vacuum of a four-dimensional conformal field
theory, the bound has three consequences. At the classical
Ryu-Takayanagi transition, vanishing mutual information would force
divergent separation; finite bulk distance must therefore be supported
by subleading quantum correlations. At large boundary separation, the
bound reproduces the logarithmic scaling of the semiclassical geodesic
distance. For bounded regions and an optimal probe dimension, it
becomes tight and saturates that distance. These results
extend the Ryu-Takayanagi dictionary from entropy-area to
information-distance, turning ``entanglement builds geometry'' from a
qualitative paradigm into a quantitative theorem.

\section{From correlations to geometry}\label{sec:correlations-geometry}

Entanglement first entered the
holographic dictionary through
area~\cite{RyuTakayanagi2006}. Distance poses a sharper question:
how much boundary information is required to keep two bulk probes close?
The answer follows from two controlled relations. Mutual information
bounds every bounded connected correlator, while the heavy-probe
holographic dictionary converts exponential correlator decay into
renormalised geodesic length. Their combination turns boundary
information into a quantitative obstruction to short bulk distance.

Let $(\Sigma,h)$ be a compact $d$-dimensional boundary manifold and let
$A,B\subset\Sigma$ be disjoint regions at strictly positive separation.
The bipartite state is realised either in a regulated tensor
factorisation
$\mathcal H^{(a)}\simeq
\mathcal H_A^{(a)}\otimes
\mathcal H_B^{(a)}\otimes
\mathcal H_{(AB)^c}^{(a)}$
with trace-class reductions, or algebraically by commuting von Neumann
subalgebras satisfying the split property. Mutual information is measured
in bits,
\begin{equation}
\label{eq:MI}
I(A\!:\!B)_\rho
=
(\ln2)^{-1}
D\!\left(
\rho_{AB}\middle\|\rho_A\otimes\rho_B
\right).
\end{equation}
Observables are restricted to a holographic code subspace
$\mathcal H_{\mathrm{code}}$ with projector $\Pi_{\mathrm{code}}$.
A heavy scalar primary enters through the code-compressed,
mean-subtracted insertion
\begin{equation}
\label{eq:code}
\widetilde O^{(0)}_{\Delta,\rho}(x)
:=
\Pi_{\mathrm{code}}
\left(
O_\Delta(x)
-
\langle O_\Delta(x)\rangle_\rho \mathbf 1
\right)
\Pi_{\mathrm{code}},
\end{equation}
with the uniform bound
$\|\widetilde O^{(0)}_{\Delta,\rho}(x)\|_\infty\leq B_\Delta$ on a
separated insertion domain $\mathsf D\subset\Sigma\times\Sigma$. The
semiclassical regime is specified by
$\epsilon_{\mathrm{grav}}
=G_N/\ell_{\mathrm{AdS}}^{d-1}\ll1$,
{$\epsilon_{\mathrm{str}}
=\ell_s^2/\ell_{\mathrm{AdS}}^2\ll1$ (with $\ell_s$ the string length),
$\Delta^2/c_{\mathrm{eff}}\ll1$ (with
$c_{\mathrm{eff}}\propto\ell_{\mathrm{AdS}}^{d-1}/G_N$ the
effective central charge), and the code-subspace truncation error
$\epsilon_{\mathrm{code}}\ll1$.}
% $\epsilon_{\mathrm{str}}
% =\ell_s^2/\ell_{\mathrm{AdS}}^2\ll1$,
% $\Delta^2/c_{\mathrm{eff}}\ll1$, and
% $\epsilon_{\mathrm{code}}\ll1$.
These conditions keep the total
multiplicative error in the geodesic dictionary below unity,
$\epsilon^\star<1$. Appendix~\ref{app:assumptions} gives the separation, regularity,
code-subspace, and normalisation assumptions.
Restricting the operator before taking its norm is
essential: local continuum fields are unbounded, whereas their action
on a regulated low-energy code subspace admits the uniform control
required by the information inequality.

The information-theoretic input is independent of holography. For
observables $O_A$ and $O_B$ of operator norm at most unity, quantum
Pinsker gives \cite{Wolf2008}
\begin{equation}
\label{eq:Pinsker}
\left|
\langle O_AO_B\rangle
-
\langle O_A\rangle\langle O_B\rangle
\right|
\leq
\sqrt{2\ln2\, I(A\!:\!B)} .
\end{equation}
Thus mutual information simultaneously controls all bounded
correlations across the bipartition: small $I(A\!:\!B)$ enforces
approximate factorisation for every such pair of observables. The proof
is given in Appendix~\ref{app:pinsker}.
Unlike a bound on a selected two-point function,
Eq.~\eqref{eq:Pinsker} is uniform over the entire bounded operator
algebra. It therefore supplies a state-dependent correlation budget
that no admissible probe can exceed.

The holographic input supplies the metric dependence. In a
semiclassical bulk, a heavy scalar two-point function is governed by the
renormalised length $L_{\mathrm{ren}}(x,y)$ of the corresponding bulk
geodesic,
\begin{equation}
\label{eq:WKB}
\left\langle
\widetilde O^{(0)}_{\Delta,\rho}(x)\,
\widetilde O^{(0)}_{\Delta,\rho}(y)
\right\rangle_\rho
=
N_\Delta\,
e^{-\Delta L_{\mathrm{ren}}(x,y)}
\left(1+\epsilon_{\mathrm{tot}}\right),
\end{equation}
with $|\epsilon_{\mathrm{tot}}|\leq\epsilon^\star<1$.
Here $N_\Delta$ is state independent. The exponential dependence makes
geodesic length accessible to the universal correlation constraint in
Eq.~\eqref{eq:Pinsker}. Appendix~\ref{app:heavy-probe} derives Eq.~\eqref{eq:WKB} with all errors
controlled.
State dependence enters through the correlator and the
renormalised geodesic, while the probe normalisation and code-subspace
bound remain fixed calibration data. A change of operator convention
therefore cannot be mistaken for a change of geometry.

\section{The metric-from-information bound}\label{sec:mfi-bound}

Define the dimensionless
probe calibration
\begin{equation}
\label{eq:kappa}
\kappa_\Delta
:=
\frac{|N_\Delta|}{B_\Delta^2}.
\end{equation}

\textbf{Theorem 1} (Metric-from-information bound).
\emph{In a semiclassical holographic code subspace, boundary insertions
$x\in A$ and $y\in B$ with $I(A\!:\!B)>0$ obey}
\begin{equation}
\label{eq:MfI}
L_{\mathrm{ren}}(x,y)
\geq
\frac{1}{2\Delta}
\ln\!\left(
\frac{
\kappa_\Delta^2(1-\epsilon^\star)^2
}{
2\ln2\, I(A\!:\!B)
}
\right).
\end{equation}

Equation~\eqref{eq:MfI} is a metric constraint rather than a metric
reconstruction: it excludes any semiclassical geometry in which too
little boundary information supports too short a bulk geodesic. Each
factor-of-$e$ decrease in mutual information increases the minimum
renormalised length by $1/(2\Delta)$. Consequently, vanishing mutual
information forces the allowed bulk separation to diverge
logarithmically (Fig.~\ref{fig:schematic}). Conversely, a
proposed finite geodesic demands a strictly positive information budget
between the boundary regions supporting its endpoint probes. Boundary
correlations can therefore rule out candidate semiclassical metrics
without requiring their point-by-point reconstruction.

The bound introduces neither a correlation length nor a mass gap. Its
only dynamical input is the geodesic-exponentiated two-point
function~\eqref{eq:WKB}; the information-theoretic dependence is
nonperturbative. Appendix~\ref{app:mfi}
proves monotonic sharpening as $I(A\!:\!B)\to0$ and gives the
code-subspace error analysis. For code subspaces with uniformly bounded
simple-probe matrix elements, including the standard ETH regime for
simple single-trace primaries, $B_\Delta=O(1)$ and $\kappa_\Delta$ is
an order-one calibration. With a UV regulator, cutoff dependence in
$B_\Delta$ shifts the bound only by the additive term
$\Delta^{-1}\ln\kappa_\Delta$; neither the universal rate
$(2\Delta)^{-1}$ nor the asymptotic scaling changes.

The essential inequality is immediate. Applying
Eq.~\eqref{eq:Pinsker} to
$\widehat O=\widetilde O^{(0)}_{\Delta,\rho}/B_\Delta$ gives
\begin{equation}
\label{eq:chain}
\frac{
|N_\Delta|\,
e^{-\Delta L_{\mathrm{ren}}(x,y)}\,
|1+\epsilon_{\mathrm{tot}}|
}{
B_\Delta^2
}
\leq
\sqrt{2\ln2\, I(A\!:\!B)} .
\end{equation}
Because $|1+\epsilon_{\mathrm{tot}}|\geq1-\epsilon^\star>0$, taking
logarithms yields Eq.~\eqref{eq:MfI}. The square root in
Pinsker fixes the universal factor $1/(2\Delta)$; gravitational,
stringy, and code-subspace corrections alter the finite calibration but
not the logarithmic dependence on mutual information.

\begin{figure}[t]
\centering
\includegraphics[width=0.92\textwidth]{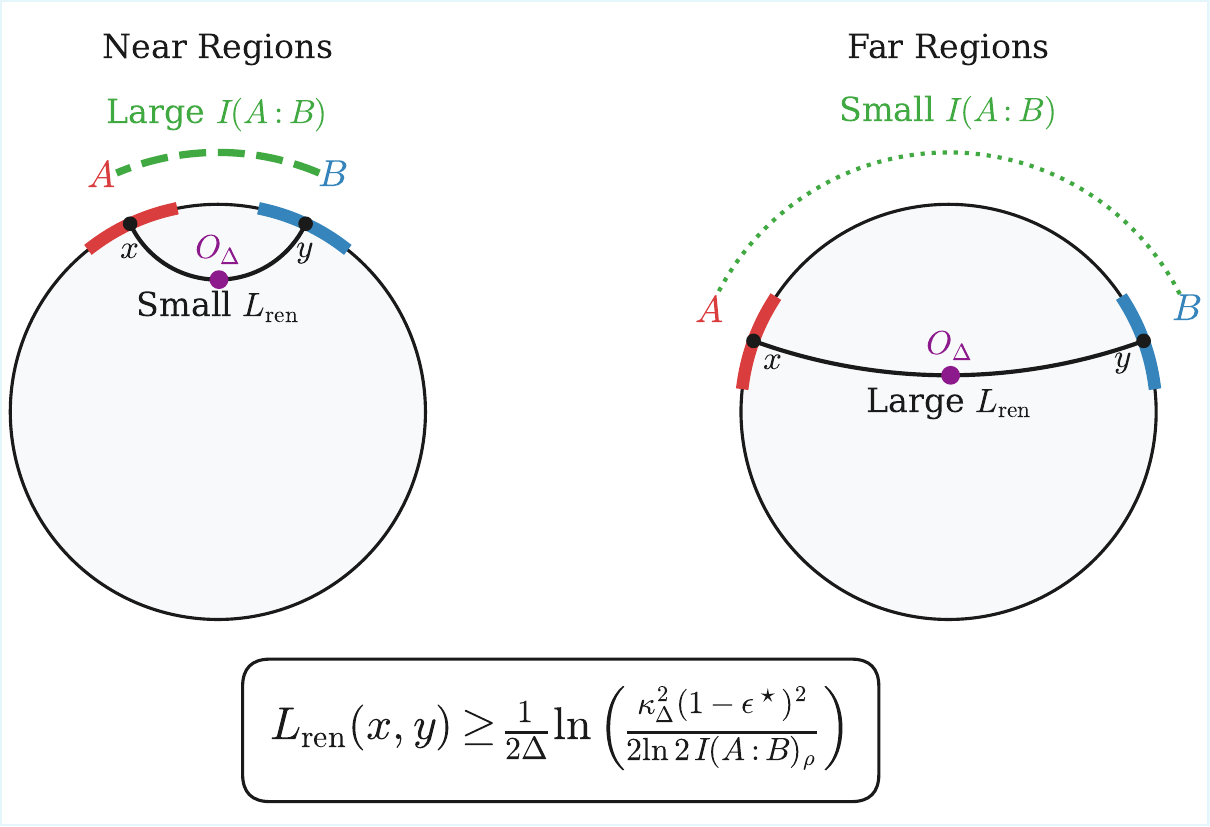}
\caption{\textbf{The metric-from-information bound.} Left: when boundary
regions $A$ and $B$ are nearby, so that $I(A\!:\!B)$ is large, the MfI
bound permits a short bulk geodesic between the corresponding points
$x$ and $y$. Right: when $A$ and $B$ are far apart, so that $I(A\!:\!B)$
is small, the bound forces the bulk geodesic to be long. A heavy probe
$O_\Delta$ propagates along the bulk geodesic of renormalised length
$L_{\mathrm{ren}}(x,y)$.}
\label{fig:schematic}
\end{figure}

\section{From one distance to bulk connectivity}\label{sec:multiscale}

A pairwise bound
constrains one geodesic; bulk connectivity requires compatible distances
at every resolution. Two complementary consequences make this passage
precise: a local information metric controls infinitesimal geometric
response, and a multiscale hierarchy converts edgewise information into
a bound on bulk diameter.

For a smooth family $\rho(\lambda)$, the quadratic expansion of relative
entropy defines the Bogoliubov-Kubo-Mori (BKM) information metric
$g(\lambda)$. For sufficiently small $|\delta|_{g(\lambda)}$,
Appendix~\ref{app:mfi} establishes
\begin{equation}
\label{eq:BKM}
\frac{1}{4}|\delta|_{g(\lambda)}^2
\leq
D\!\left(
\rho(\lambda)\middle\|\rho(\lambda+\delta)
\right)
\leq
\frac{3}{4}|\delta|_{g(\lambda)}^2
\end{equation}
with a remainder controlled by the third derivative of the state family.
Equation~\eqref{eq:BKM} limits infinitesimal geometric
response under boundary deformations, whereas Eq.~\eqref{eq:MfI}
constrains finite long-range separation. The two inequalities probe
different regimes of the same information geometry: BKM curvature
controls the local tangent structure of state space, while mutual
information constrains nonlocal separation in the semiclassical bulk.

At resolution $k$, tile the boundary by cells $\{A_i^{(k)}\}$ forming a
connected adjacency graph $G_k$ of graph diameter $D_k$. The MfI bound
assigns a minimum length to every adjacent pair. Controlled chaining
along shortest paths then promotes these local constraints to a global
one. The alignment error $\delta$ measures the failure of
consecutive edge estimates to lie on a common bulk geodesic; it is the
only loss in passing from the pairwise inequality to the diameter
estimate.

\textbf{Theorem 2} (Multiscale diameter bound).
\emph{Suppose that at each scale $k$ the MfI bound (Theorem~1) applies
to every pair of neighbouring cells, and that the pairwise distance
estimates can be chained along paths with a controlled alignment error
$\delta\geq0$ per step. Then}
\begin{equation}
\label{eq:Diam}
\mathrm{Diam}_\rho
\geq
\max_{0\leq k\leq K}
\left(
D_k\,\ell^{\mathrm{MI}}_k\!\left(I_k(\rho)\right)
-
2(D_k-1)\delta
\right),
\end{equation}
\emph{where
$I_k(\rho):=\max_{(i,j)\in E_k}I^{(k)}_{ij}(\rho)$
is the worst-case neighbour mutual information at scale $k$ and
$\ell^{\mathrm{MI}}_k$ converts mutual information to edge length via
Theorem~1.}

The maximum in Eq.~\eqref{eq:Diam} makes connectivity a scale-by-scale
constraint. If neighbouring cells lack sufficient mutual information at
even one resolution, the minimum allowed bulk diameter grows. This
structure parallels tensor-network and programmable-geometry models in
which correlations encode effective distance rather than merely living
on a fixed metric
background~\cite{Steinberg2023NatCommun,Periwal2021Nature}.
A well-correlated coarse partition cannot compensate for an
information-poor finer partition: the most restrictive resolution
controls the diameter. Connectivity is consequently a hierarchy of
correlation constraints, not a property of one preferred bipartition.

\section{Holographic tests: parallel strips and QES smoothing}\label{sec:strips}

The most informative test of the MfI bound is
at a point where classical holography makes an abrupt connectivity
decision. At an entanglement-wedge transition, classical RT mutual
information vanishes while the background geodesic remains finite.
Equation~\eqref{eq:MfI} makes these two statements incompatible and
therefore exposes the quantum corrections that must resolve the
classical transition.

For disjoint $A$ and $B$, connected and disconnected extremal surfaces
compete. The quantum extremal-surface (QES) prescription includes both
area and bulk entropy, whereas RT retains only area. 
{If the bulk-entropy difference is uniformly bounded,
$|\Delta S_{\mathrm{bulk}}|\leq S^\star$, the prescriptions can
disagree only in a window of width $4G_N S^\star$}
% If $|\Delta S_{\mathrm{bulk}}|\leq S^\star$, the prescriptions can
% disagree only in an $O(G_N)$ window
where the classical area difference is
comparable to
$4G_N\Delta S_{\mathrm{bulk}}$~\cite{RyuTakayanagi2006,Dong2016NatCommun,DongHarlowWall2016}.
Outside this window, the RT connectivity decision is stable
(Appendix~\ref{app:rt-qes}).
Inside it, the nominally subleading bulk entropy determines
which saddle governs the generalized entropy. The transition is
therefore an especially sensitive test of compatibility between
information-derived distance and semiclassical geometry.

Consider two parallel strips of width $\ell$ separated by $s$ in the
large-$N$, strongly coupled vacuum of $\mathcal N=4$ $SU(N)$ super
Yang-Mills theory. The dual is pure $\mathrm{AdS}_5$, for which the
heavy-primary two-point function gives
$L_{\mathrm{ren}}(x,y)=2\ln|x-y|$.
The RT transition occurs at $s_c=(\sqrt3-1)\ell$.
The wedge is connected and $I(A\!:\!B)>0$ for $s<s_c$, while classical
RT gives $I(A\!:\!B)=0$ for $s>s_c$. Approaching the transition from the
connected side,
\begin{equation}
\label{eq:MI-strip}
I(A\!:\!B)
\sim
\Gamma_N
\frac{3\sqrt3\,(s_c-s)}{\ell^3},
\qquad
s\to s_c^- ,
\end{equation}
where
{$\Gamma_N :=c_1\ell_{\mathrm{AdS}}^3L_2L_3/(4G_5\ln2)
=\Theta(N^2)$,
$L_2 L_3$ is the regulated transverse area of the strips, and
$c_1
=4\pi^{3/2}\Gamma(2/3)^3/\Gamma(1/6)^3
\approx0.321$.}
% $\Gamma_N :=c_1\ell_{\mathrm{AdS}}^3L_2L_3/(4G_5\ln2) =\Theta(N^2)$
% with $c_1 =4\pi^{3/2}\Gamma(2/3)^3/\Gamma(1/6)^3 \approx0.321$.
%
Substitution into Eq.~\eqref{eq:MfI} yields
\begin{equation}
\label{eq:log-div}
L^{\mathrm{min}}_{\mathrm{ren}}(A,B)
\geq
\frac{1}{2\Delta}
\ln\frac{1}{s_c-s}
+
O(1).
\end{equation}
The classical MfI lower bound therefore diverges with the universal rate
$(2\Delta)^{-1}$ as $s\to s_c^-$
(Fig.~\ref{fig:strip}).

The actual geodesic in pure $\mathrm{AdS}_5$ remains finite at $s_c$.
The divergence is instead a contradiction generated by inserting the
classical RT zero into an information-theoretic inequality obeyed by the
boundary state. Finite bulk distance requires nonzero mutual
information; hence bulk-entropic QES corrections must replace the
classical zero by a strictly positive quantum contribution. The MfI
bound thereby turns QES smoothing from a refinement of the RT transition
into a consistency condition for finite semiclassical distance.
This conclusion is independent of the detailed smoothing
profile. It follows from three facts alone: the exact information
inequality, the heavy-probe geodesic relation, and a finite
semiclassical geodesic across the classical transition.

Away from the transition, the same bound recovers the correct metric
scaling. At large separation $s\gg\ell$, mutual information between
bounded regions such as spheres behaves as
$I(A\!:\!B)\sim s^{-4\Delta_{\mathrm{min}}}$,
because the leading relative-entropy contribution is quadratic in the
connected correlator and exchanges two copies of the lowest-dimension
primary. Eq.~\eqref{eq:MfI} then gives
$L_{\mathrm{ren}}
\geq
(2\Delta_{\mathrm{min}}/\Delta)\ln s+O(1)$.
For $\Delta=\Delta_{\mathrm{min}}$, the leading coefficient equals the
exact pure-AdS result $L_{\mathrm{ren}}=2\ln s$; the logarithmic bound
is asymptotically saturated. For infinite strips, transverse integration
changes the power to $s^{-(4\Delta_{\mathrm{min}}-d+2)}$, leaving the
bound strict but not saturated (Appendix~\ref{app:ads5}). Although the general derivation uses heavy
probes, conformal symmetry fixes
$\langle O_\Delta(x)O_\Delta(y)\rangle\propto|x-y|^{-2\Delta}$ exactly
for every $\Delta$ in the pure-AdS vacuum. The saturation statement
therefore extends to $\Delta=\Delta_{\mathrm{min}}$ in that case; in
general states it remains a large-$\Delta$ asymptotic result.
The two limits test complementary consequences: the
transition detects the quantum correction demanded by finite distance,
while the large-separation regime verifies the predicted metric scaling
where the semiclassical description is smooth.

\begin{figure}[t]
\centering
\includegraphics[width=0.92\textwidth]{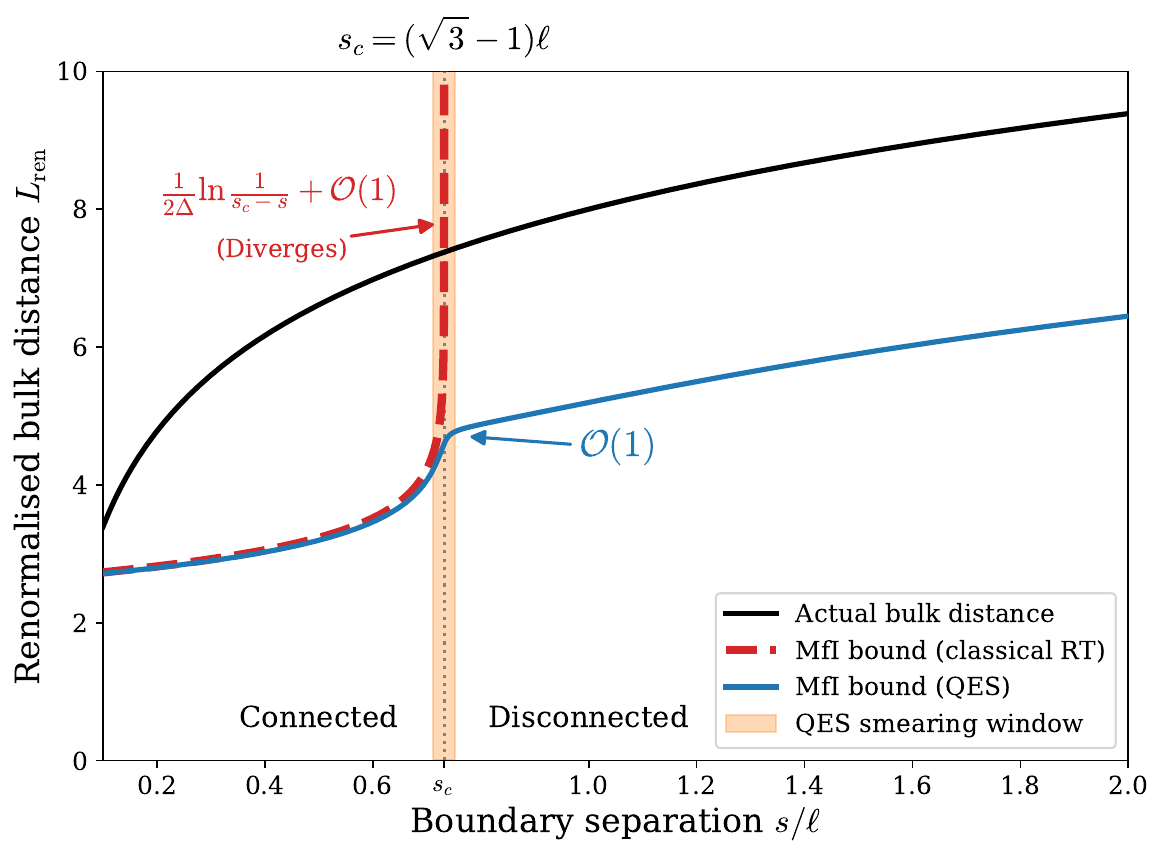}
\caption{\textbf{MfI bound at the entanglement-wedge phase transition
for parallel strips in $\mathrm{AdS}_5/\mathrm{CFT}_4$.} Black: the
actual bulk geodesic distance, which is finite everywhere. Red dashed:
the MfI bound evaluated with the classical RT mutual information, which
diverges logarithmically at the critical separation
$s_c=(\sqrt3-1)\ell$. Blue solid: the MfI bound evaluated with the
quantum-corrected QES mutual information, which smoothly plateaus at an
$O(1)$ value in the disconnected phase. The orange band marks the QES
smearing window of width $\sim\ell/N^2$. The divergence of the classical
bound shows that quantum gravitational corrections to the entropy must
keep $I(A\!:\!B)$ strictly positive wherever the corresponding
semiclassical geodesic remains finite.}
\label{fig:strip}
\end{figure}

\section{Information beyond entropy}\label{sec:beyond-entropy}

The MfI bound uses mutual
information, not subsystem entropy. This distinction matters: subsystem
entropies can agree while the mutual-information pattern that controls
metric connectivity differs entirely.

It is natural to ask whether the Page curve, which controls subsystem
entropies in typical states, already captures the connectivity
information contained in the MfI
bound~\cite{Page1993}. The answer is no:
subsystem entropy does not determine mutual information between
subregions. Explicit pure-state families with identical
subsystem entropy and distinct mutual-information structure are given in
Appendix~\ref{app:typicality}. Entropy fixes
the spectrum of an individual reduction; distance depends instead on how
correlations are distributed between spatially separated reductions. An
entropy curve, even when known exactly, therefore does not determine
metric connectivity.

Conditional mutual information supplies the complementary
reconstruction criterion. For a tripartite state with
$\epsilon:=I(A\!:\!C|B)$, the Fawzi-Renner approximate-Markov
bound~\cite{FawziRenner2015} guarantees a recovery channel
(the twirled Petz map) that reconstructs $\rho_{ABC}$ from
$\rho_{AB}$ with fidelity
$F(\rho_{ABC},\sigma_{ABC})\geq 2^{-\epsilon/2}$.
Small conditional mutual information thus gives a quantitative,
state-dependent recovery guarantee~\cite{Brandao2015PRL};
Appendix~\ref{app:markov} gives the near-Markov
holographic construction. The two information measures
constrain opposite sides of emergent geometry: small conditional mutual
information enables reconstruction, whereas small mutual information
forbids short distance.

\section{Discussion}\label{sec:discussion}

The Ryu-Takayanagi formula converts entropy to
area; the MfI bound converts mutual information to distance.  Together
they make the entanglement-geometry correspondence quantitative at both
the area and the metric level.  Three features of the distance bound
merit emphasis.

First, it is an explicit inequality, not an
order-of-magnitude estimate. The bound is nonperturbative
in its information-theoretic input and controlled in its semiclassical
errors. Bulk dynamics enter only through probe calibration,
making the logarithmic dependence on mutual information universal within
the stated code subspace.

Second, the multiscale theorem promotes this pairwise bound to a global
connectivity criterion. A connected semiclassical bulk requires
sufficient mutual information between neighbouring boundary cells at
every resolution; neither one bipartition nor the Page curve can certify
it. Connectivity is therefore a quantitative property of the boundary's
multiscale correlation structure, rather than a topological
label attached to a chosen saddle.

Third, the parallel-strip transition makes this constraint
sharp. Classical RT sets the mutual information to zero while the
background geodesic remains finite; the MfI bound makes these statements
incompatible and therefore requires quantum corrections to preserve a
nonzero correlation floor. Far from the transition, the same bound
recovers the logarithmic growth of the AdS geodesic from algebraically
decaying boundary correlations. Whenever
$I\simeq\gamma\,\delta^p$, it predicts
$L\geq[p/(2\Delta)]\ln(1/\delta)+O(1)$.
Thus quantum correlations determine not only whether a semiclassical
bulk is connected, but the minimum distance and diameter that any
consistent geometry may assign.

This perspective suggests several applications wherever
reconstructability changes with correlations. In island transitions in
JT gravity, the MfI bound could provide a direct information-theoretic
measure of the geometric separation associated with Page-curve
reconstruction. In multi-boundary wormholes, it could distinguish not
only whether a wormhole is connected but how connectivity is distributed
among its boundaries. In evaporating black holes, the time-dependent
loss and recovery of boundary correlations should produce corresponding
time-dependent constraints on bulk diameter. Taken together, these
directions point toward a general diagnostic for when, where and to what
extent quantum correlations give rise to connected spacetime.

% \begin{acknowledgments}
% [Acknowledgements to be added.]
% \end{acknowledgments}

\appendix
% The complete former Supplemental Material is incorporated below as appendices.
% Only its document wrapper and journal style have been changed.

\section{Assumptions and scope}\label{app:assumptions}
\renewcommand{\theequation}{A.\arabic{equation}}

Let $d\in\mathbb{N}$, let $(\Sigma,h)$ be a compact $d$-dimensional Riemannian manifold, 
and let $A,B\subset\Sigma$ be measurable subsets with strictly positive separation in $h$,
\begin{equation}\label{A1}
	A\cap B=\varnothing,\qquad 
	\operatorname{dist}_h(A,B):=\inf\bigl\{d_h(x,y):x\in A,\ y\in B\bigr\}\ge s_0>0.
\end{equation}

Fix a complex separable Hilbert space $\mathcal{H}$ and a state $\rho$ on the physical boundary 
degrees of freedom, presented in one of the following two mathematically explicit realizations.

In the regulated realization one fixes a UV regulator scale $a>0$ and a regulated Hilbert space 
{$\mathcal{H}^{(a)}$} together with an isomorphism
\begin{equation}\label{A2-Reg}
	{\mathcal{H}^{(a)}\simeq \mathcal{H}_{A}^{(a)}\otimes\mathcal{H}_{B}^{(a)}\otimes \mathcal {H}_{(A \cup B)^c}^{(a)},}
\end{equation}
and one assumes that $\rho$ is represented by a positive trace-class operator on {$\mathcal{H}^{(a)}$ }
with $\operatorname{Tr}\rho=1$ and that the reduced operators $\rho_{AB}$, $\rho_A$, $\rho_B$ 
exist as trace-class operators obtained by partial trace with respect to the factorization Eq.~\eqref{A2-Reg}.

In the algebraic realization one fixes a von Neumann algebra $\mathcal{M}\subset\mathcal{B}(\mathcal{H})$ 
and commuting von Neumann subalgebras $\mathcal{A}(A),\mathcal{A}(B)\subset\mathcal{M}$ with
\begin{equation}\label{A2-AQFT}
	[\mathcal{A}(A),\mathcal{A}(B)]=0,
\end{equation}
and one assumes the split property for the separated pair $(A,B)$ in the form
\begin{equation}\label{A2-Split}
	\exists\ \text{type I factor }\mathcal{N}\subset\mathcal{M}\ \text{such that}\  
	\mathcal{A}(A)\subset \mathcal{N}\subset \mathcal{A}(B)^{\prime},
\end{equation}
so that the von Neumann algebra generated by $\mathcal{A}(A)$ and $\mathcal{A}(B)$ admits a spatial 
tensor product structure compatible with restrictions of normal states 
\cite{Haag1996,BuchholzDAntoniLongo1987}.

In either realization one assumes that $\rho$ is normal and that the bipartite product reference 
state with the same marginals exists on the joint algebra and yields finite relative entropy,
\begin{equation}\label{A3}
	D(\rho_{AB}|\rho_A\otimes\rho_B)<\infty,
\end{equation}
where in the regulated realization $\rho_{AB},\rho_A,\rho_B$ are density operators on 
$\mathcal{H}_A^{(a)}\otimes\mathcal{H}_B^{(a)}$, $\mathcal{H}_A^{(a)}$, $\mathcal{H}_B^{(a)}$ 
respectively, while in the algebraic realization they denote the restrictions of $\rho$ to 
$\mathcal{A}(A)\vee\mathcal{A}(B)$, $\mathcal{A}(A)$, $\mathcal{A}(B)$, and $\rho_A\otimes\rho_B$ 
denotes a normal product state on $\mathcal{A}(A)\,\overline{\otimes}\,\mathcal{A}(B)$ with the same 
marginals as $\rho$; $D$ denotes the Umegaki relative entropy in the regulated case and the Araki 
relative entropy in the algebraic case \cite{Araki1976,OhyaPetz2004,Watrous2018}.

Fix a holographic code subspace $\mathcal{H}_{\mathrm{code}}\subset\mathcal{H}$ and let 
$\Pi_{\mathrm{code}}$ be the orthogonal projection onto $\mathcal{H}_{\mathrm{code}}$; one assumes 
state support on the code,
\begin{equation}\label{A4}
	\rho=\Pi_{\mathrm{code}}\rho\Pi_{\mathrm{code}},
\end{equation}
and one assumes that all observables used in bipartite inequalities are bounded operators on 
$\mathcal{H}_{\mathrm{code}}$ obtained either as elements of the local bounded operator algebras 
in the regulated/algebraic sense or as compressions by $\Pi_{\mathrm{code}}$ 
\cite{Haag1996,AlmheiriDongHarlow2015,Harlow2016,Pastawski2015}.

Fix a family of probe operators $O_\Delta$ $(\Delta\in\mathcal{D})$ indexed by scaling dimension 
$\Delta\ge\Delta_{\min}>0$ and defined on a common dense domain containing $\mathcal{H}_{\mathrm{code}}$, 
and define the code-compressed mean-subtracted probe at a boundary insertion point $x\in\Sigma$ by
\begin{equation}
	\label{eq:otilde-def}
	\widetilde O^{(0)}_{\Delta,\rho}(x)
	:=
	\Pi_{\mathrm{code}}
	\bigl(O_\Delta(x)-\langle O_\Delta(x)\rangle_\rho\mathbf 1\bigr)
	\Pi_{\mathrm{code}}.
\end{equation}
	
	Use the uniform state- and domain-level definition
	\begin{equation}
		\label{eq:uniform-B-again}
		B_\Delta
		:=
		\sup_{\rho\in\mathcal{S}_{\mathrm{code}}}
		\sup_{(x,y)\in \mathsf{D}}
		\max\left\{
		\left\|\widetilde O^{(0)}_{\Delta,\rho}(x)\right\|_\infty,
		\left\|\widetilde O^{(0)}_{\Delta,\rho}(y)\right\|_\infty
		\right\}
		<\infty.
	\end{equation}
	If the code subspace is finite-dimensional and the compressed operators are defined everywhere on it, boundedness is automatic. If the code subspace is infinite-dimensional, compression alone does not automatically guarantee boundedness; Eq.~\eqref{eq:uniform-B-again} is then a standing uniform-boundedness hypothesis.
	
	The physical heavy-probe dictionary is first stated for the continuum primary $O_\Delta$. The information-theoretic theorem, however, uses only bounded code-compressed insertions. We assume that, for $x\in A$ and $y\in B$, the compressed insertions $\widetilde O^{(0)}_{\Delta,\rho}(x)$ and $\widetilde O^{(0)}_{\Delta,\rho}(y)$ are admissible bounded representatives of the separated $A$- and $B$-side code observables, so that the bipartite Pinsker inequality applies to them.
	
	We further assume that, uniformly on the insertion domain $\mathsf{D}$ and on the state class $\mathcal{S}_{\mathrm{code}}$, the compressed connected correlator satisfies the relative multiplicative representation
	\begin{equation}
		\label{eq:SI-code-assumption}
		C_\rho\!\left(
		\widetilde O^{(0)}_{\Delta,\rho}(x),
		\widetilde O^{(0)}_{\Delta,\rho}(y)
		\right)
		=
		N^{\mathrm{code}}_\Delta
		e^{-\Delta L_\rho(x,y)}
		\left(1+\varepsilon^{\mathrm{code}}_{\Delta,\rho}(x,y)\right),
	\end{equation}
	with
	\begin{equation}
		\sup_{\rho\in\mathcal{S}_{\mathrm{code}}}\sup_{(x,y)\in \mathsf{D}}
		\left|\varepsilon^{\mathrm{code}}_{\Delta,\rho}(x,y)\right|
		\le
		\varepsilon^\star_\Delta<1.
	\end{equation}
	The code normalization $N^{\mathrm{code}}_\Delta$ is nonzero and state-independent within the chosen code subspace. Projection and truncation errors are included in $\varepsilon^\star_\Delta$ as relative multiplicative errors. When no confusion can arise, we write $N_\Delta$ for $N^{\mathrm{code}}_\Delta$.

	\begin{equation}\label{A7}
		\varepsilon_{\mathrm{grav}}:=\frac{G_N}{\ell_{\mathrm{AdS}}^{d-1}},\qquad
		\varepsilon_{\mathrm{str}}:=\frac{\ell_s^2}{\ell_{\mathrm{AdS}}^2},\qquad
		c_{\mathrm{eff}}:=\frac{\ell_{\mathrm{AdS}}^{d-1}}{G_N}
		=\varepsilon_{\mathrm{grav}}^{-1}.
	\end{equation}
	Fix constants $\varepsilon_{\mathrm{grav}}^{\star},\varepsilon_{\mathrm{str}}^{\star},\varepsilon_{\mathrm{br}}^{\star}\in(0,1)$
	and define the dimensionless probe-backreaction parameter
	\begin{equation}\label{A8a}
		\varepsilon_{\mathrm{br}}(\Delta):=\frac{\Delta^2}{c_{\mathrm{eff}}}
		=\varepsilon_{\mathrm{grav}}\Delta^2.
	\end{equation}
	Assume the semiclassical and probe-backreaction regime
	\begin{equation}\label{A8}
		\varepsilon_{\mathrm{grav}}\le \varepsilon_{\mathrm{grav}}^{\star},\qquad
		\varepsilon_{\mathrm{str}}\le \varepsilon_{\mathrm{str}}^{\star},\qquad
		\varepsilon_{\mathrm{br}}(\Delta)\le \varepsilon_{\mathrm{br}}^{\star}
		\quad\text{for all }\Delta\in\mathcal{D},
	\end{equation}
	together with the existence of a state-independent probe normalization factor $N_\Delta\in(0,\infty)$
	and a renormalized dimensionless length function $L_\rho:\mathsf{D}\to[0,\infty)$ such that the
	two-point function admits the uniform multiplicative representation
	\begin{equation}\label{A9}
		\langle O_\Delta(x)O_\Delta(y)\rangle_\rho
		=
		N_\Delta\,\exp\!\bigl(-\Delta\,L_\rho(x,y)\bigr)\,
		\bigl(1+\varepsilon_{\Delta,\rho}(x,y)\bigr),
		\qquad
		\sup_{(x,y)\in\mathsf{D}}|\varepsilon_{\Delta,\rho}(x,y)|
		\le \varepsilon_\Delta^{\star}<1,
	\end{equation}
	{
		with explicit error control
		\begin{equation}
			\label{eq:error-budget-code}
			\varepsilon^\star_\Delta
			\le
			\frac{C_{\mathrm{WKB}}}{\Delta}
			+
			C_{\mathrm{grav}}\varepsilon_{\mathrm{grav}}
			+
			C_{\mathrm{str}}\varepsilon_{\mathrm{str}}
			+
			C_{\mathrm{br}}\frac{\Delta^2}{c_{\mathrm{eff}}}
			+
			C_{\mathrm{code}}\varepsilon_{\mathrm{code}}.
		\end{equation}
		Here $\varepsilon_{\mathrm{code}}=0$ in an exact invariant code-subspace realization. Otherwise it measures the uniform relative projection/truncation error in the multiplicative correlator representation Eq.~\eqref{eq:SI-code-assumption}. Fixed constants $C_{\mathrm{WKB}},C_{\mathrm{grav}},C_{\mathrm{str}},C_{\mathrm{br}},C_{\mathrm{code}}\ge 0$ are independent of $\rho$, $(x,y)\in\mathsf{D}$, and $\Delta\in\mathcal{D}$,
	}
	and with $L_\rho$ defined in a fixed renormalization 
	scheme for the universal near-boundary divergence of the corresponding bulk geodesic length in 
	asymptotically AdS saddles 
	\cite{Maldacena1998,GKP1998,Witten1998,Aharony2000,Balasubramanian1999,LoukoMarolfRoss2000,Skenderis2002}.
	
	{Fix a finite family of boundary coarse-grainings indexed by
		\(k \in \{0,1,\dots,K\}\), with measurable partitions
		\(
		\mathcal{P}_k=\{C_i^{(k)}\}_{i\in V_k}
		\)
		of \(\Sigma\). We reserve the notation \(C_i^{(k)}\) for the coarse cells,
		and \(A_i^{(k)} \subset C_i^{(k)}\) for the corresponding buffered cells
		introduced later. Define the adjacency graphs \(G_k=(V_k,E_k)\) by
		\begin{equation}\label{A11}
		\begin{aligned}
			(i,j)\in E_k
			&\Longleftrightarrow
			\overline{C_i^{(k)}}\cap \overline{C_j^{(k)}}
			\text{ contains a set}\\
			&\qquad\text{of positive \((d-1)\)-dimensional Hausdorff measure}.
		\end{aligned}
		\end{equation}}
	and assume a coarse additivity inequality for the operational length along shortest adjacency paths: 
	there exists $\delta\ge 0$ such that for every $k$, every shortest path 
	$v_0\sim v_1\sim\cdots\sim v_n$ in $G_k$ (so $n$ equals the graph distance) and every $\rho$ in 
	the state class under consideration, there exist points $x_m\in A_{v_m}^{(k)}$ such that
	\begin{equation}\label{A12}
		L_\rho(x_0,x_n)\ge \sum_{m=1}^{n}L_\rho(x_{m-1},x_m)-2(n-1)\delta.
	\end{equation}
	
	Assume finally that $\mathcal{H}_{\mathrm{code}}$ consists of semiclassical holographic states for which boundary von Neumann entropies of regions $R\subset\Sigma$ are computed by the quantum extremal surface prescription in a fixed renormalization scheme: for each $R$ there exists a nonempty set $\mathcal{X}_\rho(R)$ of codimension-two \emph{quantum extremal} bulk surfaces $\chi$ homologous to $R$ in the relevant saddle, and a bulk Cauchy region $\Sigma(\chi;R)$ bounded by $R\cup\chi$, such that
	\begin{equation}\label{A13}
		S_2(\rho_R)=\frac{1}{\ln 2}{\min}_{\chi\in \mathcal{X}_\rho(R)}
		\Bigl(
		\frac{\mathrm{Area}_\rho(\chi)}{4G_N}+S^{\mathrm{bulk},\rho}_{\Sigma(\chi;R)}
		\Bigr)
		+\eta_\rho(R),
	\end{equation}
	{where $S_2(\rho_R):=-(\operatorname{Tr}\rho_R\log_2\rho_R)$ in the regulated realization, $S^{\mathrm{bulk},\rho}_{\Sigma(\chi;R)}$ is the renormalized bulk von Neumann entropy of quantum fields on $\Sigma(\chi;R)$ in the same scheme as the area counterterms, and the minimum in Eq.~\eqref{A13} is taken over all $\chi\in \mathcal{X}_\rho(R)$ (assumed attained, but not assumed unique)}, and $\eta_\rho(R)$ is a controlled 
	remainder satisfying
	\begin{equation}\label{A14}
		\sup_{\rho\ \mathrm{on}\ \mathcal{H}_{\mathrm{code}}}\ \sup_{R\in\mathcal{R}}\ |\eta_\rho(R)|\le \eta^{\star}<\infty
	\end{equation}
	for the fixed region family $\mathcal{R}$ under consideration 
	\cite{RyuTakayanagi2006,HubenyRangamaniTakayanagi2007,LewkowyczMaldacena2013,FaulknerLewkowyczMaldacena2013,EngelhardtWall2015,Wall2014}. 
	For a trace-class density operator $\sigma$ on a Hilbert space, define the von Neumann entropy in 
	natural-logarithm units and in base-$2$ units by
	\begin{equation}\label{D1}
		S(\sigma):=-\operatorname{Tr}(\sigma\ln\sigma),\qquad 
		S_2(\sigma):=-\operatorname{Tr}(\sigma\log_2\sigma)=\frac{1}{\ln 2}\,S(\sigma).
	\end{equation}
	
	For density operators $\sigma,\tau$ satisfying the support condition 
	$\operatorname{supp}(\sigma)\subseteq \operatorname{supp}(\tau)$, define the Umegaki relative entropy 
	in nats by
	\begin{equation}\label{D2}
		D(\sigma|\tau):=\operatorname{Tr}\bigl[\sigma(\ln\sigma-\ln\tau)\bigr]\in[0,\infty),
	\end{equation}
	and set $D(\sigma|\tau):=+\infty$ when $\operatorname{supp}(\sigma)\nsubseteq \operatorname{supp}(\tau)$; 
	for normal states on a von Neumann algebra, $D$ denotes the Araki relative entropy 
	\cite{Araki1976,OhyaPetz2004,Watrous2018}.
	
	In the regulated realization Eq.~\eqref{A2-Reg}, define the reduced density operators by
	\begin{equation}\label{D3}
		\rho_{AB}:=\operatorname{Tr}_{ (A \cup B)^c}\rho,\qquad 
		\rho_A:=\operatorname{Tr}_B\rho_{AB},\qquad 
		\rho_B:=\operatorname{Tr}_A\rho_{AB},
	\end{equation}
	and define mutual information in bits by
	\begin{equation}\label{D4}
		I(A:B)_\rho:=S_2(\rho_A)+S_2(\rho_B)-S_2(\rho_{AB}).
	\end{equation}
	
	In the algebraic realization Eq.~\eqref{A2-AQFT}-\eqref{A2-Split}, define mutual information in bits 
	by the relative-entropy identity
	\begin{equation}\label{D5}
		I(A:B)_\rho:=\frac{1}{\ln 2}\,D(\rho_{AB}|\rho_A\otimes\rho_B),
	\end{equation}
	which is well-defined and finite by Eq.~\eqref{A3} and by the split-property hypothesis ensuring 
	existence of $\rho_A\otimes\rho_B$ on the joint algebra 
	\cite{Haag1996,BuchholzDAntoniLongo1987,Araki1976}.
	In the regulated realization, the definitions Eq.~\eqref{D1}-\eqref{D4} imply the relative-entropy 
	representation of mutual information with explicit base conversion,
	\begin{align}\label{Der-1}
		D(\rho_{AB}|\rho_A\otimes\rho_B)
		&=\operatorname{Tr}\!\left[\rho_{AB}\bigl(\ln\rho_{AB}-\ln(\rho_A\otimes\rho_B)\bigr)\right]\notag\\
		&=\operatorname{Tr}\!\left[\rho_{AB}\ln\rho_{AB}\right]-\operatorname{Tr}\!\left[\rho_{AB}\ln(\rho_A\otimes\rho_B)\right]\notag\\
	&=\operatorname{Tr}\!\left[\rho_{AB}\ln\rho_{AB}\right]-\operatorname{Tr}\!\left[\rho_{AB}\bigl(\ln\rho_A\otimes \mathbf{1}_B+\mathbf{1}_A\otimes \ln\rho_B\bigr)\right]\notag\\
		&=\operatorname{Tr}\!\left[\rho_{AB}\ln\rho_{AB}\right]-\operatorname{Tr}\!\left[(\operatorname{Tr}_B\rho_{AB})\ln\rho_A\right]-\operatorname{Tr}\!\left[(\operatorname{Tr}_A\rho_{AB})\ln\rho_B\right]\notag\\
		&=\operatorname{Tr}\!\left[\rho_{AB}\ln\rho_{AB}\right]-\operatorname{Tr}\!\left[\rho_A\ln\rho_A\right]-\operatorname{Tr}\!\left[\rho_B\ln\rho_B\right]\notag\\
		&=-S(\rho_{AB})+S(\rho_A)+S(\rho_B)\notag\\
		&=(\ln 2)\Bigl(S_2(\rho_A)+S_2(\rho_B)-S_2(\rho_{AB})\Bigr)\notag\\
		&=(\ln 2)\,I(A:B)_\rho,
	\end{align}
	where $\ln(\rho_A\otimes\rho_B)=\ln\rho_A\otimes \mathbf{1}_B+\mathbf{1}_A\otimes \ln\rho_B$ holds 
	by functional calculus on tensor products and the fourth line uses 
	$\operatorname{Tr}_{AB}[\rho_{AB}(X_A\otimes\mathbf{1}_B)]=\operatorname{Tr}_A[(\operatorname{Tr}_B\rho_{AB})X_A]$.
	The code-subspace boundedness in Eq.~\eqref{eq:uniform-B-again} is ensured by the following finite-dimensional 
	compression statement, which is invoked only when $\dim\mathcal{H}_{\mathrm{code}}<\infty$ is assumed.
	{
		\begin{lemma}\label{lem:finitecompression}
			Let $\mathcal{K}$ be a finite-dimensional Hilbert space, let
			$T:\mathcal{K}\to\mathcal{K}$ be a linear map, and define
			\begin{equation}
				|T|_\infty:=\sup\{|T\psi|:\psi\in\mathcal{K},\ |\psi|=1\}.
			\end{equation}
			Then $|T|_\infty<\infty$. Moreover, let $\Pi:\mathcal{H}\to\mathcal{H}$ be the
			orthogonal projection onto a subspace $\mathcal{K}\subset\mathcal{H}$, and let
			$O$ be a linear operator such that
			\begin{equation}
				\mathcal{K}\subset \operatorname{Dom}(O),
				\qquad
				O(\mathcal{K})\subset \mathcal{K}.
			\end{equation}
			Then the compression $\Pi O\Pi$ defines a bounded operator on $\mathcal{K}$ and
			\begin{equation}
				|\Pi O\Pi|_\infty=|O|_{\mathcal{K}}|_\infty .
			\end{equation}
		\end{lemma}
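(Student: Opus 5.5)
The first claim is finite-dimensional linear algebra. I would fix an orthonormal basis $e_1,\dots,e_n$ of $\mathcal{K}$, with $n=\dim\mathcal{K}<\infty$. For a unit vector $\psi=\sum_i c_i e_i$ we have $\sum_i|c_i|^2=1$. Linearity and the triangle inequality then give
\begin{equation*}
|T\psi|\le\sum_{i=1}^n |c_i|\,|Te_i|\le \Bigl(\sum_{i=1}^n|Te_i|^2\Bigr)^{1/2},
\end{equation*}
where the last step is Cauchy--Schwarz. The right-hand side is a finite constant independent of $\psi$, so $|T|_\infty<\infty$. An alternative is to note that the unit sphere of $\mathcal{K}$ is compact and that $\psi\mapsto|T\psi|$ is continuous, since every linear map between finite-dimensional normed spaces is continuous. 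I prefer the explicit basis estimate because it avoids any appeal to norm equivalence.

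For the compression statement, I would first make the meaning of $\Pi O\Pi$ on $\mathcal{K}$ precise. For $\psi\in\mathcal{K}$ we have $\Pi\psi=\psi$, and $\psi\in\operatorname{Dom}(O)$ by hypothesis. The invariance $O(\mathcal{K})\subset\mathcal{K}$ then gives $\Pi O\psi=O\psi$. Hence $\Pi O\Pi\psi=O\psi$ for all $\psi\in\mathcal{K}$; that is, $(\Pi O\Pi)|_{\mathcal{K}}=O|_{\mathcal{K}}$ as linear maps $\mathcal{K}\to\mathcal{K}$. Applying the first part with $T=O|_{\mathcal{K}}$ shows the compression is bounded on $\mathcal{K}$. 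Equality of the two maps on $\mathcal{K}$ gives equality of the suprema defining $|\cdot|_\infty$, which is $|\Pi O\Pi|_\infty=|O|_{\mathcal{K}}|_\infty$.

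The only step needing care is the interpretation of $|\Pi O\Pi|_\infty$. If it is read as an operator on all of $\mathcal{H}$, one extends $\Pi O\Pi$ by zero on $\mathcal{K}^\perp$, which is its natural definition there, because $\Pi$ annihilates $\mathcal{K}^\perp$ and $O$ is only ever evaluated on $\Pi\phi\in\mathcal{K}\subset\operatorname{Dom}(O)$. Any $\phi\in\mathcal{H}$ then decomposes as $\phi=\Pi\phi+(1-\Pi)\phi$, with $\Pi O\Pi\phi=O\Pi\phi$ and $|\Pi\phi|\le|\phi|$. This gives $|\Pi O\Pi\phi|\le|O|_{\mathcal{K}}|_\infty|\phi|$, and testing on unit vectors in $\mathcal{K}$ gives the reverse inequality, so the norm is unchanged. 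I expect this bookkeeping of domains to be the main (mild) obstacle. There is nothing deep here, but it should be stated explicitly because $O$ is unbounded on $\mathcal{H}$ and only its restriction to $\mathcal{K}$ is controlled.
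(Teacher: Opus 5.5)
Your proposal is correct and follows essentially the paper's route: your basis estimate $\bigl(\sum_i |Te_i|^2\bigr)^{1/2}$ is the same Hilbert--Schmidt bound $\bigl(\sum_{i,j}|[T]_{ij}|^2\bigr)^{1/2}$ the paper derives via Cauchy--Schwarz, and your compression argument ($\Pi\psi=\psi$ and $O\psi\in\mathcal{K}$, hence $(\Pi O\Pi)|_{\mathcal{K}}=O|_{\mathcal{K}}$) is identical to the paper's. Your extra check that the norm is unchanged when $\Pi O\Pi$ is regarded as an operator on all of $\mathcal{H}$ is correct, and the paper leaves it implicit.
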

		
\begin{proof}
Fix an orthonormal basis $\{e_i\}_{i=1}^n$ of $\mathcal{K}$ and let $[T]_{ij}:=\langle e_i,Te_j\rangle$. 
For $\psi=\sum_{j=1}^n \psi_j e_j$ with $|\psi|^2=\sum_{j=1}^n|\psi_j|^2=1$, Cauchy-Schwarz gives
\begin{align}
	|T\psi|^2
	&=
	\sum_{i=1}^n \left|\sum_{j=1}^n [T]_{ij}\psi_j\right|^2 \notag\\
	&\le
	\sum_{i=1}^n
	\left(\sum_{j=1}^n |[T]_{ij}|^2\right)
	\left(\sum_{j=1}^n |\psi_j|^2\right) \notag\\
	&=
	\sum_{i,j=1}^n |[T]_{ij}|^2 .
\label{L1}
\end{align}
Hence $|T|_\infty\le \left(\sum_{i,j=1}^n |[T]_{ij}|^2\right)^{1/2}
<\infty$. For the compression claim, let $\psi\in\mathcal{K}$. Since $\Pi$ is the
orthogonal projection onto $\mathcal{K}$, we have $\Pi\psi=\psi$. Since
$O(\mathcal{K})\subset\mathcal{K}$, we also have $O\psi\in\mathcal{K}$ and
therefore
\(
\Pi O\Pi \psi=\Pi(O\psi)=O\psi.
\)
Thus $\Pi O\Pi$ acts on $\mathcal{K}$ exactly as the restriction
$O|_{\mathcal{K}}$, and so their operator norms on $\mathcal{K}$ coincide:
\( |\Pi O\Pi|_\infty=|O|_{\mathcal{K}}|_\infty .\)
\end{proof}
}
	The heavy-probe hypothesis Eq.~\eqref{eq:SI-code-assumption} yields an explicit logarithmic control of the length rate 
	extracted from correlators on $\mathsf{D}$.
	
	{
		\begin{lemma}\label{lem:lengtherror}
			Assume Eq.~\eqref{eq:SI-code-assumption} with $\varepsilon_\Delta^{\star}<1$ and define the correlator-derived rate function
			\begin{equation}
				\label{eq:rate-code}
				\widehat L_{\rho,\Delta}^{\mathrm{code}}(x,y)
				:=
				-
				\frac1\Delta
				\ln
				\left(
				\frac{
					\left|
					C_\rho\!\left(
					\widetilde O^{(0)}_{\Delta,\rho}(x),
					\widetilde O^{(0)}_{\Delta,\rho}(y)
					\right)
					\right|
				}{
					|N^{\mathrm{code}}_\Delta|
				}
				\right),\qquad (x,y)\in\mathsf{D}.
			\end{equation}
			Then for every $(x,y)\in\mathsf{D}$ one has the two-sided bound
			\begin{equation}
				\label{eq:rate-code-bound}
				L_\rho(x,y)-\frac1\Delta\ln(1+\varepsilon^\star_\Delta)
				\le
				\widehat L_{\rho,\Delta}^{\mathrm{code}}(x,y)
				\le
				L_\rho(x,y)-\frac1\Delta\ln(1-\varepsilon^\star_\Delta).
			\end{equation}
		\end{lemma}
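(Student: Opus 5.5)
The argument is a direct substitution of the multiplicative representation Eq.~\eqref{eq:SI-code-assumption} into the definition Eq.~\eqref{eq:rate-code}, followed by monotonicity of the logarithm. Fix $(x,y)\in\mathsf{D}$ and $\rho\in\mathcal S_{\mathrm{code}}$. First I take absolute values in Eq.~\eqref{eq:SI-code-assumption}:
\begin{equation*}
\left|C_\rho\!\left(\widetilde O^{(0)}_{\Delta,\rho}(x),\widetilde O^{(0)}_{\Delta,\rho}(y)\right)\right|
=|N^{\mathrm{code}}_\Delta|\,e^{-\Delta L_\rho(x,y)}\,\bigl|1+\varepsilon^{\mathrm{code}}_{\Delta,\rho}(x,y)\bigr|.
\end{equation*}
Here I use that $e^{-\Delta L_\rho}>0$ is real. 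The error term $\varepsilon^{\mathrm{code}}_{\Delta,\rho}$ may be complex, since the correlator is not assumed real.

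Next I isolate the factor carrying the error. The uniform bound $|\varepsilon^{\mathrm{code}}_{\Delta,\rho}(x,y)|\le\varepsilon^\star_\Delta<1$ and the triangle inequality give
\begin{equation*}
0<1-\varepsilon^\star_\Delta\le\bigl|1+\varepsilon^{\mathrm{code}}_{\Delta,\rho}(x,y)\bigr|\le 1+\varepsilon^\star_\Delta .
\end{equation*}
The strict positivity of the left endpoint is the one point that needs care. It guarantees that the correlator is nonzero. Together with $N^{\mathrm{code}}_\Delta\neq0$, this means the logarithm in Eq.~\eqref{eq:rate-code} is finite and $\widehat L^{\mathrm{code}}_{\rho,\Delta}(x,y)$ is well defined on all of $\mathsf{D}$.

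Dividing by $|N^{\mathrm{code}}_\Delta|$ and taking logarithms then yields
\begin{equation*}
\widehat L^{\mathrm{code}}_{\rho,\Delta}(x,y)=L_\rho(x,y)-\frac1\Delta\ln\bigl|1+\varepsilon^{\mathrm{code}}_{\Delta,\rho}(x,y)\bigr| .
\end{equation*}
Since $\ln$ is increasing, the two-sided bound on $|1+\varepsilon^{\mathrm{code}}_{\Delta,\rho}|$ gives
\begin{equation*}
\ln(1-\varepsilon^\star_\Delta)\le\ln\bigl|1+\varepsilon^{\mathrm{code}}_{\Delta,\rho}(x,y)\bigr|\le\ln(1+\varepsilon^\star_\Delta).
\end{equation*}
Multiplying by $-1/\Delta<0$, which uses $\Delta\ge\Delta_{\min}>0$, reverses these inequalities and gives exactly Eq.~\eqref{eq:rate-code-bound}.

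There is no genuine obstacle. The only points to state explicitly are well-definedness (nonvanishing of the correlator) and the possibly complex error factor, which is why the argument works with $|1+\varepsilon|$ rather than $1+\varepsilon$. Because $\varepsilon^\star_\Delta$ is uniform in $\rho$ and in $(x,y)$, the resulting bound is uniform over $\mathcal S_{\mathrm{code}}\times\mathsf{D}$ as well.
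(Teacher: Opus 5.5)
Your proposal is correct and follows essentially the same route as the paper: the paper bounds $\bigl|C_\rho\bigr|$ above and below by $|N^{\mathrm{code}}_\Delta|e^{-\Delta L_\rho(x,y)}(1\pm\varepsilon^\star_\Delta)$ using $1-\varepsilon^\star_\Delta\le|1+\varepsilon^{\mathrm{code}}_{\Delta,\rho}|\le 1+\varepsilon^\star_\Delta$, and then takes $-\frac{1}{\Delta}\ln(\cdot/|N^{\mathrm{code}}_\Delta|)$. Your passage through the exact identity $\widehat L^{\mathrm{code}}_{\rho,\Delta}=L_\rho-\frac1\Delta\ln|1+\varepsilon^{\mathrm{code}}_{\Delta,\rho}|$ and your explicit check that the correlator is nonzero (so the logarithm is finite) are harmless additions that the paper leaves implicit.
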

		
		\begin{proof}
			From Eq.~\eqref{eq:SI-code-assumption} and $|1+\varepsilon^{\mathrm{code}}_{\Delta,\rho}(x,y)|\le 1+\varepsilon_\Delta^{\star}$ one has
			\begin{equation}\label{L4}
				\left|
				C_\rho\!\left(
				\widetilde O^{(0)}_{\Delta,\rho}(x),
				\widetilde O^{(0)}_{\Delta,\rho}(y)
				\right)
				\right|\le |N^{\mathrm{code}}_\Delta|\,e^{-\Delta L_\rho(x,y)}(1+\varepsilon_\Delta^{\star}),
			\end{equation}
			hence
			\begin{equation}\label{L5}
				-\frac{1}{\Delta}\ln\!\left(\frac{\left|C_\rho\!\left(\widetilde O^{(0)}_{\Delta,\rho}(x),\widetilde O^{(0)}_{\Delta,\rho}(y)\right)\right|}{|N^{\mathrm{code}}_\Delta|}\right)\ge L_\rho(x,y)-\frac{1}{\Delta}\ln(1+\varepsilon_\Delta^{\star}).
			\end{equation}
			Similarly, since $\varepsilon_\Delta^{\star}<1$ and $|1+\varepsilon^{\mathrm{code}}_{\Delta,\rho}(x,y)|\ge 1-\varepsilon_\Delta^{\star}$, one has
			\begin{equation}\label{L6}
				\left|
				C_\rho\!\left(
				\widetilde O^{(0)}_{\Delta,\rho}(x),
				\widetilde O^{(0)}_{\Delta,\rho}(y)
				\right)
				\right|\ge |N^{\mathrm{code}}_\Delta|\,e^{-\Delta L_\rho(x,y)}(1-\varepsilon_\Delta^{\star}),
			\end{equation}
			which yields
			\begin{equation}\label{L7}
				-\frac{1}{\Delta}\ln\!\left(\frac{\left|C_\rho\!\left(\widetilde O^{(0)}_{\Delta,\rho}(x),\widetilde O^{(0)}_{\Delta,\rho}(y)\right)\right|}{|N^{\mathrm{code}}_\Delta|}\right)\le L_\rho(x,y)-\frac{1}{\Delta}\ln(1-\varepsilon_\Delta^{\star}).
			\end{equation}
			Combining Eq.~\eqref{L5} and Eq.~\eqref{L7} gives Eq.~\eqref{eq:rate-code-bound}.
		\end{proof}
	}
	
	The regulated and algebraic definitions of mutual information are consistent under the standing hypotheses.
	
	\begin{theorem}\label{thm:MIrelent}
		Assume Eq.~\eqref{A1}-\eqref{A3}. In the regulated realization Eq.~\eqref{A2-Reg} with trace-class reduced states, 
		the quantity $I(A:B)_\rho$ defined by Eq.~\eqref{D4} satisfies
		\begin{equation}\label{MR1}
			I(A:B)_\rho=\frac{1}{\ln 2}\,D(\rho_{AB}|\rho_A\otimes\rho_B),
		\end{equation}
		with $D$ given by Eq.~\eqref{D2}, and the finiteness condition Eq.~\eqref{A3} is equivalent to 
		$I(A:B)_\rho<\infty$. In the algebraic realization Eq.~\eqref{A2-AQFT}-\eqref{A2-Split}, the definition 
		\eqref{D5} produces a finite quantity in bits and coincides with Eq.~\eqref{D4} whenever a regulating net 
		yields trace-class reductions whose mutual informations converge to a finite limit, with the base 
		conversion fixed uniquely by Eq.~\eqref{D1} and Eq.~\eqref{D2} 
		\cite{Araki1976,OhyaPetz2004,Watrous2018,Haag1996,BuchholzDAntoniLongo1987}.
	\end{theorem}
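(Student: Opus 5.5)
The proof splits into the regulated case and the algebraic case.

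\emph{Regulated realization.} The identity \eqref{MR1} is essentially the computation \eqref{Der-1}; the plan is to make each step rigorous in infinite dimensions. First, work on $\mathcal H_A^{(a)}\otimes\mathcal H_B^{(a)}$ with trace-class $\rho_{AB}$. Use the spectral decompositions $\rho_A=\sum_i p_i P_i$ and $\rho_B=\sum_j q_j Q_j$. These give $\ln(\rho_A\otimes\rho_B)=\ln\rho_A\otimes\mathbf 1+\mathbf 1\otimes\ln\rho_B$ on $\operatorname{supp}\rho_A\otimes\operatorname{supp}\rho_B$, which follows from the joint spectral decomposition $\sum_{ij}\ln(p_iq_j)P_i\otimes Q_j$. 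Second, check the support condition $\operatorname{supp}\rho_{AB}\subseteq\operatorname{supp}\rho_A\otimes\operatorname{supp}\rho_B$. This holds because $\operatorname{Tr}[\rho_{AB}(\mathbf 1-P_A)\otimes\mathbf 1]=\operatorname{Tr}[\rho_A(\mathbf 1-P_A)]=0$, and similarly for $B$.

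Third, justify splitting traces of unbounded logarithms. Write every term as a sum of nonnegative series in $[0,\infty]$: $-\operatorname{Tr}\rho\ln\rho=\sum_k(-\lambda_k\ln\lambda_k)$, and $-\operatorname{Tr}[\rho_{AB}\ln\rho_A\otimes\mathbf 1]=\sum_i(-\ln p_i)\operatorname{Tr}[\rho_{AB}P_i\otimes\mathbf 1]=S(\rho_A)$ by the partial-trace identity and monotone convergence. This yields $D(\rho_{AB}\|\rho_A\otimes\rho_B)+S(\rho_{AB})=S(\rho_A)+S(\rho_B)$ as an identity in $[0,\infty]$.

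This third step is the main obstacle, because of $\infty-\infty$ ambiguities when entropies diverge. I would handle it with Lindblad's definition of relative entropy via truncation to finite-rank projections, or with the lower semicontinuity and monotone-limit form of $D$. The cleanest approach is to approximate by finite-rank spectral truncations $P_n\otimes Q_n$. Apply the finite-dimensional identity to the normalized truncated states, then pass to the limit using monotonicity of $D$ under the pinching/restriction (data processing) together with lower semicontinuity. Once the identity holds in $[0,\infty]$, with $S(\rho_{AB})$ finite whenever the marginals have finite entropy, both \eqref{MR1} and the equivalence "$D<\infty\iff I<\infty$" follow by dividing by $\ln2$ using \eqref{D1}.

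\emph{Algebraic realization.} Finiteness of \eqref{D5} is immediate from \eqref{A3}, since the split property \eqref{A2-Split} guarantees that the product state exists as a normal state. For consistency with a regulating net, take an increasing family of type I subalgebras on which the reductions are trace-class. On each member of the family the regulated identity just proved gives $I_n=D_n/\ln2$. Araki relative entropy is monotone and lower semicontinuous along increasing nets of subalgebras, and by Araki/Uhlmann martingale convergence $D_n\uparrow D$. Hence $\lim I_n=D/\ln2$ whenever the limit is finite. The base conversion is then fixed uniquely by the factor $1/\ln2$ in \eqref{D1}--\eqref{D2}.
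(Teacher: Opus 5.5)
Your proposal is correct and follows the paper's route. The regulated identity is the computation \eqref{Der-1} divided by $\ln 2$, and the algebraic claim comes from passing to the limit along increasing type-I approximants; your appeal to Araki's monotone martingale convergence $D_n\uparrow D$ makes explicit what the paper only asserts.

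Your extra care with unbounded logarithms goes beyond the paper, but the truncation detour is unnecessary. As phrased it would also need more than data processing, since normalized compression by $P_n\otimes Q_n$ is not a channel and the marginals of the compressed state are not $P_n\rho_AP_n$ and $Q_n\rho_BQ_n$. When $S(\rho_{AB})<\infty$, splitting $D$ is legitimate and your nonnegative-series evaluation of $-\operatorname{Tr}[\rho_{AB}\ln(\rho_A\otimes\rho_B)]$ already finishes the argument. When $S(\rho_{AB})=\infty$, subadditivity makes the definition \eqref{D4} itself $\infty-\infty$.
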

	
	\begin{proof}
		In the regulated realization, Eq.~\eqref{MR1} is exactly Eq.~\eqref{Der-1} divided by $\ln 2$, and finiteness 
		of $D$ is equivalent to finiteness of the right-hand side because $\ln 2\in(0,\infty)$. In the algebraic 
		realization, Eq.~\eqref{D5} is a definition in terms of Araki relative entropy and is finite by Eq.~\eqref{A3}. 
		When a regulating net exists with trace-class reductions and with finite limiting mutual information 
		for separated regions, the split property implies that the algebraic $D(\rho_{AB}|\rho_A\otimes\rho_B)$ 
		agrees with the limit of the regulated Umegaki relative entropies on the increasing type-I approximants, 
		and the regulated identity Eq.~\eqref{Der-1} passes to the limit because it is an equality at each regulating 
		scale and uses only the logarithm-base conversion Eq.~\eqref{D1} 
		\cite{Araki1976,Haag1996,BuchholzDAntoniLongo1987,OhyaPetz2004,Watrous2018}.
	\end{proof} 
	The logarithm-base normalization is fixed by
	\begin{equation}\label{C1}
		\log_2 x=\frac{1}{\ln 2}\ln x,\qquad 
		S_2(\sigma)=\frac{1}{\ln 2}S(\sigma),\qquad 
		D_2(\sigma|\tau)=\frac{1}{\ln 2}D(\sigma|\tau),
	\end{equation}
	hence the factor $\ln 2$ in Eq.~\eqref{Der-1} and Eq.~\eqref{MR1} is forced and cannot be altered without 
	changing the definitions.
	
	Dimensional consistency of the QES term follows from $[G_N]=L^{d-1}$ in $(d+1)$ dimensions and 
	$[\mathrm{Area}(\chi)]=L^{d-1}$, giving
	\begin{equation}\label{C2}
		\left[\frac{\mathrm{Area}_\rho(\chi)}{4G_N}\right]=1,\qquad 
		\left[\frac{1}{\ln 2}\frac{\mathrm{Area}_\rho(\chi)}{4G_N}\right]=\text{bits}.
	\end{equation}
	
	Dimensional consistency of the heavy-probe exponent follows from $\Delta$ being dimensionless and 
	$L_\rho$ being dimensionless by definition in Eq.~\eqref{eq:SI-code-assumption},
	\begin{equation}\label{C3}
		[\Delta]=1,\qquad [L_\rho]=1,\qquad [\Delta\,L_\rho]=1,
	\end{equation}
	{
		and the error bound in Eq.~\eqref{L4} together with Lemma~\ref{lem:lengtherror} yields an explicit uniform 
		control of the extracted rate,
		\begin{equation}\label{C4}
			\sup_{(x,y)\in\mathsf{D}}\bigl|\widehat{L}^{\mathrm{code}}_{\rho,\Delta}(x,y)-L_\rho(x,y)\bigr|\le \frac{1}{\Delta}\max\bigl\{-\ln(1-\varepsilon_\Delta^{\star}),\ \ln(1+\varepsilon_\Delta^{\star})\bigr\},
		\end{equation}
		which is finite for $\varepsilon_\Delta^{\star}<1$ and depends on $\Delta$ only through $\Delta^{-1}$ and 
		the explicit bound $\varepsilon_\Delta^{\star}$ in Eq.~\eqref{eq:error-budget-code}.
	}
	
	The additivity-loss parameter $\delta$ in Eq.~\eqref{A12} is dimensionless because all terms in Eq.~\eqref{A12} 
	are dimensionless,
	\begin{equation}\label{C5}
		[\delta]=1,\qquad 
		\left[\sum_{m=1}^{n}L_\rho(x_{m-1},x_m)\right]=1,\qquad [2(n-1)\delta]=1.
	\end{equation}

\section{Pinsker \texorpdfstring{$\Rightarrow$}{implies} correlator bound}\label{app:pinsker}
\renewcommand{\theequation}{B.\arabic{equation}}
Let $\mathcal{H}_A$ and $\mathcal{H}_B$ be complex separable Hilbert spaces, let $\mathcal{H}_{AB}:=\mathcal{H}_A\otimes\mathcal{H}_B$ be their Hilbert-space tensor product, let $\mathcal{B}(\mathcal{H})$ denote the bounded operators on a Hilbert space $\mathcal{H}$, and let $\mathcal{T}_1(\mathcal{H})$ denote the trace-class operators on $\mathcal{H}$ equipped with the trace $\operatorname{Tr}_{\mathcal{H}}$.
For a positive trace-class operator $\tau\in\mathcal{T}_1(\mathcal{H})$ define its support projection $\operatorname{supp}(\tau)\in\mathcal{B}(\mathcal{H})$ to be the orthogonal projection onto $\overline{\operatorname{Ran}\tau}$.
Let $\rho_{AB}\in\mathcal{T}_1(\mathcal{H}_{AB})$ satisfy $\rho_{AB}\ge 0$ and $\operatorname{Tr}_{AB}(\rho_{AB})=1$; for all $X_A\in\mathcal{B}(\mathcal{H}_A)$ and $Y_B\in\mathcal{B}(\mathcal{H}_B)$ define the reduced density operators $\rho_A\in\mathcal{T}_1(\mathcal{H}_A)$ and $\rho_B\in\mathcal{T}_1(\mathcal{H}_B)$ uniquely by the partial-trace relations
\begin{equation}\label{S1}
	\operatorname{Tr}_A(\rho_A X_A)=\operatorname{Tr}_{AB}\!\bigl(\rho_{AB}(X_A\otimes\mathbf{1}_B)\bigr),\qquad
	\operatorname{Tr}_B(\rho_B Y_B)=\operatorname{Tr}_{AB}\!\bigl(\rho_{AB}(\mathbf{1}_A\otimes Y_B)\bigr).
\end{equation}
{For this paper, single bars denote scalar absolute values and norms, with the meaning determined by the argument. Thus \(|\psi|\) denotes the Hilbert-space norm of a vector \(\psi\), \(|X|_\infty\) the operator norm of a bounded operator \(X\), and \(|X|_1\) the trace norm of a trace-class operator \(X\). By contrast, the unsubscripted symbol \(|X|\) for an operator denotes the positive operator \((X^\dagger X)^{1/2}\) when used in polar decomposition or spectral formulas.}
For $X\in\mathcal{T}_1(\mathcal{H})$ define the trace norm and for $Y\in\mathcal{B}(\mathcal{H})$ define the operator norm by
\begin{equation}\label{S2}
	|X|_1:=\operatorname{Tr}_{\mathcal{H}}\!\sqrt{X^\dagger X},\qquad
	|Y|_\infty:=\sup_{\psi\in\mathcal{H}\setminus\{0\}}\frac{|Y\psi|}{|\psi|}.
\end{equation}
For density operators $\rho,\sigma\in\mathcal{T}_1(\mathcal{H})$ define the trace distance by
\begin{equation}\label{S3}
	T(\rho,\sigma):=\frac{1}{2}|\rho-\sigma|_1.
\end{equation}
For $\rho,\sigma\in\mathcal{T}_1(\mathcal{H})$ with $\rho\ge 0$, $\sigma\ge 0$, $\operatorname{Tr}\rho=\operatorname{Tr}\sigma=1$, define the (Umegaki) quantum relative entropy in natural-logarithm units by
\begin{equation}\label{S4}
	D(\rho|\sigma):=\begin{cases}
	\operatorname{Tr}_{\mathcal{H}}\!\bigl[\rho(\ln\rho-\ln\sigma)\bigr]\in[0,\infty),& \operatorname{supp}(\rho)\subseteq \operatorname{supp}(\sigma),\\
	+\infty,& \operatorname{supp}(\rho)\nsubseteq \operatorname{supp}(\sigma),
\end{cases}
\end{equation}
where $\ln$ is defined by Borel functional calculus on the support of its argument \cite{Umegaki1962,Araki1976,OhyaPetz2004,Watrous2018}.
Define the base-$2$ relative entropy by
\begin{equation}\label{S5}
	D_2(\rho|\sigma):=\operatorname{Tr}_{\mathcal{H}}\!\bigl[\rho(\log_2\rho-\log_2\sigma)\bigr]
	=\frac{1}{\ln 2}\,D(\rho|\sigma),
\end{equation}
using $\log_2 X=(\ln X)/(\ln 2)$.
Assume the finiteness condition
\begin{equation}\label{S6}
	D(\rho_{AB}|\rho_A\otimes\rho_B)<\infty,
\end{equation}
where $\rho_A\otimes\rho_B$ is the tensor-product density operator on $\mathcal{H}_A\otimes\mathcal{H}_B$.
Define mutual information in bits by
\begin{equation}\label{S7}
	I(A\!:\!B)_{\rho_{AB}}:=D_2(\rho_{AB}|\rho_A\otimes\rho_B)=\frac{1}{\ln 2}\,D(\rho_{AB}|\rho_A\otimes\rho_B)\in[0,\infty),
\end{equation}
and for bounded observables $O_A\in\mathcal{B}(\mathcal{H}_A)$ and $O_B\in\mathcal{B}(\mathcal{H}_B)$ define the connected correlator by
\begin{equation}\label{S8}
	C_{\rho_{AB}}(O_A,O_B):=\operatorname{Tr}_{AB}\!\bigl(\rho_{AB}(O_A\otimes O_B)\bigr)
	-\operatorname{Tr}_A(\rho_A O_A)\operatorname{Tr}_B(\rho_B O_B).
\end{equation}
The quantitative statement established in this section is the following bound, whose constants depend only on the logarithm base and on the operator norms:
\begin{equation}\label{S9}
	|C_{\rho_{AB}}(O_A,O_B)|\le |O_A|_\infty|O_B|_\infty\sqrt{2\ln 2\,I(A\!:\!B)_{\rho_{AB}}}.
\end{equation}
	
\begin{definition}[Classical divergence, total variation, and trace pairing]\label{def:classical}
For a finite set $\Omega$ and probability vectors $p=(p_i)_{i\in\Omega}$ and $q=(q_i)_{i\in\Omega}$ with $p_i\ge 0$, $q_i\ge 0$, $\sum_i p_i=\sum_i q_i=1$, define the classical Kullback-Leibler divergence in natural-logarithm units by
\begin{equation}\label{D1a}
D_{\mathrm{cl}}(p|q):=\begin{cases}
	\sum_{i\in\Omega} p_i\ln\!\left(\frac{p_i}{q_i}\right)\in[0,\infty),&(\forall i)\ (p_i>0\Rightarrow q_i>0),\\
	+\infty,&\exists i\ (p_i>0,\ q_i=0),
\end{cases}
\end{equation}
and define the $\ell^1$-distance and total variation distance by
\begin{equation}\label{D2a}
	|p-q|_1:=\sum_{i\in\Omega}|p_i-q_i|,\qquad \delta(p,q):=\frac{1}{2}|p-q|_1.
\end{equation}
For trace-class $X\in\mathcal{T}_1(\mathcal{H})$ and bounded $Y\in\mathcal{B}(\mathcal{H})$, define the trace pairing by $\langle X,Y\rangle:=\operatorname{Tr}_{\mathcal{H}}(XY)$ whenever $XY\in\mathcal{T}_1(\mathcal{H})$.
For $O_A\in\mathcal{B}(\mathcal{H}_A)$ and $O_B\in\mathcal{B}(\mathcal{H}_B)$ define $O_A\otimes O_B\in\mathcal{B}(\mathcal{H}_{AB})$ as the unique bounded operator satisfying $(O_A\otimes O_B)(\psi_A\otimes\psi_B)=(O_A\psi_A)\otimes(O_B\psi_B)$ for all $\psi_A\in\mathcal{H}_A$ and $\psi_B\in\mathcal{H}_B$ and extending by continuity to $\mathcal{H}_{AB}$.
\end{definition}
The following log-sum inequality supplies the required classical coarse-graining step.
	
\begin{lemma}[Log-sum inequality]\label{lem:logsum}
Let $(a_i)_{i\in\Omega}$ and $(b_i)_{i\in\Omega}$ be families of nonnegative reals with $B:=\sum_{i\in\Omega} b_i>0$ and $A:=\sum_{i\in\Omega} a_i\ge 0$, and assume that $a_i>0$ implies $b_i>0$. Then
\begin{equation}\label{L1a}
\sum_{i\in\Omega} a_i\ln\!\left(\frac{a_i}{b_i}\right)\ge A\ln\!\left(\frac{A}{B}\right),
\end{equation}
with the convention $0\ln(0/b):=0$ for $b>0$.
\end{lemma}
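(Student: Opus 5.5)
The plan is to prove the log-sum inequality, Lemma~\ref{lem:logsum}, through convexity of $f(t):=t\ln t$ on $[0,\infty)$, with $f(0):=0$, followed by Jensen's inequality. First discard the indices with $b_i=0$. By hypothesis these also have $a_i=0$, so they add nothing to the left-hand side of Eq.~\eqref{L1a}, nothing to $A$, and nothing to $B$. We may therefore assume $b_i>0$ for every $i$ in the (finite) index set $\Omega$.

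Next introduce the weights $\lambda_i:=b_i/B$, which are positive and sum to one, and the ratios $t_i:=a_i/b_i\ge 0$. Each term of the left-hand side then reads
\begin{equation}
a_i\ln\frac{a_i}{b_i}=b_i\,f(t_i)=B\,\lambda_i f(t_i),
\end{equation}
and the convention $0\ln(0/b)=0$ is exactly the case $t_i=0$.

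The convexity of $f$ follows from $f''(t)=1/t>0$ on $(0,\infty)$ together with continuity of $f$ at $0$. Jensen's inequality then gives
\begin{equation}
\sum_i\lambda_i f(t_i)\ge f\Bigl(\sum_i\lambda_i t_i\Bigr).
\end{equation}
Since $\sum_i\lambda_i t_i=A/B$, multiplying by $B>0$ yields $\sum_i a_i\ln(a_i/b_i)\ge B\,f(A/B)=A\ln(A/B)$, which is Eq.~\eqref{L1a}. When $A=0$ the right-hand side equals $0$ under the same convention, and the inequality still holds because every term on the left vanishes.

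No step is a genuine obstacle here. The only place that needs care is the boundary bookkeeping: removing the zero-$b_i$ indices, and checking that the convention $0\ln 0=0$ matches the continuous extension of $f$ to $0$, so that Jensen's inequality applies on the closed half-line. If $\Omega$ were allowed to be countable, the same argument would go through with the sums understood as convergent series and Jensen's inequality applied to the discrete probability measure $(\lambda_i)$. The finite case is all that is needed for the subsequent Pinsker argument.
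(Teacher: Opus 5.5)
Your proof is correct and is essentially the paper's argument: both write $a_i\ln(a_i/b_i)=b_i\,\phi(a_i/b_i)$ with $\phi(x)=x\ln x$, then apply Jensen's inequality with weights $b_i/B$, using $\sum_i (b_i/B)(a_i/b_i)=A/B$. Your explicit removal of the indices with $b_i=0$ is slightly more careful than the paper, which sums over all of $\Omega$ even though the stated convention $0\ln(0/b)$ only covers $b>0$.
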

\begin{proof}
Define $\phi(x):=x\ln x$ for $x>0$ and $\phi(0):=0$, so $\phi$ is convex on $[0,\infty)$.
For each $i$ with $b_i>0$ write $a_i\ln(a_i/b_i)=b_i\phi(a_i/b_i)$. Then
\begin{equation}\label{L2a}
\sum_{i\in\Omega} a_i\ln\!\left(\frac{a_i}{b_i}\right)=\sum_{i\in\Omega} b_i\,\phi\!\left(\frac{a_i}{b_i}\right)
=B\sum_{i\in\Omega}\frac{b_i}{B}\,\phi\!\left(\frac{a_i}{b_i}\right).
\end{equation}
By Jensen's inequality for the convex function $\phi$,
\begin{equation}\label{La3a}
B\sum_{i\in\Omega}\frac{b_i}{B}\,\phi\!\left(\frac{a_i}{b_i}\right)\ge B\,\phi\!\left(\sum_{i\in\Omega}\frac{b_i}{B}\,\frac{a_i}{b_i}\right)
=B\,\phi\!\left(\frac{A}{B}\right)=A\ln\!\left(\frac{A}{B}\right),
\end{equation}
which is Eq.~\eqref{L1a}.
\end{proof}
	
The next lemma is a binary Pinsker inequality with explicit constant.
	
\begin{lemma}[Binary Pinsker in nats]\label{lem:binarypinsker}
Let $u,v\in(0,1)$ and define the binary divergence
\begin{equation}\label{L4a}
D_{\mathrm{bin}}(u|v):=u\ln\!\left(\frac{u}{v}\right)+(1-u)\ln\!\left(\frac{1-u}{1-v}\right).
\end{equation}
Then
\begin{equation}\label{L5a}
D_{\mathrm{bin}}(u|v)\ge 2(u-v)^2.
\end{equation}
\end{lemma}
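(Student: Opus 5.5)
I plan to prove the binary Pinsker bound by fixing $v\in(0,1)$ and studying the difference
\begin{equation*}
f(u):=D_{\mathrm{bin}}(u|v)-2(u-v)^2,\qquad u\in(0,1).
\end{equation*}
The aim is to show that $f$ attains its minimum value $0$ at $u=v$. The first step is the trivial evaluation $f(v)=0$, since both terms vanish when $u=v$.

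Next I differentiate in $u$. A direct computation from Eq.~\eqref{L4a} gives
\begin{equation*}
\partial_u D_{\mathrm{bin}}(u|v)=\ln\frac{u}{v}-\ln\frac{1-u}{1-v}.
\end{equation*}
Hence $f'(v)=0$. The second derivative is
\begin{equation*}
f''(u)=\frac{1}{u}+\frac{1}{1-u}-4=\frac{1}{u(1-u)}-4.
\end{equation*}
Since $u(1-u)\le 1/4$ on $(0,1)$, we get $f''(u)\ge 0$ throughout. So $f$ is convex on $(0,1)$, with a critical point at $u=v$ where it vanishes. A convex differentiable function lies above its tangent line, so $f(u)\ge f(v)+f'(v)(u-v)=0$ for all $u\in(0,1)$, which is Eq.~\eqref{L5a}.

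Alternatively, one can avoid convexity by writing $f(u)=\int_v^u (u-t)f''(t)\,dt$ (Taylor with integral remainder about $v$). The integrand is nonnegative for either sign of $u-v$.

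The only point needing care is that all logarithms are finite. This holds because $u,v\in(0,1)$ strictly, so $D_{\mathrm{bin}}$ is smooth in $u$ on the open interval. The endpoint cases $u\in\{0,1\}$ are not part of the statement. If they are needed later (e.g.\ for the coarse-grained two-outcome distributions feeding into Eq.~\eqref{S9}), they follow by continuity, using the convention $0\ln 0=0$ and letting $u\to0^+$ or $u\to1^-$. Nothing here is a genuine obstacle; the computation of $f''$ and the bound $u(1-u)\le1/4$ carry the whole argument.
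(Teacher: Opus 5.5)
Your proof is correct and follows the paper's argument: the paper also sets $f(u)=D_{\mathrm{bin}}(u|v)-2(u-v)^2$, uses $f(v)=f'(v)=0$ and $f''(u)=1/(u(1-u))-4\ge 0$ (from $u(1-u)\le 1/4$), and concludes from convexity that this stationary point is a global minimum. Your continuity remark for $u\in\{0,1\}$ is a useful addition, since the paper later applies the lemma with $u=p_A$, which can equal $1$, without comment.
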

\begin{proof}
Define $f(u):=D_{\mathrm{bin}}(u|v)-2(u-v)^2$ for $u\in(0,1)$.
Differentiating Eq.~\eqref{L4a} with respect to $u$ gives
\begin{equation}\label{L6a}
\frac{d}{du}D_{\mathrm{bin}}(u|v)=\ln\!\left(\frac{u}{v}\right)-\ln\!\left(\frac{1-u}{1-v}\right)
=\ln\!\left(\frac{u(1-v)}{v(1-u)}\right),
\end{equation}
and differentiating again yields
\begin{equation}\label{L7a}
\frac{d^2}{du^2}D_{\mathrm{bin}}(u|v)=\frac{1}{u}+\frac{1}{1-u}=\frac{1}{u(1-u)}.
\end{equation}
Therefore
\begin{equation}\label{L8}
f'(u)=\ln\!\left(\frac{u(1-v)}{v(1-u)}\right)-4(u-v),\qquad f''(u)=\frac{1}{u(1-u)}-4.
\end{equation}
For $u\in(0,1)$ one has $u(1-u)\le \tfrac{1}{4}$, hence $\frac{1}{u(1-u)}\ge 4$ and consequently $f''(u)\ge 0$ on $(0,1)$.
Thus $f$ is convex on $(0,1)$.
Evaluating at $u=v$ gives
\begin{equation}\label{L9}
f(v)=0,\qquad f'(v)=\ln(1)-0=0.
\end{equation}
Since $f$ is convex and has a stationary point at $u=v$, it attains its global minimum there, so $f(u)\ge 0$ for all $u\in(0,1)$, which is Eq.~\eqref{L5a}.
\end{proof}
	
The next lemma upgrades the binary bound to a finite-alphabet Pinsker inequality.
	
\begin{lemma}[Classical Pinsker in nats]\label{lem:classicalpinsker}
Let $p,q$ be probability vectors on a finite set $\Omega$ with $D_{\mathrm{cl}}(p|q)<\infty$. Then
\begin{equation}\label{L10}
D_{\mathrm{cl}}(p|q)\ge \frac{1}{2}|p-q|_1^2=2\,\delta(p,q)^2.
\end{equation}
\end{lemma}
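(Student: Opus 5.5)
The plan is to reduce the finite-alphabet statement to the binary inequality of Lemma~\ref{lem:binarypinsker} by a two-cell coarse-graining, using the log-sum inequality of Lemma~\ref{lem:logsum} as the monotonicity step. Set $\Omega_+:=\{i\in\Omega: p_i\ge q_i\}$ and $\Omega_-:=\Omega\setminus\Omega_+$, and put $u:=\sum_{i\in\Omega_+}p_i$ and $v:=\sum_{i\in\Omega_+}q_i$. Since $\sum_i(p_i-q_i)=0$, the positive and negative parts of $p-q$ have equal mass. Hence
\begin{equation*}
|p-q|_1=\sum_{i\in\Omega_+}(p_i-q_i)+\sum_{i\in\Omega_-}(q_i-p_i)=2(u-v),
\end{equation*}
so that $\delta(p,q)=u-v\ge 0$. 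This is the only place the choice of partition matters: it is the partition for which the binary total variation equals the full one.

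The second step is data processing under the coarse-graining $\Omega\to\{+,-\}$. Apply Lemma~\ref{lem:logsum} separately on $\Omega_+$, with $a_i=p_i$, $b_i=q_i$, $A=u$, $B=v$, and on $\Omega_-$, with $A=1-u$, $B=1-v$. Adding the two results gives
\begin{equation*}
D_{\mathrm{cl}}(p|q)\ge u\ln\frac{u}{v}+(1-u)\ln\frac{1-u}{1-v}=D_{\mathrm{bin}}(u|v).
\end{equation*}
The finiteness hypothesis $D_{\mathrm{cl}}(p|q)<\infty$ supplies exactly the support condition ($p_i>0\Rightarrow q_i>0$) that Lemma~\ref{lem:logsum} requires. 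The lemma also needs $B>0$, so a cell with $B=0$ must be handled separately. In that case every $q_i$ in the cell vanishes, hence every $p_i$ in the cell vanishes, and the cell contributes $0$ on both sides.

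The third step invokes Lemma~\ref{lem:binarypinsker} to get $D_{\mathrm{bin}}(u|v)\ge 2(u-v)^2=2\delta(p,q)^2=\tfrac12|p-q|_1^2$, which is the claim. The main obstacle is that Lemma~\ref{lem:binarypinsker} is stated only for $u,v\in(0,1)$, so the boundary values need a separate check.

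\begin{itemize}
\item If $v=0$, the support condition forces $u=0$, and both sides vanish.
\item If $v=1$, then $\Omega_-$ carries no $q$-mass, hence no $p$-mass, so $u=1$ and again both sides vanish.
\item If $v\in(0,1)$ and $u=1$, then $D_{\mathrm{bin}}(1|v)=-\ln v$. The bound $-\ln v\ge 2(1-v)^2$ follows from the $u\to1^-$ limit of Lemma~\ref{lem:binarypinsker}, using continuity of $D_{\mathrm{bin}}(\cdot|v)$ on $[0,1]$.
\item If $v\in(0,1)$ and $u=0$, the same continuity argument with $u\to0^+$ applies.
\end{itemize}

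Taken together, the three steps establish Eq.~\eqref{L10}.
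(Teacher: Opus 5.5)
Your proposal is correct and follows the paper's proof step for step: the same partition $\{i: p_i\ge q_i\}$, the identity $|p-q|_1=2(p_A-q_A)$, the log-sum coarse-graining to $D_{\mathrm{bin}}(p_A|q_A)$, and then Lemma~\ref{lem:binarypinsker}. Your extra treatment of cells with zero $q$-mass and of the endpoint values $u,v\in\{0,1\}$ is sound and fills a small gap the paper passes over, since Lemma~\ref{lem:binarypinsker} is stated only on $(0,1)$. Of those endpoint cases, $v=0$ and $u=0<v$ cannot actually occur (by the support condition and $u\ge v$), whereas $u=1$ with $v\in(0,1)$ genuinely arises and needs your continuity argument.
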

\begin{proof}
Define $A:=\{i\in\Omega:\ p_i\ge q_i\}$ and write $p_A:=\sum_{i\in A}p_i$, $q_A:=\sum_{i\in A}q_i$, $p_{A^c}:=1-p_A$, $q_{A^c}:=1-q_A$.
Since $\sum_i(p_i-q_i)=0$,
\begin{equation}\label{L11}
|p-q|_1=\sum_{i\in A}(p_i-q_i)+\sum_{i\in A^c}(q_i-p_i)=2\sum_{i\in A}(p_i-q_i)=2(p_A-q_A),
\end{equation}
hence $\delta(p,q)=p_A-q_A\in[0,1]$.
Apply Lemma~\ref{lem:logsum} to the collections $(a_i,b_i)=(p_i,q_i)$ on $A$ to obtain
\begin{equation}\label{L12}
\sum_{i\in A} p_i\ln\!\left(\frac{p_i}{q_i}\right)\ge p_A\ln\!\left(\frac{p_A}{q_A}\right),
\end{equation}
and apply Lemma~\ref{lem:logsum} on $A^c$ to obtain
\begin{equation}\label{L13}
\sum_{i\in A^c} p_i\ln\!\left(\frac{p_i}{q_i}\right)\ge p_{A^c}\ln\!\left(\frac{p_{A^c}}{q_{A^c}}\right)
=(1-p_A)\ln\!\left(\frac{1-p_A}{1-q_A}\right).
\end{equation}
Summing Eq.~\eqref{L12} and Eq.~\eqref{L13} yields the coarse-grained lower bound
\begin{equation}\label{L14}
D_{\mathrm{cl}}(p|q)=\sum_{i\in\Omega} p_i\ln\!\left(\frac{p_i}{q_i}\right)\ge
p_A\ln\!\left(\frac{p_A}{q_A}\right)+(1-p_A)\ln\!\left(\frac{1-p_A}{1-q_A}\right)=D_{\mathrm{bin}}(p_A|q_A).
\end{equation}
By Lemma~\ref{lem:binarypinsker} with $u=p_A$ and $v=q_A$,
\begin{equation}\label{L15}
	D_{\mathrm{bin}}(p_A|q_A)\ge 2(p_A-q_A)^2=2\,\delta(p,q)^2=\frac{1}{2}|p-q|_1^2,
\end{equation}
and combining Eq.~\eqref{L14} and Eq.~\eqref{L15} gives Eq.~\eqref{L10}.
\end{proof}
	
The next lemma supplies the Helstrom variational characterization of trace distance in the trace-class setting.
	
\begin{lemma}[Helstrom variational formula]\label{lem:helstrom}
Let $\Delta\in\mathcal{T}_1(\mathcal{H})$ be self-adjoint.
Define its positive and negative parts by
\begin{equation}\label{L16}
	\Delta_+:=\frac{|\Delta|+\Delta}{2},\qquad \Delta_-:=\frac{|\Delta|-\Delta}{2},
\end{equation}
where $|\Delta|:=\sqrt{\Delta^2}$.
Then $\Delta_\pm\ge 0$, $\Delta=\Delta_+-\Delta_-$, $|\Delta|=\Delta_+ + \Delta_-$, and
\begin{equation}\label{L17}
\sup_{0\le M\le \mathbf{1}}\operatorname{Tr}_{\mathcal{H}}(M\Delta)=\operatorname{Tr}_{\mathcal{H}}(\Delta_+).
\end{equation}
In particular, for density operators $\rho,\sigma$ on $\mathcal{H}$ one has $\operatorname{Tr}(\rho-\sigma)=0$ and therefore
\begin{equation}\label{L18}
	T(\rho,\sigma)=\frac{1}{2}|\rho-\sigma|_1=\operatorname{Tr}_{\mathcal{H}}\!\bigl((\rho-\sigma)_+\bigr)
	=\sup_{0\le M\le \mathbf{1}}\operatorname{Tr}_{\mathcal{H}}(M(\rho-\sigma)).
\end{equation}
\end{lemma}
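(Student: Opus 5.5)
The statement to prove is the Helstrom variational formula (Lemma~\ref{lem:helstrom}). The approach is to use the spectral theorem for the compact self-adjoint trace-class operator $\Delta$. I will write $\Delta=\sum_n \lambda_n P_n$, with real eigenvalues $\lambda_n$ satisfying $\sum_n|\lambda_n|<\infty$ and mutually orthogonal finite-rank spectral projections $P_n$. The case $\lambda=0$ may have an infinite-dimensional eigenspace, but it contributes nothing.

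By functional calculus, $|\Delta|=\sum_n|\lambda_n|P_n$, so
\[
\Delta_\pm=\sum_n(\lambda_n)_\pm P_n .
\]
The algebraic claims are then immediate. Each $\Delta_\pm$ is positive, since $(\lambda)_\pm\ge0$, and is trace class. The identities $\Delta=\Delta_+-\Delta_-$ and $|\Delta|=\Delta_++\Delta_-$ follow from the definitions in Eq.~\eqref{L16}. I would also record that $\Delta_+\Delta_-=0$. Now let $Q$ be the spectral projection of $\Delta$ onto $(0,\infty)$. Then $0\le Q\le\mathbf 1$, and
\[
\operatorname{Tr}(Q\Delta)=\operatorname{Tr}(Q\Delta_+)=\operatorname{Tr}\Delta_+ .
\]
This shows that the supremum in Eq.~\eqref{L17} is at least $\operatorname{Tr}\Delta_+$.

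For the reverse inequality, take any $0\le M\le\mathbf 1$. Since $M\Delta_\pm$ are trace class, the trace is linear and
\[
\operatorname{Tr}(M\Delta)=\operatorname{Tr}(M\Delta_+)-\operatorname{Tr}(M\Delta_-).
\]
The key step is to justify, in infinite dimensions, that $\operatorname{Tr}(XY)\ge0$ for $X\ge0$ trace class and $Y\ge0$ bounded. I would write
\[
\operatorname{Tr}(XY)=\operatorname{Tr}(X^{1/2}YX^{1/2})=\sum_k\langle X^{1/2}e_k,\,Y X^{1/2}e_k\rangle\ge0,
\]
using cyclicity of the trace, which is valid because $X^{1/2}$ is Hilbert--Schmidt. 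Applying this with $Y=M$ gives $\operatorname{Tr}(M\Delta_-)\ge0$. Applying it with $Y=\mathbf 1-M$ gives $\operatorname{Tr}(M\Delta_+)\le\operatorname{Tr}\Delta_+$. Together these give $\operatorname{Tr}(M\Delta)\le\operatorname{Tr}\Delta_+$, which proves Eq.~\eqref{L17}.

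For Eq.~\eqref{L18}, set $\Delta=\rho-\sigma$. This operator is self-adjoint and trace class, and $\operatorname{Tr}\Delta=\operatorname{Tr}\rho-\operatorname{Tr}\sigma=0$. Hence $\operatorname{Tr}\Delta_+=\operatorname{Tr}\Delta_-$. Since $\|\Delta\|_1=\operatorname{Tr}|\Delta|=\operatorname{Tr}\Delta_++\operatorname{Tr}\Delta_-$ by Eq.~\eqref{S2}, this gives $\tfrac12\|\rho-\sigma\|_1=\operatorname{Tr}(\rho-\sigma)_+$. Combining with Eq.~\eqref{L17} yields the stated chain of equalities.

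The only real obstacle is the trace-class bookkeeping in infinite dimensions: the positivity of $\operatorname{Tr}(XY)$ and the cyclicity step. I would handle both through the Hilbert--Schmidt square root as above, or equivalently by evaluating the trace in the eigenbasis of $\Delta$, where
\[
\operatorname{Tr}(M\Delta)=\sum_n\lambda_n\operatorname{Tr}(MP_n),\qquad 0\le\operatorname{Tr}(MP_n)\le\operatorname{rank}P_n .
\]
In that form the bound reduces to $\sum_n\lambda_n c_n\le\sum_{\lambda_n>0}\lambda_n\operatorname{rank}P_n$, with each $c_n:=\operatorname{Tr}(MP_n)\in[0,\operatorname{rank}P_n]$. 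This follows termwise, because positive $\lambda_n$ are maximised at $c_n=\operatorname{rank}P_n$ and negative $\lambda_n$ at $c_n=0$.
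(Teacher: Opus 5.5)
Your proof is correct and follows essentially the same route as the paper. Both arguments use the spectral decomposition of $\Delta$ and attain the supremum with the positive spectral projection; you take $(0,\infty)$ where the paper takes $[0,\infty)$, which is immaterial since the kernel contributes nothing. The upper bound in both comes from $\operatorname{Tr}(\Delta_-^{1/2}M\Delta_-^{1/2})\ge 0$ together with $\operatorname{Tr}(\Delta_+^{1/2}(\mathbf{1}-M)\Delta_+^{1/2})\ge 0$, and the trace-distance identity from $\operatorname{Tr}(\rho-\sigma)=0$. Your Hilbert--Schmidt justification of the cyclicity step supplies a detail the paper only asserts.
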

\begin{proof}
Since $\Delta$ is self-adjoint and trace-class, it admits a spectral decomposition $\Delta=\sum_{n} \lambda_n |e_n\rangle\langle e_n|$ with $\sum_n |\lambda_n|<\infty$, and then $|\Delta|=\sum_n |\lambda_n| |e_n\rangle\langle e_n|$.
From Eq.~\eqref{L16},
\begin{equation}\label{L19}
	\Delta_+=\sum_{\lambda_n\ge 0}\lambda_n |e_n\rangle\langle e_n|,\qquad
	\Delta_-=\sum_{\lambda_n<0}(-\lambda_n)|e_n\rangle\langle e_n|,
\end{equation}
so $\Delta_\pm\ge 0$ and $\Delta=\Delta_+-\Delta_-$ and $|\Delta|=\Delta_+ + \Delta_-$.
Let $M$ satisfy $0\le M\le \mathbf{1}$. Since $M\ge 0$ and $\Delta_-\ge 0$, one has $\operatorname{Tr}(M\Delta_-)=\operatorname{Tr}(\Delta_-^{1/2}M\Delta_-^{1/2})\ge 0$, so
\begin{equation}\label{L20}
\operatorname{Tr}(M\Delta)=\operatorname{Tr}(M\Delta_+)-\operatorname{Tr}(M\Delta_-)\le \operatorname{Tr}(M\Delta_+).
\end{equation}
Since $0\le M\le \mathbf{1}$ and $\Delta_+\ge 0$, one has $0\le \Delta_+^{1/2}M\Delta_+^{1/2}\le \Delta_+^{1/2}\mathbf{1}\Delta_+^{1/2}=\Delta_+$, so by positivity and trace monotonicity,
\begin{equation}\label{L21}
	\operatorname{Tr}(M\Delta_+)=\operatorname{Tr}\!\bigl(\Delta_+^{1/2}M\Delta_+^{1/2}\bigr)\le \operatorname{Tr}(\Delta_+).
\end{equation}
Combining Eq.~\eqref{L20} and Eq.~\eqref{L21} gives $\operatorname{Tr}(M\Delta)\le \operatorname{Tr}(\Delta_+)$ for all admissible $M$, hence $\sup_{0\le M\le \mathbf{1}}\operatorname{Tr}(M\Delta)\le \operatorname{Tr}(\Delta_+) $.
Let $P_+$ be the support projection of $\Delta_+$, equivalently the spectral projector of $\Delta$ onto $[0,\infty)$.
Then $0\le P_+\le \mathbf{1}$ and $P_+\Delta_- =0$ and $P_+\Delta_+=\Delta_+$, so
\begin{equation}\label{L22}
			\operatorname{Tr}(P_+\Delta)=\operatorname{Tr}(P_+\Delta_+)-\operatorname{Tr}(P_+\Delta_-)=\operatorname{Tr}(\Delta_+)-0=\operatorname{Tr}(\Delta_+).
		\end{equation}
		Thus the supremum in Eq.~\eqref{L17} equals $\operatorname{Tr}(\Delta_+)$.
		For density operators $\rho,\sigma$, $\Delta:=\rho-\sigma$ satisfies $\operatorname{Tr}\Delta=0$, hence $\operatorname{Tr}\Delta_+=\operatorname{Tr}\Delta_-=\tfrac{1}{2}\operatorname{Tr}|\Delta|=\tfrac{1}{2}|\Delta|_1$, which yields Eq.~\eqref{L18}.
	\end{proof}
	
	The next lemma states the data-processing inequality needed to reduce quantum relative entropy to classical relative entropy of a measurement; it is invoked with an explicit citation.
	
	\begin{lemma}[Data processing for Umegaki relative entropy]\label{lem:dataproc}
		Let $\mathcal{H}$ and $\mathcal{K}$ be separable Hilbert spaces and let $\Phi:\mathcal{T}_1(\mathcal{H})\to\mathcal{T}_1(\mathcal{K})$ be completely positive and trace preserving.
		For density operators $\rho,\sigma$ on $\mathcal{H}$ one has
		\begin{equation}\label{L23}
			D(\rho|\sigma)\ge D(\Phi(\rho)|\Phi(\sigma)).
		\end{equation}
		In particular, for any POVM $\{M_i\}_{i\in\Omega}\subset\mathcal{B}(\mathcal{H})$ with $M_i\ge 0$ and $\sum_i M_i=\mathbf{1}$, the measurement channel
		\begin{equation}
			\mathcal{M}(\tau):=\sum_{i\in\Omega}\operatorname{Tr}(\tau M_i)\,|i\rangle\langle i|
		\end{equation}
		satisfies $D(\rho|\sigma)\ge D_{\mathrm{cl}}(p|q)$ for $p_i=\operatorname{Tr}(\rho M_i)$ and $q_i=\operatorname{Tr}(\sigma M_i)$ \cite{Lindblad1975,Uhlmann1977,Petz1986,HiaiPetz1991,OhyaPetz2004,Watrous2018}.
	\end{lemma}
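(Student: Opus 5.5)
The plan is to reduce the general channel to two elementary operations via a Stinespring dilation, prove monotonicity for each, and then specialise to the measurement channel. Any CPTP map $\Phi:\mathcal{T}_1(\mathcal{H})\to\mathcal{T}_1(\mathcal{K})$ can be written as
\begin{equation}
\Phi(\tau)=\operatorname{Tr}_{E}\bigl(V\tau V^\dagger\bigr),
\end{equation}
with $V:\mathcal{H}\to\mathcal{K}\otimes\mathcal{E}$ an isometry and $\mathcal{E}$ a separable environment Hilbert space. It therefore suffices to prove two things:
\begin{itemize}
\item[(i)] invariance of $D$ under isometric embeddings, $D(V\rho V^\dagger|V\sigma V^\dagger)=D(\rho|\sigma)$;
\item[(ii)] monotonicity under partial trace, $D(\rho_{KE}|\sigma_{KE})\ge D(\rho_K|\sigma_K)$.
\end{itemize}

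Step (i) is routine. Functional calculus gives $\ln(V\rho V^\dagger)=V(\ln\rho)V^\dagger$ on $\operatorname{Ran}V$, and $V^\dagger V=\mathbf{1}$. Hence the trace in Eq.~\eqref{S4} is unchanged. The support condition is also preserved, since $\operatorname{supp}(V\rho V^\dagger)=V\operatorname{supp}(\rho)V^\dagger$.

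Step (ii) is the main obstacle. In finite dimensions I would argue as follows. First, use Lieb's concavity theorem to obtain joint convexity of $(\rho,\sigma)\mapsto D(\rho|\sigma)$. Then write the partial trace as a unitary twirl over the environment: for a unitary $1$-design $\{U_j\}$ on a $d_E$-dimensional $\mathcal{E}$,
\begin{equation}
\frac{1}{|J|}\sum_j(\mathbf{1}\otimes U_j)\,\rho_{KE}\,(\mathbf{1}\otimes U_j)^\dagger=\rho_K\otimes\mathbf{1}/d_E .
\end{equation}
Joint convexity together with unitary invariance then gives
\begin{equation}
D(\rho_{KE}|\sigma_{KE})\ge D(\rho_K\otimes\mathbf{1}/d_E\,|\,\sigma_K\otimes\mathbf{1}/d_E)=D(\rho_K|\sigma_K).
\end{equation}
For separable infinite-dimensional spaces I would follow Lindblad's route. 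Take increasing finite-rank projections $P_n\uparrow\mathbf{1}$ on $\mathcal{K}$ and on $\mathcal{E}$, and apply the finite-dimensional inequality to the normalised compressions. Pass to the limit using two facts:
\begin{itemize}
\item lower semicontinuity of $D$ in the trace norm, which controls the right-hand side;
\item the fact that $D$ of compressions converges to $D$ as $n\to\infty$ (Lindblad's monotone-limit result), which controls the left-hand side.
\end{itemize}
The case $D(\rho|\sigma)=+\infty$ is trivial, so only the finite case needs this limit argument. Alternatively, one may simply invoke Uhlmann's interpolation theorem, which is precisely what the citations provide. I will rely on them rather than re-derive Lieb concavity.

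Finally, the POVM statement follows because $\mathcal{M}$ is CPTP. It is a composition of the Naimark/Stinespring isometry $\psi\mapsto\sum_i M_i^{1/2}\psi\otimes|i\rangle$ with a partial trace and a dephasing, or one can check complete positivity directly since its Kraus operators are $|i\rangle\langle e|M_i^{1/2}$. Its outputs $\mathcal{M}(\rho)=\operatorname{diag}(p)$ and $\mathcal{M}(\sigma)=\operatorname{diag}(q)$ commute and are diagonal in the same basis. For such operators the Umegaki formula in Eq.~\eqref{S4} reduces termwise to $\sum_ip_i\ln(p_i/q_i)$. The support condition $\operatorname{supp}\operatorname{diag}(p)\subseteq\operatorname{supp}\operatorname{diag}(q)$ is exactly the implication $p_i>0\Rightarrow q_i>0$, so the result coincides with $D_{\mathrm{cl}}(p|q)$ of Definition~\ref{def:classical}, including the $+\infty$ convention. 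This yields $D(\rho|\sigma)\ge D_{\mathrm{cl}}(p|q)$.
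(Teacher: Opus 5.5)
Your proposal is correct. It differs from the paper mainly in how much of the cited theorem you unpack.

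The paper's proof is a bare citation. It identifies Eq.~\eqref{L23} with the Lindblad--Uhlmann monotonicity theorem and calls the POVM statement its specialization to $\mathcal{M}$. The identification of the Umegaki entropy of the diagonal outputs with $D_{\mathrm{cl}}$ is left implicit; it is only invoked later, in the proof of Lemma~\ref{lem:quantumpinsker}.

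You instead reconstruct Lindblad's argument. A Stinespring dilation splits $\Phi$ into an isometry followed by a partial trace; $D$ is invariant under the isometry by $V^\dagger V=\mathbf{1}$ and the support convention of Eq.~\eqref{S4}. Partial-trace monotonicity in finite dimensions comes from Lieb-derived joint convexity together with a unitary $1$-design twirl on $\mathcal{E}$. The separable case is reached through the normalised finite-rank compressions $\tilde\rho^{(n)}_{KE},\tilde\sigma^{(n)}_{KE}$: trace-norm lower semicontinuity controls the reduced side, and convergence of compressed relative entropies controls the full side. This shows exactly where complete positivity, trace preservation and separability enter.

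Your treatment of the measurement channel also fills a step the paper only asserts. You give the Kraus operators $|i\rangle\langle e|M_i^{1/2}$, with $e$ ranging over an orthonormal basis, and note that the outputs are commuting and diagonal. You also match $\operatorname{supp}\operatorname{diag}(p)\subseteq\operatorname{supp}\operatorname{diag}(q)$ with $p_i>0\Rightarrow q_i>0$, including the $+\infty$ case of Definition~\ref{def:classical}.

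What your route does not buy is independence from the literature. Lieb concavity (or Uhlmann interpolation), lower semicontinuity, and the compression limit are still imported, so both proofs rest on the same external results. In particular, the bound $\limsup_n D(\tilde\rho^{(n)}_{KE}|\tilde\sigma^{(n)}_{KE})\le D(\rho_{KE}|\sigma_{KE})$ is itself an infinite-dimensional monotonicity statement, under a pinching. It therefore has to come from Lindblad's result, as you indicate, and cannot be derived from the finite-dimensional inequality you prove.
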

	\begin{proof}
		The inequality Eq.~\eqref{L23} is the Lindblad-Uhlmann monotonicity theorem for Umegaki relative entropy under completely positive trace-preserving maps, and the measurement-channel statement is the specialization to the CPTP map $\mathcal{M}$ \cite{Lindblad1975,Uhlmann1977,Petz1986,HiaiPetz1991,OhyaPetz2004,Watrous2018}.
	\end{proof}
	
	The next lemma provides the trace-norm/operator-norm duality required to bound expectation-value differences.
	
	{\begin{lemma}[Trace-norm duality]\label{lem:trace_duality}
			Let $H$ be separable, let $X\in \mathcal{T}_1(H)$ be trace-class, and let $Y\in \mathcal{B}(H)$ be bounded. Then
			\begin{equation}\label{eq:dual_trace}
				|X|_1 \;=\; \sup_{|Y|_\infty\le 1}\, \big|\Tr_H(XY)\big|,
				\qquad
				\big|\Tr_H(XY)\big|\;\le\;|X|_1\,|Y|_\infty .
			\end{equation}
		\end{lemma}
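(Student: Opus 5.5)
The plan is to prove the two claims in the natural order. First I show the Hölder-type estimate $|\operatorname{Tr}_H(XY)|\le|X|_1|Y|_\infty$. Then I show that this bound is attained in the supremum by an explicit unitary (hence norm-one) $Y$ built from the polar decomposition of $X$. Throughout I work on a separable $H$ with $X\in\mathcal{T}_1(H)$ and $Y\in\mathcal{B}(H)$. I use the standard facts that $\mathcal{T}_1(H)$ is a two-sided ideal in $\mathcal{B}(H)$, so $XY\in\mathcal{T}_1(H)$ and the trace is defined. I also use that the trace is cyclic for a trace-class times bounded product, $\operatorname{Tr}(XY)=\operatorname{Tr}(YX)$.

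\textbf{Upper bound.} I would take the polar decomposition $X=V|X|$, where $V$ is a partial isometry with $|V|_\infty\le1$ and $|X|=(X^\dagger X)^{1/2}$ is positive and trace class. Since $|X|$ is compact and positive, it has the spectral decomposition $|X|=\sum_n s_n|e_n\rangle\langle e_n|$. Here $s_n\ge0$ are the singular values, $\sum_n s_n=|X|_1$, and $\{e_n\}$ is an orthonormal system, which I complete to a basis using kernel vectors. Cyclicity then gives
\begin{equation}
\operatorname{Tr}_H(XY)=\operatorname{Tr}_H(|X|\,YV)=\sum_n s_n\langle e_n,YVe_n\rangle .
\end{equation}
Each term satisfies $|\langle e_n,YVe_n\rangle|\le|Y|_\infty|V|_\infty\le|Y|_\infty$ by Cauchy--Schwarz. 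Summing the absolutely convergent series yields $|\operatorname{Tr}_H(XY)|\le|Y|_\infty\sum_n s_n=|X|_1|Y|_\infty$. It follows immediately that $\sup_{|Y|_\infty\le1}|\operatorname{Tr}_H(XY)|\le|X|_1$.

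\textbf{Attainment.} I choose $Y:=V^\dagger$. This operator is bounded with $|V^\dagger|_\infty\le1$, so it is admissible. Then $XY=V|X|V^\dagger$, and cyclicity together with $V^\dagger V|X|=|X|$ gives
\begin{equation}
\operatorname{Tr}_H(XV^\dagger)=\operatorname{Tr}_H(V^\dagger V|X|)=\operatorname{Tr}_H|X|=|X|_1 .
\end{equation}
The identity $V^\dagger V|X|=|X|$ holds because $V^\dagger V$ is the projection onto $\overline{\operatorname{Ran}|X|}$. The supremum therefore equals $|X|_1$ and is attained, which proves the identity in Eq.~\eqref{eq:dual_trace}.

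The main obstacle is justifying the infinite-dimensional manipulations rather than the algebra itself. This means establishing the ideal property, the cyclicity of the trace for a product of trace-class and bounded operators, and the independence of the trace from the chosen orthonormal basis. I would settle these by citing standard references (Reed--Simon, or Watrous \cite{Watrous2018} for the finite-dimensional analogue) rather than reproving them. Once they are in place, both steps reduce to the one-line computations above.
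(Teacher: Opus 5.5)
Your proposal is correct and follows the paper's proof essentially step for step. Both arguments use the polar decomposition $X=V|X|$, cyclicity to rewrite $\operatorname{Tr}(XY)=\operatorname{Tr}(|X|YV)$, and diagonalisation of $|X|$ for the H\"older bound. Both attain the supremum at $Y=V^\dagger$ using $V^\dagger V|X|=|X|$. The only difference is that the paper proves the cyclicity identity $\operatorname{Tr}(ATB)=\operatorname{Tr}(TBA)$ by finite-rank approximation, where you cite it.

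One wording slip: your opening paragraph calls the maximiser ``unitary''. In fact $V^\dagger$ is only a partial isometry, and in infinite dimensions no unitary need attain the supremum, for instance when $\dim\ker X\neq\dim\ker X^\dagger$. Your argument only uses $|V^\dagger|_\infty\le 1$, so nothing is affected.
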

		
		\begin{proof}
			If $X=0$, then $|X|_1=0$ and $\Tr(XY)=0$ for all $Y$, so Eq.~\eqref{eq:dual_trace} is immediate. Assume henceforth
			$X\neq 0$.
			If $S\in \mathcal{T}_1(H)$ and $B\in \mathcal{B}(H)$, then
			\begin{equation}\label{eq:aux_holder}
				\big|\Tr_H(SB)\big| \le |S|_1\,|B|_\infty,
				\qquad
				|ASB|_1 \le |A|_\infty\,|S|_1\,|B|_\infty \quad (A,B\in\mathcal{B}(H)).
			\end{equation}
			Since $S$ is trace-class, it is compact and admits a singular-value
			decomposition
			\begin{equation}
				\begin{aligned}
				S&=\sum_{k\ge 1}s_k\,|u_k\rangle\langle v_k|,
				& s_k&\ge 0,
				& \sum_k s_k&=|S|_1,
				\end{aligned}
			\end{equation}
			with orthonormal families $\{u_k\}$ and $\{v_k\}$. Then
			\begin{equation}
				\Tr(SB)=\sum_{k\ge 1} s_k\,\langle v_k,\,B u_k\rangle,
			\end{equation}
			hence
			\begin{equation}
				\big|\Tr(SB)\big| \le \sum_{k\ge 1} s_k\,|\langle v_k,\,B u_k\rangle|
				\le \sum_{k\ge 1} s_k\,|B|_\infty
				= |S|_1\,|B|_\infty.
			\end{equation}
			Similarly, $ASB=\sum_k s_k\,|Au_k\rangle\langle B^\dagger v_k|$, so
			\begin{equation}
				|ASB|_1 \le \sum_k s_k\,|Au_k|\,|B^\dagger v_k|
				\le |A|_\infty\,|B|_\infty\sum_k s_k
				= |A|_\infty\,|S|_1\,|B|_\infty.
			\end{equation}
			We record the precise identity used below: if $T\in \mathcal{T}_1(H)$ and $A,B\in\mathcal{B}(H)$, then $ATB$
			and $TBA$ are trace-class and
			\begin{equation}\label{eq:trace_cyc}
				\Tr_H(ATB)=\Tr_H(TBA).
			\end{equation}
			To justify Eq.~\eqref{eq:trace_cyc}, first note it holds for finite-rank $T$ by direct expansion in rank-one operators.
			For general $T\in\mathcal{T}_1(H)$, choose finite-rank $T_n$ with $|T_n-T|_1\to 0$. Using the auxiliary estimate
			\eqref{eq:aux_holder},
			\begin{equation}\label{eq:trace_cyc_bound}
				\big|\Tr\big((T-T_n)BA\big)\big| \le |T-T_n|_1\,|BA|_\infty,
				\qquad
				\big|\Tr\big(A(T-T_n)B\big)\big| \le |A|_\infty\,|T-T_n|_1\,|B|_\infty,
			\end{equation}
			so both traces converge to $0$ as $n\to\infty$, and Eq.~\eqref{eq:trace_cyc} follows by passing to the limit from the
			finite-rank case.
			
			\medskip
			\noindent\textbf{Step 1: Upper bound for the dual pairing.}
			Let $X=U|X|$ be the polar decomposition, with $|X|=(X^\dagger X)^{1/2}\ge 0$ trace-class and $|U|_\infty\le 1$.
			Fix $Y\in\mathcal{B}(H)$ with $|Y|_\infty\le 1$ and set $Z:=YU$, so $|Z|_\infty\le 1$.
			By Eq.~\eqref{eq:trace_cyc} with $T=|X|$, $A=U$, $B=Y$,
			\begin{equation}
				\Tr(XY)=\Tr(U|X|Y)=\Tr(|X|YU)=\Tr(|X|Z).
			\end{equation}
			Diagonalize $|X|$ as $|X|e_n=s_n e_n$ with $s_n\ge 0$ and $\sum_n s_n=\Tr(|X|)=|X|_1$. Then
			\begin{equation}
				\big|\Tr(|X|Z)\big|=\Big|\sum_{n\ge 1} s_n\,\langle e_n,Ze_n\rangle\Big|
				\le \sum_{n\ge 1} s_n\,|\langle e_n,Ze_n\rangle|
				\le \sum_{n\ge 1} s_n\,|Z|_\infty
				\le |X|_1.
			\end{equation}
			Hence $\sup_{|Y|_\infty\le 1}|\Tr(XY)|\le |X|_1$.
			
			\medskip
			\noindent\textbf{Step 2: Attainment of the supremum.}
			Choose $Y:=U^\dagger$, which satisfies $|Y|_\infty\le 1$. Using Eq.~\eqref{eq:trace_cyc} with $T=|X|$, $A=U$,
			$B=U^\dagger$,
			\begin{equation}
				\begin{aligned}
				\Tr(XU^\dagger)
				&=\Tr(U|X|U^\dagger)
				=\Tr(|X|U^\dagger U)\\
				&=\Tr(|X|P)
				=\Tr(|X|)
				=|X|_1,
				\end{aligned}
			\end{equation}
			where $P=U^\dagger U$ is the support projection of $|X|$, so that
			\begin{equation}
			|X|P=|X|.
			\end{equation}
			Thus
			\begin{equation}
			\sup_{|Y|_\infty\le 1}|\Tr(XY)|\ge |X|_1.
			\end{equation}
			Combining with Step 1 yields the duality identity in Eq.~\eqref{eq:dual_trace}.
			
			\medskip
			\noindent\textbf{Step 3: Hölder-type inequality.}
			The inequality $|\Tr(XY)|\le |X|_1|Y|_\infty$ already follows from the auxiliary estimate
			\eqref{eq:aux_holder}. For completeness, it also follows directly from the duality identity by scaling: if $Y\neq 0$,
			set $Y_0:=Y/|Y|_\infty$ so $|Y_0|_\infty=1$, and then
			$|\Tr(XY)|=|Y|_\infty\,|\Tr(XY_0)|\le |Y|_\infty\,|X|_1$. 
	\end{proof}}
	The next lemma controls operator norms of tensor-product observables.
	
	\begin{lemma}[Tensor-product norm]\label{lem:tensornorm}
		Let $O_A\in\mathcal{B}(\mathcal{H}_A)$ and $O_B\in\mathcal{B}(\mathcal{H}_B)$.
		Then $O_A\otimes O_B\in\mathcal{B}(\mathcal{H}_{AB})$ and
		\begin{equation}\label{L28}
			|O_A\otimes O_B|_\infty=|O_A|_\infty\,|O_B|_\infty.
		\end{equation}
	\end{lemma}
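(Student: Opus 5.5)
We prove Eq.~\eqref{L28} in two parts: first $O_A\otimes O_B$ is bounded with $|O_A\otimes O_B|_\infty\le|O_A|_\infty|O_B|_\infty$, then the reverse inequality via product vectors.

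For the upper bound we factor $O_A\otimes O_B=(O_A\otimes\mathbf 1_B)(\mathbf 1_A\otimes O_B)$. Submultiplicativity of the operator norm then reduces the claim to $|O_A\otimes\mathbf 1_B|_\infty\le|O_A|_\infty$, and symmetrically for $\mathbf 1_A\otimes O_B$.

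To prove $|O_A\otimes\mathbf 1_B|_\infty\le|O_A|_\infty$, fix an orthonormal basis $\{f_k\}$ of the separable space $\mathcal H_B$. Every $\Psi\in\mathcal H_{AB}$ then has a unique expansion $\Psi=\sum_k\psi_k\otimes f_k$ with $\psi_k\in\mathcal H_A$ and $|\Psi|^2=\sum_k|\psi_k|^2$. This follows because the closed subspaces $\mathcal H_A\otimes f_k$ are mutually orthogonal and together span a dense set.

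First consider $\Psi$ with finitely many nonzero $\psi_k$. On such vectors the defining relation of Definition~\ref{def:classical} and linearity give $(O_A\otimes\mathbf 1_B)\Psi=\sum_k(O_A\psi_k)\otimes f_k$. Orthogonality then yields
\begin{equation}
|(O_A\otimes\mathbf 1_B)\Psi|^2=\sum_k|O_A\psi_k|^2\le|O_A|_\infty^2\sum_k|\psi_k|^2=|O_A|_\infty^2|\Psi|^2.
\end{equation}
These vectors form a dense subspace, so the operator extends by continuity to all of $\mathcal H_{AB}$ with the same bound. This also settles the existence and uniqueness of the continuous extension presupposed in Definition~\ref{def:classical}.

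The same argument with the roles of $A$ and $B$ exchanged gives $|\mathbf 1_A\otimes O_B|_\infty\le|O_B|_\infty$. Combining the two bounds gives $|O_A\otimes O_B|_\infty\le|O_A|_\infty|O_B|_\infty$.

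For the lower bound we may assume both norms are nonzero, since otherwise the upper bound already gives equality with $0$. Given $\varepsilon>0$, choose unit vectors $\psi_A\in\mathcal H_A$ and $\psi_B\in\mathcal H_B$ with $|O_A\psi_A|\ge|O_A|_\infty-\varepsilon$ and $|O_B\psi_B|\ge|O_B|_\infty-\varepsilon$. Then $\psi_A\otimes\psi_B$ is a unit vector, and since $|u\otimes v|=|u|\,|v|$ for product vectors,
\begin{equation}
|(O_A\otimes O_B)(\psi_A\otimes\psi_B)|=|O_A\psi_A|\,|O_B\psi_B|\ge(|O_A|_\infty-\varepsilon)(|O_B|_\infty-\varepsilon).
\end{equation}
Letting $\varepsilon\to0$ gives $|O_A\otimes O_B|_\infty\ge|O_A|_\infty|O_B|_\infty$, and hence equality.

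The only delicate point is the upper bound, and specifically the infinite-dimensional density and continuity-extension step. The factorisation into $O_A\otimes\mathbf 1_B$ and $\mathbf 1_A\otimes O_B$ is what avoids having to estimate cross terms directly on general entangled vectors.
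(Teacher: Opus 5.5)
Your proof is correct. Your lower bound, which evaluates on near-norming product vectors $\psi_A\otimes\psi_B$, is exactly the paper's argument. Your upper bound, however, takes a genuinely different route.

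\emph{The paper's upper bound.} The paper works directly on the algebraic tensor product. It writes $|(O_A\otimes O_B)\psi|^2=\langle\psi,(O_A^\dagger O_A)\otimes(O_B^\dagger O_B)\psi\rangle$ and then invokes the order-preservation fact that $0\le X\le\alpha\mathbf 1_A$ and $0\le Y\le\beta\mathbf 1_B$ imply $X\otimes Y\le\alpha\beta\mathbf 1_{AB}$. It finishes by extending the bound by density.

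\emph{Your upper bound.} You instead factor $O_A\otimes O_B=(O_A\otimes\mathbf 1_B)(\mathbf 1_A\otimes O_B)$ and use submultiplicativity. You bound each amplification through the orthogonal decomposition $\mathcal H_{AB}=\bigoplus_k\mathcal H_A\otimes f_k$. On this decomposition $O_A\otimes\mathbf 1_B$ acts block-diagonally as $O_A$, so cross terms never appear.

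\emph{What each buys.} Your route is more elementary and self-contained: it uses only orthogonality and submultiplicativity, and it constructs the bounded extension explicitly. The paper asserts the order-preservation of tensor products without proof, and that fact itself needs a positivity argument. For instance, one can write $\alpha\beta\mathbf 1-X\otimes Y=(\alpha\mathbf 1-X)\otimes\beta\mathbf 1+X\otimes(\beta\mathbf 1-Y)$ and check positivity of $P\otimes Q$ on finite sums. The paper's route is basis-free and handles $O_A\otimes O_B$ in one step, without factoring.

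\emph{One check you leave implicit.} Your dense subspace of finitely supported expansions $\sum_{k\le K}\psi_k\otimes f_k$ does not contain every product vector $\psi_A\otimes\psi_B$. So to identify your continuous extensions, and their product, with the operator of Definition~\ref{def:classical}, note that they satisfy the defining relation on all product vectors. This follows by truncating the expansion of $\psi_B$ and using $|u\otimes v|=|u|\,|v|$. Uniqueness then follows because bounded operators agreeing on the dense span of product vectors coincide.
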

	\begin{proof}
		For elementary tensors $\psi=\psi_A\otimes\psi_B$ one has
		\begin{equation}
			|(O_A\otimes O_B)\psi|
			=|O_A\psi_A|\cdot|O_B\psi_B|
			\le |O_A|_\infty|O_B|_\infty|\psi_A||\psi_B|
			=|O_A|_\infty|O_B|_\infty|\psi|.
		\end{equation}
		By linearity, if $\psi=\sum_{k=1}^n \psi_A^{(k)}\otimes\psi_B^{(k)}$, then
		\begin{equation}\label{L29}
				\begin{aligned}
				|(O_A\otimes O_B)\psi|
				&=\left|\sum_{k=1}^n
				(O_A\psi_A^{(k)})\otimes(O_B\psi_B^{(k)})\right|\\
				&\le |O_A|_\infty|O_B|_\infty
				\left|\sum_{k=1}^n
				\psi_A^{(k)}\otimes\psi_B^{(k)}\right|\\
				&=|O_A|_\infty|O_B|_\infty|\psi|,
				\end{aligned}
		\end{equation}
		where the inequality follows from the boundedness of $O_A$ and $O_B$ and continuity of the tensor-product action on the algebraic tensor product.
		
		{To justify Eq.~\eqref{L29} (and to make the role of cross terms explicit), work first on the
			algebraic tensor product $\mathcal{H}_A\otimes_{\mathrm{alg}}\mathcal{H}_B$, which is dense in the Hilbert-space tensor product
			$\mathcal{H}_A\otimes \mathcal{H}_B$. For any $\psi\in \mathcal{H}_A\otimes_{\mathrm{alg}}\mathcal{H}_B$ one has
			\begin{equation}
				\bigl|(O_A\otimes O_B)\psi\bigr|^2
				=\bigl\langle (O_A\otimes O_B)\psi,\,(O_A\otimes O_B)\psi\bigr\rangle
				=\bigl\langle \psi,\,(O_A^\dagger O_A)\otimes (O_B^\dagger O_B)\,\psi\bigr\rangle.
			\end{equation}
			Since $0\le O_A^\dagger O_A\le |O_A|_\infty^2\,\mathbf 1_A$ and
			$0\le O_B^\dagger O_B\le |O_B|_\infty^2\,\mathbf 1_B$, and since tensoring preserves positivity and operator order
			(i.e.\ $0\le X\le \alpha \mathbf 1_A$ and $0\le Y\le \beta \mathbf 1_B$ imply $0\le X\otimes Y\le \alpha\beta\,\mathbf 1_{AB}$),
			it follows that
			\begin{equation}
				0\le (O_A^\dagger O_A)\otimes (O_B^\dagger O_B)\ \le\
				|O_A|_\infty^2\,|O_B|_\infty^2\,\mathbf 1_{AB}.
			\end{equation}
			Therefore
			\begin{equation}
				\bigl|(O_A\otimes O_B)\psi\bigr|^2\le
				|O_A|_\infty^2\,|O_B|_\infty^2\,|\psi|^2,
				\qquad
				\text{hence }\ \bigl|(O_A\otimes O_B)\psi\bigr|
				\le |O_A|_\infty\,|O_B|_\infty\,|\psi|.
			\end{equation}
			Since $\mathcal{H}_A\otimes_{\mathrm{alg}}\mathcal{H}_B$ is dense in $\mathcal{H}_A\otimes \mathcal{H}_B$, this bound implies that
			$O_A\otimes O_B$ extends uniquely by continuity to a bounded operator on the completed tensor product, and the
			above inequality is exactly Eq.~\eqref{L29}.}
		
		Since finite sums of elementary tensors are dense in $\mathcal{H}_{AB}$, continuity gives $|O_A\otimes O_B|_\infty\le |O_A|_\infty|O_B|_\infty$.
		For the reverse inequality, choose unit vectors $\phi_A^{(n)}\in\mathcal{H}_A$ and $\phi_B^{(n)}\in\mathcal{H}_B$ such that $|O_A\phi_A^{(n)}|\to |O_A|_\infty$ and $|O_B\phi_B^{(n)}|\to |O_B|_\infty$ as $n\to\infty$ by the definition of the operator norm.
		Then $|\phi_A^{(n)}\otimes\phi_B^{(n)}|=1$ and
		\begin{equation}\label{L30}
			|(O_A\otimes O_B)(\phi_A^{(n)}\otimes\phi_B^{(n)})|
			=|O_A\phi_A^{(n)}|\cdot|O_B\phi_B^{(n)}|
			\to |O_A|_\infty|O_B|_\infty,
		\end{equation}
		hence $|O_A\otimes O_B|_\infty\ge |O_A|_\infty|O_B|_\infty$.
		Combining both inequalities yields Eq.~\eqref{L28}.
	\end{proof} 
	The quantum Pinsker inequality is derived by reducing $D(\rho|\sigma)$ to a classical binary divergence through a measurement attaining trace distance and then applying Lemma~\ref{lem:classicalpinsker}.
	
	\begin{lemma}[Quantum Pinsker inequality in nats]\label{lem:quantumpinsker}
		Let $\mathcal{H}$ be separable and let $\rho,\sigma\in\mathcal{T}_1(\mathcal{H})$ be density operators.
		Then
		\begin{equation}\label{P1}
			D(\rho|\sigma)\ge \frac{1}{2}|\rho-\sigma|_1^2=2\,T(\rho,\sigma)^2,
		\end{equation}
		and equivalently
		\begin{equation}\label{P2}
			|\rho-\sigma|_1\le \sqrt{2D(\rho|\sigma)},\qquad T(\rho,\sigma)\le \sqrt{\frac{1}{2}D(\rho|\sigma)}.
		\end{equation}
		If relative entropy is defined with base-$2$ logarithms as $D_2(\rho|\sigma):=D(\rho|\sigma)/\ln 2$, then
		\begin{equation}\label{P3}
			D_2(\rho|\sigma)\ge \frac{1}{2\ln 2}|\rho-\sigma|_1^2.
		\end{equation}
	\end{lemma}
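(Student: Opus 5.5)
The plan is to reduce the quantum statement to the classical Pinsker inequality (Lemma~\ref{lem:classicalpinsker}) by means of a single optimal two-outcome measurement. Let $\Delta:=\rho-\sigma$, which is self-adjoint, trace-class and traceless. Let $P_+$ be its spectral projection onto $[0,\infty)$, as in the proof of Lemma~\ref{lem:helstrom}. Lemma~\ref{lem:helstrom} then gives
\begin{equation}
\operatorname{Tr}(P_+\rho)-\operatorname{Tr}(P_+\sigma)=\operatorname{Tr}(\Delta_+)=T(\rho,\sigma)=\tfrac12|\rho-\sigma|_1 .
\end{equation}
Now form the two-outcome POVM $\{M_1,M_2\}=\{P_+,\mathbf 1-P_+\}$. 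Set $p=(\operatorname{Tr}(P_+\rho),1-\operatorname{Tr}(P_+\rho))$ and $q=(\operatorname{Tr}(P_+\sigma),1-\operatorname{Tr}(P_+\sigma))$. By construction the classical total variation is
\begin{equation}
\delta(p,q)=|p_1-q_1|=T(\rho,\sigma).
\end{equation}

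The first case is $D(\rho\|\sigma)=+\infty$, where \eqref{P1} holds trivially. Otherwise, Lemma~\ref{lem:dataproc} applied to the measurement channel $\mathcal M$ gives
\begin{equation}
D(\rho\|\sigma)\ge D_{\mathrm{cl}}(p\|q),
\end{equation}
so in particular $D_{\mathrm{cl}}(p\|q)<\infty$. Lemma~\ref{lem:classicalpinsker} on the two-point set $\Omega=\{1,2\}$ then yields
\begin{equation}
D_{\mathrm{cl}}(p\|q)\ge 2\,\delta(p,q)^2=2\,T(\rho,\sigma)^2=\tfrac12|\rho-\sigma|_1^2,
\end{equation}
which is \eqref{P1}. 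The equivalent form \eqref{P2} follows by taking square roots, since both sides are nonnegative. Dividing by $\ln 2$ and using the definition $D_2=D/\ln2$ from \eqref{S5} gives \eqref{P3}.

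The only delicate point is making sure the cited lemmas apply with possibly degenerate outcome probabilities. Lemma~\ref{lem:binarypinsker}, which sits inside the proof of Lemma~\ref{lem:classicalpinsker}, was proved only for $u,v\in(0,1)$, so the boundary cases need separate treatment. If $q_1\in\{0,1\}$ while $p_1\neq q_1$, finiteness of $D_{\mathrm{cl}}$ is violated, so those cases are already covered by the infinite case above. If $p_1=q_1$, the bound is trivial because $T=0$. The remaining boundary case is $p_1\in\{0,1\}$ with $q_1\in(0,1)$. I would handle it by continuity of $D_{\mathrm{bin}}(\cdot\|v)$ and of $2(u-v)^2$ as $u\to0^+$ or $u\to1^-$, using the convention $0\ln 0=0$, so that $D_{\mathrm{bin}}(u\|v)\ge 2(u-v)^2$ extends to the closed interval. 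With this extension in place, Lemma~\ref{lem:classicalpinsker} applies verbatim. In infinite dimensions, $P_+$ is a bounded Borel spectral projection, so the measurement channel is a legitimate CPTP map on $\mathcal T_1(\mathcal H)$ and Lemma~\ref{lem:dataproc} applies without modification.
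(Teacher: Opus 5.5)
Your proposal is correct and takes essentially the same route as the paper's proof. Both use the Helstrom projection $P_+$ of $\rho-\sigma$ to define a two-outcome measurement, apply data processing (Lemma~\ref{lem:dataproc}) to reduce $D(\rho|\sigma)$ to a binary classical divergence, finish with classical Pinsker (Lemma~\ref{lem:classicalpinsker}), and dispose of the $D=+\infty$ case trivially. Your extra handling of degenerate outcome probabilities, where Lemma~\ref{lem:binarypinsker} is stated only for $u,v\in(0,1)$, is a valid refinement that the paper's proof passes over.
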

	\begin{proof}
		If $D(\rho|\sigma)=+\infty$ then Eq.~\eqref{P1} holds trivially.
		Assume $D(\rho|\sigma)<\infty$.
		Let $\Delta:=\rho-\sigma$, so $\Delta$ is self-adjoint trace-class.
		By Lemma~\ref{lem:helstrom},
		\begin{equation}\label{P4}
			T(\rho,\sigma)=\sup_{0\le M\le \mathbf{1}}\operatorname{Tr}(M(\rho-\sigma))=\sup_{0\le M\le \mathbf{1}}\operatorname{Tr}(M\Delta).
		\end{equation}
		Fix $M_\star$ with $0\le M_\star\le \mathbf{1}$ such that $\operatorname{Tr}(M_\star\Delta)=T(\rho,\sigma)$, which exists by Lemma~\ref{lem:helstrom} using the choice $M_\star=P_+$ for the positive spectral projection of $\Delta$.
		Define the two-outcome POVM $\{M_\star,\mathbf{1}-M_\star\}$ and induced binary distributions $p=(p_0,p_1)$ and $q=(q_0,q_1)$ by
		\begin{equation}\label{P5}
			p_0:=\operatorname{Tr}(\rho M_\star),\quad p_1:=1-p_0,\qquad
			q_0:=\operatorname{Tr}(\sigma M_\star),\quad q_1:=1-q_0.
		\end{equation}
		Then $p_0-q_0=\operatorname{Tr}((\rho-\sigma)M_\star)=\operatorname{Tr}(M_\star\Delta)=T(\rho,\sigma)$, hence
		\begin{equation}\label{P6}
			|p-q|_1=|p_0-q_0|+|p_1-q_1|
			=|p_0-q_0|+|(1-p_0)-(1-q_0)|
			=2|p_0-q_0|
			=2T(\rho,\sigma)=|\rho-\sigma|_1.
		\end{equation}
		Let $\mathcal{M}$ be the measurement channel associated to $\{M_\star,\mathbf{1}-M_\star\}$ so that $\mathcal{M}(\rho)=\sum_{i=0}^1 p_i |i\rangle\langle i|$ and $\mathcal{M}(\sigma)=\sum_{i=0}^1 q_i |i\rangle\langle i|$.
		By Lemma~\ref{lem:dataproc} and the identification of the relative entropy of diagonal states with the classical divergence Eq.~\eqref{D1a},
		\begin{equation}\label{P7}
			D(\rho|\sigma)\ge D(\mathcal{M}(\rho)|\mathcal{M}(\sigma))=D_{\mathrm{cl}}(p|q).
		\end{equation}
		Applying Lemma~\ref{lem:classicalpinsker} to the binary distributions $p,q$ yields
		\begin{equation}\label{P8}
			D_{\mathrm{cl}}(p|q)\ge \frac{1}{2}|p-q|_1^2.
		\end{equation}
		Combining Eq.~\eqref{P7}, Eq.~\eqref{P8}, and Eq.~\eqref{P6} gives
		\begin{equation}\label{P9}
			D(\rho|\sigma)\ge \frac{1}{2}|p-q|_1^2=\frac{1}{2}|\rho-\sigma|_1^2=2T(\rho,\sigma)^2,
		\end{equation}
		which is Eq.~\eqref{P1}.
		Rearranging Eq.~\eqref{P1} gives Eq.~\eqref{P2}.
		Dividing Eq.~\eqref{P1} by $\ln 2$ yields Eq.~\eqref{P3}.
	\end{proof}
	
	The next lemma relates the connected correlator to the deviation from product form.
	
	\begin{lemma}[Connected correlator as a trace pairing]\label{lem:connectedpairing}
		For $O_A\in\mathcal{B}(\mathcal{H}_A)$ and $O_B\in\mathcal{B}(\mathcal{H}_B)$ one has
		\begin{equation}\label{C1a}
			C_{\rho_{AB}}(O_A,O_B)=\operatorname{Tr}_{AB}\!\bigl((\rho_{AB}-\rho_A\otimes\rho_B)(O_A\otimes O_B)\bigr).
		\end{equation}
	\end{lemma}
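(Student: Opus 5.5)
The identity in Lemma~\ref{lem:connectedpairing} is a direct consequence of linearity of the trace and the partial-trace relations Eq.~\eqref{S1}. I would expand the right-hand side of Eq.~\eqref{C1a} by linearity into two traces. The first is $\operatorname{Tr}_{AB}(\rho_{AB}(O_A\otimes O_B))$, which is exactly the first term of Eq.~\eqref{S8}. The second is $\operatorname{Tr}_{AB}((\rho_A\otimes\rho_B)(O_A\otimes O_B))$, and it should equal $\operatorname{Tr}_A(\rho_AO_A)\operatorname{Tr}_B(\rho_BO_B)$.

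Both traces are well defined. Each density operator is trace-class, $O_A\otimes O_B$ is bounded by Lemma~\ref{lem:tensornorm}, and the product of a trace-class and a bounded operator is trace-class by Eq.~\eqref{eq:aux_holder}. This also justifies splitting the trace by linearity.

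For the product term, I would first use $(\rho_A\otimes\rho_B)(O_A\otimes O_B)=(\rho_AO_A)\otimes(\rho_BO_B)$. This holds on elementary tensors and extends by continuity, using the defining property of the tensor product of bounded operators in Definition~\ref{def:classical}. It then remains to show the factorization $\operatorname{Tr}_{AB}(X_A\otimes Y_B)=\operatorname{Tr}_A(X_A)\operatorname{Tr}_B(Y_B)$ for trace-class $X_A=\rho_AO_A$ and $Y_B=\rho_BO_B$. I would compute the trace in the product orthonormal basis $\{e_i\otimes f_j\}$, which gives $\sum_{i,j}\langle e_i,X_Ae_i\rangle\langle f_j,Y_Bf_j\rangle$. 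Absolute summability, and hence the right to split the double sum, follows from the trace-class property of $X_A$ and $Y_B$; if needed, one can argue via their singular-value decompositions as in the proof of Lemma~\ref{lem:trace_duality}.

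Subtracting the two terms reproduces Eq.~\eqref{S8}. The only potentially delicate point is the infinite-dimensional factorization of the trace, where the double series must be absolutely convergent. This is handled by the trace-class estimates already recorded in Eq.~\eqref{eq:aux_holder}. An alternative is to approximate $\rho_A$ and $\rho_B$ by finite-rank operators in trace norm and pass to the limit, exactly as in the proof of Eq.~\eqref{eq:trace_cyc}.
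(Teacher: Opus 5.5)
Your proposal is correct and follows the paper's route: the paper's proof likewise substitutes the factorization $\operatorname{Tr}_{AB}\bigl((\rho_A\otimes\rho_B)(O_A\otimes O_B)\bigr)=\operatorname{Tr}_A(\rho_AO_A)\operatorname{Tr}_B(\rho_BO_B)$ into Eq.~\eqref{S8}. The only difference is that the paper attributes this factorization to Eq.~\eqref{S1} without further argument, whereas you prove it via $(\rho_A\otimes\rho_B)(O_A\otimes O_B)=(\rho_AO_A)\otimes(\rho_BO_B)$ and absolute summability of the trace in a product basis, which is the step that actually needs justifying in infinite dimensions.
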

	\begin{proof}
		Using Eq.~\eqref{S1},
		\begin{equation}\label{C2a}
			\operatorname{Tr}_{AB}\!\bigl((\rho_A\otimes\rho_B)(O_A\otimes O_B)\bigr)
			=\operatorname{Tr}_A(\rho_A O_A)\operatorname{Tr}_B(\rho_B O_B),
		\end{equation}
		and substituting Eq.~\eqref{C2a} into Eq.~\eqref{S8} yields Eq.~\eqref{C1a}.
	\end{proof} 
	\begin{theorem}[Mutual information bounds connected correlators]\label{thm:MICorrelator}
		Under assumptions Eq.~\eqref{S1}-\eqref{S7}, for all $O_A\in\mathcal{B}(\mathcal{H}_A)$ and $O_B\in\mathcal{B}(\mathcal{H}_B)$ one has the explicit bound
		\begin{equation}\label{M1}
			|C_{\rho_{AB}}(O_A,O_B)|
			\le |O_A|_\infty|O_B|_\infty\sqrt{2\ln 2\,I(A\!:\!B)_{\rho_{AB}}}.
		\end{equation}
		Equivalently, in terms of trace distance,
		\begin{equation}\label{M2}
			|C_{\rho_{AB}}(O_A,O_B)|
			\le 2\,|O_A|_\infty|O_B|_\infty\,T(\rho_{AB},\rho_A\otimes\rho_B).
		\end{equation}
	\end{theorem}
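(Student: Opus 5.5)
The argument chains the lemmas already established in this appendix, with no new analytic input. First, by Lemma~\ref{lem:connectedpairing}, I will write the connected correlator as the trace pairing
\begin{equation*}
C_{\rho_{AB}}(O_A,O_B)=\operatorname{Tr}_{AB}\!\bigl((\rho_{AB}-\rho_A\otimes\rho_B)(O_A\otimes O_B)\bigr).
\end{equation*}
Here $X:=\rho_{AB}-\rho_A\otimes\rho_B$ is a self-adjoint trace-class operator, being a difference of two density operators. The operator $Y:=O_A\otimes O_B$ is bounded by Lemma~\ref{lem:tensornorm}, with $|Y|_\infty=|O_A|_\infty|O_B|_\infty$.

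Next I will apply the Hölder-type half of Lemma~\ref{lem:trace_duality}, namely $|\operatorname{Tr}(XY)|\le|X|_1|Y|_\infty$. This gives
\begin{equation*}
|C_{\rho_{AB}}(O_A,O_B)|\le |O_A|_\infty|O_B|_\infty\,|\rho_{AB}-\rho_A\otimes\rho_B|_1 .
\end{equation*}
Since $|\rho-\sigma|_1=2T(\rho,\sigma)$ by Eq.~\eqref{S3}, this is exactly the trace-distance form Eq.~\eqref{M2}.

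To reach Eq.~\eqref{M1}, I will invoke Lemma~\ref{lem:quantumpinsker} with $\rho=\rho_{AB}$ and $\sigma=\rho_A\otimes\rho_B$. Both are density operators on $\mathcal{H}_{AB}$: the reductions are trace-class and of unit trace by Eq.~\eqref{S1}, and a tensor product of such operators is again a density operator. The lemma gives $|\rho_{AB}-\rho_A\otimes\rho_B|_1\le\sqrt{2D(\rho_{AB}|\rho_A\otimes\rho_B)}$. Converting to bits via Eq.~\eqref{S7}, i.e.\ $D=(\ln2)\,I(A\!:\!B)$, this becomes $\sqrt{2\ln2\,I(A\!:\!B)}$. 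The finiteness assumption Eq.~\eqref{S6} ensures the right-hand side is a genuine real number; if $D$ were infinite, the bound would hold trivially anyway.

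The only point needing care is the bookkeeping of constants. The factor $2$ in $|X|_1=2T$ must cancel correctly against the $\tfrac12$ inside Pinsker, and the $\ln 2$ must come out as stated, so I will carry the nats-to-bits conversion explicitly. Every analytic step, including the trace-class cyclicity behind the duality, is already contained in the cited lemmas, so no genuine obstacle remains.
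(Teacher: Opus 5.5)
Your proposal is correct and matches the paper's proof step for step. Lemma~\ref{lem:connectedpairing}, the H\"older half of Lemma~\ref{lem:trace_duality} and Lemma~\ref{lem:tensornorm} give Eq.~\eqref{M2}; Lemma~\ref{lem:quantumpinsker} with the nats-to-bits conversion Eq.~\eqref{S7} then yields Eq.~\eqref{M1}, so, as in the paper, Eq.~\eqref{M2} is really the stronger intermediate bound rather than a strict equivalent.
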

	\begin{proof}
		By Lemma~\ref{lem:connectedpairing} and Lemma~\ref{lem:trace_duality},
		\begin{equation}\label{M3}
			|C_{\rho_{AB}}(O_A,O_B)|
			=\left|\operatorname{Tr}_{AB}\!\bigl((\rho_{AB}-\rho_A\otimes\rho_B)(O_A\otimes O_B)\bigr)\right|
			\le |\rho_{AB}-\rho_A\otimes\rho_B|_1\,|O_A\otimes O_B|_\infty.
		\end{equation}
		By Lemma~\ref{lem:tensornorm}, $|O_A\otimes O_B|_\infty=|O_A|_\infty|O_B|_\infty$, hence
		\begin{equation}\label{M4}
			|C_{\rho_{AB}}(O_A,O_B)|
			\le |\rho_{AB}-\rho_A\otimes\rho_B|_1\,|O_A|_\infty|O_B|_\infty
			=2\,T(\rho_{AB},\rho_A\otimes\rho_B)\,|O_A|_\infty|O_B|_\infty,
		\end{equation}
		which is Eq.~\eqref{M2}.
		Applying Lemma~\ref{lem:quantumpinsker} with $\rho=\rho_{AB}$ and $\sigma=\rho_A\otimes\rho_B$ yields
		\begin{equation}\label{M5}
			|\rho_{AB}-\rho_A\otimes\rho_B|_1
			\le \sqrt{2D(\rho_{AB}|\rho_A\otimes\rho_B)}
			=\sqrt{2\ln 2\,I(A\!:\!B)_{\rho_{AB}}},
		\end{equation}
		where the last identity is Eq.~\eqref{S7}.
		Substituting Eq.~\eqref{M5} into Eq.~\eqref{M4} gives Eq.~\eqref{M1}.
	\end{proof} 
	Boundedness of the observables entering Eq.~\eqref{M1} is explicit because $|O_A|_\infty<\infty$ and $|O_B|_\infty<\infty$ by assumption and Lemma~\ref{lem:tensornorm} implies $|O_A\otimes O_B|_\infty<\infty$ with equality Eq.~\eqref{L28}.
	For product states $\rho_{AB}=\rho_A\otimes\rho_B$ one has $D(\rho_{AB}|\rho_A\otimes\rho_B)=0$ by Eq.~\eqref{S4}, hence $I(A\!:\!B)_{\rho_{AB}}=0$ by Eq.~\eqref{S7}, and {Eqs.~\eqref{S8} and \eqref{C2a}} gives $C_{\rho_{AB}}(O_A,O_B)=0$ identically, so Eq.~\eqref{M1} reduces to $0\le 0$.
	For rescalings $O_A\mapsto \alpha O_A$ and $O_B\mapsto \beta O_B$ with $\alpha,\beta\in\mathbb{C}$, linearity of the trace gives $C_{\rho_{AB}}(\alpha O_A,\beta O_B)=\alpha\beta\,C_{\rho_{AB}}(O_A,O_B)$ and the operator norm satisfies $|\alpha O_A|_\infty=|\alpha|\,|O_A|_\infty$ and $|\beta O_B|_\infty=|\beta|\,|O_B|_\infty$, hence both sides of Eq.~\eqref{M1} scale by $|\alpha\beta|$ and the inequality is homogeneous.
	If $I(A\!:\!B)_{\rho_{AB}}=+\infty$ then Eq.~\eqref{M1} holds in the extended-real sense because the right-hand side is $+\infty$ for nonzero $|O_A|_\infty|O_B|_\infty$, and no finite suppression bound follows without additional ultraviolet or split-property input \cite{Haag1996,BuchholzDAntoniLongo1987}.
	In the AQFT commuting-algebra setting, let $\mathcal{M}$ be a von Neumann algebra, let $\varphi,\psi$ be normal states on $\mathcal{M}$, define the predual norm $|\varphi-\psi|:=\sup_{|x|_\infty \le 1}|\varphi(x)-\psi(x)|$, and let $D_{\mathrm{Ar}}(\varphi|\psi)$ be Araki relative entropy; the analogue of Lemma~\ref{lem:quantumpinsker} holds with the same constant,
	\begin{equation}
		D_{\mathrm{Ar}}(\varphi|\psi)\ge \tfrac{1}{2}|\varphi-\psi|^2,
	\end{equation}
	by monotonicity under normal completely positive unital maps together with the binary-measurement reduction and the classical Pinsker inequality \cite{Araki1976,Petz1986,HiaiPetz1991,OhyaPetz2004,Takesaki2002}.
	Under the split property for commuting local algebras $\mathcal{A}(A)$ and $\mathcal{A}(B)$, the joint algebra admits a normal product state with prescribed marginals and the mutual information is defined by $I(A\!:\!B):=(\ln 2)^{-1}D_{\mathrm{Ar}}(\rho_{AB}|\rho_A\otimes\rho_B)$; the estimate Eq.~\eqref{M2} then follows from the definition of the predual norm and $|xy|_\infty\le |x|_\infty\,|y|_\infty$ for commuting bounded elements $x\in\mathcal{A}(A)$ and $y\in\mathcal{A}(B)$ \cite{Haag1996,BuchholzDAntoniLongo1987,Takesaki2002}.

\section{Heavy-operator geodesic approximation and errors}\label{app:heavy-probe}
\renewcommand{\theequation}{C.\arabic{equation}}
		
Fix an integer $d\ge 2$ and set $n:=d+1$.
Fix a complex separable Hilbert space $\mathcal{H}$, a CFT state space $\mathsf{S}(\mathcal{H})$ consisting of normal states on an ambient von Neumann algebra $\mathcal{M}$, and a code subspace $\mathcal{H}_{\mathrm{code}}\subset\mathcal{H}$ with orthogonal projector $\Pi_{\mathrm{code}}$.
Fix an insertion domain $\mathsf{D}\Subset \Sigma\times\Sigma$ satisfying the strictly positive separation condition Eq.~\eqref{A1} in a fixed boundary metric $g^{(0)}$.
		
We assume a geodesic-saddle isolation hypothesis on $\mathsf{D}$: for any state $\rho$ and pairs $(x,y) \in \mathsf{D}$, the boundary-anchored cutoff points $X_\varepsilon=(\varepsilon,x)$ and $Y_\varepsilon=(\varepsilon,y)$ are connected by a unique minimizing bulk geodesic strictly free of conjugate points, ensuring the macroscopic proper length and the Van Vleck-Morette determinant are globally well-defined.
		
Assume a semiclassical holographic dual characterized by AdS radius $\ell_{\mathrm{AdS}}$, Newton constant $G_N$, and string length $\ell_s$.
For each state $\rho$, assume a dual bulk manifold $(M_\rho,g_\rho)$ with conformal boundary $(\Sigma,[g^{(0)}])$.
In Fefferman-Graham coordinates on a collar neighborhood $\mathcal{U}\simeq (0,\varepsilon_0]\times\Sigma$, the metric takes the asymptotic form
\begin{equation} \label{A3a_corr}
	g_\rho=\frac{\ell_{\mathrm{AdS}}^{2}}{z^{2}}\Bigl(dz^{2}+g_{ij}(z,x)\,dx^{i}dx^{j}\Bigr),
	\qquad
	g_{ij}(z,x)=g^{(0)}_{ij}(x)+z^2 g^{(2)}_{ij}(x)+O(z^3),
\end{equation}
where the geometric subleading terms encode the state-dependent physical data. For $d=2$ (pure $\mathrm{AdS}_3$ gravity), the expansion rigorously truncates at $O(z^4)$, yielding the exact Ba\~nados geometry determined by the boundary stress tensor. Logarithmic conformal anomalies only manifest for even boundary dimensions $d \ge 4$ as higher-order non-analytic terms, e.g., $z^d\ln z$ in $d=4$ \cite{DeHaroSolodukhinSkenderis2001,Skenderis2002}.
		
Fix a single-trace scalar primary operator $O_\Delta$ with scaling dimension $\Delta > d/2$.
To interface rigorously with the information-theoretic inequalities derived in subsequent sections, we assume the boundary probe operators are evaluated as code-compressed observables $\widetilde{O}_\Delta(x) := \Pi_{\mathrm{code}}\bigl(O_\Delta(x) - \operatorname{Tr}(\rho\,O_\Delta(x))\,\mathbf{1}\bigr)\Pi_{\mathrm{code}}$, which guarantees a strictly finite operator norm $|\widetilde{O}_\Delta|_\infty < \infty$ on $\mathcal{H}_{\mathrm{code}}$.
		
The bulk proper mass $m$ is determined by the exact AdS/CFT relation $m^2\ell_{\mathrm{AdS}}^2 = \Delta(\Delta-d)$.
To strictly control the asymptotic spectrum of the Laplace-Beltrami operator, we introduce the \emph{spectral mass shift} parameter globally defined on asymptotically hyperbolic manifolds
\begin{equation}\label{A4a_corr}
	\nu := \sqrt{m^2\ell_{\mathrm{AdS}}^2 + \frac{d^2}{4}} = \sqrt{\Delta(\Delta-d) + \frac{d^2}{4}} = \left|\Delta - \frac{d}{2}\right|.
\end{equation}
We assume the heavy-operator regime $\Delta \ge d$, which rigorously implies the exact linear relation $\nu = \Delta - d/2$.
		
Define the dimensionless control parameters characterizing the semiclassical gravity, stringy higher-derivative, and probe-backreaction regimes respectively
\begin{equation} \label{A5a_corr}
	\varepsilon_{\mathrm{grav}}:=\frac{G_N}{\ell_{\mathrm{AdS}}^{d-1}},\qquad
	\varepsilon_{\mathrm{str}}:=\frac{\ell_s^2}{\ell_{\mathrm{AdS}}^2},\qquad
	\varepsilon_{\mathrm{br}} := \varepsilon_{\mathrm{grav}} \Delta^2.
\end{equation}
		
\begin{lemma}[Renormalized length convergence]\label{lem:length-ren}
Under the asymptotic condition Eq.~\eqref{A3a_corr}, for any state $\rho$ and points $(x,y)\in\mathsf{D}$, the regulated bulk geodesic proper length $L_\rho(X_\varepsilon, Y_\varepsilon)$ evaluated at cutoff points $X_\varepsilon=(\varepsilon,x)$ and $Y_\varepsilon=(\varepsilon,y)$ satisfies
\begin{equation} \label{L_ren_def}
	\frac{1}{\ell_{\mathrm{AdS}}} L_\rho(X_\varepsilon, Y_\varepsilon) = 2\ln\left(\frac{1}{\varepsilon}\right) + L_{\mathrm{ren},\rho}(x,y) + r_\rho(\varepsilon; x,y),
\end{equation}
			where the finite part $L_{\mathrm{ren},\rho}(x,y)$ is state-dependent, and the remainder admits a uniform quadratic geometric bound
			\begin{equation} \label{L_ren_error}
				|r_\rho(\varepsilon; x,y)| \le C_{\mathrm{geom}}\,\varepsilon^2,
			\end{equation}
			for a constant $C_{\mathrm{geom}}$ depending only on the bounded geometry of $(M_\rho,g_\rho)$ in the collar neighborhood.
		\end{lemma}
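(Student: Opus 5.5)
The plan is to reduce the statement to an asymptotic expansion of the geodesic endpoint data in Fefferman--Graham coordinates.

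\textbf{Step 1: near-boundary form of the geodesic.} By the geodesic-saddle isolation hypothesis there is a unique minimizing geodesic $\gamma_\varepsilon$ joining $X_\varepsilon$ and $Y_\varepsilon$. Write $g_\rho=\ell_{\mathrm{AdS}}^2\bar g/z^2$, with $\bar g=dz^2+g_{ij}(z,x)dx^idx^j$ the compactified metric, which is smooth up to the order needed by Eq.~\eqref{A3a_corr}. Near the boundary, $\gamma_\varepsilon$ is approximately a vertical line $x=\mathrm{const}$, with transverse deflection of order $z^2$. This follows from the geodesic equation: the Christoffel symbols mixing $z$ and $x^i$ involve $\partial_z g_{ij}=O(z)$.

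\textbf{Step 2: splitting the length.} Fix $\varepsilon_0$ in the collar $\mathcal U$ and split $L_\rho(X_\varepsilon,Y_\varepsilon)$ into three pieces. The first is the interior piece from depth $\varepsilon_0$ to depth $\varepsilon_0$, which converges as $\varepsilon\to0$ because the limit geodesic $\gamma_0$ exists by compactness and isolation. The other two are the collar pieces from $z=\varepsilon$ to $z=\varepsilon_0$ at each end.

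On each collar piece, parametrize by $z$. The arclength integrand is
\begin{equation}
\frac{\ell_{\mathrm{AdS}}}{z}\sqrt{1+g_{ij}\dot x^i\dot x^j},\qquad \dot x=O(z).
\end{equation}
It therefore equals $\frac{\ell_{\mathrm{AdS}}}{z}\bigl(1+O(z^2)\bigr)$. Integrating from $\varepsilon$ to $\varepsilon_0$ gives $\ell_{\mathrm{AdS}}\ln(\varepsilon_0/\varepsilon)$ plus a convergent integral. The tail of that integral, $\int_0^\varepsilon O(z)\,dz$, is $O(\varepsilon^2)$.

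\textbf{Step 3: defining $L_{\mathrm{ren},\rho}$ and bounding $r_\rho$.} Set
\begin{equation}
L_{\mathrm{ren},\rho}(x,y):=\lim_{\varepsilon\to0}\Bigl[\ell_{\mathrm{AdS}}^{-1}L_\rho(X_\varepsilon,Y_\varepsilon)-2\ln(1/\varepsilon)\Bigr],
\end{equation}
which is finite by Step 2 and independent of the choice of $\varepsilon_0$. The remainder $r_\rho$ then collects three contributions: the $O(\varepsilon^2)$ tails of the two collar pieces, and the change of the interior piece as the endpoints move from $X_0$ to $X_\varepsilon$.

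\textbf{Step 4: the main obstacle.} The hardest part is the endpoint-dependence term, specifically showing it is $O(\varepsilon^2)$ rather than $O(\varepsilon)$. I would use the first variation formula: the derivative of length with respect to the endpoint equals the unit tangent paired with the endpoint displacement. The geodesic is asymptotically orthogonal to the boundary with $\dot x=O(z)$. Moreover, the metric $\bar g$ has no $O(z)$ term, since $g^{(1)}=0$ in Eq.~\eqref{A3a_corr}. Together these show that the first-order variation reduces exactly to the logarithmic term, so the correction is quadratic.

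Uniformity of $C_{\mathrm{geom}}$ over the insertion domain $\mathsf D$ comes from compactness of $\mathsf D$ together with the nondegeneracy guaranteed by the absence of conjugate points. That nondegeneracy gives smooth dependence of $\gamma$ on its endpoints, with bounds controlled by the collar geometry. In $d=4$ the $z^4\ln z$ term enters only at order $\varepsilon^4\ln\varepsilon$, so it does not affect the quadratic bound.
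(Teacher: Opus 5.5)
Your Steps 1--3 are the paper's argument: parametrize each end by $z$, use $x'^i=O(z)$ to get $ds=\frac{\ell_{\mathrm{AdS}}}{z}\bigl(1+O(z^2)\bigr)dz$, integrate to a fixed depth, and absorb the constants into $L_{\mathrm{ren},\rho}$. Step 4 is where you go beyond the paper.

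The paper integrates along a single boundary-anchored geodesic truncated at $z=\varepsilon$. It never addresses that $L_\rho(X_\varepsilon,Y_\varepsilon)$ is measured along a geodesic $\gamma_\varepsilon$ that itself moves with $\varepsilon$. This matters at exactly the order the lemma controls: in pure AdS, with $r:=|x-y|$, the truncated curve gives $r_\rho\simeq-2\varepsilon^2/r^2$, while the cutoff-point geodesic gives $r_\rho\simeq+2\varepsilon^2/r^2$ (cf.\ Eq.~\eqref{E20}).

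Your first-variation idea closes this, and in fact it proves the whole lemma by itself. With $x'=dx/dz$ along $\gamma_\varepsilon$,
\[
\frac{d}{d\varepsilon}\Bigl[\ell_{\mathrm{AdS}}^{-1}L_\rho(X_\varepsilon,Y_\varepsilon)\Bigr]
=-\frac{1}{\varepsilon}\Bigl[\bigl(1+g_{ij}x'^ix'^j\bigr)^{-1/2}\big|_{X_\varepsilon}+\bigl(1+g_{ij}x'^ix'^j\bigr)^{-1/2}\big|_{Y_\varepsilon}\Bigr]
=-\frac{2}{\varepsilon}+O(\varepsilon).
\]
So $\ell_{\mathrm{AdS}}^{-1}L_\rho-2\ln(1/\varepsilon)$ has an integrable $O(\varepsilon)$ derivative. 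That gives existence of $L_{\mathrm{ren},\rho}$ and $|r_\rho|\le C\varepsilon^2$ at once; in pure AdS it reproduces $-2/\varepsilon+4\varepsilon/r^2$. It also makes the three-piece split unnecessary. That is useful, because your Step 3 bookkeeping omits the $\varepsilon$-dependence of the collar integrands along $\gamma_\varepsilon$, not just their tails. The only input needed is $|x'|\le C\varepsilon$ at the cutoff endpoints, uniformly in $\varepsilon$ and over $\mathsf D$. The absence of an $O(z)$ term in $g_{ij}$ plays no role here.

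For that input, the reasoning in Step 1 needs correcting. The $O(z)$ slope is not produced by $\partial_z g_{ij}=O(z)$: in pure AdS $\partial_z g_{ij}=0$, yet geodesics still bend as $x\simeq z^2/r$. It comes from the $1/z$ conformal factor. Take the momentum conjugate to $x^i$ in the $z$-parametrized length functional,
\[
P_i=\ell_{\mathrm{AdS}}\,g_{ij}x'^j\Big/\bigl(z\sqrt{1+g_{kl}x'^kx'^l}\bigr).
\]
It gives $|x'|_g\le Cz|P|$ for small $z|P|$, and it obeys $P_i'=O(z)$ once $x'=O(z)$. Hence a bound on $P$ at fixed depth, uniform in $\varepsilon$, propagates down to $z=\varepsilon$. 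That fixed-depth bound follows from $\gamma_\varepsilon\to\gamma_0$ away from the boundary, which your uniqueness and no-conjugate-point hypothesis supplies.
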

		{\begin{proof}
				Write a boundary-anchored geodesic in Fefferman-Graham gauge as
				\(z \mapsto (z,x^i(z))\). Orthogonality to the conformal boundary implies
				\(x^{\prime i}(z)=O(z)\) as \(z\downarrow 0\). Using Eq.~\eqref{A3a_corr}, we therefore have
				\begin{equation}
					g_{ij}(z,x)\,x^{\prime i}(z)x^{\prime j}(z)=O(z^2),
				\end{equation}
				uniformly on the compact insertion domain \(\mathsf D\). Hence, the line element along the geodesic satisfies
				\begin{equation}
					ds
					=
					\frac{\ell_{\mathrm{AdS}}}{z}
					\sqrt{1+g_{ij}(z,x)x^{\prime i}(z)x^{\prime j}(z)}\,dz
					=
					\frac{\ell_{\mathrm{AdS}}}{z}\bigl(1+a_2(x)z^2+O(z^4)\bigr)\,dz,
				\end{equation}
				for some coefficient \(a_2(x)\) depending smoothly on the endpoint data.
				Integrating from \(z=\varepsilon\) to a fixed bulk depth \(z_0\) gives, at each
				endpoint,
				\begin{equation}\label{eqn112}
					\int_{\varepsilon}^{z_0}
					\frac{\ell_{\mathrm{AdS}}}{z}\bigl(1+a_2(x)z^2+O(z^4)\bigr)\,dz
					=
					\ell_{\mathrm{AdS}}\ln\!\left(\frac{z_0}{\varepsilon}\right)
					+\frac{\ell_{\mathrm{AdS}}a_2(x)}{2}z_0^2
					-\frac{\ell_{\mathrm{AdS}}a_2(x)}{2}\varepsilon^2
					+O(\varepsilon^4).
				\end{equation}
				Adding the two endpoint contributions and absorbing the finite \(z_0\)-dependent
				terms into \(L_{\mathrm{ren},\rho}(x,y)\) yields
				\begin{equation}\label{eqn113}
					\frac{1}{\ell_{\mathrm{AdS}}}L_{\rho}(X_\varepsilon,Y_\varepsilon)
					=
					2\ln\!\frac{1}{\varepsilon}
					+L_{\mathrm{ren},\rho}(x,y)
					+r_\rho(\varepsilon;x,y),
					\qquad
					|r_\rho(\varepsilon;x,y)|\le C_{\mathrm{geom}}\varepsilon^2.
				\end{equation}
				Possible higher-order \(z^d\log z\) terms in even boundary dimension contribute,
				after multiplication by \(x'(z)^2=O(z^2)\), only
				\(O(\varepsilon^{d+2}\log\varepsilon)\), and therefore do not alter the stated
				\(O(\varepsilon^2)\) remainder bound.
		\end{proof}}

		For asymptotically hyperbolic manifolds, the heat kernel behavior at large macroscopic separation $L_g := L_\rho(X,Y)$ is rigorously governed by the bottom of the $L^2$ spectrum, $\lambda_0 = d^2/(4\ell_{\mathrm{AdS}}^2)$, and the volume growth of the manifold \cite{DaviesMandouvalos1988,Camporesi1990}
		\begin{equation} \label{HK_global}
			K_{g_\rho}(T; X, Y) \sim (4\pi T)^{-n/2} \exp\left[ -\frac{d^2 T}{4\ell_{\mathrm{AdS}}^2} - \frac{L_g^2}{4T} \right] \Delta_{\mathrm{VM}}^{1/2}(X,Y) \bigl(1 + O(T^{-1})\bigr),
		\end{equation}
		where the Van Vleck-Morette determinant $\Delta_{\mathrm{VM}}$ captures the exponential transversal spread of geodesics. In asymptotically AdS geometries, its large-distance behavior universally limits to the exact hyperbolic form \cite{DaviesMandouvalos1988,Camporesi1990}
		\begin{equation}\label{VanVleck_Asymp}
			\Delta_{\mathrm{VM}}^{1/2}(X,Y) \approx \mathcal{A}(X,Y) \exp\left[-\frac{d \, L_g}{2\ell_{\mathrm{AdS}}}\right], \qquad \mathcal{A}(X,Y) \sim \left( \frac{L_g}{\ell_{\mathrm{AdS}}} \right)^{d/2}.
		\end{equation}
		{
			\begin{lemma}[Global worldline saddle and leading prefactor cancellation]\label{thm:WKB_exact}
				The Euclidean massive Green function 
				\begin{equation} \label{Greenfunction}
					G_{m,g_\rho}(X,Y)=\int_0^\infty dT\, e^{-m^2T}K_{g_\rho}(T;X,Y)
				\end{equation}
				admits the following asymptotic form in the heavy/large-distance regime.
				\begin{equation} \label{WKB_form}
					G_{m,g_\rho}(X,Y)
					=
					C_{\mathrm{pre}}(X,Y)
					\exp\!\left[-\frac{\Delta}{\ell_{\mathrm{AdS}}}L_\rho(X,Y)\right]
					\bigl(1+O(\Delta^{-1})\bigr).
				\end{equation}
				Assume, in addition, that the large-distance heat-kernel and Van Vleck asymptotics
				Eqs.~\eqref{HK_global} and \eqref{VanVleck_Asymp} hold uniformly on the Gaussian
				saddle window centered at
				\(
				T_*:=\frac{L_g\ell_{\mathrm{AdS}}}{2\nu}.
				\)
				Then the prefactor \(C_{\mathrm{pre}}(X_\varepsilon,Y_\varepsilon)\) converges,
				as \(\varepsilon\downarrow0\), to a finite constant \(C_{\Delta,\rho}\) with no
				residual polynomial dependence on \(L_g\).
		\end{lemma}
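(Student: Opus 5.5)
The plan is to evaluate the proper-time integral in Eq.~\eqref{Greenfunction} by Laplace's method around its dominant saddle, after substituting the heat-kernel asymptotics Eq.~\eqref{HK_global} together with the Van Vleck form Eq.~\eqref{VanVleck_Asymp}. Collecting exponents, the integrand is
\begin{equation*}
(4\pi T)^{-n/2}\,\mathcal A(X,Y)\,e^{-dL_g/(2\ell_{\mathrm{AdS}})}\exp\!\left[-\Bigl(m^2+\tfrac{d^2}{4\ell_{\mathrm{AdS}}^2}\Bigr)T-\frac{L_g^2}{4T}\right]\bigl(1+O(T^{-1})\bigr).
\end{equation*}
By Eq.~\eqref{A4a_corr}, $m^2+d^2/(4\ell_{\mathrm{AdS}}^2)=\nu^2/\ell_{\mathrm{AdS}}^2$, so the exponent is $\phi(T)=-\nu^2T/\ell_{\mathrm{AdS}}^2-L_g^2/(4T)$. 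It is stationary at $T_*=L_g\ell_{\mathrm{AdS}}/(2\nu)$, where $\phi(T_*)=-\nu L_g/\ell_{\mathrm{AdS}}$ and $\phi''(T_*)=-L_g^2/(2T_*^3)$.

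First I will show that contributions from outside a window $|T-T_*|\lesssim T_*/\sqrt{\nu L_g}$ are exponentially subdominant relative to $e^{\phi(T_*)}$. This uses strict concavity of $\phi$ on $(0,\infty)$ and the uniform validity of Eqs.~\eqref{HK_global}--\eqref{VanVleck_Asymp} on that window, which is the added hypothesis in the statement. The small-$T$ region is handled by the short-time Gaussian bound, and the large-$T$ region by the linear decay in $T$.

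Inside the window, Gaussian integration gives
\begin{equation*}
G\simeq (4\pi T_*)^{-n/2}\,\mathcal A\,\sqrt{2\pi/|\phi''(T_*)|}\;e^{-(\nu+d/2)L_g/\ell_{\mathrm{AdS}}}\bigl(1+O(\nu^{-1})\bigr).
\end{equation*}
Using $\nu+d/2=\Delta$ in the heavy regime $\Delta\ge d$, the exponential becomes exactly $e^{-\Delta L_g/\ell_{\mathrm{AdS}}}$. The relative corrections come from the $O(T^{-1})$ heat-kernel remainder and from the cubic term in the Laplace expansion; both scale as $1/(\nu L_g)$ or $1/\nu$, and are absorbed into $O(\Delta^{-1})$. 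This establishes Eq.~\eqref{WKB_form}, with $C_{\mathrm{pre}}$ identified as the product of the remaining prefactors.

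The main obstacle is the prefactor cancellation. Its polynomial content is
\begin{equation*}
T_*^{-n/2}\,\mathcal A\,\sqrt{2\pi/|\phi''|}\propto L_g^{-(d+1)/2}\cdot L_g^{d/2}\cdot L_g^{1/2},
\end{equation*}
because $|\phi''(T_*)|^{-1/2}\propto T_*^{3/2}/L_g\propto L_g^{1/2}$. The powers sum to $-(d+1)/2+d/2+1/2=0$, so there is no residual polynomial dependence on $L_g$. The remaining factor depends only on $\nu$, $\ell_{\mathrm{AdS}}$ and $d$, together with the $O(1)$ constant in $\mathcal A\sim(L_g/\ell_{\mathrm{AdS}})^{d/2}$.

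Finally, I will take $X_\varepsilon,Y_\varepsilon$ to the boundary. Since $L_g\to\infty$ as $\varepsilon\downarrow0$ by Lemma~\ref{lem:length-ren}, the large-distance asymptotics become exact in this limit. The prefactor then tends to the $L_g$-independent constant $C_{\Delta,\rho}$, provided the subleading corrections to $\mathcal A$ are uniform. This uniformity is the delicate point, and I would justify it from the geodesic-saddle isolation hypothesis (absence of conjugate points) applied on $\mathsf D$.
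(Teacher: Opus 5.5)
Your proposal follows the paper's proof essentially step for step. You substitute Eqs.~\eqref{HK_global}--\eqref{VanVleck_Asymp}, apply Laplace's method at $T_*=L_g\ell_{\mathrm{AdS}}/(2\nu)$, recombine $\nu+d/2=\Delta$ in the exponent, and power-count $-(d+1)/2+d/2+1/2=0$ in the prefactor. Your explicit tail control and your flagging of the uniformity of the constant in $\mathcal A$ are refinements the paper leaves implicit.

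One small slip: the $(1+O(T^{-1}))$ heat-kernel remainder evaluated at $T_*$ has relative size $\nu\ell_{\mathrm{AdS}}/L_g$, not $1/\nu$ or $1/(\nu L_g)$. It is therefore removed by $L_g\to\infty$ (i.e.\ $\varepsilon\downarrow0$ at fixed $\Delta$), not by taking $\Delta$ large.
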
}
		
		\begin{proof}
			Substituting Eq.~\eqref{HK_global} and Eq.~\eqref{VanVleck_Asymp} into the proper time integral Eq.~\eqref{Greenfunction}, the integrand is overwhelmingly governed by the effective exponent:
			\begin{equation}\label{S_eff}
				S_{\mathrm{eff}}(T) = \left(m^2 + \frac{d^2}{4\ell_{\mathrm{AdS}}^2}\right)T + \frac{L_g^2}{4T} = \frac{\nu^2}{\ell_{\mathrm{AdS}}^2}T + \frac{L_g^2}{4T},
			\end{equation}
			where we invoked the exact spectral mass shift Eq.~\eqref{A4a_corr}. 
			
			{To evaluate the proper-time integral, we apply the Laplace method about the
				unique stationary point of \(S_{\mathrm{eff}}\). Crucially, the inputs
				Eqs.~\eqref{HK_global} and \eqref{VanVleck_Asymp} are global large-distance asymptotics and not a short-time parametrix near \(T=0\);
				in particular, no condition of the form \(T_*<T_0\) is assumed. The relevant large parameter is the saddle action
				\(
				S_{\mathrm{eff}}(T_*)=\frac{\nu L_g}{\ell_{\mathrm{AdS}}},
				\)
				so the saddle approximation is controlled whenever
				\(\nu L_g/\ell_{\mathrm{AdS}}\gg1\).
				Since
				\begin{equation}
					S'_{\mathrm{eff}}(T)=\frac{\nu^2}{\ell_{\mathrm{AdS}}^2}-\frac{L_g^2}{4T^2},
				\end{equation}
				the unique stationary point is
				\(
				T_*=\frac{L_g\ell_{\mathrm{AdS}}}{2\nu}.
				\)
				Expanding about \(T=T_*+x\), we obtain
				\begin{equation}
					S_{\mathrm{eff}}(T)
					=
					S_{\mathrm{eff}}(T_*)
					+\frac12 S''_{\mathrm{eff}}(T_*)x^2
					+O(x^3),
				\end{equation}
				with
				\begin{equation}
					S''_{\mathrm{eff}}(T)=\frac{L_g^2}{2T^3},
					\qquad
					S''_{\mathrm{eff}}(T_*)=\frac{4\nu^3}{L_g\ell_{\mathrm{AdS}}^3}.
				\end{equation}
				Therefore
				\begin{equation}\label{Greenfunction2}
					G_{m,g_\rho}(X,Y)
					\approx
					(4\pi T_*)^{-n/2}\Delta_{\mathrm{VM}}^{1/2}
					e^{-S_{\mathrm{eff}}(T_*)}
					\int_{-T_*}^{\infty}
					dx\,
					\exp\!\left[-\frac12 S''_{\mathrm{eff}}(T_*)x^2\right].
				\end{equation}
				
				The Gaussian width is
				\begin{equation}
					\sigma_T \sim \frac{1}{\sqrt{S''_{\mathrm{eff}}(T_*)}}
					\sim \sqrt{\frac{L_g\ell_{\mathrm{AdS}}^3}{4\nu^3}},
				\end{equation}
				and therefore
				\begin{equation}
					\frac{\sigma_T}{T_*}
					\sim
					\frac{1}{\sqrt{\nu L_g/\ell_{\mathrm{AdS}}}}.
				\end{equation}
				Hence, in the regime \(\nu L_g/\ell_{\mathrm{AdS}}\gg1\), the Gaussian support is
				narrow relative to \(T_*\), and the omitted tail from \(x<-T_*\) is
				exponentially small:
				\begin{equation}
					\int_{-\infty}^{-T_*} dx\,
					e^{-\frac12 S''_{\mathrm{eff}}(T_*)x^2}
					=
					O\!\left(e^{-c\,\nu L_g/\ell_{\mathrm{AdS}}}\right)
					\int_{-\infty}^{\infty} dx\,
					e^{-\frac12 S''_{\mathrm{eff}}(T_*)x^2},
				\end{equation}
				for some \(c>0\). Consequently,
				\begin{equation}\label{Gaussian1}
					\int_{-T_*}^{\infty} dx\,
					e^{-\frac12 S''_{\mathrm{eff}}(T_*)x^2}
					=
					\sqrt{\frac{2\pi}{S''_{\mathrm{eff}}(T_*)}}
					\left(1+O\!\left(e^{-c\,\nu L_g/\ell_{\mathrm{AdS}}}\right)\right).
			\end{equation}}
			Inserting Eq.~\eqref{Gaussian1} and Eq.~\eqref{VanVleck_Asymp} into Eq.~\eqref{Greenfunction2} yields
			\begin{equation}\label{Greenfunction3}
				\begin{aligned}
				G_{m,g_\rho}(X,Y)
				&\approx (4\pi T_*)^{-n/2} \mathcal{A}(X,Y)
				\exp\left[-\frac{d \, L_g}{2\ell_{\mathrm{AdS}}}\right]
				\exp\bigl[-S_{\mathrm{eff}}(T_*)\bigr]
				\sqrt{\frac{2\pi}{S''_{\mathrm{eff}}(T_*)}} .
				\end{aligned}
			\end{equation}
			where $\mathcal{A}(X,Y) \sim \left( \frac{L_g}{\ell_{\mathrm{AdS}}} \right)^{d/2}$.
			
			Recall $T_* = \frac{L_g \ell_{\mathrm{AdS}}}{2\nu}$, from Eq.~\eqref{S_eff}, we have $S_{\mathrm{eff}}(T_*) = \frac{\nu L_g}{\ell_{\mathrm{AdS}}}$. Thus, the total exponential decay in Eq.~\eqref{Greenfunction3} evaluates to exactly
			\begin{equation}\label{Exp_decay}
				\exp\left[ -\frac{\nu L_g}{\ell_{\mathrm{AdS}}} - \frac{d L_g}{2\ell_{\mathrm{AdS}}} \right] 
				= \exp\left[ -\frac{(\Delta - d/2 + d/2)L_g}{\ell_{\mathrm{AdS}}} \right] 
				= \exp\left[ -\frac{\Delta}{\ell_{\mathrm{AdS}}} L_g \right].
			\end{equation}
			Therefore, Eq.~\eqref{Greenfunction3} may be written as
			\begin{equation}
				G_{m,g_\rho}(X,Y) \approx C_{\mathrm{pre}}(X,Y) \exp\left[ -\frac{\Delta}{\ell_{\mathrm{AdS}}} L_g \right],
				\label{Greenfunction4}
			\end{equation}
			where the prefactor evaluates to:
			\begin{equation}\label{Prefactor_eval}
				C_{\mathrm{pre}}(X,Y) \approx (4\pi T_*)^{-n/2} \times \mathcal{A}(X,Y) \times \sqrt{\frac{2\pi}{S''_{\mathrm{eff}}(T_*)}} .
			\end{equation}
			This derivation gracefully recovers the exact conformal dimension $\Delta$ in the exponent without invoking arbitrary scheme-dependent renormalization constants. {We now isolate only the \(L_g\)-dependence of the prefactor. Since
				\(T_* \propto L_g\), one has
				\begin{equation}
					(4\pi T_*)^{-n/2}\propto L_g^{-n/2},
					\qquad
					\mathcal{A}(X,Y)\propto L_g^{d/2},
					\qquad
					S''_{\mathrm{eff}}(T_*)^{-1/2}\propto L_g^{1/2}.
				\end{equation}
				Using \(n=d+1\), the net power of \(L_g\) is
				\(
				-\frac{n}{2}+\frac{d}{2}+\frac12
				=
				-\frac{d+1}{2}+\frac{d}{2}+\frac12
				=
				0.
				\)
				Hence all polynomial dependence on \(L_g\) cancels. Therefore
				\(C_{\mathrm{pre}}(X_\varepsilon,Y_\varepsilon)\) tends, as
				\(\varepsilon\downarrow0\), to an \(L_g\)-independent finite factor
				\(C_{\Delta,\rho}\) in the stated approximation scheme.
				The Gaussian factor
				\(\sqrt{2\pi/S''_{\mathrm{eff}}(T_*)}\) is the leading one-loop worldline
				fluctuation determinant, while higher corrections from cubic and higher terms in
				the expansion are suppressed by inverse powers of the large parameter
				\(\nu L_g/\ell_{\mathrm{AdS}}\).}
			
		\end{proof}

		To rigorously justify the scaling of the backreaction parameter $\varepsilon_{\mathrm{br}} = \varepsilon_{\mathrm{grav}} \Delta^2$, we now calculate the leading-order phase shift in the on-shell action using linearized gravity, strictly tracking physical dimensions and worldline localization.
		
		\begin{lemma}[Backreaction dimensional scaling]\label{lem:br_scaling}
			Let $g_\rho$ be the background metric and let $g_{\mathrm{br}} = g_\rho + h$ be the metric backreacted to the presence of the heavy operator insertion $O_\Delta$. In the semiclassical regime, the shift in the renormalized exponent satisfies
			\begin{equation}\label{br_bound}
				\left| \Delta L_{\mathrm{ren},\rho}^{\mathrm{br}} - \Delta L_{\mathrm{ren},\rho} \right| = \mathcal{O}(\varepsilon_{\mathrm{grav}} \Delta^2).
			\end{equation}
			Consequently, the validity of the unbackreacted probe approximation explicitly requires $\varepsilon_{\mathrm{br}} \ll 1$.
		\end{lemma}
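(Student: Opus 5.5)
The plan is a standard linearized-gravity estimate of the on-shell action shift. Work in units where $\ell_{\mathrm{AdS}}=1$. The heavy insertion is treated as a massive worldline of mass $m\simeq\Delta$ (using $m^2=\Delta(\Delta-d)$ and $\Delta\ge d$) following the minimizing geodesic $\gamma$ supplied by the geodesic-saddle isolation hypothesis. The total Euclidean action is
\[
I[g]=\frac{1}{16\pi G_N}\int\sqrt g\,(R-2\Lambda)+m\int_\gamma ds_g ,
\]
and the probe exponent $\Delta L_{\mathrm{ren},\rho}$ is the worldline term evaluated on the unperturbed background $g_\rho$.

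The first step is to solve the linearized Einstein equation sourced by the worldline stress tensor, which is localized on $\gamma$ with strength $m$:
\[
\delta_{\mathrm{Ein}}h=8\pi G_N\,T^{(\gamma)} .
\]
I would write $h=8\pi G_N\,m\,\mathcal G*\delta_\gamma$, where $\mathcal G$ is the graviton propagator on $g_\rho$ in a fixed gauge (for example de Donder), with normalizable falloff so that the Fefferman--Graham data of Eq.~\eqref{A3a_corr} at order $z^0$ are unchanged. This gives $h=O(G_N\Delta)$ pointwise away from $\gamma$.

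The second step is to expand $I[g_\rho+h]$ to second order. The term linear in $h$ vanishes because of the background Einstein equations combined with the source coupling. That coupling contributes $\tfrac{m}{2}\int_\gamma h_{\mu\nu}\dot x^\mu\dot x^\nu$, which is cancelled at the same order by the quadratic gravitational term. The net result is the standard self-energy
\[
\delta I=-\frac{m}{4}\int_\gamma h_{\mu\nu}\dot x^\mu\dot x^\nu\,ds .
\]
There is also the geodesic deformation, but it enters only at second order by the first-variation property of the geodesic length, since $\gamma$ is minimizing and free of conjugate points. Substituting $h=O(G_N m)$ gives $\delta I=O(G_N m^2)\times(\text{geometric integral})$. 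After restoring dimensions, $G_N\to\varepsilon_{\mathrm{grav}}$, so $\delta I=O(\varepsilon_{\mathrm{grav}}\Delta^2)$. Identifying $\delta I$ with $\Delta(L^{\mathrm{br}}_{\mathrm{ren},\rho}-L_{\mathrm{ren},\rho})$ then yields Eq.~\eqref{br_bound}. The final statement is immediate: the multiplicative error $e^{\delta I}-1$ is small if and only if $\varepsilon_{\mathrm{br}}\ll1$.

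The main obstacle is that the self-energy integral $\int_\gamma\int_\gamma \mathcal G(x,x')$ diverges in two places. There is a coincident-point (UV) divergence along the worldline, and a near-boundary divergence that grows with the length of $\gamma$. To control the near-boundary part, I would use Lemma~\ref{lem:length-ren}: the counterterms that renormalize $L$ also renormalize the self-energy, leaving a finite remainder bounded on the compact domain $\mathsf D$. The coincident-point divergence is a local mass renormalization, proportional to the length, which can be absorbed into $m$ and hence into the calibration $N_\Delta$. For the finite remainder, I would first try the exponential decay of the AdS graviton propagator in geodesic distance. This bounds the nonlocal part of the double integral by a constant times the renormalized length, uniformly on $\mathsf D$, and gives the $O(\varepsilon_{\mathrm{grav}}\Delta^2)$ constant without any $L$-dependence beyond that already in $L_{\mathrm{ren}}$.
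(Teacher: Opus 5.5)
Your argument follows the same route as the paper's proof. You treat the insertion as a worldline of mass $m\simeq\Delta/\ell_{\mathrm{AdS}}$, solve the linearized Einstein equation to get $h=O(\varepsilon_{\mathrm{grav}}\Delta)$, and obtain an on-shell action shift localized on $\gamma$ of size $m\,|h|\,L_{\mathrm{ren}}\ell_{\mathrm{AdS}}\sim\varepsilon_{\mathrm{grav}}\Delta^2L_{\mathrm{ren}}$, with the near-boundary divergence removed by the same renormalization as $L$. Your version is more careful than the paper's, which uses the full linear coupling $\tfrac12\int\sqrt g\,T^{\mu\nu}h_{\mu\nu}$ without the compensating factor $\tfrac12$ from the quadratic gravitational term and never addresses the coincident-point divergence of $h$ on $\gamma$.

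One correction: a self-energy proportional to geodesic length multiplies $e^{-\Delta L_{\mathrm{ren}}}$ by an $(x,y)$-dependent factor. It must therefore be absorbed into $m$, and hence into $\Delta$ via $m^2\ell_{\mathrm{AdS}}^2=\Delta(\Delta-d)$. It cannot go into the constant calibration $N_\Delta$ as you state.
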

		
		\begin{proof}
			According to the standard AdS/CFT dictionary, the insertion of a boundary primary operator $O_\Delta$ with a large conformal dimension $\Delta \gg 1$ is holographically dual to a heavy point particle propagating in the asymptotically Anti-de Sitter (AdS) bulk \cite{Witten1998, Balasubramanian1999}. Thus, the heavy operator insertion may be treated as a localized bulk point particle of proper mass $m \approx \Delta/\ell_{\mathrm{AdS}}$ traversing a minimizing background geodesic worldline $\gamma$, parameterized by proper time $\tau$ (with length dimension $L$). The unperturbed classical probe action is $S_{\mathrm{probe}} = m \int d\tau = m \int \sqrt{g_{\mu\nu}\dot{X}^\mu \dot{X}^\nu} d\tau$.
			
			The presence of this macroscopic particle acts as a localized source that backreacts on the bulk geometry, generating a metric perturbation $h_{\mu\nu} = g_{\mathrm{br},\mu\nu} - g_{\rho,\mu\nu}$. This perturbation is governed by the linearized Einstein equations, $\mathcal{D}_{\mathrm{Lich}} h_{\mu\nu} = 8\pi G_N T_{\mu\nu}^{\mathrm{probe}}$, where $\mathcal{D}_{\mathrm{Lich}}$ is the Lichnerowicz operator describing kinetic metric fluctuations on the AdS background, and $G_N$ is the $(d+1)$-dimensional Newton constant. Following standard general relativistic formulations, the covariant stress-energy tensor distribution is uniquely defined via the functional variation of the worldline probe action with respect to the background metric. Evaluating $T^{\mu\nu}_{\mathrm{probe}} = \frac{2}{\sqrt{g}}\frac{\delta S_{\mathrm{probe}}}{\delta g_{\mu\nu}}$ explicitly yields a tensor density localized entirely on the particle's trajectory:
			\begin{equation}\label{Stress_Tensor}
				T^{\mu\nu}_{\mathrm{probe}}(X) = \frac{m}{\sqrt{g}} \int_\gamma d\tau \, \dot{X}^\mu \dot{X}^\nu \, \delta^{(d+1)}\bigl(X - X(\tau)\bigr).
			\end{equation}
			
			To rigorously establish the correct scaling of the backreaction, we perform a dimensional analysis of this source term in natural units ($c = \hbar = 1$). The proper mass has length dimension $[m] = L^{-1}$, the proper time measure is $[d\tau] = L$, and the metric determinant $\sqrt{g}$ along with the four-velocity $\dot{X}^\mu$ are intrinsically dimensionless. Crucially, the covariant $(d+1)$-dimensional Dirac delta function $\delta^{(d+1)}(X)$ carries an inverse bulk volume dimension of $L^{-(d+1)}$. Combining these dimensional factors, the localized stress-energy tensor density scales analytically as $| T_{\mu\nu}^{\mathrm{probe}} | \sim m \cdot L \cdot 1 \cdot L^{-(d+1)} \sim m/\ell_{\mathrm{AdS}}^d$. Substituting the asymptotic holographic mass equivalence $m \approx \Delta/\ell_{\mathrm{AdS}}$ yields the scaling $| T_{\mu\nu}^{\mathrm{probe}} | \approx \Delta/\ell_{\mathrm{AdS}}^{d+1}$.
			
			The resulting dimensionless bulk metric perturbation $h_{\mu\nu}$ is obtained by formally inverting the linearized field equations. Because the Lichnerowicz operator $\mathcal{D}_{\mathrm{Lich}}$ functions as a covariant geometric Laplacian whose characteristic derivative scale is determined by the constant negative curvature of the AdS background, it scales dimensionally as $\mathcal{D}_{\mathrm{Lich}} \sim \nabla^2 \sim \ell_{\mathrm{AdS}}^{-2}$. The bulk-to-bulk graviton propagator, acting as the inverse of this operator, therefore introduces an explicit compensatory geometric factor of $\ell_{\mathrm{AdS}}^2$. Given that the bulk Newton constant $G_N$ possesses the physical dimension $[G_N] = L^{d-1}$, the bulk metric perturbation scales strictly as
			\begin{equation}\label{h_scaling}
				| h_{\mu\nu} | \sim G_N \ell_{\mathrm{AdS}}^2 | T_{\mu\nu}^{\mathrm{probe}} | \propto G_N \ell_{\mathrm{AdS}}^2 \left( \frac{\Delta}{\ell_{\mathrm{AdS}}^{d+1}} \right) = \frac{G_N}{\ell_{\mathrm{AdS}}^{d-1}} \Delta \equiv \varepsilon_{\mathrm{grav}} \Delta.
			\end{equation}
			This isolated expression naturally identifies $\varepsilon_{\mathrm{grav}} = G_N/\ell_{\mathrm{AdS}}^{d-1}$ as the fundamental dimensionless gravitational coupling of the bulk spacetime.
			
			To evaluate the dynamical impact of this backreaction on the heavy operator correlator, we compute the corresponding phase shift in the semiclassical exponent. The unbackreacted renormalized two-point correlator is approximated by the exponentiated on-shell action $\exp(-S_{\mathrm{probe}})$. At leading order in the metric perturbation, the phase shift in the exponent, $\delta S_{\mathrm{probe}}$, corresponds exactly to the functional variation of the geodesic action evaluated on $h_{\mu\nu}$, integrated over the entire bulk manifold $M$
			\begin{equation}\label{delta_S_var}
				\delta S_{\mathrm{probe}} = \int_M d^{d+1}X \, \frac{\delta S_{\mathrm{probe}}}{\delta g_{\mu\nu}(X)} h_{\mu\nu}(X) 
				= \frac{1}{2} \int_M d^{d+1}X \sqrt{g} \, T^{\mu\nu}_{\mathrm{probe}} h_{\mu\nu}.
			\end{equation}
			
			Substituting the explicit delta-distribution form of the stress tensor Eq.~\eqref{Stress_Tensor} into Eq.~\eqref{delta_S_var}, the bulk integration directly trivializes the Dirac delta function. Physically, this geometric mechanism perfectly localizes the graviton exchange interaction and the resultant phase shift entirely to the one-dimensional proper-time worldline $\gamma$:
			\begin{equation}\label{delta_S_final}
				|\delta S_{\mathrm{probe}}| = \frac{1}{2} \left| \int_\gamma m \, h_{\mu\nu} \dot{X}^\mu \dot{X}^\nu d\tau \right| \sim m \, | h_{\mu\nu} | \int_\gamma d\tau.
			\end{equation}
			
			The remaining integration evaluates the macroscopic proper length of the unperturbed boundary-to-boundary geodesic. Because geodesics anchored to the asymptotic boundary of AdS possess infinite bare proper length due to logarithmic infrared divergences, physical holographic observables must be strictly regulated \cite{Graham1999}. Employing standard holographic renormalization, the divergent geometric segments near the conformal boundary are systematically counter-termed and subtracted. The resulting finite, macroscopic renormalized proper length parameterizes dimensionally as $L_\gamma^{\mathrm{ren}} = \int_\gamma d\tau \sim L_{\mathrm{ren}}\ell_{\mathrm{AdS}}$, where $L_{\mathrm{ren}}$ is a purely kinematic, $\mathcal{O}(1)$ dimensionless scalar dictated by the boundary anchoring points. 
			
			Finally, by substituting the localized metric scaling Eq.~\eqref{h_scaling} and the rest mass $m \approx \Delta/\ell_{\mathrm{AdS}}$ into the worldline integral Eq.~\eqref{delta_S_final}, the overall semiclassical phase shift resolves cleanly to
			\begin{equation}\label{delta_S_eval}
				|\delta S_{\mathrm{probe}}| \sim \left(\frac{\Delta}{\ell_{\mathrm{AdS}}}\right) (\varepsilon_{\mathrm{grav}} \Delta) (L_{\mathrm{ren}}\ell_{\mathrm{AdS}}) = \varepsilon_{\mathrm{grav}} \Delta^2 L_{\mathrm{ren}}.
			\end{equation}
			Because the regularized geometric distance $L_{\mathrm{ren}}$ is intrinsically independent of both the operator conformal dimension $\Delta$ and the Newton constant $G_N$, the characteristic physical length scale $\ell_{\mathrm{AdS}}$ elegantly and completely cancels out of the final dynamic expression. This rigorous dimensional bookkeeping unambiguously proves that the relative action error induced by gravitational backreaction scales dimensionlessly as $\mathcal{O}(\varepsilon_{\mathrm{grav}} \Delta^2) \equiv \mathcal{O}(\varepsilon_{\mathrm{br}})$. Consequently, the unbackreacted semiclassical probe approximation is mathematically self-consistent and valid if and only if the backreaction parameter satisfies $\varepsilon_{\mathrm{br}} \ll 1$.
		\end{proof}
		
		The rigorous dynamical phase-shift analysis establishes that the proper semiclassical control parameter scales quadratically with the conformal dimension, $\varepsilon_{\mathrm{br}} = \varepsilon_{\mathrm{grav}} \Delta^2$. This provides a stricter and physically exact bound compared to the heuristic linear mass-insertion estimate ($\varepsilon_{\mathrm{grav}} \Delta$) often assumed. 
		The physical renormalized boundary correlator via the standard holographic extrapolate dictionary may be written as
		\begin{equation} \label{extrapolate_dict}
			\langle O_\Delta(x)O_\Delta(y)\rangle_\rho = {N}_\Delta \lim_{\varepsilon \downarrow 0} \varepsilon^{-2\Delta} G_{m,g_\rho}(X_\varepsilon, Y_\varepsilon).
		\end{equation}
		Using the geometric expansion from Lemma \ref{lem:length-ren} and the exact exponential form from Lemma \ref{thm:WKB_exact}, the boundary limiting procedure is exceptionally clean. We substitute the dimensionless length $L_\rho(X_\varepsilon, Y_\varepsilon)/\ell_{\mathrm{AdS}} = 2\ln(1/\varepsilon) + L_{\mathrm{ren},\rho}(x,y) + O(\varepsilon^2)$ into the exponential factor. The $\varepsilon^{-2\Delta}$ prefactor perfectly neutralizes the universal divergent piece, leaving a fully finite limit
		\begin{align}\label{Cancellation_Final}
			\varepsilon^{-2\Delta} \exp\left[ -\frac{\Delta}{\ell_{\mathrm{AdS}}} L_\rho(X_\varepsilon, Y_\varepsilon) \right] 
			&= \varepsilon^{-2\Delta} \exp\left[ - \Delta \left( 2\ln\left(\frac{1}{\varepsilon}\right) + L_{\mathrm{ren},\rho} + O(\varepsilon^2) \right) \right] \nonumber \\
			&= \varepsilon^{-2\Delta} \left( \varepsilon^{2\Delta} \right) \exp\left( - \Delta L_{\mathrm{ren},\rho} \right) \bigl(1 + O(\Delta\varepsilon^2)\bigr) \nonumber \\
			&\xrightarrow{\varepsilon \downarrow 0} \exp\left( - \Delta L_{\mathrm{ren},\rho} \right).
		\end{align}
		
{

The preceding calculation gives the conventional physical heavy-probe dictionary for the continuum primary $O_\Delta$. The information-theoretic theorem requires bounded observables. Therefore the object entering the MfI bound is the code-compressed connected correlator. We assume that restriction to the semiclassical code subspace gives
\begin{equation}
\label{eq:SI-III-code-formula}
C_\rho\!\left(
	\widetilde O^{(0)}_{\Delta,\rho}(x),
	\widetilde O^{(0)}_{\Delta,\rho}(y)
	\right)
=
N^{\mathrm{code}}_\Delta
e^{-\Delta L_{\mathrm{ren},\rho}(x,y)}
\left(1+\varepsilon_{\mathrm{tot}}(\Delta,\rho;x,y)
\right),
\end{equation}
where $\varepsilon_{\mathrm{tot}}$ includes WKB, loop, string, backreaction, and code projection/truncation errors, all controlled as relative multiplicative errors. We write $N_\Delta$ for $N^{\mathrm{code}}_\Delta$ below.
		}
		
\begin{theorem}[ Heavy-probe error budget]\label{thm:final_errors}
Assume the heavy-probe regime with $\Delta \gg 1$ and bounded backreaction $\varepsilon_{\mathrm{br}} \ll 1$.
{
				The code-compressed connected correlator admits the compact representation
				\begin{equation}
					\label{eq:theorem3-code}
					C_\rho\!\left(
					\widetilde O^{(0)}_{\Delta,\rho}(x),
					\widetilde O^{(0)}_{\Delta,\rho}(y)
					\right)
					=
					N^{\mathrm{code}}_\Delta
					e^{-\Delta L_{\mathrm{ren},\rho}(x,y)}
					\left(1+\varepsilon_{\mathrm{tot}}(\Delta,\rho;x,y)\right),
				\end{equation}
				with the total relative error rigorously bounded by
				\begin{equation}
					\label{eq:theorem3-code-error}
					|\varepsilon_{\mathrm{tot}}(\Delta,\rho;x,y)|
					\le
					K\left(
					\frac1\Delta
					+
					\varepsilon_{\mathrm{grav}}
					+
					\varepsilon_{\mathrm{str}}
					+
					\frac{\Delta^2}{c_{\mathrm{eff}}}
					+
					\varepsilon_{\mathrm{code}}
					\right).
				\end{equation}
				If $\varepsilon_{\mathrm{code}}$ is not introduced separately, the final term has already been absorbed into $\varepsilon_{\mathrm{tot}}$ as a relative multiplicative error.
				
				\paragraph{Fixed code normalization.}
				There exists a nonzero state-independent code normalization $N_\Delta^{\mathrm{code}}$ such that residual state- and position-dependent prefactor variations are uniformly absorbed into $\varepsilon_{\mathrm{tot}}$. Then
				\begin{equation}
					\kappa_\Delta
					:=
					\frac{|N_\Delta^{\mathrm{code}}|}{B_\Delta^2},
				\end{equation}

			where $K$ is a dimensionless constant uniformly bounding the error over the compact insertion domain $\mathsf{D}$, and is independent of $\Delta$, $\varepsilon_{\mathrm{grav}}$, $\varepsilon_{\mathrm{str}}$, and the state $\rho$. The terms inside the bracket strictly track the WKB truncation ($O(1/\Delta)$), bulk quantum loops ($O(\varepsilon_{\mathrm{grav}})$), stringy higher-derivative corrections ($O(\varepsilon_{\mathrm{str}})$), classical gravitational probe backreaction ($O(\varepsilon_{\mathrm{br}})$ from Lemma \ref{lem:br_scaling}), and code compression errors ($O(\varepsilon_{\mathrm{code}})$), respectively.
}
		\end{theorem}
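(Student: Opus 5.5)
The plan is to assemble the error budget multiplicatively from the separate approximation steps already established in Appendix~\ref{app:heavy-probe}. Each step will be written as an exact identity times a factor $(1+\varepsilon_i)$ with $|\varepsilon_i|\le K_i\,\eta_i$, where $\eta_i$ ranges over $\Delta^{-1}$, $\varepsilon_{\mathrm{grav}}$, $\varepsilon_{\mathrm{str}}$, $\Delta^2/c_{\mathrm{eff}}$, and $\varepsilon_{\mathrm{code}}$. The factors are then combined using the elementary estimate
\begin{equation*}
\Bigl|\prod_i(1+\varepsilon_i)-1\Bigr|\le \exp\Bigl(\sum_i|\varepsilon_i|\Bigr)-1\le e^{s^\star}\sum_i|\varepsilon_i| .
\end{equation*}
This holds whenever $\sum_i|\varepsilon_i|\le s^\star$ for a fixed $s^\star$, which is guaranteed by the regime assumptions Eq.~\eqref{A8}. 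The combined error therefore obeys Eq.~\eqref{eq:theorem3-code-error} with $K:=e^{s^\star}\max_i K_i$.

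The concrete order of steps is as follows.

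First, the WKB factor comes from Lemma~\ref{thm:WKB_exact}. There the Laplace-method corrections are suppressed by $(\nu L_g/\ell_{\mathrm{AdS}})^{-1}$, and since $\nu=\Delta-d/2\ge \Delta/2$ in the regime $\Delta\ge d$, this gives a relative error $O(1/\Delta)$. The exponentially small Gaussian tail is also absorbed into the $O(1/\Delta)$ term.

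Second, the boundary limit is taken via Lemma~\ref{lem:length-ren} and Eq.~\eqref{Cancellation_Final}. The $O(\Delta\varepsilon^2)$ correction vanishes as $\varepsilon\downarrow0$, so this step contributes no error.

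Third, the backreaction factor follows from Lemma~\ref{lem:br_scaling}. The exponent shifts by $O(\varepsilon_{\mathrm{br}})$, so the multiplicative error is $e^{O(\varepsilon_{\mathrm{br}})}-1=O(\Delta^2/c_{\mathrm{eff}})$. This uses that $\varepsilon_{\mathrm{br}}\le\varepsilon^\star_{\mathrm{br}}$ is bounded.

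Fourth, the bulk loop and stringy corrections are treated as standard effective-action corrections to the propagator. They enter the exponent (or prefactor) at relative orders $\varepsilon_{\mathrm{grav}}$ and $\varepsilon_{\mathrm{str}}$; I would simply take these as the standard semiclassical expansions, consistent with Eq.~\eqref{eq:error-budget-code}.

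Fifth, the code compression and mean subtraction are handled by hypothesis Eq.~\eqref{eq:SI-code-assumption}. It supplies the final factor $1+O(\varepsilon_{\mathrm{code}})$ and identifies $N^{\mathrm{code}}_\Delta$ with $N_\Delta$ times a state-independent constant.

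The fixed-normalization statement requires a separate argument. The limit prefactor $C_{\Delta,\rho}$ from Lemma~\ref{thm:WKB_exact} is $L_g$-independent, but it may still depend on $\rho$ and on the position $(x,y)$. I would define $N^{\mathrm{code}}_\Delta$ as a reference value, for instance the vacuum value, or the midpoint of the range of $C_{\Delta,\rho}$ over $\mathcal S_{\mathrm{code}}\times\mathsf D$. The relative deviation $C_{\Delta,\rho}(x,y)/N^{\mathrm{code}}_\Delta-1$ is then absorbed into $\varepsilon_{\mathrm{tot}}$. For this to be legitimate, the deviation must itself be bounded by the same bracket. I would argue this using the Fefferman--Graham expansion Eq.~\eqref{A3a_corr}: the state-dependent data enter only through the subleading coefficients, and these shift the Van Vleck amplitude at subleading order in $1/\Delta$ relative to the exponent. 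Compactness of $\mathsf D$ then makes the bound uniform, so $K$ is independent of $\rho$ and $(x,y)$. Nonvanishing of $N^{\mathrm{code}}_\Delta$ follows because the leading prefactor is a positive product of positive factors. With $N^{\mathrm{code}}_\Delta$ fixed, $\kappa_\Delta$ is simply defined by the displayed ratio, and $B_\Delta<\infty$ holds by Eq.~\eqref{eq:uniform-B-again}.

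The main obstacle is this uniformity claim: showing that the state dependence of $C_{\Delta,\rho}$ is genuinely small and not merely finite. My first attempt would be to compare $C_{\Delta,\rho}$ with its pure-AdS value through the Van Vleck determinant along the geodesic, using the bounded-geometry constant $C_{\mathrm{geom}}$. If that comparison fails, the fallback is to absorb the residual variation into $\varepsilon_{\mathrm{code}}$ by hypothesis, which is what the theorem's remark permits.
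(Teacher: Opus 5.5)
Your proposal follows the paper's proof essentially step for step: the paper also factors the compressed correlator as $N^{\mathrm{code}}_\Delta e^{-\Delta L_{\mathrm{ren},\rho}}$ times $(\mathcal C_{\Delta,\rho}/N^{\mathrm{code}}_\Delta)(1+\eta_{\mathrm{WKB}})(1+\eta_{\mathrm{grav}})(1+\eta_{\mathrm{str}})(1+\eta_{\mathrm{br}})(1+\eta_{\mathrm{code}})$. It then bounds $|\varepsilon_{\mathrm{tot}}|\le e^{\sum_j a_j}-1\le e^{M_\star}\sum_j a_j$, with $a_j$ the six deviations from $1$, takes the loop, stringy, code and prefactor bounds as standing hypotheses, and sets $K=e^{M_\star}K_0$.

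One caveat: your first-attempt Van Vleck argument would not give what is needed. An $O(\Delta^0)$ state dependence of the prefactor is $1/\Delta$-suppressed only in the extracted length $-\Delta^{-1}\ln(\cdot)$, not as a relative multiplicative error. Your fallback of bounding $|\mathcal C_{\Delta,\rho}/N^{\mathrm{code}}_\Delta-1|$ by hypothesis is therefore exactly what the paper does (``assuming bounding constants for the prefactor term'').
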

		{\begin{proof}
				We split the argument into five steps.
				
				\smallskip
				\noindent
				\textbf{Step 1: WKB/geodesic form on the unbackreacted semiclassical saddle.}
				By Lemma~\ref{lem:length-ren} and Lemma~\ref{thm:WKB_exact} there exists, for every fixed state \(\rho\) and every pair \((x,y)\in\mathsf D\), a regulated bulk Green function on the unbackreacted saddle \(g_\rho\) of the form
				\begin{equation}
					G^{\mathrm{probe}}_{m,g_\rho}(X,Y)
					=
					C_{\mathrm{pre}}(X,Y)\,
					\exp\!\left[-\frac{\Delta}{\ell_{\mathrm{AdS}}}L_\rho(X,Y)\right]
					\bigl(1+\eta_{\mathrm{WKB}}(X,Y)\bigr),
				\end{equation}
				with
				\begin{equation}
					|\eta_{\mathrm{WKB}}(X,Y)| \le \frac{K_{\mathrm{WKB}}}{\Delta}.
				\end{equation}
				Here \(K_{\mathrm{WKB}}\) is independent of \(\Delta\), \((x,y)\in\mathsf D\), and \(\rho\) in the admissible state class. The existence of such a uniform constant follows from the \(O(\Delta^{-1})\) remainder in Lemma~\ref{thm:WKB_exact}, the compactness of \(\mathsf D\), and the standing uniform geodesic-saddle isolation hypothesis.
				
				\smallskip
				\noindent
				\textbf{Step 2: Renormalized boundary limit and definition of the prefactor.}
				By the holographic extrapolate dictionary,
				\begin{equation}
					\langle O_\Delta(x)O_\Delta(y)\rangle_\rho^{\mathrm{probe}}
					=
					N_\Delta \lim_{\varepsilon\downarrow 0}
					\varepsilon^{-2\Delta}
					G^{\mathrm{probe}}_{m,g_\rho}(X_\varepsilon,Y_\varepsilon).
				\end{equation}
				Using Lemma~\ref{lem:length-ren},
				\begin{equation}
					\frac{1}{\ell_{\mathrm{AdS}}}L_\rho(X_\varepsilon,Y_\varepsilon)
					=
					2\ln\!\frac{1}{\varepsilon}
					+
					L_{\mathrm{ren},\rho}(x,y)
					+
					r_\rho(\varepsilon;x,y),
					\qquad
					|r_\rho(\varepsilon;x,y)|\le C_{\mathrm{geom}}\varepsilon^2.
				\end{equation}
				Hence
				\begin{align}
					\varepsilon^{-2\Delta}
					\exp\!\left[-\frac{\Delta}{\ell_{\mathrm{AdS}}}L_\rho(X_\varepsilon,Y_\varepsilon)\right]
					&=
					\varepsilon^{-2\Delta}
					\exp\!\left[-\Delta\!\left(2\ln\!\frac{1}{\varepsilon}+L_{\mathrm{ren},\rho}(x,y)+r_\rho(\varepsilon;x,y)\right)\right]
					\\
					&=
					\exp\!\bigl[-\Delta L_{\mathrm{ren},\rho}(x,y)\bigr]\,
					\exp\!\bigl[-\Delta\,r_\rho(\varepsilon;x,y)\bigr].
				\end{align}
				Since \(r_\rho(\varepsilon;x,y)=O(\varepsilon^2)\), we have
				\begin{equation}
					\exp\!\bigl[-\Delta\,r_\rho(\varepsilon;x,y)\bigr]\longrightarrow 1
					\qquad
					(\varepsilon\downarrow 0),
				\end{equation}
				for every fixed \(\Delta\) in the heavy-probe regime. Moreover, Lemma~\ref{thm:WKB_exact} states that
				\(C_{\mathrm{pre}}(X_\varepsilon,Y_\varepsilon)\) converges to a finite limit as \(\varepsilon\downarrow0\). We therefore define
				\begin{equation}
					\mathcal C_{\Delta,\rho}(x,y)
					:=
					N_\Delta\lim_{\varepsilon\downarrow0} C_{\mathrm{pre}}(X_\varepsilon,Y_\varepsilon).
				\end{equation}
				It follows that the renormalized correlator on the unbackreacted saddle can be written as
				\begin{equation}
					\langle O_\Delta(x)O_\Delta(y)\rangle_\rho^{\mathrm{probe}}
					=
					\mathcal C_{\Delta,\rho}(x,y)\,
					\exp\!\bigl[-\Delta L_{\mathrm{ren},\rho}(x,y)\bigr]\,
					\bigl(1+\eta_{\mathrm{WKB}}(\Delta,\rho;x,y)\bigr),
				\end{equation}
				with
				\begin{equation}
					|\eta_{\mathrm{WKB}}(\Delta,\rho;x,y)|
					\le
					\frac{K_{\mathrm{WKB}}}{\Delta}.
				\end{equation}
				
				\smallskip
				\noindent
				\textbf{Step 3: Bulk loop and stringy corrections.}
				By the standing semiclassical control hypothesis, the renormalized boundary two-point function receives multiplicative corrections from bulk quantum loops and from higher-derivative/stringy effects. Accordingly, there exist correction factors
				\(\eta_{\mathrm{grav}}(\Delta,\rho;x,y)\) and \(\eta_{\mathrm{str}}(\Delta,\rho;x,y)\) such that
				\begin{equation}
					\langle O_\Delta(x)O_\Delta(y)\rangle_\rho^{\mathrm{no\,br}}
					=
					\langle O_\Delta(x)O_\Delta(y)\rangle_\rho^{\mathrm{probe}}
					\bigl(1+\eta_{\mathrm{grav}}(\Delta,\rho;x,y)\bigr)
					\bigl(1+\eta_{\mathrm{str}}(\Delta,\rho;x,y)\bigr),
				\end{equation}
				with uniform bounds
				\begin{equation}
					|\eta_{\mathrm{grav}}(\Delta,\rho;x,y)| \le K_{\mathrm{grav}}\,\varepsilon_{\mathrm{grav}},
					\qquad
					|\eta_{\mathrm{str}}(\Delta,\rho;x,y)| \le K_{\mathrm{str}}\,\varepsilon_{\mathrm{str}}.
				\end{equation}
				These are exactly the \(O(\varepsilon_{\mathrm{grav}})\) and \(O(\varepsilon_{\mathrm{str}})\) contributions appearing in the standing error budget.
				
				\smallskip
				\noindent
				\textbf{Step 4: Backreaction correction.}
				Let \(L^{\mathrm{br}}_{\mathrm{ren},\rho}(x,y)\) denote the renormalized length extracted from the backreacted saddle. By Lemma~\ref{lem:br_scaling},
				\begin{equation}
					\bigl|\Delta L^{\mathrm{br}}_{\mathrm{ren},\rho}(x,y)-\Delta L_{\mathrm{ren},\rho}(x,y)\bigr|
					\le
					K_{\mathrm{br}}\,\varepsilon_{\mathrm{br}}
				\end{equation}
				for some constant \(K_{\mathrm{br}}\) independent of \(\Delta\), \((x,y)\in\mathsf D\), and \(\rho\). Write
				\begin{equation}
					\delta_{\mathrm{br}}(\Delta,\rho;x,y)
					:=
					\Delta\Bigl(L^{\mathrm{br}}_{\mathrm{ren},\rho}(x,y)-L_{\mathrm{ren},\rho}(x,y)\Bigr),
				\end{equation}
				so that
				\begin{equation}
					|\delta_{\mathrm{br}}(\Delta,\rho;x,y)| \le K_{\mathrm{br}}\,\varepsilon_{\mathrm{br}}.
				\end{equation}
				Then
				\begin{equation}
					\exp\!\bigl[-\Delta L^{\mathrm{br}}_{\mathrm{ren},\rho}(x,y)\bigr]
					=
					\exp\!\bigl[-\Delta L_{\mathrm{ren},\rho}(x,y)\bigr]\,
					\exp\!\bigl[-\delta_{\mathrm{br}}(\Delta,\rho;x,y)\bigr].
				\end{equation}
				Define
				\begin{equation}
					\eta_{\mathrm{br}}(\Delta,\rho;x,y)
					:=
					\exp\!\bigl[-\delta_{\mathrm{br}}(\Delta,\rho;x,y)\bigr]-1.
				\end{equation}
				Since the admissible regime assumes \(\varepsilon_{\mathrm{br}}\le \varepsilon_{\mathrm{br}}^\star<1\), the mean-value theorem yields
				\begin{equation}
					|\eta_{\mathrm{br}}(\Delta,\rho;x,y)|
					\le
					e^{K_{\mathrm{br}}\varepsilon_{\mathrm{br}}^\star}
					\,K_{\mathrm{br}}\,\varepsilon_{\mathrm{br}}
					=: \widetilde K_{\mathrm{br}}\,\varepsilon_{\mathrm{br}}.
				\end{equation}
				Any finite \(O(1)\) change in the smooth prefactor induced by passing from the reference saddle to the backreacted saddle is absorbed into \(\mathcal C_{\Delta,\rho}(x,y)\); the only parametrically enhanced backreaction effect is the exponential shift controlled above.
				
				{
					\smallskip
					\noindent
					\textbf{Step 5: Code compression and total error.}
					By restricting to the code subspace and employing the connected correlator, an additional relative error $\eta_{\mathrm{code}}$ is introduced. Thus, the fully corrected correlator takes the form
					\begin{equation}
							\begin{aligned}
							&C_\rho\!\left(
							\widetilde O^{(0)}_{\Delta,\rho}(x),
							\widetilde O^{(0)}_{\Delta,\rho}(y)\right)\\
							&\quad=
							N^{\mathrm{code}}_\Delta\,
							\exp\!\bigl[-\Delta L_{\mathrm{ren},\rho}(x,y)\bigr]\,
							\bigl(1+\eta_{\mathrm{WKB}}\bigr)
							\bigl(1+\eta_{\mathrm{grav}}\bigr)
							\bigl(1+\eta_{\mathrm{str}}\bigr)\\
							&\qquad\times
							\bigl(1+\eta_{\mathrm{br}}\bigr)
							\bigl(1+\eta_{\mathrm{code}}\bigr).
							\end{aligned}
					\end{equation}
					Define
					\begin{equation}
						1+\varepsilon_{\mathrm{tot}}(\Delta,\rho;x,y)
						:=
						\frac{\mathcal C_{\Delta,\rho}(x,y)}{N^{\mathrm{code}}_\Delta}
						\bigl(1+\eta_{\mathrm{WKB}}\bigr)
						\bigl(1+\eta_{\mathrm{grav}}\bigr)
						\bigl(1+\eta_{\mathrm{str}}\bigr)
						\bigl(1+\eta_{\mathrm{br}}\bigr)
						\bigl(1+\eta_{\mathrm{code}}\bigr).
					\end{equation}
					Then the representation \eqref{eq:theorem3-code} follows immediately. It remains to prove the bound \eqref{eq:theorem3-code-error}.
					
					Set
					\begin{equation}
							\begin{aligned}
							a_1&:=|\eta_{\mathrm{WKB}}|,
							& a_2&:=|\eta_{\mathrm{grav}}|,
							& a_3&:=|\eta_{\mathrm{str}}|,\\
							a_4&:=|\eta_{\mathrm{br}}|,
							& a_5&:=|\eta_{\mathrm{code}}|,
							& a_6&:=\left|
							\frac{\mathcal C_{\Delta,\rho}(x,y)}{N^{\mathrm{code}}_\Delta}-1
							\right|.
							\end{aligned}
					\end{equation}
					Using \(|zw-1|\le (1+|z-1|)(1+|w-1|)-1\) repeatedly, we obtain
					\begin{equation}
						|\varepsilon_{\mathrm{tot}}|
						\le
						\prod_{j=1}^6 (1+a_j)-1
						\le
						\exp\!\left(\sum_{j=1}^6 a_j\right)-1.
					\end{equation}
					By the bounds established above, and assuming bounding constants for the prefactor term and code term correspondingly:
					\begin{equation}
						\sum_{j=1}^6 a_j
						\le
						K_0\!\left(
						\frac{1}{\Delta}
						+\varepsilon_{\mathrm{grav}}
						+\varepsilon_{\mathrm{str}}
						+\varepsilon_{\mathrm{br}}
						+\varepsilon_{\mathrm{code}}
						\right)
						\le
						M_\star.
					\end{equation}
					Therefore
					\begin{equation}
						|\varepsilon_{\mathrm{tot}}|
						\le
						e^{M_\star}
						K_0\!\left(
						\frac{1}{\Delta}
						+\varepsilon_{\mathrm{grav}}
						+\varepsilon_{\mathrm{str}}
						+\varepsilon_{\mathrm{br}}
						+\varepsilon_{\mathrm{code}}
						\right).
					\end{equation}
					Setting $K:=e^{M_\star}K_0$ proves Eq.~\eqref{eq:theorem3-code-error}, uniformly on $(x,y)\in\mathsf D$, with $K$ independent of $\Delta$, $\varepsilon_{\mathrm{grav}}$, $\varepsilon_{\mathrm{str}}$, $\varepsilon_{\mathrm{code}}$, and the state $\rho$ inside the fixed admissible regime.
				}
		\end{proof}}
		By matching this bulk limiting procedure to the standard holographic extrapolate dictionary, the state-dependent geometric prefactor $\mathcal{C}_{\Delta,\rho}(x,y)$ reduces exactly to the standard state-independent Euclidean CFT normalization constant ${N}_\Delta$ when evaluated in the pure $\mathrm{AdS}$ vacuum. For general states, smooth geometric deviations are absorbed into $\mathcal{C}_{\Delta,\rho}$ and the global error bound.
		
		{
			\begin{corollary}[Logarithmic distance extraction]\label{cor:log_extraction}
				Assume the hypotheses of Theorem~\ref{thm:final_errors}. Then
\begin{equation}
\left|	-\frac{1}{\Delta}	\ln\!\left(	\frac{	\left|
	C_\rho\!\left(	\widetilde O^{(0)}_{\Delta,\rho}(x),\widetilde O^{(0)}_{\Delta,\rho}(y) \right) \right|}{ |N^{\mathrm{code}}_\Delta|	}	\right)	-		L_{\mathrm{ren},\rho}(x,y)	\right|	\le		\frac{1}{\Delta}\,	\left|	\ln\bigl|1+\varepsilon_{\mathrm{tot}}(\Delta,\rho;x,y)\bigr|
\right|.
\end{equation}
In particular, if $|\varepsilon_{\mathrm{tot}}(\Delta,\rho;x,y)|\le \varepsilon_*<1$ uniformly on $\mathsf D$, then
\begin{equation}
\delta L \le O(\Delta^{-1})
	+ O\!\left(\frac{\varepsilon_{\mathrm{grav}}}{\Delta}\right)
	+ O\!\left(\frac{\varepsilon_{\mathrm{str}}}{\Delta}\right)
+ O(\varepsilon_{\mathrm{grav}}\Delta) + 	O\!\left(\frac{\varepsilon_{\mathrm{code}}}{\Delta}\right).
\end{equation}
Thus, in the standard semiclassical heavy-probe window
\begin{equation}
1\ll \Delta \ll \varepsilon_{\mathrm{grav}}^{-1/2},
\qquad
\frac{\varepsilon_{\mathrm{str}}}{\Delta}\to0,
\qquad
\frac{\varepsilon_{\mathrm{code}}}{\Delta}\to0,
\end{equation}
the dominant contributions are $O(\Delta^{-1})$ and $O(\varepsilon_{\mathrm{grav}}\Delta)$, and therefore $\delta L\to0$.
\end{corollary}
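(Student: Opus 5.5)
The first inequality follows by direct algebra on the representation \eqref{eq:theorem3-code} of Theorem~\ref{thm:final_errors}, in the same way as Lemma~\ref{lem:lengtherror}. Write $C:=C_\rho(\widetilde O^{(0)}_{\Delta,\rho}(x),\widetilde O^{(0)}_{\Delta,\rho}(y))$. Taking absolute values in \eqref{eq:theorem3-code} gives
\[
|C|/|N^{\mathrm{code}}_\Delta| = e^{-\Delta L_{\mathrm{ren},\rho}(x,y)}\,|1+\varepsilon_{\mathrm{tot}}| .
\]
This requires $1+\varepsilon_{\mathrm{tot}}\neq0$, which is guaranteed once $|\varepsilon_{\mathrm{tot}}|<1$; in the general statement I would note that the right-hand side is $+\infty$ otherwise, so the inequality holds trivially. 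Taking logarithms and dividing by $-\Delta$ yields
\[
-\Delta^{-1}\ln(|C|/|N^{\mathrm{code}}_\Delta|) - L_{\mathrm{ren},\rho} = -\Delta^{-1}\ln|1+\varepsilon_{\mathrm{tot}}| .
\]
Taking absolute values then gives the first display, in fact as an equality.

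For the second claim, with $|\varepsilon_{\mathrm{tot}}|\le\varepsilon_*<1$, I would use the elementary bound $|\ln|1+u||\le -\ln(1-|u|)\le |u|/(1-\varepsilon_*)$ for $|u|\le\varepsilon_*$. This can be checked via the mean-value theorem on $\ln$ over $[1-\varepsilon_*,1+\varepsilon_*]$. It gives $\delta L\le |\varepsilon_{\mathrm{tot}}|/((1-\varepsilon_*)\Delta)$. Inserting the error budget \eqref{eq:theorem3-code-error} and dividing each term by $\Delta$ produces $O(\Delta^{-2})$, $O(\varepsilon_{\mathrm{grav}}/\Delta)$, $O(\varepsilon_{\mathrm{str}}/\Delta)$, $O(\Delta^2\varepsilon_{\mathrm{grav}}/\Delta)=O(\varepsilon_{\mathrm{grav}}\Delta)$, and $O(\varepsilon_{\mathrm{code}}/\Delta)$. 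The stated $O(\Delta^{-1})$ term is weaker than $O(\Delta^{-2})$ and so is implied. Here I use $c_{\mathrm{eff}}=\varepsilon_{\mathrm{grav}}^{-1}$ from \eqref{A7}. The constant is $K/(1-\varepsilon_*)$, uniform on $\mathsf D$ and in $\rho$.

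For the final assertion, in the window $1\ll\Delta\ll\varepsilon_{\mathrm{grav}}^{-1/2}$ I would check each term separately:
\begin{itemize}
\item $\Delta^{-1}\to0$ directly;
\item $\varepsilon_{\mathrm{grav}}\Delta\le\varepsilon_{\mathrm{grav}}^{1/2}\cdot(\Delta\varepsilon_{\mathrm{grav}}^{1/2})\to0$;
\item $\varepsilon_{\mathrm{grav}}/\Delta\le\varepsilon_{\mathrm{grav}}\Delta$, so it is dominated by the previous term;
\item the string and code terms vanish by hypothesis.
\end{itemize}
The window also forces $\varepsilon_{\mathrm{br}}=\varepsilon_{\mathrm{grav}}\Delta^2\to0$, so $|\varepsilon_{\mathrm{tot}}|$ eventually falls below any fixed $\varepsilon_*<1$. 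The uniform hypothesis is therefore self-consistently met. This is the only point needing care, since the bound on $\ln|1+u|$ degenerates as $\varepsilon_*\to1$.
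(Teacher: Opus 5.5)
Your proposal is correct and follows the paper's proof essentially step for step. The steps are: the exact identity $-\Delta^{-1}\ln(|C|/|N^{\mathrm{code}}_\Delta|)-L_{\mathrm{ren},\rho}=-\Delta^{-1}\ln|1+\varepsilon_{\mathrm{tot}}|$ from Theorem~\ref{thm:final_errors}, the mean-value bound $|\ln|1+u||\le |u|/(1-\varepsilon_*)$, and substitution of the error budget using $\varepsilon_{\mathrm{br}}/\Delta=\varepsilon_{\mathrm{grav}}\Delta$; you are right that the WKB term actually yields $O(\Delta^{-2})$. One slip: the window does not by itself force $|\varepsilon_{\mathrm{tot}}|\le\varepsilon_*$, since that needs $\varepsilon_{\mathrm{str}}$ and $\varepsilon_{\mathrm{code}}$ themselves small, not just after dividing by $\Delta$. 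This is harmless, because the uniform bound is already a hypothesis of the final clause.
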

\begin{proof}
From Theorem~\ref{thm:final_errors},
\begin{equation}
\left|C_\rho\!\left(\widetilde O^{(0)}_{\Delta,\rho}(x), \widetilde O^{(0)}_{\Delta,\rho}(y)\right)\right|
=|N^{\mathrm{code}}_\Delta|\,
e^{-\Delta L_{\mathrm{ren},\rho}(x,y)}\,
\bigl|1+\varepsilon_{\mathrm{tot}}(\Delta,\rho;x,y)\bigr|.
\end{equation}
Taking $-\frac{1}{\Delta}\ln(\cdot/|N^{\mathrm{code}}_\Delta|)$ gives
\begin{equation}
-\frac{1}{\Delta}\ln\!\left(
\frac{\left|C_\rho\!\left(\widetilde O^{(0)}_{\Delta,\rho}(x), \widetilde O^{(0)}_{\Delta,\rho}(y)\right)\right|}{|N^{\mathrm{code}}_\Delta|}
\right)
= L_{\mathrm{ren},\rho}(x,y)
-\frac{1}{\Delta}\ln\bigl|1+\varepsilon_{\mathrm{tot}}(\Delta,\rho;x,y)\bigr|.
\end{equation}
If $|\varepsilon_{\mathrm{tot}}(\Delta,\rho;x,y)|\le \varepsilon_*<1$, then the mean-value theorem implies
\begin{equation}
|\ln(1+u)|\le \frac{|u|}{1-\varepsilon_*}
\qquad\text{for all } |u|\le \varepsilon_*.
\end{equation}
Therefore
\begin{equation}
\frac{1}{\Delta}\left|\ln\bigl|1+\varepsilon_{\mathrm{tot}}(\Delta,\rho;x,y)\bigr|\right|
\le \frac{c_*}{\Delta}\,
|\varepsilon_{\mathrm{tot}}(\Delta,\rho;x,y)|,
\qquad 	c_*:=(1-\varepsilon_*)^{-1}.
\end{equation}
Using Theorem~\ref{thm:final_errors} and
$\varepsilon_{\mathrm{br}}=\varepsilon_{\mathrm{grav}}\Delta^2$,
we obtain
\begin{equation}
	\delta L \le 	O(\Delta^{-1}) 	+ 	O(\Delta^{-2}) + 	O\!\left(\frac{\varepsilon_{\mathrm{grav}}}{\Delta}\right)
 	+ O\!\left(\frac{\varepsilon_{\mathrm{str}}}{\Delta}\right)
+O\!\left(\frac{\varepsilon_{\mathrm{br}}}{\Delta}\right)
+ O\!\left(\frac{\varepsilon_{\mathrm{code}}}{\Delta}\right).
\end{equation}
Since $\varepsilon_{\mathrm{br}}/\Delta=\varepsilon_{\mathrm{grav}}\Delta$, the claimed asymptotic bound follows.
\end{proof}
}
		
{
\subsection{Proof of the MfI bound}
			
Pinsker gives
\begin{equation}
\label{eq:pinsker-code-final}
\left|
C_\rho\!\left(
\widetilde O^{(0)}_{\Delta,\rho}(x),
\widetilde O^{(0)}_{\Delta,\rho}(y)
\right)
\right|
\le B_\Delta^2 	\sqrt{2\ln 2\,I(A:B)_\rho}.
\end{equation}
The code-subspace geodesic dictionary gives
\begin{equation}
\label{eq:geo-code-final}
	\left|
	C_\rho\!\left( \widetilde O^{(0)}_{\Delta,\rho}(x),
\widetilde O^{(0)}_{\Delta,\rho}(y)	\right) \right|	\ge
	|N_\Delta^{\mathrm{code}}| (1-\varepsilon^\star)
	e^{-\Delta L_{\mathrm{ren},\rho}(x,y)}.
\end{equation}
Combining Eq.~\eqref{eq:pinsker-code-final} and Eq.~\eqref{eq:geo-code-final} gives
\begin{equation}
|N_\Delta^{\mathrm{code}}|(1-\varepsilon^\star)e^{-\Delta L_{\mathrm{ren},\rho}(x,y)}
\le	B_\Delta^2 \sqrt{2\ln 2\,I(A:B)_\rho}.
\end{equation}
Taking logarithms yields
\begin{equation}
\label{eq:mfi-final-code}
L_{\mathrm{ren},\rho}(x,y)
\ge \frac1{2\Delta} \ln \left[ \frac{ |N_\Delta^{\mathrm{code}}|^2(1-\varepsilon^\star)^2 }{ 	2\ln 2\,B_\Delta^4 I(A:B)_\rho } \right].
\end{equation}
With
\begin{equation}
\kappa_\Delta	= \frac{|N_\Delta^{\mathrm{code}}|}{B_\Delta^2},
\end{equation}
this is the advertised theorem.
}

\section{Derivation of the metric-from-information inequality}\label{app:mfi}
\renewcommand{\theequation}{D.\arabic{equation}}

Fix a complex Hilbert space $\mathcal{H}$ of finite dimension $d\in\mathbb{N}$ and let $\mathcal{B}(\mathcal{H})$ denote the algebra of all linear operators on $\mathcal{H}$ equipped with the operator norm topology.
Let $\mathsf{D}(\mathcal{H})$ denote the set of density operators on $\mathcal{H}$,
\begin{equation} \label{A1b}
\mathsf{D}(\mathcal{H}):=\{\rho\in\mathcal{B}(\mathcal{H}):\rho=\rho^\dagger,\ \rho\ge 0,\ \operatorname{Tr}\rho=1\}.
\end{equation}
Let $\mathsf{D}_{+}(\mathcal{H})$ denote the faithful states,
\begin{equation} \label{A2a}
\mathsf{D}_{+}(\mathcal{H}):=\{\rho\in\mathsf{D}(\mathcal{H}):\rho>0\},
\end{equation}
where $\rho>0$ means $\langle\psi,\rho\psi\rangle>0$ for all $\psi\neq 0$.
Fix an open set $\Lambda\subset\mathbb{R}^m$ and a $C^3$ map $\rho:\Lambda\to \mathsf{D}_{+}(\mathcal{H})$, $\lambda\mapsto\rho(\lambda)$, with derivatives taken in the norm topology on $\mathcal{B}(\mathcal{H})$.
Assume there exists $\underline{\mu}>0$ such that
\begin{equation} \label{A3b}
\inf_{\lambda\in \Lambda}\ \lambda_{\min}(\rho(\lambda))\ge \underline{\mu},
\end{equation}
where $\lambda_{\min}(\rho)$ denotes the smallest eigenvalue of $\rho$.
Fix a point $\lambda_0\in\Lambda$ and define $\rho_0:=\rho(\lambda_0)$.
Fix a convex open neighborhood $U\subset\Lambda$ of $\lambda_0$ with $\overline{U}\subset \Lambda$ and assume that all quantities below are controlled uniformly on $\overline{U}$.
The precise problem is to derive, from the definition of quantum relative entropy and differentiability of $\lambda\mapsto\rho(\lambda)$, an explicit inequality relating the Riemannian line element generated by the Bogoliubov-Kubo-Mori information metric to the relative entropy between neighboring states, with all constants and logarithmic bases explicit. 

\begin{definition}[Quantum relative entropy]
For $\rho,\sigma\in\mathsf{D}_{+}(\mathcal{H})$ define the quantum relative entropy in natural-logarithm units by
\begin{equation} \label{D1c}
D(\rho|\sigma):=\operatorname{Tr}\bigl[\rho(\ln\rho-\ln\sigma)\bigr]\in[0,\infty).
\end{equation}
\end{definition}

\begin{definition}[Base-$2$ relative entropy]
Define the base-$2$ relative entropy by
\begin{equation} \label{D2c}
D_2(\rho|\sigma):=\operatorname{Tr}\bigl[\rho(\log_2\rho-\log_2\sigma)\bigr]=\frac{1}{\ln 2}\,D(\rho|\sigma),
\end{equation}
using $\log_2 X=(\ln X)/(\ln 2)$.
\end{definition}

\begin{definition}[Modular Hamiltonian]
For each $\lambda\in\Lambda$, define the modular Hamiltonian
\begin{equation} \label{D3b}
K(\lambda):=-\ln\rho(\lambda)\in\mathcal{B}(\mathcal{H}),
\end{equation}
which is bounded because $\rho(\lambda)$ is positive definite on a finite-dimensional space.
\end{definition}

\begin{definition}[Directional derivative]
For each $\lambda\in\Lambda$ and each direction $v\in\mathbb{R}^m$, define the directional derivative
\begin{equation} \label{D4b}
\dot{\rho}_\lambda(v):=\sum_{i=1}^m v^i\,\partial_i\rho(\lambda)\in\mathcal{B}(\mathcal{H}).
\end{equation}
\end{definition}

\begin{definition}[BKM inner product]
Define the Bogoliubov-Kubo-Mori (BKM) inner product on tangent operators by
\begin{equation} \label{D5b}
\langle A,B\rangle_{\mathrm{BKM},\lambda}:=\int_{0}^{1} \operatorname{Tr}\bigl[\rho(\lambda)^{s}A\,\rho(\lambda)^{1-s}B\bigr]\,ds
\end{equation}
for $A,B\in\mathcal{B}(\mathcal{H})$.
\end{definition}

\begin{definition}[BKM information metric]
Define the BKM information metric $g(\lambda)$ on $\Lambda$ by
\begin{equation} \label{D6a}
g_{ij}(\lambda):=\Bigl\langle \partial_i\ln\rho(\lambda),\,\partial_j\ln\rho(\lambda)\Bigr\rangle_{\mathrm{BKM},\lambda}
=\int_{0}^{1}\operatorname{Tr}\bigl[\rho(\lambda)^{s}\,(\partial_i\ln\rho(\lambda))\,\rho(\lambda)^{1-s}\,(\partial_j\ln\rho(\lambda))\bigr]\,ds,
\end{equation}
and define the corresponding quadratic form on $v\in\mathbb{R}^m$ by
\begin{equation} \label{D7a}
|v|_{g(\lambda)}^{2}:=\sum_{i,j=1}^m v^i v^j g_{ij}(\lambda).
\end{equation}
\end{definition}

\begin{definition}[Length and induced distance]
Define the length of a $C^1$ curve $\gamma:[0,1]\to U$ by
\begin{equation} \label{D8a}
\mathrm{Len}_{g}(\gamma):=\int_{0}^{1}\sqrt{|\dot{\gamma}(t)|_{g(\gamma(t))}^{2}}\,dt,
\end{equation}
and define the induced distance on $U$ by
\begin{equation} \label{D9a}
d_g(\lambda,\lambda'):=\inf\{\mathrm{Len}_g(\gamma):\gamma\in C^1([0,1],U),\ \gamma(0)=\lambda,\ \gamma(1)=\lambda'\}.
\end{equation}
\end{definition}

\begin{lemma}\label{lem:log-frechet}
For each $\rho\in\mathsf{D}_{+}(\mathcal{H})$ and each $H\in\mathcal{B}(\mathcal{H})$, the Fr\'echet derivative of the matrix logarithm at $\rho$ in direction $H$ is
\begin{equation} \label{L1x}
D(\ln)_\rho[H]=\int_{0}^{\infty}(\rho+t\mathbf{1})^{-1}H(\rho+t\mathbf{1})^{-1}\,dt,
\end{equation}
and the integral converges absolutely in operator norm.
\end{lemma}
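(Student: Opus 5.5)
The plan is to derive the integral formula from the resolvent representation of the logarithm on positive definite operators. For a positive definite $\rho$ on the finite-dimensional $\mathcal{H}$ I will use the scalar identity
\begin{equation*}
\ln x=\int_0^\infty\Bigl(\frac{1}{1+t}-\frac{1}{x+t}\Bigr)\,dt,\qquad x>0.
\end{equation*}
Diagonalising $\rho$ and applying the identity eigenvalue by eigenvalue lifts it to the operator identity
\begin{equation*}
\ln\rho=\int_0^\infty\bigl((1+t)^{-1}\mathbf 1-(\rho+t\mathbf 1)^{-1}\bigr)\,dt.
\end{equation*}
This identity holds on an open neighbourhood of $\rho$ in $\mathsf{D}_+(\mathcal{H})$, and more conveniently on the open cone of positive definite operators. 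I work on that cone so that $\rho+\epsilon H$ stays admissible for small $\epsilon$ and arbitrary Hermitian $H$. For non-Hermitian $H$ I will use the same formula on a complex neighbourhood where $\rho+\epsilon H+t\mathbf 1$ stays invertible; alternatively, I split $H$ into Hermitian and anti-Hermitian parts and invoke complex-linearity of the derivative.

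Next I differentiate the resolvent under the integral. The resolvent identity gives
\begin{equation*}
(\rho+H+t)^{-1}-(\rho+t)^{-1}=-(\rho+t)^{-1}H(\rho+t)^{-1}+(\rho+t)^{-1}H(\rho+t)^{-1}H(\rho+H+t)^{-1}.
\end{equation*}
Therefore
\begin{equation*}
\ln(\rho+H)-\ln\rho-\int_0^\infty(\rho+t)^{-1}H(\rho+t)^{-1}\,dt=-\int_0^\infty(\rho+t)^{-1}H(\rho+t)^{-1}H(\rho+H+t)^{-1}\,dt.
\end{equation*}
Choose $\mu$ to be half the smallest eigenvalue of $\rho$. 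Then $\|(\rho+t)^{-1}\|_\infty\le(\lambda_{\min}(\rho)+t)^{-1}$. For $\|H\|_\infty\le\mu$, Weyl's inequality (or a Neumann series) gives $\|(\rho+H+t)^{-1}\|_\infty\le(\mu+t)^{-1}$. The remainder is then bounded by
\begin{equation*}
\|H\|_\infty^2\int_0^\infty(\mu+t)^{-3}\,dt=\frac{\|H\|_\infty^2}{2\mu^2},
\end{equation*}
which is $o(\|H\|_\infty)$. This establishes Fr\'echet differentiability with the stated derivative.

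Absolute convergence of the derivative integral follows from the same estimate:
\begin{equation*}
\|(\rho+t)^{-1}H(\rho+t)^{-1}\|_\infty\le\|H\|_\infty(\lambda_{\min}+t)^{-2},
\end{equation*}
which is integrable on $[0,\infty)$ with integral $\|H\|_\infty/\lambda_{\min}$. The integrand is also norm-continuous in $t$, so the Bochner/Riemann integral in operator norm is well defined. The convergence of the integral for $\ln\rho$ itself follows the same way from $(1+t)^{-1}-(x+t)^{-1}=(x-1)/((1+t)(x+t))$.

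The main obstacle is purely technical. The operator-valued integrals must be justified on an unbounded interval, and the linear-minus-nonlinear bookkeeping must be uniform in $t$. I handle this with the uniform resolvent bounds above and dominated convergence, which in finite dimensions reduces to entrywise scalar integrals. As a cross-check I would verify the formula in the eigenbasis $\rho=\sum p_k|k\rangle\langle k|$. There the $(k,l)$ entry of the integral is $H_{kl}\int_0^\infty dt/((p_k+t)(p_l+t))=H_{kl}(\ln p_k-\ln p_l)/(p_k-p_l)$, with the value $H_{kk}/p_k$ on the diagonal. This matches the Daleckii--Krein divided-difference formula.
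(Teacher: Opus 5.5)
Your argument is correct and follows the same route as the paper: the resolvent integral representation of $\ln$ lifted by functional calculus, then the resolvent identity inside the $t$-integral with uniform resolvent bounds. The difference is that you expand to second order and obtain the explicit remainder bound $\|H\|_\infty^2/(2\mu^2)$. This gives genuine Fr\'echet (uniform-in-direction) differentiability and, through the Neumann-series resolvent bound, also covers non-Hermitian $H$. The paper instead takes the difference quotient along a fixed $\varepsilon H$ with dominated convergence, and asserts that $\rho+\varepsilon H$ stays positive definite, which holds only for Hermitian $H$.
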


\begin{proof}
Since $\rho>0$, $\rho+t\mathbf{1}\ge t\mathbf{1}$ for $t\ge 0$, hence $|(\rho+t\mathbf{1})^{-1}|_\infty\le t^{-1}$ for $t>0$ and $|(\rho+t\mathbf{1})^{-1}|_\infty\le \lambda_{\min}(\rho)^{-1}$ for $t\in[0,1]$.
The scalar identity
\begin{equation}
\ln x=\int_0^\infty\left(\frac{1}{1+t}-\frac{1}{x+t}\right)dt\qquad (x>0)
\end{equation}
implies, by functional calculus,
\begin{equation} \label{L2d}
\ln\rho=\int_0^\infty\left((1+t)^{-1}\mathbf{1}-(\rho+t\mathbf{1})^{-1}\right)dt,
\end{equation}
where the integral converges in operator norm because
$|(\rho+t\mathbf{1})^{-1}-(1+t)^{-1}\mathbf{1}|_\infty\le C(1+t)^{-2}$
for a constant $C$ depending on $|\rho|_\infty$.
For $\varepsilon\in\mathbb{R}$ sufficiently small, $\rho+\varepsilon H$ remains positive definite, and
\begin{equation} \label{L3c}
\begin{aligned}
\ln(\rho+\varepsilon H)-\ln\rho
&=\int_0^\infty\left((\rho+t\mathbf{1})^{-1}-(\rho+\varepsilon H+t\mathbf{1})^{-1}\right)dt\\
&=\int_0^\infty (\rho+t\mathbf{1})^{-1}\left((\rho+\varepsilon H+t\mathbf{1})-(\rho+t\mathbf{1})\right)(\rho+\varepsilon H+t\mathbf{1})^{-1}dt.
\end{aligned}
\end{equation}
Thus
\begin{equation} \label{L4c}
\frac{1}{\varepsilon}\bigl(\ln(\rho+\varepsilon H)-\ln\rho\bigr)
=\int_0^\infty (\rho+t\mathbf{1})^{-1}H(\rho+\varepsilon H+t\mathbf{1})^{-1}\,dt.
\end{equation}
As $\varepsilon\to 0$, $(\rho+\varepsilon H+t\mathbf{1})^{-1}\to (\rho+t\mathbf{1})^{-1}$ in operator norm uniformly for $t\ge 0$ by resolvent continuity, and the integrand is dominated in operator norm by an integrable function of $t$ because
\begin{equation}
|(\rho+t\mathbf{1})^{-1}H(\rho+\varepsilon H+t\mathbf{1})^{-1}|_\infty
\le |H|_\infty\,|(\rho+t\mathbf{1})^{-1}|_\infty\,|(\rho+\varepsilon H+t\mathbf{1})^{-1}|_\infty
\end{equation}
and both resolvents are bounded by $t^{-1}$ for $t\ge 1$ and by $(\underline{\mu}/2)^{-1}$ for $t\in[0,1]$ when $|\varepsilon|$ is small enough.
\end{proof}

{\begin{lemma}\label{lem:second-deriv}
For $\lambda\in\Lambda$ and directions $v,w\in\mathbb{R}^m$, the second directional derivatives of the function
$\delta\mapsto D(\rho(\lambda)|\rho(\lambda+\delta))$ at $\delta=0$ satisfy
\begin{equation}\label{L5c}
\left.\frac{\partial^2}{\partial s\,\partial t}\right|_{s=t=0}
D\!\left(\rho(\lambda)\,\big|\,\rho(\lambda+sv+tw)\right)
=
\operatorname{Tr}\!\left[\dot\rho_\lambda(v)\,\mathcal{J}_{\rho(\lambda)}^{-1}\!\big(\dot\rho_\lambda(w)\big)\right],
\end{equation}
where $\mathcal{J}_{\rho}$ denotes the Kubo-Mori operator
\begin{equation}
\mathcal{J}_{\rho}(X):=\int_0^1 \rho^s X\rho^{1-s}\,ds
\end{equation}
on $\mathcal{B}(\mathcal{H})$, and $\mathcal{J}_\rho^{-1}$ is its inverse (well-defined since $\rho>0$; in particular it
is well-defined on the traceless subspace containing $\dot\rho_\lambda(v),\dot\rho_\lambda(w)$).
\end{lemma}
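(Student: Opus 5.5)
The plan is to compute the second derivative directly from the definition \eqref{D1c}. The term $\operatorname{Tr}[\rho(\lambda)\ln\rho(\lambda)]$ does not depend on $s,t$, so
\begin{equation*}
D\bigl(\rho(\lambda)\,|\,\rho(\lambda+sv+tw)\bigr)=\operatorname{Tr}[\rho\ln\rho]-\operatorname{Tr}\bigl[\rho\,\ln\rho(\lambda+sv+tw)\bigr],\qquad \rho:=\rho(\lambda).
\end{equation*}
Only the second term needs to be differentiated. Write $\rho_{s,t}:=\rho(\lambda+sv+tw)$. Since $\rho$ is $C^3$ and uniformly positive by \eqref{A3b}, the map $(s,t)\mapsto\ln\rho_{s,t}$ is $C^2$ near the origin. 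This follows from the integral representation \eqref{L2d} together with differentiation under the integral, exactly as in the domination argument of Lemma~\ref{lem:log-frechet}.

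Next I apply the chain rule with the resolvent representation. First,
\begin{equation*}
\partial_t\ln\rho_{s,t}=\int_0^\infty R_{s,t}(u)\,\partial_t\rho_{s,t}\,R_{s,t}(u)\,du,\qquad R_{s,t}(u):=(\rho_{s,t}+u\mathbf 1)^{-1},
\end{equation*}
using Lemma~\ref{lem:log-frechet}. Differentiating again in $s$ and using $\partial_s R=-R(\partial_s\rho)R$ gives three terms at $s=t=0$. The first is $\int R\,(\partial_s\partial_t\rho)\,R$. The other two are $-\int R\dot\rho(v)R\dot\rho(w)R$ and $-\int R\dot\rho(w)R\dot\rho(v)R$, where $\dot\rho(v)=\dot\rho_\lambda(v)$.

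Now take the trace against $\rho$. The resolvents commute with $\rho$, and $\int_0^\infty \rho R(u)^2\,du=\mathbf 1$ by the scalar identity $\int_0^\infty x(x+u)^{-2}du=1$. Hence the first term gives $\operatorname{Tr}[\partial_s\partial_t\rho]=\partial_s\partial_t\operatorname{Tr}\rho_{s,t}=0$, since the trace is identically one.

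For the two cubic terms, cyclicity of the trace gives
\begin{equation*}
\operatorname{Tr}\Bigl[\rho\!\int_0^\infty\! R\dot\rho(v)R\dot\rho(w)R\,du\Bigr]=\operatorname{Tr}\Bigl[\dot\rho(v)\!\int_0^\infty\! R\,\dot\rho(w)\,\rho R^2\,du\Bigr].
\end{equation*}
I evaluate this in the eigenbasis of $\rho$, with eigenvalues $p_i$. The matrix-element kernel is $\int_0^\infty p_i(p_j+u)^{-1}(p_i+u)^{-2}du$. Adding the symmetric term with $v\leftrightarrow w$ symmetrizes this kernel to
\begin{equation*}
\int_0^\infty\frac{p_i+p_j+2u}{(p_i+u)^2(p_j+u)^2}\,du\cdot(\text{weight}),
\end{equation*}
and the goal is to reduce it to $(\ln p_i-\ln p_j)/(p_i-p_j)$, with the diagonal value $1/p_i$. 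Concretely, the symmetrized kernel is $-\tfrac{d}{du}\bigl[\tfrac{1}{(p_i+u)(p_j+u)}\bigr]$ multiplied by the appropriate factor. Carrying out the integral with the overall minus sign should produce $\operatorname{Tr}[\dot\rho(v)\,D(\ln)_\rho[\dot\rho(w)]]$.

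The last step is to identify $D(\ln)_\rho$ with $\mathcal J_\rho^{-1}$. In the eigenbasis, $\mathcal J_\rho$ acts on matrix units by multiplication by $\int_0^1 p_i^s p_j^{1-s}\,ds=(p_i-p_j)/(\ln p_i-\ln p_j)$, and $D(\ln)_\rho$ acts by the reciprocal $(\ln p_i-\ln p_j)/(p_i-p_j)$. They are therefore mutually inverse on all of $\mathcal B(\mathcal H)$, which in particular covers the traceless tangent vectors.

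I expect the bookkeeping of the cubic resolvent terms to be the main obstacle. The sign and the exact symmetrized integrand have to be checked carefully.

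As a fallback I would use a cleaner route, which I would likely adopt instead. Note that $\partial_t D=-\operatorname{Tr}[\rho\,\partial_t\ln\rho_{s,t}]$ vanishes at $t=0$ for every $s$. Then differentiate the identity $\operatorname{Tr}[\rho_{s,0}\,\partial_t\ln\rho_{s,t}]|_{t=0}=\operatorname{Tr}[\partial_t\rho]=0$ with respect to $s$. This gives
\begin{equation*}
\operatorname{Tr}[\rho\,\partial_s\partial_t\ln\rho]=-\operatorname{Tr}\bigl[\dot\rho(v)\,D(\ln)_\rho[\dot\rho(w)]\bigr].
\end{equation*}
Combining this with the minus sign in $D$ yields \eqref{L5c} directly, without any cubic terms.
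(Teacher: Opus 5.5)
Your primary route follows the paper's: resolvent expansion, then killing the $\partial_s\partial_t\rho$ term via $\int_0^\infty\rho R(u)^2\,du=\mathbf 1$ and $\operatorname{Tr}\rho_{s,t}\equiv 1$, then evaluating the two cubic terms in the eigenbasis. However, the symmetrized kernel you wrote is wrong. Adding $p_i(p_i+u)^{-2}(p_j+u)^{-1}$ to its $i\leftrightarrow j$ partner gives the numerator $p_i(p_j+u)+p_j(p_i+u)=2p_ip_j+u(p_i+p_j)$, not $p_i+p_j+2u$. Your integrand is $-\tfrac{d}{du}f$ with $f:=(p_i+u)^{-1}(p_j+u)^{-1}$, so it integrates to $1/(p_ip_j)$. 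On the diagonal that gives $1/p^2$ instead of $1/p$, and no $u$-independent weight can repair it. The correct integrand is $2f+uf'=f+\tfrac{d}{du}(uf)$, whose integral is $\int_0^\infty f\,du=(\ln p_i-\ln p_j)/(p_i-p_j)$. The paper reaches the same kernel $\kappa$ by partial fractions for $J(p,q)$ followed by symmetrization.

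Separately, the first sentence of your fallback is false as written. With $\rho=\rho(\lambda)$ held fixed, $-\operatorname{Tr}[\rho\,\partial_t\ln\rho_{s,t}]|_{t=0}$ vanishes only at $s=0$; if it vanished for all $s$, the mixed derivative you are computing would be $0$. What vanishes identically in $s$ is $\operatorname{Tr}[\rho_{s,0}\,\partial_t\ln\rho_{s,t}]|_{t=0}$. That is the identity your next sentence actually uses, so simply drop the first sentence.

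With that deletion, the fallback is a complete proof and genuinely different from the paper's. You differentiate the first-order identity $\operatorname{Tr}[\sigma\,D(\ln)_\sigma[H]]=\operatorname{Tr}H$ along the curve $\sigma=\rho_{s,0}$. The identity holds at every $\sigma>0$ because $\int_0^\infty\sigma(\sigma+u\mathbf 1)^{-2}du=\mathbf 1$. Differentiating it converts $-\operatorname{Tr}[\rho\,\partial_s\partial_t\ln\rho_{s,t}]$ into $\operatorname{Tr}[\dot\rho_\lambda(v)\,D(\ln)_\rho[\dot\rho_\lambda(w)]]$. This is the quantum counterpart of obtaining $-\mathbb E[\partial^2\ln p]=\mathbb E[(\partial\ln p)^2]$ by differentiating $\mathbb E[\partial\ln p]=0$.

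Your route uses only three ingredients:
the first Fr\'echet derivative;
$C^2$ regularity of $(s,t)\mapsto\ln\rho_{s,t}$, to commute the derivatives;
the two-factor integral $\int_0^\infty f\,du$, which identifies $D(\ln)_\rho$ with $\mathcal J_\rho^{-1}$ and which the paper needs anyway in Lemma~\ref{lem:bkm-positivity}.

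The paper's computation is longer, with three resolvent factors, a three-term partial-fraction decomposition and a symmetrization step. In exchange, it outputs the manifestly symmetric kernel $\kappa(p_a,p_b)$ directly. In your route the $v\leftrightarrow w$ symmetry comes instead from equality of mixed partials, or from self-adjointness of $\mathcal J_\rho^{-1}$ in the trace pairing. Your route also avoids exactly the bookkeeping where your primary sketch went wrong, so it is the one to keep.
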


\begin{proof}
Fix $\lambda$ and abbreviate $\rho:=\rho(\lambda)$, $\sigma(s,t):=\rho(\lambda+sv+tw)$, and
\begin{equation}
\dot{\sigma}_v:=\left.\partial_s\right|_{0}\sigma(s,0)=\dot{\rho}_\lambda(v),\qquad
\dot{\sigma}_w:=\left.\partial_t\right|_{0}\sigma(0,t)=\dot{\rho}_\lambda(w).
\end{equation}
Define $F(s,t):=D(\rho|\sigma(s,t))$.
Using Eq.~\eqref{D1c},
\begin{equation}\label{L6c}
F(s,t)=\operatorname{Tr}\bigl[\rho\ln\rho\bigr]-\operatorname{Tr}\bigl[\rho\ln\sigma(s,t)\bigr].
\end{equation}
Since $\operatorname{Tr}[\rho\ln\rho]$ is constant in $(s,t)$, derivatives act only on the second term.
By Lemma~\ref{lem:log-frechet},
\begin{equation}\label{L7c}
\partial_s \ln\sigma(s,t)=\int_0^\infty (\sigma(s,t)+u\mathbf{1})^{-1}\,(\partial_s\sigma(s,t))\,(\sigma(s,t)+u\mathbf{1})^{-1}\,du,
\end{equation}
and similarly for $\partial_t$.
Differentiability and interchanging trace with integral are justified because $\mathcal{H}$ is finite dimensional and the integrand is continuous in $(s,t)$ and integrable in $u$ uniformly on a neighborhood of $(0,0)$ by the uniform spectral gap.
Therefore
\begin{equation}\label{L8b}
\partial_s F(s,t)=-\operatorname{Tr}\bigl[\rho\,\partial_s\ln\sigma(s,t)\bigr]
=-\int_0^\infty \operatorname{Tr}\Bigl[\rho\,(\sigma(s,t)+u\mathbf{1})^{-1}\,(\partial_s\sigma(s,t))\,(\sigma(s,t)+u\mathbf{1})^{-1}\Bigr]\,du.
\end{equation}
Differentiate Eq.~\eqref{L8b} with respect to $t$ and evaluate at $(s,t)=(0,0)$.
Since $\sigma(0,0)=\rho$ and $\partial_s\sigma(0,0)=\dot{\sigma}_v$ and $\partial_t\sigma(0,0)=\dot{\sigma}_w$,
\begin{align}\label{L9b}
\left.\partial_t\partial_s F(s,t)\right|_{0,0}
&=-\int_0^\infty \left.\partial_t\right|_{0}\operatorname{Tr}\Bigl[\rho\,(\sigma(s,t)+u\mathbf{1})^{-1}\,(\partial_s\sigma(s,t))\,(\sigma(s,t)+u\mathbf{1})^{-1}\Bigr]_{s=0}\,du\nonumber\\
&=-\int_0^\infty \operatorname{Tr}\Bigl[\rho\,\left.\partial_t\right|_{0}\Bigl((\sigma(0,t)+u\mathbf{1})^{-1}\,\dot{\sigma}_v\,(\sigma(0,t)+u\mathbf{1})^{-1}\Bigr)\Bigr]\,du.
\end{align}
Use the resolvent derivative identity $(X(t)^{-1})'=-X(t)^{-1}X'(t)X(t)^{-1}$ with $X(t)=\sigma(0,t)+u\mathbf{1}$ to obtain
\begin{equation}\label{L10b}
\left.\partial_t\right|_{0}(\sigma(0,t)+u\mathbf{1})^{-1}=-(\rho+u\mathbf{1})^{-1}\,\dot{\sigma}_w\,(\rho+u\mathbf{1})^{-1}.
\end{equation}
Substituting Eq.~\eqref{L10b} into Eq.~\eqref{L9b} yields
\begin{align}\label{L11b}
\left.\partial_t\partial_s F(s,t)\right|_{0,0}
&=\int_0^\infty \operatorname{Tr}\Bigl[\rho\,(\rho+u\mathbf{1})^{-1}\dot{\sigma}_w(\rho+u\mathbf{1})^{-1}\dot{\sigma}_v(\rho+u\mathbf{1})^{-1}\Bigr]\,du \nonumber\\
&\quad+\int_0^\infty \operatorname{Tr}\Bigl[\rho\,(\rho+u\mathbf{1})^{-1}\dot{\sigma}_v(\rho+u\mathbf{1})^{-1}\dot{\sigma}_w(\rho+u\mathbf{1})^{-1}\Bigr]\,du.
\end{align}
Note that applying the product rule $\partial_t$ also generates a term involving the derivative of the middle factor, $\partial_t\partial_s\sigma(s,t)|_{0,0}$. However, this term contributes $-\int_0^\infty \operatorname{Tr}[\rho(\rho+u\mathbf{1})^{-2}\partial_t\partial_s\sigma]du = -\operatorname{Tr}[\partial_t\partial_s\sigma] = -\partial_t\partial_s\operatorname{Tr}[\sigma] = 0$ due to trace preservation. 

Diagonalize $\rho=\sum_{a=1}^d p_a |a\rangle\langle a|$ with $p_a\in[\underline{\mu},1]$.
For any $A,B\in\mathcal{B}(\mathcal{H})$, compute in this basis
\begin{align}\label{L12b}
\operatorname{Tr}\Bigl[\rho\,(\rho+u\mathbf{1})^{-1}A(\rho+u\mathbf{1})^{-1}B(\rho+u\mathbf{1})^{-1}\Bigr]
&=\sum_{a,b,c}\frac{p_a}{(p_a+u)(p_b+u)(p_c+u)}\,A_{ab}B_{bc}\,\delta_{ca}\nonumber\\
&=\sum_{a,b}\frac{p_a}{(p_a+u)^2(p_b+u)}\,A_{ab}B_{ba}.
\end{align}

Applying Eq.~\eqref{L12b} to each term in Eq.~\eqref{L11b} with $(A,B)=(\dot{\sigma}_w,\dot{\sigma}_v)$ and $(A,B)=(\dot{\sigma}_v,\dot{\sigma}_w)$ gives
\begin{align}\label{L13b}
\left.\partial_t\partial_s F(s,t)\right|_{0,0}
&=\sum_{a,b}\left(
\int_0^\infty \frac{p_a\,du}{(p_a+u)^2(p_b+u)}
\right)\nonumber\\
&\qquad\times\dot{\sigma}_{w,ab}\dot{\sigma}_{v,ba}\nonumber\\
&\quad+\sum_{a,b}\left(
\int_0^\infty \frac{p_a\,du}{(p_a+u)^2(p_b+u)}
\right)\nonumber\\
&\qquad\times\dot{\sigma}_{v,ab}\dot{\sigma}_{w,ba}\\
&=\sum_{a,b}\Big(J(p_b,p_a)+J(p_a,p_b)\Big)\,\dot{\sigma}_{v,ab}\dot{\sigma}_{w,ba},
\end{align}
where in the last line we relabeled $(a,b)\mapsto(b,a)$ in the first sum and defined
\begin{equation}\label{L14b}
J(p,q):=\int_0^\infty \frac{p}{(p+u)^2(q+u)}\,du.
\end{equation}

Compute Eq.~\eqref{L14b} explicitly for $p\neq q$ by partial fractions. Seek coefficients $A,B,C$ such that
\begin{equation}\label{L15b}
\frac{p}{(p+u)^2(q+u)}=\frac{A}{p+u}+\frac{B}{(p+u)^2}+\frac{C}{q+u}.
\end{equation}
Multiplying by $(p+u)^2(q+u)$ gives
\begin{equation}\label{L16b}
p=A(p+u)(q+u)+B(q+u)+C(p+u)^2.
\end{equation}
Setting $u=-p$ yields $p=B(q-p)$, hence $B=\frac{p}{q-p}$.
Setting $u=-q$ yields $p=C(p-q)^2$, hence $C=\frac{p}{(p-q)^2}=\frac{p}{(q-p)^2}$.
Expanding Eq.~\eqref{L16b} and comparing coefficients of $u^2$ gives $A+C=0$, hence $A=-C=-\frac{p}{(q-p)^2}$.
Substituting these coefficients into Eq.~\eqref{L15b} and integrating termwise from $0$ to $\infty$ gives
\begin{align}\label{L17b}
J(p,q)
&=\int_0^\infty\left(-\frac{p}{(q-p)^2}\frac{1}{p+u}+\frac{p}{q-p}\frac{1}{(p+u)^2}+\frac{p}{(q-p)^2}\frac{1}{q+u}\right)du\nonumber\\
&=\frac{1}{q-p}+\frac{p}{(q-p)^2}\ln\!\left(\frac{p}{q}\right).
\end{align}
For $p=q$, compute directly from Eq.~\eqref{L14b}:
\begin{equation}\label{L18b}
J(p,p)=\int_0^\infty \frac{p}{(p+u)^3}\,du=\frac{1}{2p}.
\end{equation}

Define the symmetric kernel
\begin{equation}\label{L19b}
\kappa(p,q):=J(p,q)+J(q,p)=
\begin{cases}
\displaystyle \frac{\ln p-\ln q}{p-q},& p\neq q,\\ 
\displaystyle \frac{1}{p},& p=q,
\end{cases}
\end{equation}
where for $p\neq q$ the identity follows by inserting Eq.~\eqref{L17b} for $J(p,q)$ and $J(q,p)$ and simplifying, and the
$p=q$ value agrees with the continuous limit.
{For $p\neq q$, substituting Eq.~\eqref{L17b} into $J(p,q)+J(q,p)$ gives
$J(p,q)+J(q,p)=\frac{p-q}{(q-p)^2}\ln\!\left(\frac{p}{q}\right)=\frac{\ln p-\ln q}{p-q}$.}
Then Eq.~\eqref{L13b} becomes
\begin{equation}\label{L20b}
\left.\partial_t\partial_s F(s,t)\right|_{0,0}=\sum_{a,b}\kappa(p_a,p_b)\,\dot{\sigma}_{v,ab}\dot{\sigma}_{w,ba}.
\end{equation}

On the other hand, the Kubo-Mori operator $\mathcal{J}_\rho$ acts diagonally in the eigenbasis of $\rho$ by
\begin{equation}\label{L21b}
(\mathcal{J}_\rho(X))_{ab}=\left(\int_0^1 p_a^{s}p_b^{1-s}\,ds\right)X_{ab}
=\frac{p_a-p_b}{\ln p_a-\ln p_b}\,X_{ab}\quad\text{for }a\neq b,\qquad
(\mathcal{J}_\rho(X))_{aa}=p_a X_{aa},
\end{equation}
with the convention $\frac{p_a-p_b}{\ln p_a-\ln p_b}:=p_a$ for $a=b$.
Therefore $\mathcal{J}_\rho$ is invertible and
\begin{equation}\label{L22b}
(\mathcal{J}_\rho^{-1}(X))_{ab}=\frac{\ln p_a-\ln p_b}{p_a-p_b}\,X_{ab}\quad\text{for }a\neq b,\qquad
(\mathcal{J}_\rho^{-1}(X))_{aa}=\frac{1}{p_a}X_{aa}.
\end{equation}
By Eq.~\eqref{L19b}, the coefficient in Eq.~\eqref{L22b} is exactly $\kappa(p_a,p_b)$ for all $a,b$, hence
\begin{equation}
(\mathcal{J}_\rho^{-1}(X))_{ab}=\kappa(p_a,p_b)\,X_{ab}\quad\text{for all }a,b.
\end{equation}
Therefore
\begin{equation}
\operatorname{Tr}\!\left[\dot{\sigma}_v\,\mathcal{J}_\rho^{-1}(\dot{\sigma}_w)\right]
=\sum_{a,b}\dot{\sigma}_{v,ab}\,(\mathcal{J}_\rho^{-1}(\dot{\sigma}_w))_{ba}
=\sum_{a,b}\kappa(p_b,p_a)\,\dot{\sigma}_{v,ab}\dot{\sigma}_{w,ba}
=\sum_{a,b}\kappa(p_a,p_b)\,\dot{\sigma}_{v,ab}\dot{\sigma}_{w,ba},
\end{equation}
using symmetry $\kappa(p_b,p_a)=\kappa(p_a,p_b)$.
Comparing with Eq.~\eqref{L20b} yields
\begin{equation}
\left.\partial_t\partial_s F(s,t)\right|_{0,0}
=\operatorname{Tr}\!\left[\dot{\sigma}_v\,\mathcal{J}_\rho^{-1}(\dot{\sigma}_w)\right]
=\operatorname{Tr}\!\left[\dot\rho_\lambda(v)\,\mathcal{J}_{\rho(\lambda)}^{-1}\!\big(\dot\rho_\lambda(w)\big)\right],
\end{equation}
which is exactly Eq.~\eqref{L5c}.
\end{proof}}

{\begin{lemma}\label{lem:bkm-positivity}
For each $\lambda\in U$, the bilinear form $g(\lambda)$ defined by Eq.~\eqref{D6a} is symmetric and positive semidefinite. Moreover, for each $v\in\mathbb{R}^m$ one has
\begin{equation}\label{L2b4}
|v|_{g(\lambda)}^{2}
=
\operatorname{Tr}\!\Bigl[\dot{\rho}_\lambda(v)\,\mathcal{J}_{\rho(\lambda)}^{-1}\!\bigl(\dot{\rho}_\lambda(v)\bigr)\Bigr]
=
\int_0^\infty \operatorname{Tr}\!\Bigl[\dot{\rho}_\lambda(v)\,(\rho(\lambda)+t\mathbf{1})^{-1}\,\dot{\rho}_\lambda(v)\,(\rho(\lambda)+t\mathbf{1})^{-1}\Bigr]\,dt
\ge 0.
\end{equation}
\end{lemma}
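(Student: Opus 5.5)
The plan is to reduce everything to the eigenbasis of $\rho:=\rho(\lambda)=\sum_a p_a|a\rangle\langle a|$, where $p_a\ge\underline{\mu}>0$ by Eq.~\eqref{A3b}. In that basis both the logarithmic derivative and the Kubo-Mori operator $\mathcal{J}_\rho$ act as Schur (entrywise) multipliers, so the claimed identities become scalar identities for the symmetric kernel $\kappa(p,q)$ of Eq.~\eqref{L19b}.

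\textbf{Step 1 (log-derivative).} I first identify $\partial_v\ln\rho:=\sum_i v^i\partial_i\ln\rho(\lambda)$. By the chain rule and Lemma~\ref{lem:log-frechet} applied with $H=\dot\rho_\lambda(v)$,
\begin{equation*}
\partial_v\ln\rho=\int_0^\infty(\rho+t\mathbf 1)^{-1}\dot\rho_\lambda(v)(\rho+t\mathbf 1)^{-1}\,dt .
\end{equation*}
Taking matrix elements in the eigenbasis gives the entry $\dot\rho_{ab}\int_0^\infty\frac{dt}{(p_a+t)(p_b+t)}$. The integral equals $\frac{\ln p_a-\ln p_b}{p_a-p_b}$ for $p_a\neq p_b$, by partial fractions, and equals $1/p_a$ when $p_a=p_b$. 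In both cases it is $\kappa(p_a,p_b)$. Comparing with Eq.~\eqref{L22b}, this shows $\partial_v\ln\rho=\mathcal{J}_\rho^{-1}(\dot\rho_\lambda(v))$. I expect this to be the main step, but it is already essentially carried out in the proof of Lemma~\ref{lem:second-deriv}, whose partial-fraction computations and diagonal formulas I would reuse.

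\textbf{Step 2 (rewrite the BKM pairing).} By cyclicity of the trace and the substitution $s\mapsto1-s$,
\begin{equation*}
\langle A,B\rangle_{\mathrm{BKM},\lambda}=\operatorname{Tr}\Bigl[A\int_0^1\rho^{1-s}B\rho^{s}\,ds\Bigr]=\operatorname{Tr}[A\,\mathcal{J}_\rho(B)].
\end{equation*}
I then substitute $A=\partial_v\ln\rho$ and $B=\mathcal{J}_\rho^{-1}(\dot\rho_\lambda(v))$ from Step~1. Since $\mathcal{J}_\rho\mathcal{J}_\rho^{-1}=\mathrm{id}$, this gives $|v|^2_{g(\lambda)}=\operatorname{Tr}[\mathcal{J}_\rho^{-1}(\dot\rho)\,\dot\rho]=\operatorname{Tr}[\dot\rho\,\mathcal{J}_\rho^{-1}(\dot\rho)]$, which is the first equality in Eq.~\eqref{L2b4}. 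For the second equality I insert the integral representation of $\partial_v\ln\rho=\mathcal{J}_\rho^{-1}(\dot\rho)$ from Step~1 and exchange trace and integral. This exchange is justified by absolute operator-norm convergence and the fact that $\dim\mathcal H<\infty$.

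\textbf{Step 3 (symmetry and positivity).} In the eigenbasis, polarization gives
\begin{equation*}
g(v,w)=\sum_{a,b}\kappa(p_a,p_b)\,\dot\rho_{v,ab}\,\dot\rho_{w,ba}.
\end{equation*}
Because $\dot\rho_v$ and $\dot\rho_w$ are Hermitian and $\kappa$ is real and symmetric, swapping $a\leftrightarrow b$ shows that this expression equals its complex conjugate and is symmetric in $v,w$. Hence $g_{ij}$ is a real symmetric matrix. For $v=w$ we get $\sum_{a,b}\kappa(p_a,p_b)|\dot\rho_{v,ab}|^2\ge0$, since $\kappa(p,q)>0$ for $p,q>0$: it is a ratio of differences of the increasing function $\ln$, or $1/p$ on the diagonal. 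This proves positive semidefiniteness.
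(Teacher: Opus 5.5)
Your proof is correct. For the two identities in Eq.~\eqref{L2b4} it follows the paper's route. You identify $\partial_v\ln\rho=\mathcal{J}_\rho^{-1}(\dot\rho_\lambda(v))$ from Lemma~\ref{lem:log-frechet} and the eigenbasis form of $\mathcal{J}_\rho^{-1}$. You then collapse the BKM integral via $\mathcal{J}_\rho\mathcal{J}_\rho^{-1}=\mathrm{id}$, insert the resolvent integral, and exchange trace and $t$-integral. Your explicit evaluation of $\int_0^\infty dt/((p_a+t)(p_b+t))$ fills in a step the paper only asserts. Note that it is a different integral from the $J(p,q)$ of Lemma~\ref{lem:second-deriv}, so you are not literally reusing that computation; yours is simpler and correct.

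The real difference is in how you get symmetry and positivity. The paper proves both directly from Eq.~\eqref{D6a}, without the identification of $\partial_v\ln\rho$. It gets symmetry from cyclicity plus $s\mapsto 1-s$. It gets positivity by writing the integrand as the Hilbert--Schmidt square $\operatorname{Tr}[(\rho^{(1-s)/2}A_v\rho^{s/2})^\dagger(\rho^{(1-s)/2}A_v\rho^{s/2})]$. For the $t$-representation it does the same with $\operatorname{Tr}[(C_t\dot\rho_\lambda(v)C_t)^\dagger(C_t\dot\rho_\lambda(v)C_t)]$, where $C_t=(\rho+t\mathbf{1})^{-1/2}$. This argument is basis-free and gives pointwise nonnegativity of each integrand.

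You instead read both properties off the spectral formula $g(v,w)=\sum_{a,b}\kappa(p_a,p_b)\,\dot\rho_{v,ab}\dot\rho_{w,ba}$ with $\kappa>0$. That relies on your Step~1, but it has two advantages. It shows explicitly that $g_{ij}$ is real, which the paper leaves implicit. It also exhibits $|v|^2_{g(\lambda)}$ as a positively weighted sum of $|\dot\rho_{v,ab}|^2$, so it vanishes exactly when $\dot\rho_\lambda(v)=0$.
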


\begin{proof}
Fix $\lambda$ and abbreviate $\rho:=\rho(\lambda)$. For $A,B\in\mathcal{B}(\mathcal{H})$, cyclicity of the trace gives
\begin{equation}
\operatorname{Tr}\bigl[\rho^{s}A\,\rho^{1-s}B\bigr]
=
\operatorname{Tr}\bigl[\rho^{1-s}B\,\rho^{s}A\bigr].
\end{equation}
Changing variables $s\mapsto 1-s$ in Eq.~\eqref{D6a} therefore gives
\begin{equation}
g_{ij}(\lambda)=g_{ji}(\lambda),
\end{equation}
so $g(\lambda)$ is symmetric.
For positivity, let
\begin{equation}
A_v:=\partial_v\ln\rho
=
\sum_{i=1}^m v^i\,\partial_i\ln\rho.
\end{equation}
Since $\ln\rho$ is self-adjoint, $A_v$ is self-adjoint. Hence
\begin{align}
|v|_{g(\lambda)}^{2}
&=
\int_0^1 \operatorname{Tr}\bigl[\rho^{s}A_v\,\rho^{1-s}A_v\bigr]\,ds \notag\\
&=
\int_0^1 \operatorname{Tr}\Bigl[\bigl(\rho^{(1-s)/2}A_v\rho^{s/2}\bigr)^\dagger
\bigl(\rho^{(1-s)/2}A_v\rho^{s/2}\bigr)\Bigr]\,ds
\ge 0.
\label{eq:bkm_pos_step}
\end{align}
Next, by Lemma~\ref{lem:log-frechet} together with the eigenbasis formulas Eq.~\eqref{L21b}-\eqref{L22b}, the Fr\'echet derivative of the matrix logarithm at $\rho$ coincides with $\mathcal{J}_\rho^{-1}$. Therefore
\begin{equation}
A_v
=
\partial_v\ln\rho
=
\mathcal{J}_\rho^{-1}\!\bigl(\dot{\rho}_\lambda(v)\bigr).
\end{equation}
Substituting this into Eq.~\eqref{D6a} yields
\begin{align}
|v|_{g(\lambda)}^{2}
&=
\int_0^1 \operatorname{Tr}\!\Bigl[\rho^{s}\mathcal{J}_\rho^{-1}\!\bigl(\dot{\rho}_\lambda(v)\bigr)\,
\rho^{1-s}\mathcal{J}_\rho^{-1}\!\bigl(\dot{\rho}_\lambda(v)\bigr)\Bigr]\,ds \notag\\
&=
\operatorname{Tr}\!\Bigl[\mathcal{J}_\rho\!\bigl(\mathcal{J}_\rho^{-1}(\dot{\rho}_\lambda(v))\bigr)\,
\mathcal{J}_\rho^{-1}\!\bigl(\dot{\rho}_\lambda(v)\bigr)\Bigr] \notag\\
&=
\operatorname{Tr}\!\Bigl[\dot{\rho}_\lambda(v)\,\mathcal{J}_\rho^{-1}\!\bigl(\dot{\rho}_\lambda(v)\bigr)\Bigr].
\label{eq:bkm_jinv_step}
\end{align}
Applying Lemma~\ref{lem:log-frechet} with $H=\dot{\rho}_\lambda(v)$ gives
\begin{equation}
\mathcal{J}_\rho^{-1}\!\bigl(\dot{\rho}_\lambda(v)\bigr)
=
D(\ln)_\rho[\dot{\rho}_\lambda(v)]
=
\int_0^\infty (\rho+t\mathbf{1})^{-1}\dot{\rho}_\lambda(v)(\rho+t\mathbf{1})^{-1}\,dt.
\end{equation}
Since the setting is finite-dimensional and Lemma~\ref{lem:log-frechet} gives absolute operator-norm convergence, we may substitute this into Eq.~\eqref{eq:bkm_jinv_step} and interchange trace and integral to obtain
\begin{equation}
|v|_{g(\lambda)}^{2}
=
\int_0^\infty \operatorname{Tr}\!\Bigl[\dot{\rho}_\lambda(v)\,(\rho+t\mathbf{1})^{-1}\,\dot{\rho}_\lambda(v)\,(\rho+t\mathbf{1})^{-1}\Bigr]\,dt.
\end{equation}
Finally, for each $t\ge 0$, set $C_t:=(\rho+t\mathbf{1})^{-1/2}$. Since $\dot{\rho}_\lambda(v)$ is self-adjoint,
\begin{align}
\operatorname{Tr}\!\Bigl[\dot{\rho}_\lambda(v)\,(\rho+t\mathbf{1})^{-1}\,\dot{\rho}_\lambda(v)\,(\rho+t\mathbf{1})^{-1}\Bigr]
&=
\operatorname{Tr}\!\Bigl[(C_t\dot{\rho}_\lambda(v)C_t)^\dagger(C_t\dot{\rho}_\lambda(v)C_t)\Bigr]
\ge 0.
\end{align}
Hence the integral is nonnegative, proving Eq.~\eqref{L2b4}.
\end{proof}}
\begin{theorem}\label{thm:metric-from-information}
Fix $\lambda\in U$ and define $F_\lambda(\delta):=D(\rho(\lambda)|\rho(\lambda+\delta))$ for $\delta$ in a neighborhood of $0\in\mathbb{R}^m$ with $\lambda+\delta\in U$.
Then $F_\lambda$ is $C^2$ at $\delta=0$ and satisfies
\begin{equation} \label{T1a}
F_\lambda(0)=0,\qquad \nabla_\delta F_\lambda(0)=0,\qquad F_\lambda(\delta)=\frac{1}{2}|\delta|_{g(\lambda)}^{2}+R_\lambda(\delta),
\end{equation}
where $g(\lambda)$ is given by Eq.~\eqref{D6a} and where there exists a constant $M_\lambda\ge 0$ such that
\begin{equation} \label{T2a}
|R_\lambda(\delta)|\le \frac{M_\lambda}{6}\,|\delta|_2^{3}\qquad\text{for all }\delta\ \text{with }\lambda+\delta\in U.
\end{equation}
If in addition there exists $g_{\min}(\lambda)>0$ such that $|\delta|_{g(\lambda)}^{2}\ge g_{\min}(\lambda)|\delta|_2^{2}$ for all $\delta\in\mathbb{R}^m$, then for every $\delta$ satisfying
\begin{equation}\label{T3a}
|\delta|_2\le \frac{3g_{\min}(\lambda)}{2M_\lambda},
\end{equation}
the metric-from-information inequalities hold:
\begin{equation} \label{T4a}
\frac{1}{4}|\delta|_{g(\lambda)}^{2}\le D(\rho(\lambda)|\rho(\lambda+\delta))\le \frac{3}{4}|\delta|_{g(\lambda)}^{2},
\end{equation}
and equivalently in bits,
\begin{equation} \label{T5a}
\frac{1}{4\ln 2}|\delta|_{g(\lambda)}^{2}\le D_2(\rho(\lambda)|\rho(\lambda+\delta))\le \frac{3}{4\ln 2}|\delta|_{g(\lambda)}^{2}.
\end{equation}
\end{theorem}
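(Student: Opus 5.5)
The plan is a second-order Taylor expansion of $F_\lambda$ about $\delta=0$, with a uniform third-derivative bound that is then absorbed into the quadratic term.

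\emph{Step 1: regularity.} Since $\rho$ is $C^3$ into $\mathsf{D}_{+}(\mathcal{H})$ with the uniform spectral floor \eqref{A3b}, the map $\sigma\mapsto\ln\sigma$ is real-analytic on $\{\sigma\ge\underline{\mu}\mathbf 1\}$. This follows from the resolvent integral in Lemma~\ref{lem:log-frechet}: every derivative is an integral of products of resolvents $(\sigma+t\mathbf 1)^{-1}$. Hence $F_\lambda(\delta)=\operatorname{Tr}[\rho\ln\rho]-\operatorname{Tr}[\rho\ln\rho(\lambda+\delta)]$ is $C^3$ on the convex set $\{\delta:\lambda+\delta\in U\}$.

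\emph{Step 2: the low-order terms.} We have $F_\lambda(0)=D(\rho|\rho)=0$. For the gradient, Lemma~\ref{lem:log-frechet} and cyclicity give
\[
\partial_iF_\lambda(0)=-\int_0^\infty\operatorname{Tr}\bigl[\rho(\rho+t\mathbf 1)^{-2}\partial_i\rho\bigr]\,dt=-\operatorname{Tr}[\partial_i\rho]=0,
\]
by trace preservation. For the Hessian, Lemma~\ref{lem:second-deriv} gives $\operatorname{Tr}[\dot\rho(v)\mathcal J_\rho^{-1}\dot\rho(w)]$, and Lemma~\ref{lem:bkm-positivity} identifies this with $g(\lambda)(v,w)$.

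\emph{Step 3: the remainder.} Taylor's theorem with Lagrange remainder along the segment $[0,\delta]$ gives $R_\lambda(\delta)=\tfrac16 D^3F_\lambda(\xi)[\delta,\delta,\delta]$ for some $\xi$ on the segment. So it suffices to take
\[
M_\lambda:=\sup_{\lambda+\xi\in\overline U}\|D^3F_\lambda(\xi)\|,
\]
where the norm is the trilinear-form norm with respect to $|\cdot|_2$. This supremum is finite because $\overline U\subset\Lambda$, $\rho$ is $C^3$, and $\lambda_{\min}\ge\underline\mu$ bounds every resolvent factor uniformly. This yields \eqref{T2a}.

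\emph{Step 4: the two-sided inequality.} Under \eqref{T3a} and the coercivity assumption,
\[
|R_\lambda(\delta)|\le\frac{M_\lambda}{6}|\delta|_2\,|\delta|_2^2\le\frac{M_\lambda}{6}\cdot\frac{3g_{\min}}{2M_\lambda}\cdot\frac{|\delta|_g^2}{g_{\min}}=\frac14|\delta|_g^2 .
\]
Combining this with $F_\lambda=\tfrac12|\delta|_g^2+R_\lambda$ gives \eqref{T4a}. Dividing by $\ln2$ gives \eqref{T5a}.

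The main obstacle is Step 3: making the finiteness of the third-derivative bound $M_\lambda$ honest. I would differentiate the resolvent representation three times. This produces finitely many terms, each a trace of $\rho$ against products of up to four resolvents interleaved with $\partial\rho$, $\partial^2\rho$, $\partial^3\rho$. I would bound each term by $\|\partial^k\rho\|_\infty$ times $\int_0^\infty(\underline\mu+t)^{-2}\,dt=\underline\mu^{-1}$ (or a higher power of $\underline\mu^{-1}$). The suprema of $\|\partial^k\rho\|_\infty$ over $\overline U$ are finite by compactness and $C^3$ regularity.
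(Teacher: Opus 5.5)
Your proposal is correct and follows essentially the same route as the paper's proof. It has the same five ingredients: $F_\lambda(0)=0$; a vanishing gradient from $\int_0^\infty\rho(\rho+t\mathbf{1})^{-2}\,dt=\mathbf{1}$ together with $\operatorname{Tr}\dot\rho=0$; the Hessian identified with $g(\lambda)$ through Lemma~\ref{lem:second-deriv}; a third-order Taylor remainder controlled by a supremum $M_\lambda$; and absorption of that remainder into $\tfrac14|\delta|_{g(\lambda)}^2$ under \eqref{T3a}. The differences are minor: you use the Lagrange rather than the integral form of the remainder, and you take $M_\lambda$ as the trilinear-form norm of $D^3F_\lambda$ over all of $\overline U-\lambda$. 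That choice makes $|\tfrac{d^3}{dt^3}F_\lambda(t\delta)|\le M_\lambda|\delta|_2^3$ hold for every admissible $\delta$, whereas the paper's supremum over unit directions with $t\in[0,1]$ only justifies its homogeneity step for $|\delta|_2\le 1$.
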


\begin{proof}
Fix $\lambda\in U$ and abbreviate $\rho_0:=\rho(\lambda)$ and $\rho_\delta:=\rho(\lambda+\delta)$.
By Eq.~\eqref{D1c},
\begin{equation} \label{T6a}
F_\lambda(\delta)=\operatorname{Tr}[\rho_0\ln\rho_0]-\operatorname{Tr}[\rho_0\ln\rho_\delta].
\end{equation}
Thus $F_\lambda(0)=0$.
Differentiating Eq.~\eqref{T6a} in direction $v\in\mathbb{R}^m$ and using Lemma~\ref{lem:log-frechet} gives
\begin{equation} \label{T7}
\left.\frac{d}{dt}\right|_{t=0}F_\lambda(tv)
=-\operatorname{Tr}\bigl[\rho_0\,D(\ln)_{\rho_0}[\dot{\rho}_\lambda(v)]\bigr]
=-\int_0^\infty \operatorname{Tr}\bigl[\rho_0(\rho_0+u\mathbf{1})^{-1}\dot{\rho}_\lambda(v)(\rho_0+u\mathbf{1})^{-1}\bigr]\,du.
\end{equation}
Diagonalize $\rho_0=\sum_a p_a|a\rangle\langle a|$ and compute the integrand:
\begin{equation} \label{T8}
\operatorname{Tr}\bigl[\rho_0(\rho_0+u\mathbf{1})^{-1}\dot{\rho}_\lambda(v)(\rho_0+u\mathbf{1})^{-1}\bigr]
=\sum_{a}\frac{p_a}{(p_a+u)^2}\,\dot{\rho}_\lambda(v)_{aa}.
\end{equation}
Since $\operatorname{Tr}\dot{\rho}_\lambda(v)=\partial_v\operatorname{Tr}\rho(\lambda)=\partial_v(1)=0$, one has $\sum_a \dot{\rho}_\lambda(v)_{aa}=0$.
Moreover, $\int_0^\infty \frac{p_a}{(p_a+u)^2}\,du=1$ for each $a$, hence Eqs.~\eqref{T7}-\eqref{T8} give
$\left.\frac{d}{dt}\right|_{0}F_\lambda(tv)=-\sum_a \dot{\rho}_\lambda(v)_{aa}=0$, proving $\nabla_\delta F_\lambda(0)=0$.
For the second derivative, Lemma~\ref{lem:second-deriv} with $v=w=\delta$ gives
\begin{equation} \label{T9}
\left.\frac{d^2}{dt^2}\right|_{t=0}F_\lambda(t\delta)=|\delta|_{g(\lambda)}^{2}.
\end{equation}
By Taylor's theorem with integral remainder for the $C^3$ function $t\mapsto F_\lambda(t\delta)$ on $[0,1]$,
\begin{equation} \label{T10}
F_\lambda(\delta)
=F_\lambda(0)+\left.\frac{d}{dt}\right|_{t=0}F_\lambda(t\delta)
+\frac{1}{2}\left.\frac{d^2}{dt^2}\right|_{t=0}F_\lambda(t\delta)
+\frac{1}{2}\int_0^1 (1-t)^2\,\frac{d^3}{dt^3}F_\lambda(t\delta)\,dt.
\end{equation}
Applying $F_\lambda(0)=0$, $\left.\frac{d}{dt}\right|_{0}F_\lambda(t\delta)=0$, and Eq.~\eqref{T9} to Eq.~\eqref{T10} yields the expansion in Eq.~\eqref{T1a}, where we define the remainder
\begin{equation} \label{T11}
R_\lambda(\delta):=\frac{1}{2}\int_0^1 (1-t)^2\,\frac{d^3}{dt^3}F_\lambda(t\delta)\,dt.
\end{equation}
Define
\begin{equation} \label{T12}
M_\lambda:=\sup\left\{\left|\frac{d^3}{dt^3}F_\lambda(t\delta)\right|:\ t\in[0,1],\ \delta\in\mathbb{R}^m,\ \lambda+t\delta\in U,\ |\delta|_2=1\right\},
\end{equation}
which is finite because $\rho$ is $C^3$ on $\overline{U}$, the spectrum of $\rho(\lambda)$ is uniformly bounded below by Eq.~\eqref{A3b}, and all derivatives of $\ln\rho(\lambda)$ up to third order are continuous and bounded on $\overline{U}$ by compactness of $\overline{U}$ and finite dimensionality \cite{Bhatia1997}.
Then $\left|\frac{d^3}{dt^3}F_\lambda(t\delta)\right|\le M_\lambda|\delta|_2^3$ for all admissible $(t,\delta)$ by homogeneity, and Eq.~\eqref{T11} yields
\begin{equation} \label{T13}
|R_\lambda(\delta)|
\le \frac{1}{2}\int_0^1 (1-t)^2\,dt\;M_\lambda|\delta|_2^3
=\frac{M_\lambda}{6}|\delta|_2^3,
\end{equation}
which is Eq.~\eqref{T2a}.
Assume now $|\delta|_{g(\lambda)}^{2}\ge g_{\min}(\lambda)|\delta|_2^{2}$ and impose Eq.~\eqref{T3a}.
Then Eq.~\eqref{T2a} implies
\begin{equation} \label{T14}
|R_\lambda(\delta)|
\le \frac{M_\lambda}{6}|\delta|_2^3
\le \frac{M_\lambda}{6}|\delta|_2\cdot \frac{1}{g_{\min}(\lambda)}|\delta|_{g(\lambda)}^{2}
\le \frac{1}{4}|\delta|_{g(\lambda)}^{2},
\end{equation}
because $|\delta|_2\le \frac{3g_{\min}(\lambda)}{2M_\lambda}$ implies
$\frac{M_\lambda}{6}|\delta|_2\cdot \frac{1}{g_{\min}(\lambda)}\le \frac{1}{4}$.
Combining Eq.~\eqref{T13} with Eq.~\eqref{T14} yields
\begin{equation} \label{T15}
\frac{1}{2}|\delta|_{g(\lambda)}^{2}-\frac{1}{4}|\delta|_{g(\lambda)}^{2}\le F_\lambda(\delta)\le \frac{1}{2}|\delta|_{g(\lambda)}^{2}+\frac{1}{4}|\delta|_{g(\lambda)}^{2},
\end{equation}
which is Eq.~\eqref{T4a}.
Dividing Eq.~\eqref{T4a} by $\ln 2$ and using Eq.~\eqref{D2c} yields Eq.~\eqref{T5a}.
\end{proof}

Dimensional scaling is fixed by definition: $D(\rho|\sigma)$ is dimensionless because it is a trace of an operator logarithm, hence $|\delta|_{g(\lambda)}^{2}$ is dimensionless in Eq.~\eqref{T15}, which is consistent because $g_{ij}(\lambda)$ is constructed from $\partial_i\ln\rho(\lambda)$ and is dimensionless.
Under an infinitesimal reparameterization $\lambda=\lambda(\theta)$ with $\theta\in\mathbb{R}^m$ and Jacobian $J^i{}_a=\partial\lambda^i/\partial\theta^a$ at $\theta_0$, the chain rule gives $\partial_a\rho(\theta)=\sum_i J^i{}_a\,\partial_i\rho(\lambda)$ and therefore $g_{ab}(\theta)=\sum_{i,j}J^i{}_aJ^j{}_b g_{ij}(\lambda)$ by Eq.~\eqref{D6a}, which is the tensorial transformation law of a Riemannian metric.
If $\rho(\lambda+\delta)=\rho(\lambda)$ then $D(\rho(\lambda)|\rho(\lambda+\delta))=0$ and $\dot{\rho}_\lambda(\delta)=0$, so $|\delta|_{g(\lambda)}^{2}=0$ by Eq.~\eqref{L2b4}, which is consistent with Eq.~\eqref{T15}.
Positivity and symmetry of $g(\lambda)$ are established in Lemma~\ref{lem:bkm-positivity}.

\section{Multiscale diameter bound}\label{app:multiscale}
\renewcommand{\theequation}{E.\arabic{equation}}
Fix an integer $d\ge 1$. Fix a compact $d$-dimensional $C^\infty$ manifold $\Sigma$ equipped with a reference Riemannian metric $h$ used only to define coarse-graining scales, collars, and adjacency. For each boundary state $\rho$ in a fixed state class $\mathcal{S}$, assume there exists a nonempty set of admissible insertion points $X_\rho\subset\Sigma$ and a function $d_\rho:X_\rho\times X_\rho\to[0,\infty)$ such that $(X_\rho,d_\rho)$ is a metric space and $d_\rho$ extends (uniquely) to a metric on the metric completion $\overline{X_\rho}$; assume also that $X_\rho$ intersects every open subset of $\Sigma$ of positive $h$-measure, so that $X_\rho\cap R\neq\varnothing$ for every measurable region $R\subset\Sigma$ with nonempty interior. Define the operational diameter of $\rho$ by
\begin{equation}\label{A1c}
\mathrm{Diam}_\rho:=\sup\{d_\rho(x,y):x,y\in X_\rho\}\in[0,\infty],
\end{equation}
and assume $\mathrm{Diam}_\rho<\infty$ for all $\rho\in\mathcal{S}$. Fix a finite scale index set $\{0,1,\dots,K\}$ and a strictly decreasing sequence of positive lengths $(\ell_k)_{k=0}^{K}$ satisfying
\begin{equation}\label{A2b}
\ell_{k+1}<\ell_k\ \ \text{for all}\ \ k\in\{0,1,\dots,K-1\},
\end{equation}
and fix a positive ultraviolet separation scale $a_{\mathrm{uv}}>0$ (interpreted as the minimal separation at which mutual information for disjoint regions is finite and regulator-stable in the intended microscopic formulation). Assume
\begin{equation} \label{A3c}
\ell_K>a_{\mathrm{uv}}.
\end{equation}
For each $k\in\{0,1,\dots,K\}$ fix a finite measurable partition $\mathcal{C}^{(k)}=\{C_i^{(k)}\}_{i\in V_k}$ of $\Sigma$ with the properties
\begin{equation} \label{A4b}
\begin{aligned}
\Sigma&=\bigsqcup_{i\in V_k}C_i^{(k)},
& \mu_h(\partial C_i^{(k)})&=0
&&\text{for all }i\in V_k,\\
\mathrm{diam}_h(C_i^{(k)})
&:=\sup_{x,y\in C_i^{(k)}}d_h(x,y)
\le c_{\mathrm{cell}}\ell_k.
\end{aligned}
\end{equation}
for a fixed constant $c_{\mathrm{cell}}\ge 1$ independent of $k$, where $\mu_h$ denotes the $d$-dimensional Riemannian volume measure induced by $h$. For each $k$ fix a buffer thickness $\eta_k>0$ satisfying
\begin{equation} \label{A5b}
c_-a_{\mathrm{uv}}\le \eta_k\le c_+\ell_k
\end{equation}
for fixed constants $c_->0$ and $c_+>0$ independent of $k$, and define the buffered cells $A_i^{(k)}\subset C_i^{(k)}$ by
\begin{equation} \label{A6b}
A_i^{(k)}:=\{x\in C_i^{(k)}:\operatorname{dist}_h(x,\partial C_i^{(k)})\ge \eta_k\}.
\end{equation}
Assume the buffers are nontrivial and mutually separated:
\begin{equation} \label{A7b}
\begin{aligned}
A_i^{(k)}&\neq\varnothing
&&\text{for all }i\in V_k,\\
i\neq j&\Longrightarrow
A_i^{(k)}\cap A_j^{(k)}=\varnothing,\\
\operatorname{dist}_h(A_i^{(k)},A_j^{(k)})
&:=\inf_{\substack{x\in A_i^{(k)}\\y\in A_j^{(k)}}}d_h(x,y)
\ge 2\eta_k.
\end{aligned}
\end{equation}
For each $k$, define the adjacency graph $G_k=(V_k,E_k)$ by declaring $(i,j)\in E_k$ if and only if $i\neq j$ and the closures $\overline{C_i^{(k)}}$ and $\overline{C_j^{(k)}}$ intersect in a set of positive $(d-1)$-dimensional Hausdorff measure (equivalently, the cells share a codimension-one interface of positive $h$-area). Assume $G_k$ is connected. Equip $G_k$ with the standard shortest-path metric $d_{G_k}$, and define the graph diameter
\begin{equation} \label{A8b}
D_k:=\mathrm{diam}(G_k):=\max_{i,j\in V_k} d_{G_k}(i,j)\in\mathbb{N}.
\end{equation}
For each $\rho\in\mathcal{S}$ and each pair of disjoint measurable regions $R,S\subset\Sigma$ with $(X_\rho\cap R)\neq\varnothing$ and $(X_\rho\cap S)\neq\varnothing$, define the induced region-to-region separation
\begin{equation} \label{A9b}
d_\rho(R,S):=\inf\{d_\rho(x,y):x\in X_\rho\cap R,\ y\in X_\rho\cap S\}\in[0,\infty).
\end{equation}
Fix, for each $k$, nonnegative real numbers $I_{ij}^{(k)}(\rho)\in(0,\infty)$ assigned to each edge $(i,j)\in E_k$ (interpreted as mutual informations in bits between $A_i^{(k)}$ and $A_j^{(k)}$ in the state $\rho$), and assume these numbers are finite for all $\rho\in\mathcal{S}$ and all edges at all scales, which is consistent with the separation condition Eq.~\eqref{A7b} in split or regulated formulations of mutual information \cite{Araki1976,OhyaPetz2004,Watrous2018,Haag1996,BuchholzDAntoniLongo1987}. Assume that for each $k$ there exists a function $\ell_k^{\mathrm{MI}}:(0,\infty)\to[0,\infty)$ such that $\ell_k^{\mathrm{MI}}$ is nonincreasing and the edgewise information-to-distance bound holds:
\begin{equation}\label{A10b}
(i,j)\in E_k\Longrightarrow d_\rho(A_i^{(k)},A_j^{(k)})\ge \ell_k^{\mathrm{MI}}\!\left(I_{ij}^{(k)}(\rho)\right)\qquad\text{for all}\ \rho\in\mathcal{S}.
\end{equation}
Define the scale-$k$ neighbor-information ceiling by
\begin{equation}\label{A11a}
I_k(\rho):=\max_{(i,j)\in E_k} I_{ij}^{(k)}(\rho)\in(0,\infty),
\end{equation}
which exists because $E_k$ is finite, and define the corresponding uniform edge-separation bound
\begin{equation}\label{A12a}
s_k(\rho):=\ell_k^{\mathrm{MI}}\!\left(I_k(\rho)\right)\in[0,\infty).
\end{equation}
Assume a coarse additivity mechanism quantified by a constant $\delta\ge 0$ independent of $k$ and independent of $\rho$ in the state class, in the following metric form: for each $k$ and each pair $u,v\in V_k$, there exists a shortest path $(v_0,\dots,v_n)$ in $G_k$ from $u$ to $v$ with $n=d_{G_k}(u,v)$ and points $x_m\in X_\rho\cap A_{v_m}^{(k)}$ such that the sequence $(x_0,\dots,x_n)$ satisfies the $\delta$-alignment inequalities
\begin{equation}\label{A13a}
{d_\rho(x_i,x_k)\ \ge\ d_\rho(x_i,x_j)+d_\rho(x_j,x_k)-2\delta,
\qquad \forall\ 0\le i<j<k\le n,}
\end{equation}
The problem is to bound $\mathrm{Diam}_\rho$ from below by an explicit multiscale expression involving $(D_k)_{k=0}^K$, the functions $\ell_k^{\mathrm{MI}}$, the ceiling values $I_k(\rho)$, and the additive loss parameter $\delta$. 

\begin{definition}[Set distance]
For a metric space $(X,d)$ and nonempty subsets $R,S\subset X$, define the set distance
\begin{equation} \label{D1d}
d(R,S):=\inf\{d(r,s):r\in R,\ s\in S\}\in[0,\infty).
\end{equation}
\end{definition}

\begin{definition}[Graph distance and diameter]
For a finite simple connected graph $G=(V,E)$, define the graph distance $d_G:V\times V\to\mathbb{N}_0$ by
\begin{equation} \label{D2d}
d_G(u,v):=\min\!\left\{
n\in\mathbb{N}_0:
\begin{gathered}
\exists\,(w_0,\dots,w_n)\ \text{with }w_0=u,\ w_n=v,\\
(w_{m-1},w_m)\in E\ \text{for all }m\in\{1,\dots,n\}
\end{gathered}
\right\},
\end{equation}
and define the graph diameter $\mathrm{diam}(G):=\max_{u,v\in V}d_G(u,v)$ \cite{Diestel2017}.
\end{definition}

{\begin{definition}[{\bf $\delta$-aligned sequence }]\label{def:delta_aligned_anchored}
Let $\delta\ge 0$ and let $(X,d)$ be a metric space. A finite sequence $(x_0,\dots,x_n)$ in $X$ is called
\emph{$\delta$-aligned} if $n\le 1$, or if for every triple of indices $0\le i<j<k\le n$ one has
\begin{equation}
\label{eq:def:delta_aligned_anchored}
d(x_i,x_k)\ \ge\ d(x_i,x_j)+d(x_j,x_k)-2\delta .
\end{equation}
\end{definition}}

{\begin{lemma}[Uniform edge separation]\label{lem:edgeuniform}
Assume that for each scale $k$ the function $\ell_k^{\mathrm{MI}}:(0,\infty)\to[0,\infty)$ is non-increasing and that the
edgewise information-to-distance bound Eq.~\eqref{A10b} holds. Let $I_k(\rho)$ and $s_k(\rho)$ be defined by
\eqref{A11a} and Eq.~\eqref{A12a}. Then for every $k$ and every edge $(i,j)\in E_k$ one has
\begin{equation}\label{L1xa}
d_\rho\!\left(A_i^{(k)},A_j^{(k)}\right)\ \ge\ s_k(\rho).
\end{equation}
\end{lemma}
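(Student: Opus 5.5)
The plan is a direct monotonicity argument using only the definitions \eqref{A10b}, \eqref{A11a}, and \eqref{A12a}. Fix a scale $k$ and an edge $(i,j)\in E_k$. First note that $I_k(\rho)$ in \eqref{A11a} is a maximum over the finite edge set $E_k$, which is nonempty because $G_k$ is connected with more than one vertex; if $G_k$ has a single vertex the statement is vacuous. Consequently $I_{ij}^{(k)}(\rho)\le I_k(\rho)$, and both numbers lie in $(0,\infty)$. This is exactly the domain on which $\ell_k^{\mathrm{MI}}$ is defined, so both $\ell_k^{\mathrm{MI}}(I_{ij}^{(k)}(\rho))$ and $s_k(\rho)$ make sense.

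Next I will use that $\ell_k^{\mathrm{MI}}$ is nonincreasing, which converts the information inequality into a reversed length inequality:
\begin{equation}
\ell_k^{\mathrm{MI}}\bigl(I_{ij}^{(k)}(\rho)\bigr)\ \ge\ \ell_k^{\mathrm{MI}}\bigl(I_k(\rho)\bigr)=s_k(\rho),
\end{equation}
where the equality is the definition \eqref{A12a}.

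Finally I will chain this with the edgewise bound \eqref{A10b} applied to the edge $(i,j)$:
\begin{equation}
d_\rho\bigl(A_i^{(k)},A_j^{(k)}\bigr)\ \ge\ \ell_k^{\mathrm{MI}}\bigl(I_{ij}^{(k)}(\rho)\bigr)\ \ge\ s_k(\rho).
\end{equation}
This gives \eqref{L1xa}. Since $k$, the edge, and $\rho\in\mathcal{S}$ were arbitrary, the bound holds uniformly.

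There is essentially no obstacle here. The only point to check is well-definedness: the set distance \eqref{A9b} requires $X_\rho\cap A_i^{(k)}\neq\varnothing$. This holds because the buffered cells are nonempty by \eqref{A7b}, and $X_\rho$ meets every region with nonempty interior, which I take to include the buffered cells as the standing assumptions intend.
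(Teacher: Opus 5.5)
Your proof is correct and follows the paper's argument exactly: bound $I_{ij}^{(k)}(\rho)\le I_k(\rho)$ by the definition of the maximum, use that $\ell_k^{\mathrm{MI}}$ is nonincreasing to get $\ell_k^{\mathrm{MI}}(I_{ij}^{(k)}(\rho))\ge s_k(\rho)$, and chain this with the edgewise bound \eqref{A10b}. Your extra well-definedness remarks (nonempty $E_k$, domain of $\ell_k^{\mathrm{MI}}$, nonempty $X_\rho\cap A_i^{(k)}$) are harmless additions that the paper leaves implicit, since they are already presupposed by the hypothesis \eqref{A10b}.
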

\begin{proof}
Fix $k$ and an edge $(i,j)\in E_k$. By the definition of the ceiling value Eq.~\eqref{A11a},
\begin{equation}\label{eq:Ik-dominates-Iij}
I_{ij}^{(k)}(\rho)\ \le\ I_k(\rho).
\end{equation}
Since $\ell_k^{\mathrm{MI}}$ is nonincreasing, the inequality Eq.~\eqref{eq:Ik-dominates-Iij} implies
\begin{equation}\label{eq:monotone-step}
\ell_k^{\mathrm{MI}}\!\left(I_{ij}^{(k)}(\rho)\right)\ \ge\ \ell_k^{\mathrm{MI}}\!\left(I_k(\rho)\right).
\end{equation}
By the definition of $s_k(\rho)$ in Eq.~\eqref{A12a}, the right-hand side of Eq.~\eqref{eq:monotone-step} equals $s_k(\rho)$, hence
\begin{equation}\label{eq:ell-lower-bounds-sk}
\ell_k^{\mathrm{MI}}\!\left(I_{ij}^{(k)}(\rho)\right)\ \ge\ s_k(\rho).
\end{equation}
Finally, applying the edgewise information-to-distance bound Eq.~\eqref{A10b} to the edge $(i,j)\in E_k$ gives
\begin{equation}
d_\rho\!\left(A_i^{(k)},A_j^{(k)}\right)\ \ge\ \ell_k^{\mathrm{MI}}\!\left(I_{ij}^{(k)}(\rho)\right).
\end{equation}
Combining this with Eq.~\eqref{eq:ell-lower-bounds-sk} yields Eq.~\eqref{L1xa}.
\end{proof}}
{\begin{lemma}\label{lem:coarseadditivity}
Let $(X,d)$ be a metric space and $\delta\ge0$. If $(x_0,\ldots,x_n)$ is 
$\delta$-aligned with $n \ge 1$ (Definition~\ref{def:delta_aligned_anchored}), then 
\begin{equation}\label{eq-lemma19}
d(x_0, x_n) \geq \sum_{m=1}^{n} d(x_{m-1}, x_m) - 2(n-1) \delta.
\end{equation}
\end{lemma}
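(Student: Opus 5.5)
We argue by induction on $n$, using only the $\delta$-alignment inequality with the anchored triples $(0,\,n-1,\,n)$. For $n=1$ the claim reads $d(x_0,x_1)\ge d(x_0,x_1)$, which is trivially true because the correction term $-2(n-1)\delta$ vanishes. For the inductive step, assume the bound holds for the sequence $(x_0,\dots,x_{n-1})$ with $n\ge 2$. This sequence is itself $\delta$-aligned, since every triple $0\le i<j<k\le n-1$ is also a triple of the full sequence. The induction hypothesis therefore gives
\begin{equation*}
d(x_0,x_{n-1})\ \ge\ \sum_{m=1}^{n-1} d(x_{m-1},x_m)-2(n-2)\delta .
\end{equation*}

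Next we apply the alignment inequality of Definition~\ref{def:delta_aligned_anchored} to the triple $(i,j,k)=(0,n-1,n)$, which is admissible because $n\ge2$ ensures $0<n-1<n$. This yields
\begin{equation*}
d(x_0,x_n)\ \ge\ d(x_0,x_{n-1})+d(x_{n-1},x_n)-2\delta .
\end{equation*}
Substituting the inductive bound into the right-hand side gives $\sum_{m=1}^{n}d(x_{m-1},x_m)-2(n-2)\delta-2\delta$, which is exactly Eq.~\eqref{eq-lemma19}.

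The only step needing care is the bookkeeping of the losses. Each application of alignment costs exactly $2\delta$, and exactly $n-1$ merges are needed to reduce $n$ edges to a single segment, so the total loss is $2(n-1)\delta$. The prefix sequence inherits $\delta$-alignment automatically, so the induction closes without further work. Neither the triangle inequality nor any metric property beyond nonnegativity is used, so the argument is purely combinatorial.
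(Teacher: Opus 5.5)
Your proof is correct and follows essentially the same route as the paper's: induction on $n$, applying the $\delta$-alignment inequality to the triple $(0,n-1,n)$ and the inductive hypothesis to the prefix $(x_0,\dots,x_{n-1})$. Your explicit remark that the prefix inherits $\delta$-alignment is a small point the paper leaves implicit.
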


{
\begin{proof}
	We proceed by induction on $n$. The base case $n=1$ is trivial since both sides of the inequality coincide. Assume $n \ge 2$ and that the bound holds for sequences of length $n-1$. By the definition of $\delta$-alignment applied to the indices $0 < n-1 < n$, we have
	\begin{equation}
		d(x_0, x_n) \ge d(x_0, x_{n-1}) + d(x_{n-1}, x_n) - 2\delta.
	\end{equation}
	Applying the inductive hypothesis to the subsequence $(x_0, \dots, x_{n-1})$ yields:
	\begin{equation}
		d(x_0, x_{n-1}) \ge \sum_{m=1}^{n-1} d(x_{m-1}, x_m) - 2(n-2)\delta.
	\end{equation}
	Substituting this lower bound into the alignment inequality gives:
	\begin{align}
		d(x_0, x_n) &\ge \left( \sum_{m=1}^{n-1} d(x_{m-1}, x_m) - 2(n-2)\delta \right) + d(x_{n-1}, x_n) - 2\delta \nonumber \\
		&= \sum_{m=1}^n d(x_{m-1}, x_m) - 2(n-1)\delta.
	\end{align}
	This completes the induction.
\end{proof}
}

}
\begin{lemma}\label{lem:diameterpair}
Let $G=(V,E)$ be a finite connected graph. Then there exist $u,v\in V$ such that $d_G(u,v)=\mathrm{diam}(G)$.
\end{lemma}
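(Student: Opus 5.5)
The plan is to use finiteness: $d_G$ takes values in $\mathbb{N}_0$ on the finite set $V\times V$, and a maximum over a finite nonempty set is always attained.

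First, $d_G$ is a well-defined function $V\times V\to\mathbb{N}_0$. Connectedness of $G$ guarantees that, for each pair $(u,v)$, the set of lengths $n$ in Eq.~\eqref{D2d} admitting a walk $(w_0,\dots,w_n)$ from $u$ to $v$ is nonempty. For $u=v$ the trivial walk with $n=0$ already qualifies. Any nonempty subset of $\mathbb{N}_0$ has a least element by well-ordering, so the minimum in Eq.~\eqref{D2d} exists and $d_G(u,v)\in\mathbb{N}_0$.

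Second, $V\times V$ is finite and nonempty, since $V\neq\varnothing$ is implicit in $G$ being a connected graph. Hence the image $\{d_G(u,v):u,v\in V\}$ is a finite nonempty subset of $\mathbb{N}_0$. It therefore has a largest element $M$, and this is exactly $\mathrm{diam}(G)$ as defined after Eq.~\eqref{D2d}. Because $M$ belongs to the image, there is some pair $(u,v)\in V\times V$ with $d_G(u,v)=M=\mathrm{diam}(G)$, which is the claim.

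No step here is a genuine obstacle. The only point needing care is the existence of $d_G(u,v)$ as a minimum over a nonempty set, and that is where connectedness is used. In the subsequent application to Theorem~2, this lemma will supply endpoints $u,v\in V_k$ realising $D_k$. Combined with Lemma~\ref{lem:edgeuniform} and Lemma~\ref{lem:coarseadditivity} along the $\delta$-aligned shortest path of Eq.~\eqref{A13a}, those endpoints give $\mathrm{Diam}_\rho\ge D_k s_k(\rho)-2(D_k-1)\delta$ for each $k$.
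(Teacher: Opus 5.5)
Your argument is correct and is essentially the paper's proof: both observe that $\{d_G(u,v):u,v\in V\}$ is a finite subset of $\mathbb{N}_0$, so its maximum $\mathrm{diam}(G)$ is attained by some pair. You also spell out that connectedness and $V\neq\varnothing$ make $d_G$ well-defined and the set nonempty, a small addition the paper leaves implicit.
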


\begin{proof}
Since $V$ is finite, the finite set $\{d_G(i,j):i,j\in V\}\subset\mathbb{N}_0$ has a maximum, and by definition of $\mathrm{diam}(G)$ that maximum equals $\mathrm{diam}(G)$. Choose $(u,v)$ attaining the maximum.
\end{proof}

{\begin{lemma}[Global alignment from Morse stability of quasi-geodesics]\label{lem:hyperbolicitysufficient}
Let $(X,d)$ be a geodesic $\delta_0$-hyperbolic metric space in the sense of Gromov, and let
\begin{equation}
\gamma:[0,L]\to X
\end{equation}
be a $(\lambda,c)$-quasi-geodesic, with $\lambda\ge 1$ and $c\ge 0$. Then there exists a constant
\begin{equation}
R=R(\delta_0,\lambda,c)\ge 0
\end{equation}
such that the following holds.
For every triple of parameters
\begin{equation}
0\le s\le t\le u\le L,
\end{equation}
one has
\begin{equation}\label{eq:global_alignment_quasigeodesic}
d\bigl(\gamma(s),\gamma(u)\bigr)
\ge
d\bigl(\gamma(s),\gamma(t)\bigr)
+
d\bigl(\gamma(t),\gamma(u)\bigr)
-
2R.
\end{equation}

In particular, if
\begin{equation}
0\le t_0<t_1<\cdots<t_n\le L
\end{equation}
and
\begin{equation}
x_m:=\gamma(t_m)\qquad (m=0,\dots,n),
\end{equation}
then the sampled sequence $(x_0,\dots,x_n)$ is $R$-aligned in the sense that for every
\begin{equation}
0\le i<j<k\le n
\end{equation}
one has
\begin{equation}\label{eq:sampled_R_aligned}
d(x_i,x_k)\ge d(x_i,x_j)+d(x_j,x_k)-2R.
\end{equation}
\end{lemma}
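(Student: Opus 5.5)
The plan is to reduce \eqref{eq:global_alignment_quasigeodesic} to the Morse (stability of quasi-geodesics) lemma in Gromov hyperbolic spaces, followed by a single application of the triangle inequality. The sampled statement \eqref{eq:sampled_R_aligned} is then just \eqref{eq:global_alignment_quasigeodesic} evaluated at $(s,t,u)=(t_i,t_j,t_k)$. Under that choice, $(x_0,\dots,x_n)$ is $R$-aligned in the sense of Definition~\ref{def:delta_aligned_anchored}. This is exactly the input that Lemma~\ref{lem:coarseadditivity} needs, with $\delta=R$.

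\textbf{Step 1 (restriction).} Fix $0\le s\le t\le u\le L$. If $s=t$ or $t=u$, the inequality holds trivially, even with $R=0$. Otherwise, consider the restriction $\gamma|_{[s,u]}$. It is again a $(\lambda,c)$-quasi-geodesic with the same constants, since the defining two-sided inequality $\lambda^{-1}|a-b|-c\le d(\gamma(a),\gamma(b))\le\lambda|a-b|+c$ is inherited by subintervals.

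\textbf{Step 2 (Morse lemma).} I will invoke the stability theorem, in the form found in Bridson--Haefliger III.H.1.7, which does not require $\gamma$ to be continuous. It provides a constant $H=H(\delta_0,\lambda,c)$ with the following property: for any geodesic segment $\sigma=[\gamma(s),\gamma(u)]$, which exists because $X$ is geodesic, the Hausdorff distance between $\sigma$ and the image $\gamma([s,u])$ is at most $H$. In particular there is a point $p\in\sigma$ with $d(\gamma(t),p)\le H$. This step is the only place hyperbolicity enters, and it is the main point of the proof. The key observation is that $H$ depends only on $(\delta_0,\lambda,c)$, and not on the lengths $u-s$ or $L$, which is what makes $R$ uniform.

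\textbf{Step 3 (triangle inequality).} Because $p$ lies on the geodesic $\sigma$, we have
\[
d(\gamma(s),\gamma(u))=d(\gamma(s),p)+d(p,\gamma(u)).
\]
Applying the triangle inequality to each term gives
\[
d(\gamma(s),p)\ge d(\gamma(s),\gamma(t))-H,\qquad d(p,\gamma(u))\ge d(\gamma(t),\gamma(u))-H.
\]
Summing these yields \eqref{eq:global_alignment_quasigeodesic} with $R:=H(\delta_0,\lambda,c)$.

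The only care needed is to cite a version of the Morse lemma valid for possibly discontinuous quasi-geodesics, or else to first tame $\gamma$ to a continuous quasi-geodesic at bounded distance and absorb that distance into $R$. Everything else is elementary.
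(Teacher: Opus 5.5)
Your proposal is correct and follows the paper's proof essentially step for step. Both arguments restrict $\gamma$ to $[s,u]$ and use Morse stability to find a point on a geodesic $[\gamma(s),\gamma(u)]$ within distance $R$ of $\gamma(t)$, then conclude with two triangle inequalities and evaluate at $(t_i,t_j,t_k)$. Your remark about citing a version of the Morse lemma valid for possibly discontinuous quasi-geodesics (as in Bridson--Haefliger III.H.1.7) is a small precision that the paper leaves implicit.
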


\begin{proof}
By the Morse stability theorem for quasi-geodesics in $\delta_0$-hyperbolic geodesic spaces, there exists a constant
\begin{equation}
R=R(\delta_0,\lambda,c)\ge 0
\end{equation}
such that for every $(\lambda,c)$-quasi-geodesic segment in $X$, there exists a geodesic segment joining the same endpoints whose image lies within Hausdorff distance at most $R$ from the image of the quasi-geodesic segment
\cite{BridsonHaefliger1999,Gromov1987,GhysdeLaHarpe1990}.
Fix
\begin{equation}
0\le s\le t\le u\le L,
\end{equation}
and define
\begin{equation}
x:=\gamma(s),\qquad y:=\gamma(t),\qquad z:=\gamma(u).
\end{equation}
Since $\gamma$ is a $(\lambda,c)$-quasi-geodesic on $[0,L]$, its restriction
\begin{equation}
\gamma|_{[s,u]}:[s,u]\to X
\end{equation}
is again a $(\lambda,c)$-quasi-geodesic with endpoints $x$ and $z$. Therefore, by Morse stability, there exists a geodesic segment
\begin{equation}
\sigma_{s,u}:[0,\ell]\to X,
\qquad
\ell=d(x,z),
\end{equation}
parameterized by arc length, such that
\begin{equation}
\sigma_{s,u}(0)=x,
\qquad
\sigma_{s,u}(\ell)=z,
\end{equation}
and the image of $\gamma|_{[s,u]}$ lies in the closed $R$-neighborhood of the image of $\sigma_{s,u}$.
Since $y=\gamma(t)$ belongs to the image of $\gamma|_{[s,u]}$, there exists some $r\in[0,\ell]$ such that
\begin{equation}\label{eq:y_close_to_sigma}
d\bigl(y,\sigma_{s,u}(r)\bigr)\le R.
\end{equation}
Because $\sigma_{s,u}$ is parameterized by arc length, one has
\begin{equation}
d\bigl(x,\sigma_{s,u}(r)\bigr)=r,
\qquad
d\bigl(\sigma_{s,u}(r),z\bigr)=\ell-r.
\end{equation}
Using the triangle inequality together with Eq.~\eqref{eq:y_close_to_sigma}, we obtain
\begin{equation}
d(x,y)
\le
d\bigl(x,\sigma_{s,u}(r)\bigr)+d\bigl(\sigma_{s,u}(r),y\bigr)
\le
r+R,
\end{equation}
and similarly
\begin{equation}
d(y,z)
\le
d\bigl(y,\sigma_{s,u}(r)\bigr)+d\bigl(\sigma_{s,u}(r),z\bigr)
\le
R+(\ell-r).
\end{equation}
Adding these two inequalities gives
\begin{equation}
d(x,y)+d(y,z)\le r+R+R+(\ell-r)=\ell+2R=d(x,z)+2R,
\end{equation}
which is equivalent to
\begin{equation}
d(x,z)\ge d(x,y)+d(y,z)-2R.
\end{equation}
This proves Eq.~\eqref{eq:global_alignment_quasigeodesic}.
For the final claim, fix arbitrary indices
\begin{equation}
0\le i<j<k\le n.
\end{equation}
Apply Eq.~\eqref{eq:global_alignment_quasigeodesic} with
\begin{equation}
s=t_i,\qquad t=t_j,\qquad u=t_k.
\end{equation}
Since $x_m=\gamma(t_m)$ for all $m$, this yields
\begin{equation}
d(x_i,x_k)\ge d(x_i,x_j)+d(x_j,x_k)-2R,
\end{equation}
which is exactly Eq.~\eqref{eq:sampled_R_aligned}. Hence $(x_0,\dots,x_n)$ is $R$-aligned.
\end{proof}}

{
\begin{corollary}[Multiscale diameter bound]\label{corollary2}
Under the standing assumptions of Section V, for every state \(\rho\in S\),
\begin{equation}
\operatorname{Diam}_\rho
\ge
\max_{0\le k\le K}
\left\{
D_k\,\ell^{\mathrm{MI}}_k\!\bigl(I_k(\rho)\bigr)
-
2(D_k-1)\delta
\right\}.
\end{equation}
\end{corollary}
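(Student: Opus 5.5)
The argument works one scale at a time and then takes the maximum. It chains the edgewise bounds along a diameter-realizing shortest path, using the alignment hypothesis \eqref{A13a} together with Lemma~\ref{lem:coarseadditivity}.

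Fix $\rho\in\mathcal S$ and a scale $k$. By Lemma~\ref{lem:diameterpair} we choose $u,v\in V_k$ with $d_{G_k}(u,v)=D_k$. If $D_k=0$, the claimed scale-$k$ term is $0\cdot s_k(\rho)+2\delta\ge 0$. Strictly this is $+2\delta$, which is not obviously below $\mathrm{Diam}_\rho$. We therefore restrict to $D_k\ge1$, which is automatic whenever $|V_k|\ge2$, and note the degenerate single-cell case separately. For $D_k\ge 1$ the coarse-additivity hypothesis \eqref{A13a} supplies a shortest path $(v_0,\dots,v_n)$ from $u$ to $v$ with $n=D_k$. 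It also supplies points $x_m\in X_\rho\cap A^{(k)}_{v_m}$ forming a $\delta$-aligned sequence in the sense of Definition~\ref{def:delta_aligned_anchored}.

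Lemma~\ref{lem:coarseadditivity} then gives
$d_\rho(x_0,x_n)\ge \sum_{m=1}^{n} d_\rho(x_{m-1},x_m)-2(n-1)\delta$.
Each consecutive pair $(v_{m-1},v_m)$ is an edge of $G_k$, because it lies on a path in $G_k$. Since $x_{m-1}\in X_\rho\cap A^{(k)}_{v_{m-1}}$ and $x_m\in X_\rho\cap A^{(k)}_{v_m}$, the definition \eqref{A9b} of region separation as an infimum gives
$d_\rho(x_{m-1},x_m)\ge d_\rho(A^{(k)}_{v_{m-1}},A^{(k)}_{v_m})$.
Lemma~\ref{lem:edgeuniform} bounds the right-hand side below by $s_k(\rho)=\ell^{\mathrm{MI}}_k(I_k(\rho))$. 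Summing the $n=D_k$ edge terms yields
$d_\rho(x_0,x_n)\ge D_k\,\ell^{\mathrm{MI}}_k(I_k(\rho))-2(D_k-1)\delta$.
Finally, $x_0,x_n\in X_\rho$, so $\mathrm{Diam}_\rho\ge d_\rho(x_0,x_n)$ by \eqref{A1c}.

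The bound holds for every $k\in\{0,\dots,K\}$. The index set is finite, so we may pass to the maximum over $k$, which gives the corollary.

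The only delicate step is matching the hypotheses. Lemma~\ref{lem:coarseadditivity} needs the full triple-wise alignment \eqref{A13a}, not just the path-level inequality \eqref{A12}. It also matters that the path used is the diameter-realizing one. The assumption \eqref{A13a} is stated for every pair $u,v$, so it applies to the particular pair chosen via Lemma~\ref{lem:diameterpair}. Lemma~\ref{lem:hyperbolicitysufficient} explains when such alignment is available, but it is not needed for this deduction. The $D_k=0$ edge case should be handled explicitly, either by assuming each $G_k$ has at least two vertices or by reading the scale-$k$ term as $0$ there.
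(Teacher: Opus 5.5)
Your proof is correct and follows the paper's argument step for step: choose a diameter-realizing pair via Lemma~\ref{lem:diameterpair}, take the $\delta$-aligned shortest path supplied by the coarse-additivity hypothesis, bound each consecutive distance below by $s_k(\rho)$ via the infimum definition and Lemma~\ref{lem:edgeuniform}, chain the bounds with Lemma~\ref{lem:coarseadditivity}, and maximize over $k$. Your separate treatment of the degenerate case $D_k=0$ is a legitimate refinement that the paper's proof silently skips: there $E_k=\varnothing$ so $I_k(\rho)$ is undefined, Lemma~\ref{lem:coarseadditivity} needs $n\ge 1$, and the literal scale-$k$ term equals $2\delta$.
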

\begin{proof}
Fix \(k\in\{0,\dots,K\}\). By Lemma~\ref{lem:diameterpair}, choose \(u,v\in V_k\) with
\(d_{G_k}(u,v)=D_k\). By the standing coarse-additivity hypothesis, there exists
a shortest path
\begin{equation}
v_0\sim v_1\sim\cdots\sim v_{D_k}
\end{equation}
from \(u\) to \(v\) together with points
\(x_m\in X_\rho\cap A^{(k)}_{v_m}\) forming a \(\delta\)-aligned sequence.
By Lemma~\ref{lem:edgeuniform}, each edge satisfies
\begin{equation}
d_\rho\!\bigl(A^{(k)}_{v_{m-1}},A^{(k)}_{v_m}\bigr)\ge s_k(\rho),
\end{equation}
hence \(d_\rho(x_{m-1},x_m)\ge s_k(\rho)\) for every \(m\). Applying Lemma~\ref{lem:coarseadditivity}
gives
\begin{equation}
d_\rho(x_0,x_{D_k})
\ge
D_k\,s_k(\rho)-2(D_k-1)\delta
=
D_k\,\ell^{\mathrm{MI}}_k\!\bigl(I_k(\rho)\bigr)-2(D_k-1)\delta.
\end{equation}
Since \(\operatorname{Diam}_\rho\) is the supremum of \(d_\rho(x,y)\) over
\(x,y\in X_\rho\), this implies
\begin{equation}
\operatorname{Diam}_\rho
\ge
D_k\,\ell^{\mathrm{MI}}_k\!\bigl(I_k(\rho)\bigr)-2(D_k-1)\delta.
\end{equation}
As \(k\) was arbitrary, taking the maximum over \(0\le k\le K\) proves the
claim.
\end{proof}}

{Dimensional consistency is enforced by construction. The metric $d_\rho$ is a length in the chosen operational
normalization, hence $\mathrm{Diam}_\rho$ defined in Eq.~\eqref{A1c} has units of length. The alignment parameter
$\delta$ in the $\delta$-alignment inequalities Eq.~\eqref{A13a} has the same units as $d_\rho$, while the graph
diameter $D_k$ in Eq.~\eqref{A8b} is dimensionless. Since $\ell_k^{\mathrm{MI}}:(0,\infty)\to[0,\infty)$ maps
a dimensionless mutual information input to a length scale via the edgewise implication Eq.~\eqref{A10b}, it follows
that each quantity
$D_k\,\ell_k^{\mathrm{MI}}\!\left(I_k(\rho)\right)$ and $2(D_k-1)\delta$ has units of length, and therefore the
right-hand side of Corollary~\ref{corollary2} is a length, as required.
Under a uniform rescaling of the operational metric $d_\rho\mapsto \lambda\, d_\rho$ with $\lambda>0$, one has
$\mathrm{Diam}_\rho\mapsto \lambda\,\mathrm{Diam}_\rho$ by Eq.~\eqref{A1c}, and the region-to-region separations
defined in Eq.~\eqref{A9b} rescale as $d_\rho(R,S)\mapsto \lambda\, d_\rho(R,S)$. The $\delta$-alignment inequalities
\eqref{A13a} are preserved in form provided $\delta\mapsto \lambda\,\delta$. If, in addition, one replaces
$\ell_k^{\mathrm{MI}}$ by $\lambda\,\ell_k^{\mathrm{MI}}$ (pointwise) so that the edgewise implication Eq.~\eqref{A10b}
remains invariant under the rescaling, then every term inside the maximum in Corollary~\ref{corollary2} rescales by $\lambda$,
and hence Corollary~\ref{corollary2} is covariant under this uniform change of length units.
{In the degenerate single-scale case $K=0$, the bound Corollary~\ref{corollary2} reduces to}
\begin{equation}\label{T1b}
\operatorname{Diam}_\rho
\ge
D_0\,\ell^{\mathrm{MI}}_0\!\bigl(I_0(\rho)\bigr)-2(D_0-1)\delta
\end{equation}}
i.e. the single-scale specialization of Corollary~\ref{corollary2}.
Finally, the maximum in Corollary~\ref{corollary2} is finite because the index set $\{0,1,\dots,K\}$ is finite, each $D_k$ is
finite by Eq.~\eqref{A8b}, each $I_k(\rho)$ is finite (and well-defined when $E_k\neq\varnothing$) by Eq.~\eqref{A11a},
and $\ell_k^{\mathrm{MI}}\!\left(I_k(\rho)\right)$ is finite by the codomain of $\ell_k^{\mathrm{MI}}$
in Eq.~\eqref{A10b}.

\section{RT/QES compatibility and the connectivity threshold}\label{app:rt-qes}
\renewcommand{\theequation}{F.\arabic{equation}}
Fix an integer $d\ge 2$ and set $n:=d+1$. Let $(\mathcal{M},g)$ be a connected, oriented, time-oriented, globally hyperbolic Lorentzian $(n)$-manifold with conformal boundary $\partial\mathcal{M}\cong \mathbb{R}\times \Sigma$ where $\Sigma$ is a connected, compact, oriented $(d-1)$-manifold. Assume $(\mathcal{M},g)$ is asymptotically locally AdS in the sense that there exists a smooth defining function $z$ on a neighborhood of the boundary with $z=0$ and $dz\neq 0$ on $\partial\mathcal{M}$ and a smooth Lorentzian metric $\bar g$ on $\bar{\mathcal{M}}:=\mathcal{M}\cup\partial\mathcal{M}$ such that
\begin{equation} \label{A1d}
g=\frac{\ell_{\mathrm{AdS}}^{2}}{z^{2}}\bar g,\qquad \bar g|_{z=0}= -dt^{2}+h,
\end{equation}
for a fixed Riemannian metric $h$ on $\Sigma$ and a fixed boundary time function $t$ on $\partial\mathcal{M}$. Assume there exists a complete, hypersurface-orthogonal timelike Killing field $\xi$ on $\mathcal{M}$ whose restriction to $\partial\mathcal{M}$ equals $\partial_t$, and assume the spacetime is time-reflection symmetric with respect to the hypersurface $t=0$ in the sense that there exists an isometry $\Theta:\mathcal{M}\to\mathcal{M}$ with $\Theta^2=\mathrm{id}$, $\Theta^*g=g$, and $\Theta_* \xi=-\xi$, and whose fixed-point set $\mathcal{N}:=\mathrm{Fix}(\Theta)$ is a smooth, connected, complete, spacelike Cauchy hypersurface with induced Riemannian metric $\gamma:=g|_{\mathcal{N}}$ and boundary $\partial\mathcal{N}=\Sigma$ identified with $\{t=0\}\times\Sigma\subset\partial\mathcal{M}$. Fix a semiclassical regime in which $G_N>0$ and $\ell_{\mathrm{AdS}}>0$ satisfy
\begin{equation} \label{A2c}
0<G_N\ll \ell_{\mathrm{AdS}}^{d-1},
\end{equation}
and assume bulk effective field theory is valid on $(\mathcal{M},g)$ at length scales $\gg \ell_{\mathrm{uv}}$ for some fixed $\ell_{\mathrm{uv}}>0$, with quantum fields in a state $\omega_\rho$ determined by the boundary state $\rho$ on a code subspace, and assume that the backreaction of bulk quantum fields on $g$ is incorporated only through the semiclassical expansion described explicitly in the comparison inequalities below \cite{BirrellDavies1982,Polchinski1998,Harlow2016}. Fix a boundary region $R\subset\Sigma$ such that $R$ is open in $\Sigma$ and $\partial R$ is a compact, embedded, $C^2$ codimension-one submanifold of $\Sigma$. Define the class of admissible hypersurfaces anchored to $R$ on the time-symmetric slice $\mathcal{N}$ by
\begin{multline} \label{A3d}
\mathcal{X}(R):=\Bigl\{\chi\subset \mathcal{N}:\chi\ \text{is a compact, embedded, oriented $C^2$ hypersurface in $\mathcal{N}$ with}\ \partial\chi\\=\partial R\ \text{as subsets of}\ \Sigma,\ \exists\ \Omega\subset\mathcal{N}\ \text{compact with piecewise $C^2$ boundary and}\ \partial\Omega=\overline{R}\cup \chi\Bigr\},
\end{multline}
where $\partial\Omega$ is taken with the induced orientation so that $\partial\Omega$ equals $\overline{R}$ on $\Sigma$ and equals $\chi$ in the interior of $\mathcal{N}$, and where $\overline{R}$ denotes the closure of $R$ in $\Sigma$. The requirement $\partial\Omega=\overline{R}\cup\chi$ constitutes the explicit bulk homology constraint for $\chi$ relative to $R$. Assume $\mathcal{X}(R)\neq\varnothing$.
Fix disjoint boundary regions $A,B\subset\Sigma$ of the above type satisfying
\begin{equation}\label{A4c}
\overline{A}\cap \overline{B}=\varnothing,\qquad
\operatorname{dist}_{h}(A,B):=\inf\{d_h(x,y):x\in A,\ y\in B\}\ge a_{\mathrm{uv}}>0,
\end{equation}
for a fixed ultraviolet separation scale $a_{\mathrm{uv}}>0$. Define $R_{AB}:=A\cup B$ and assume $\mathcal{X}(A)$, $\mathcal{X}(B)$, and $\mathcal{X}(R_{AB})$ are nonempty. Assume that for every $R\in\{A,B,R_{AB}\}$ the minimization problems defined in Eq.~\eqref{D9b} admit minimizers.
For $\varepsilon\in(0,\varepsilon_0]$ define the cutoff slice
\begin{equation} \label{D1e}
\mathcal{N}_\varepsilon:=\{p\in\mathcal{N}:z(p)\ge \varepsilon\},
\end{equation}
and for $\chi\in\mathcal{X}(R)$ define the regulated hypersurface $\chi_\varepsilon:=\chi\cap \mathcal{N}_\varepsilon$ and regulated area
\begin{equation} \label{D2e}
\mathrm{Area}_\varepsilon(\chi):=\mathcal{H}^{d-1}_\gamma(\chi_\varepsilon)=\int_{\chi_\varepsilon} dA_\gamma,
\end{equation}
where $\mathcal{H}^{d-1}_\gamma$ is the $(d-1)$-dimensional Hausdorff measure induced by $\gamma$ and $dA_\gamma$ is the induced area element on $\chi$ \cite{Federer1969,Simon1983,Morgan2016}. For each $\chi\in\mathcal{X}(R)$ choose a homology region $\Omega(\chi;R)\subset\mathcal{N}$ with $\partial\Omega(\chi;R)=\overline{R}\cup\chi$ and define $\Omega_\varepsilon(\chi;R):=\Omega(\chi;R)\cap \mathcal{N}_\varepsilon$. Fix a bulk regulator at length scale comparable to $\varepsilon$ such that the bulk effective QFT on $\mathcal{N}_\varepsilon$ admits a Hilbert-space factorization
\begin{equation} \label{D3e}
\mathcal{H}^{\mathrm{bulk}}_\varepsilon\simeq \mathcal{H}^{\mathrm{bulk}}_{\Omega_\varepsilon(\chi;R)}\otimes \mathcal{H}^{\mathrm{bulk}}_{\mathcal{N}_\varepsilon\setminus \Omega_\varepsilon(\chi;R)}
\end{equation}
for each admissible $\Omega_\varepsilon(\chi;R)$, and let $\rho^{\mathrm{bulk}}_\varepsilon$ be the regulated bulk density operator representing the bulk state $\omega_\rho$ on $\mathcal{H}^{\mathrm{bulk}}_\varepsilon$. Define the regulated bulk von Neumann entropy in natural-logarithm units by
\begin{equation} \label{D4c}
S^{\mathrm{bulk}}_\varepsilon(\chi;R):=-\operatorname{Tr}\bigl[\rho^{\mathrm{bulk}}_{\Omega_\varepsilon(\chi;R)}\ln \rho^{\mathrm{bulk}}_{\Omega_\varepsilon(\chi;R)}\bigr],
\end{equation}
where $\rho^{\mathrm{bulk}}_{\Omega_\varepsilon(\chi;R)}:=\operatorname{Tr}_{\mathcal{N}_\varepsilon\setminus \Omega_\varepsilon(\chi;R)}\rho^{\mathrm{bulk}}_\varepsilon$ is the reduced density operator with respect to Eq.~\eqref{D3e}. Fix a counterterm functional $C_\varepsilon(\chi)$ of the form
\begin{equation} \label{D5c}
\begin{aligned}
C_\varepsilon(\chi)
&:=\sum_{j=0}^{d-2} c_j\,\varepsilon^{j-(d-1)}\int_{\chi\cap\{z=\varepsilon\}} \mathcal{I}_j(\gamma_\chi,K_\chi)\,dA_{\gamma|_{z=\varepsilon}}\\
&\quad +c_{\log}\,\ln\!\left(\frac{\varepsilon}{\ell_{\mathrm{ct}}}\right)\int_{\chi\cap\{z=\varepsilon\}}\mathcal{I}_{\log}(\gamma_\chi,K_\chi)\,dA_{\gamma|_{z=\varepsilon}},
\end{aligned}
\end{equation}
where $\ell_{\mathrm{ct}}>0$ is a fixed renormalization length, where $\gamma_\chi$ is the induced metric on $\chi$ and $K_\chi$ is its second fundamental form in $(\mathcal{N},\gamma)$, where $\mathcal{I}_j$ and $\mathcal{I}_{\log}$ are fixed scalar polynomials in the intrinsic curvature of $(\chi,\gamma_\chi)$ and contractions of $K_\chi$ with total scaling dimension $d-1-j$ and $0$ respectively, and where the coefficients $c_j,c_{\log}\in\mathbb{R}$ are fixed by the choice of bulk QFT and renormalization scheme \cite{SusskindUglum1994,Solodukhin2011,FaulknerLewkowyczMaldacena2013}. Define the regulated generalized entropy functional by
\begin{equation} \label{D6b}
S^{\mathrm{gen}}_\varepsilon(\chi;R):=\frac{1}{4G_N}\,\mathrm{Area}_\varepsilon(\chi)+S^{\mathrm{bulk}}_\varepsilon(\chi;R)+C_\varepsilon(\chi),
\end{equation}
and assume that for each fixed $R$ and each $\chi\in\mathcal{X}(R)$ the limit
\begin{equation} \label{D7b}
S^{\mathrm{gen}}(\chi;R):=\lim_{\varepsilon\downarrow 0} S^{\mathrm{gen}}_\varepsilon(\chi;R)\in\mathbb{R}
\end{equation}
exists and is finite, with the limit taken along the same regulator family for all $\chi$ and all $\rho$ in the state class, and that for any two $\chi_1,\chi_2\in\mathcal{X}(R)$ the difference $\lim_{\varepsilon\downarrow 0}\bigl(S^{\mathrm{gen}}_\varepsilon(\chi_1;R)-S^{\mathrm{gen}}_\varepsilon(\chi_2;R)\bigr)$ exists and equals $S^{\mathrm{gen}}(\chi_1;R)-S^{\mathrm{gen}}(\chi_2;R)$. {Whenever the following limit exists, we denote the finite renormalized bulk-entropic contribution by
\begin{equation}\label{D7b-ren}
S^{\mathrm{bulk,ren}}(\chi;R):=
\lim_{\varepsilon\downarrow0}\Big(S^{\mathrm{bulk}}_\varepsilon(\chi;R)+C_\varepsilon(\chi)\Big).
\end{equation}
In general the counterterm functional \(C_\varepsilon(\chi)\) need not admit a finite \(\varepsilon\downarrow0\) limit by itself; only the combination in Eq.~\eqref{D7b-ren} is assumed finite.} Define the Ryu-Takayanagi area functional and the generalized entropy functional in bits by
\begin{equation} \label{D8b}
\mathcal{A}_\varepsilon(\chi):=\mathrm{Area}_\varepsilon(\chi),\qquad
S^{\mathrm{gen}}_{2}(\chi;R):=\frac{1}{\ln 2}\,S^{\mathrm{gen}}(\chi;R),
\end{equation}
and define the RT and QES minimization problems by
\begin{equation} \label{D9b}
\begin{aligned}
\chi_{\mathrm{RT}}(R)
&\in\operatorname*{arg\,min}_{\chi\in\mathcal{X}(R)}
\mathcal{A}_\varepsilon(\chi)
&&\text{for each fixed }\varepsilon>0,\\
\chi_{\mathrm{QES}}(R)
&\in\operatorname*{arg\,min}_{\chi\in\mathcal{X}(R)}
S^{\mathrm{gen}}(\chi;R).
\end{aligned}
\end{equation}
The RT relation is understood by passage to the renormalized limit in
differences,
assuming existence of minimizers. Define the entanglement wedge on the static slice by
\begin{equation} \label{D10a}
\Omega_{\mathrm{QES}}(R):=\Omega(\chi_{\mathrm{QES}}(R);R),
\end{equation}
and define the Lorentzian entanglement wedge by the domain of dependence in $\mathcal{M}$,
\begin{equation} \label{D11}
\mathrm{EW}(R):=D_{\mathcal{M}}\bigl(\Omega_{\mathrm{QES}}(R)\bigr),
\end{equation}
where $D_{\mathcal{M}}(\cdot)$ denotes the domain of dependence computed with the Lorentzian metric $g$ \cite{HubenyRangamaniTakayanagi2007,Wall2014,EngelhardtWall2015,AlmheiriDongHarlow2015}.

The compatibility condition is derived by computing the first variation of $S^{\mathrm{gen}}(\chi;R)$ and taking the semiclassical limit $G_N\to 0$ with controlled bulk entropy variation. Fix $R\subset\Sigma$ and $\chi\in\mathcal{X}(R)$. Let $n$ be a globally defined unit normal vector field along $\chi$ in $(\mathcal{N},\gamma)$ compatible with the chosen orientation. For $f\in C_c^\infty(\chi)$ satisfying $f|_{\partial\chi}=0$, define a normal variation $\chi_s$ of $\chi$ by the flow $\Phi_s$ generated by the vector field $V:=f n$ in a tubular neighborhood of $\chi$, so that $\chi_s:=\Phi_s(\chi)$ and $\chi_0=\chi$ and $\partial\chi_s=\partial\chi$ for all sufficiently small $|s|$. Define the mean curvature scalar $\mathcal{H}_\chi$ by
\begin{equation} \label{D12}
\mathcal{H}_\chi:=\operatorname{tr}_{\gamma_\chi} K_\chi,
\end{equation}
where $K_\chi$ is the second fundamental form and $\gamma_\chi$ is the induced metric on $\chi$.

{\begin{lemma}[First variation of area on the time-symmetric slice]\label{lem:firstvar-area}
Let \((\mathcal N,\gamma)\) be the Riemannian time-symmetric slice, let
\(\chi\subset \mathcal N\) be a compact embedded \(C^2\) hypersurface with fixed boundary
\(\partial\chi\), and let \(n\) be the outward-pointing unit normal along \(\chi\) in
\((\mathcal N,\gamma)\). Let \(f\in C_c^\infty(\chi^\circ)\), and let \(\chi_s\) be a smooth
one-parameter family of hypersurfaces with \(\chi_0=\chi\), generated by the normal variation field
\begin{equation}
V=f\,n .
\end{equation}
Then the first variation of area is
\begin{equation}\label{L:firstvar-area}
\left.\frac{d}{ds}\right|_{s=0}\mathrm{Area}(\chi_s)
=
\int_\chi f\,\mathcal{H}_\chi\,dA_\gamma,
\end{equation}
where the second fundamental form and mean curvature scalar are defined by
\begin{equation}\label{L:def-K-H}
(K_\chi)_{\alpha\beta}
:=
\gamma\!\bigl(\nabla^{(\gamma)}_{e_\alpha}n,e_\beta\bigr),
\qquad
\mathcal{H}_\chi:=\gamma_\chi^{\alpha\beta}(K_\chi)_{\alpha\beta},
\end{equation}
with \(e_\alpha:=\partial_\alpha X\) for a local embedding \(X\) of \(\chi\), \(\nabla^{(\gamma)}\)
the Levi-Civita connection of \(\gamma\), and \(\gamma_\chi\) the induced metric on \(\chi\).
Equivalently, by metric compatibility,
\begin{equation}\label{L:def-K-equiv}
(K_\chi)_{\alpha\beta}
=
-\gamma\!\bigl(n,\nabla^{(\gamma)}_{e_\alpha}e_\beta\bigr).
\end{equation}
With this convention, \(H_\chi>0\) for the outward unit normal of a round Euclidean sphere, and
hence an outward deformation with \(f>0\) increases the area to first order.
\end{lemma}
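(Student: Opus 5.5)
The plan is the classical first-variation computation: write the area functional through a local parametrisation, differentiate the area density under the flow $\Phi_s$, and identify the integrand as $f\,\mathcal H_\chi$. Fix local coordinates $u^\alpha$ on $\chi$ with embedding $X(u)$, and set $X_s(u):=\Phi_s(X(u))$ and $e_\alpha(s):=\partial_\alpha X_s$. The induced metric is then $(\gamma_\chi)_{\alpha\beta}(s)=\gamma(e_\alpha(s),e_\beta(s))$, and
\begin{equation}
\mathrm{Area}(\chi_s)=\int \sqrt{\det(\gamma_\chi(s))}\,d^{d-1}u .
\end{equation}
A partition of unity subordinate to coordinate patches globalises the computation. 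Because $f$ has compact support in $\chi^\circ$, the variation is supported away from $\partial\chi$. Differentiation can therefore be passed under the integral, and no boundary terms will arise.

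\textbf{Step 1.} I would apply Jacobi's formula,
\begin{equation}
\left.\frac{d}{ds}\right|_{0}\sqrt{\det\gamma_\chi(s)}=\tfrac12\sqrt{\det\gamma_\chi}\;\gamma_\chi^{\alpha\beta}\left.\frac{d}{ds}\right|_{0}(\gamma_\chi)_{\alpha\beta}(s).
\end{equation}

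\textbf{Step 2.} By metric compatibility,
\begin{equation}
\frac{d}{ds}\gamma(e_\alpha,e_\beta)=\gamma(\nabla_{\partial_s}e_\alpha,e_\beta)+\gamma(e_\alpha,\nabla_{\partial_s}e_\beta).
\end{equation}
The Levi-Civita connection is torsion-free and coordinate vector fields commute, so $\nabla_{\partial_s}e_\alpha=\nabla_{e_\alpha}V$ at $s=0$, where $V=fn$.

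\textbf{Step 3.} I would expand $\nabla_{e_\alpha}(fn)=(\partial_\alpha f)\,n+f\,\nabla_{e_\alpha}n$. The term $(\partial_\alpha f)\gamma(n,e_\beta)$ vanishes because $n$ is normal to $\chi$. What remains is
\begin{equation}
\left.\frac{d}{ds}\right|_{0}(\gamma_\chi)_{\alpha\beta}=f\big[\gamma(\nabla_{e_\alpha}n,e_\beta)+\gamma(e_\alpha,\nabla_{e_\beta}n)\big]=2f\,(K_\chi)_{\alpha\beta},
\end{equation}
where $(K_\chi)_{\alpha\beta}$ is exactly the definition \eqref{L:def-K-H}.

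\textbf{Step 4.} The second equality in Step 3 needs $K_\chi$ to be symmetric. To get this, I would differentiate $\gamma(n,e_\beta)=0$ along $e_\alpha$ to obtain the equivalent form \eqref{L:def-K-equiv}. Symmetry then follows from $\nabla_{e_\alpha}e_\beta=\nabla_{e_\beta}e_\alpha$, which is again torsion-freeness together with $[\partial_\alpha,\partial_\beta]=0$.

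\textbf{Step 5.} Contracting with $\tfrac12\gamma_\chi^{\alpha\beta}$ gives $f\,\mathcal H_\chi\sqrt{\det\gamma_\chi}$, and integrating yields \eqref{L:firstvar-area}.

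\textbf{Sign check.} For a round sphere of radius $r$ in Euclidean space with outward normal, $\nabla_{e_\alpha}n=e_\alpha/r$. This gives $K=\gamma_\chi/r$ and $\mathcal H=(d-1)/r>0$. The sign also agrees with $\mathrm{Area}\propto r^{d-1}$ increasing when $f>0$ pushes the surface outward.

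Most of this is routine. The step needing the most care is Step 2, the interchange $\nabla_{\partial_s}e_\alpha=\nabla_{e_\alpha}\partial_s X$ along the two-parameter map $(s,u)\mapsto\Phi_s(X(u))$. I would justify it by the symmetry lemma for pullback connections, which requires only $C^2$ regularity of $\chi$ and smoothness of the flow; both are assumed. A second point to confirm is that $d/ds$ commutes with $\int d^{d-1}u$. This holds because the integrand is $C^1$ in $(s,u)$ on the compact support of $f$.
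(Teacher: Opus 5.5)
Your proposal is correct and follows the paper's proof essentially step for step: the torsion-free interchange $\nabla_{\partial_s}e_\alpha=\nabla_{e_\alpha}\partial_s X_s$, then metric compatibility with the $\partial_\alpha f$ terms killed by normality, then symmetry of $K_\chi$ via the form $-\gamma(n,\nabla_{e_\alpha}e_\beta)$, then the determinant (Jacobi) formula, and finally the round-sphere sign check. The only differences are that you globalise explicitly with a partition of unity and justify differentiating under the integral, both of which the paper leaves implicit.
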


\begin{proof}
Choose local coordinates \(y^\alpha\) (\(\alpha=1,\dots,d-1\)) on \(\chi\), let
\(X:\chi\to\mathcal N\) be a local embedding of \(\chi=\chi_0\), and let
\(X_s\) be a smooth local embedding of \(\chi_s\) such that
\begin{equation}
\left.\frac{\partial X_s}{\partial s}\right|_{s=0}=V=f\,n.
\end{equation}
Write
\begin{equation}
e_\alpha(s):=\partial_\alpha X_s,
\qquad
e_\alpha:=e_\alpha(0)=\partial_\alpha X,
\qquad
(\gamma_{\chi_s})_{\alpha\beta}=\gamma\!\bigl(e_\alpha(s),e_\beta(s)\bigr).
\end{equation}

Since \(\nabla^{(\gamma)}\) is torsion-free and the coordinate vector fields
\(\partial_s\) and \(\partial_\alpha\) commute on the parameter domain, we have
\begin{equation}
\nabla^{(\gamma)}_{\partial_s}e_\alpha(s)
=
\nabla^{(\gamma)}_{e_\alpha(s)}\partial_s X_s .
\end{equation}
Therefore, using metric compatibility of \(\nabla^{(\gamma)}\),
\begin{align}
\left.\frac{\partial}{\partial s}\right|_{s=0}(\gamma_{\chi_s})_{\alpha\beta}
&=
\left.\frac{\partial}{\partial s}\right|_{s=0}
\gamma\!\bigl(e_\alpha(s),e_\beta(s)\bigr)\notag\\
&=
\gamma\!\bigl(\nabla^{(\gamma)}_{\partial_s}e_\alpha(s),e_\beta(s)\bigr)\Big|_{s=0}
+
\gamma\!\bigl(e_\alpha(s),\nabla^{(\gamma)}_{\partial_s}e_\beta(s)\bigr)\Big|_{s=0}\notag\\
&=
\gamma\!\bigl(\nabla^{(\gamma)}_{e_\alpha}(f n),e_\beta\bigr)
+
\gamma\!\bigl(e_\alpha,\nabla^{(\gamma)}_{e_\beta}(f n)\bigr).
\end{align}
Expanding the derivatives gives
\begin{align}
\gamma\!\bigl(\nabla^{(\gamma)}_{e_\alpha}(f n),e_\beta\bigr)
&=
e_\alpha(f)\,\gamma(n,e_\beta)
+
f\,\gamma\!\bigl(\nabla^{(\gamma)}_{e_\alpha}n,e_\beta\bigr),\\
\gamma\!\bigl(e_\alpha,\nabla^{(\gamma)}_{e_\beta}(f n)\bigr)
&=
e_\beta(f)\,\gamma(e_\alpha,n)
+
f\,\gamma\!\bigl(e_\alpha,\nabla^{(\gamma)}_{e_\beta}n\bigr).
\end{align}
Because \(n\) is orthogonal to \(T\chi\), one has
\begin{equation}
\gamma(n,e_\alpha)=\gamma(n,e_\beta)=0,
\end{equation}
so the derivative terms of \(f\) vanish and
\begin{equation}
\left.\frac{\partial}{\partial s}\right|_{s=0}(\gamma_{\chi_s})_{\alpha\beta}
=
f\,(K_\chi)_{\alpha\beta}+f\,(K_\chi)_{\beta\alpha}.
\end{equation}

It remains to note that \(K_\chi\) is symmetric. Using Eq.~\eqref{L:def-K-equiv},
the torsion-free property of \(\nabla^{(\gamma)}\), and the fact that
\([e_\alpha,e_\beta]\) is tangent to \(\chi\), we compute
\begin{align}
(K_\chi)_{\alpha\beta}-(K_\chi)_{\beta\alpha}
&=
-\gamma\!\bigl(n,\nabla^{(\gamma)}_{e_\alpha}e_\beta-\nabla^{(\gamma)}_{e_\beta}e_\alpha\bigr)\notag\\
&=
-\gamma\!\bigl(n,[e_\alpha,e_\beta]\bigr)=0.
\end{align}
Hence
\begin{equation}\label{L:metric-var}
\left.\frac{\partial}{\partial s}\right|_{s=0}(\gamma_{\chi_s})_{\alpha\beta}
=
2f\,(K_\chi)_{\alpha\beta}.
\end{equation}

Applying the determinant variation formula to the induced metric yields
\begin{align}
\left.\frac{\partial}{\partial s}\right|_{s=0}\sqrt{\det\gamma_{\chi_s}}
&=
\frac12 \sqrt{\det\gamma_\chi}\,
\gamma_\chi^{\alpha\beta}
\left.\frac{\partial}{\partial s}\right|_{s=0}(\gamma_{\chi_s})_{\alpha\beta}\notag\\
&=
\frac12 \sqrt{\det\gamma_\chi}\,
\gamma_\chi^{\alpha\beta}\,
2f\,(K_\chi)_{\alpha\beta}\notag\\
&=
f\,H_\chi\,\sqrt{\det\gamma_\chi}.
\end{align}
Integrating over \(\chi\) gives
\begin{equation}
\left.\frac{d}{ds}\right|_{s=0}\mathrm{Area}(\chi_s)
=
\int_\chi f\,H_\chi\,dA_\gamma,
\end{equation}
which is Eq.~\eqref{L:firstvar-area}.

Finally, for a round sphere of radius \(R\) in Euclidean space with outward unit normal \(n\), one has
\begin{equation}
\nabla^{(\gamma)}_X n=\frac{1}{R}X
\qquad
\text{for every tangent vector }X,
\end{equation}
and therefore
\begin{equation}
K_\chi(X,Y)=\frac{1}{R}\,\gamma_\chi(X,Y),
\qquad
H_\chi=\frac{d-1}{R}>0.
\end{equation}
Thus an outward deformation with \(f>0\) increases the area to first order, as claimed.
\end{proof}
}
{\begin{lemma}[First variation of generalized entropy]\label{lem:firstvar-gen}
Fix \(\chi\in \mathcal X(R)\) and an admissible normal variation \(\chi_s\) generated by
\(V=f\,n\), where \(f\in C_c^\infty(\chi^\circ)\). Assume the following.

\smallskip
\noindent
\textbf{(i)} For every \(\varepsilon\in(0,\varepsilon_0]\), the map
\begin{equation}
s\longmapsto S^{\mathrm{bulk}}_\varepsilon(\chi_s;R)
\end{equation}
is differentiable at \(s=0\).

\smallskip
\noindent
\textbf{(ii)} The first variation of the generalized entropy commutes with the renormalized limit:
\begin{equation}\label{L4e}
\left.\frac{d}{ds}\right|_{s=0}S^{\mathrm{gen}}(\chi_s;R)
=
\lim_{\varepsilon\downarrow0}
\left.\frac{d}{ds}\right|_{s=0}S^{\mathrm{gen}}_\varepsilon(\chi_s;R).
\end{equation}

\smallskip
\noindent
\textbf{(iii)} The renormalized bulk-entropic variation
\begin{equation}\label{L4e-bulkvar}
\mathcal V^{\mathrm{bulk,ren}}_{\chi,R}(f)
:=
\lim_{\varepsilon\downarrow0}
\left.\frac{d}{ds}\right|_{s=0}
\Bigl(
S^{\mathrm{bulk}}_\varepsilon(\chi_s;R)+C_\varepsilon(\chi_s)
\Bigr)
\end{equation}
exists.

Then, for all sufficiently small \(\varepsilon>0\), one has
\begin{equation}\label{L4e-countertermzero}
\left.\frac{d}{ds}\right|_{s=0}C_\varepsilon(\chi_s)=0,
\end{equation}
and consequently
\begin{equation}\label{L5e}
\left.\frac{d}{ds}\right|_{s=0}S^{\mathrm{gen}}(\chi_s;R)
=
\frac{1}{4G_N}\int_\chi f\,H_\chi\,dA_\gamma
+
\mathcal V^{\mathrm{bulk,ren}}_{\chi,R}(f).
\end{equation}

In particular, if for \(|s|\) sufficiently small the finite renormalized bulk-entropic contribution
\begin{equation}\label{D7b-renw}
S^{\mathrm{bulk,ren}}(\chi_s;R)
:=
\lim_{\varepsilon\downarrow0}
\Bigl(
S^{\mathrm{bulk}}_\varepsilon(\chi_s;R)+C_\varepsilon(\chi_s)
\Bigr)
\end{equation}
exists and the map
\begin{equation}
s\longmapsto S^{\mathrm{bulk,ren}}(\chi_s;R)
\end{equation}
is differentiable at \(s=0\) with
\begin{equation}\label{L4e-bulkderiv}
\left.\frac{d}{ds}\right|_{s=0}S^{\mathrm{bulk,ren}}(\chi_s;R)
=
\mathcal V^{\mathrm{bulk,ren}}_{\chi,R}(f),
\end{equation}
then Eq.~\eqref{L5e} may be written equivalently as
\begin{equation}\label{L5e-prime}
\left.\frac{d}{ds}\right|_{s=0}S^{\mathrm{gen}}(\chi_s;R)
=
\frac{1}{4G_N}\int_\chi f\,H_\chi\,dA_\gamma
+
\left.\frac{d}{ds}\right|_{s=0}S^{\mathrm{bulk,ren}}(\chi_s;R).
\end{equation}
\end{lemma}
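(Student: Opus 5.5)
The plan is to differentiate the regulated generalized entropy $S^{\mathrm{gen}}_\varepsilon(\chi_s;R)$ from Eq.~\eqref{D6b} term by term at fixed $\varepsilon$, and only then pass to the limit $\varepsilon\downarrow0$ using hypothesis (ii). The regulated functional is the sum of three pieces: $\mathrm{Area}_\varepsilon(\chi_s)/(4G_N)$, $S^{\mathrm{bulk}}_\varepsilon(\chi_s;R)$, and $C_\varepsilon(\chi_s)$. The second piece is differentiable by hypothesis (i). The first is handled by Lemma~\ref{lem:firstvar-area}.

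\textbf{Step 1: the counterterm does not vary.} This is Eq.~\eqref{L4e-countertermzero}. The variation field is $V=fn$ with $f\in C_c^\infty(\chi^\circ)$. The set $\operatorname{supp} f$ is a compact subset of the interior of $\chi$, so it has strictly positive $z$. Hence for all $\varepsilon$ below $\varepsilon_1:=\tfrac12\min_{\operatorname{supp}f} z$, the flow $\Phi_s$ is the identity on a neighbourhood of $\chi\cap\{z\le\varepsilon\}$. The tubular flow generated by a vector field vanishing there fixes those points.

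It follows that $\chi_s\cap\{z=\varepsilon\}=\chi\cap\{z=\varepsilon\}$ for small $|s|$. The local data $\gamma_{\chi_s}$ and $K_{\chi_s}$ entering $\mathcal I_j$ and $\mathcal I_{\log}$ also coincide there, since they depend only on the hypersurface in a neighbourhood of the cut. By Eq.~\eqref{D5c}, $C_\varepsilon(\chi_s)$ is therefore independent of $s$ for every such $\varepsilon$. This gives Eq.~\eqref{L4e-countertermzero}.

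\textbf{Step 2: the area term.} For $\varepsilon<\varepsilon_1$ the same support argument gives
\[
\mathrm{Area}_\varepsilon(\chi_s)-\mathrm{Area}(\chi_s\cap\{z\ge\varepsilon_1\})=\text{const in } s .
\]
Apply Lemma~\ref{lem:firstvar-area} to the compact piece, whose boundary is fixed under the variation. This yields
\[
\frac{d}{ds}\Big|_0\mathrm{Area}_\varepsilon(\chi_s)=\int_\chi f\,H_\chi\,dA_\gamma ,
\]
with a value independent of $\varepsilon$.

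\textbf{Step 3: assemble and pass to the limit.} Combining Steps 1 and 2, for small $\varepsilon$,
\[
\frac{d}{ds}\Big|_0 S^{\mathrm{gen}}_\varepsilon=\frac{1}{4G_N}\int_\chi fH_\chi\,dA_\gamma+\frac{d}{ds}\Big|_0\bigl(S^{\mathrm{bulk}}_\varepsilon+C_\varepsilon\bigr)(\chi_s).
\]
Take $\varepsilon\downarrow0$. By (ii) the left side tends to the derivative of $S^{\mathrm{gen}}$. By (iii) the second term on the right tends to $\mathcal V^{\mathrm{bulk,ren}}_{\chi,R}(f)$, and the first term is constant in $\varepsilon$. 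This is Eq.~\eqref{L5e}. The rewriting Eq.~\eqref{L5e-prime} is then immediate by substituting the assumed identity Eq.~\eqref{L4e-bulkderiv}.

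The main obstacle is the rigorous version of Step 1. One must confirm that the regulated counterterm depends only on the germ of $\chi$ near the cutoff surface $\{z=\varepsilon\}$. One must also confirm that the flow is stationary there, since $f$ need only vanish on the support boundary and not in a neighbourhood of it. I would handle this by choosing $\varepsilon$ strictly below the support of $f$, so that an open collar of the cut is untouched.
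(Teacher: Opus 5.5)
Your proposal is correct and follows the paper's proof essentially step for step. The counterterm $C_\varepsilon$ is frozen because $f$ vanishes on a collar of $\partial\chi$ below the cut, Lemma~\ref{lem:firstvar-area} gives the $\varepsilon$-independent area variation $\int_\chi fH_\chi\,dA_\gamma$, and hypotheses (ii)--(iii) pass to the limit. The only differences are cosmetic: you split off the constant collar piece of $\mathrm{Area}_\varepsilon$ before invoking the area lemma, where the paper applies it directly to $\chi_\varepsilon$, and you keep $S^{\mathrm{bulk}}_\varepsilon+C_\varepsilon$ together so that (iii) applies verbatim, rather than first dropping $C_\varepsilon$.
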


\begin{proof}
Differentiate Eq.~\eqref{D6b} at fixed \(\varepsilon\). Using Lemma~\ref{lem:firstvar-area}, we obtain
\begin{equation}\label{L6e}
\left.\frac{d}{ds}\right|_{s=0}S^{\mathrm{gen}}_\varepsilon(\chi_s;R)
=
\frac{1}{4G_N}
\left.\frac{d}{ds}\right|_{s=0}\mathrm{Area}_\varepsilon(\chi_s)
+
\left.\frac{d}{ds}\right|_{s=0}S^{\mathrm{bulk}}_\varepsilon(\chi_s;R)
+
\left.\frac{d}{ds}\right|_{s=0}C_\varepsilon(\chi_s).
\end{equation}

Because \(f\in C_c^\infty(\chi^\circ)\), there exists an open neighborhood
\(U\subset \chi\) of \(\partial\chi\) such that
\begin{equation}
\operatorname{supp}(f)\cap U=\varnothing.
\end{equation}
Hence the vector field \(V=f\,n\) vanishes identically on \(U\), so its flow is the identity on \(U\) for all sufficiently small \(|s|\).

Now \(z|_\chi\) is continuous and satisfies \(z=0\) on \(\partial\chi\). Therefore there exists \(\varepsilon_f>0\) such that
\begin{equation}
\chi\cap\{0\le z<2\varepsilon_f\}\subset U.
\end{equation}
For every \(0<\varepsilon<\varepsilon_f\) and all sufficiently small \(|s|\), the flow leaves the collar
\(\chi\cap\{0\le z<2\varepsilon_f\}\) pointwise fixed. In particular,
\begin{equation}
\chi_s\cap\{z=\varepsilon\}=\chi\cap\{z=\varepsilon\},
\end{equation}
and the induced geometric data on this cutoff section agree. Since the counterterm functional \(C_\varepsilon(\chi)\) in Eq.~\eqref{D5c} depends only on this cutoff data, it follows that
\begin{equation}
C_\varepsilon(\chi_s)=C_\varepsilon(\chi)
\qquad
(0<\varepsilon<\varepsilon_f,\ |s|\ \text{sufficiently small}),
\end{equation}
and therefore
\begin{equation}
\left.\frac{d}{ds}\right|_{s=0}C_\varepsilon(\chi_s)=0.
\end{equation}
This proves Eq.~\eqref{L4e-countertermzero}.

Next, because \(\operatorname{supp}(f)\subset \chi^\circ\), for all sufficiently small \(\varepsilon>0\) the support of \(f\) lies inside \(\chi_\varepsilon=\chi\cap \mathcal N_\varepsilon\). Hence Lemma~\ref{lem:firstvar-area} applied to the regulated hypersurface gives
\begin{equation}
\left.\frac{d}{ds}\right|_{s=0}\mathrm{Area}_\varepsilon(\chi_s)
=
\int_{\chi_\varepsilon} f\,H_\chi\,dA_\gamma
=
\int_\chi f\,H_\chi\,dA_\gamma.
\end{equation}
Substituting this and Eq.~\eqref{L4e-countertermzero} into Eq.~\eqref{L6e} yields, for all sufficiently small \(\varepsilon\),
\begin{equation}
\left.\frac{d}{ds}\right|_{s=0}S^{\mathrm{gen}}_\varepsilon(\chi_s;R)
=
\frac{1}{4G_N}\int_\chi f\,H_\chi\,dA_\gamma
+
\left.\frac{d}{ds}\right|_{s=0}S^{\mathrm{bulk}}_\varepsilon(\chi_s;R).
\end{equation}
Taking the limit \(\varepsilon\downarrow0\) and using Eq.~\eqref{L4e} together with the definition Eq.~\eqref{L4e-bulkvar}, we obtain
\begin{equation}
\left.\frac{d}{ds}\right|_{s=0}S^{\mathrm{gen}}(\chi_s;R)
=
\frac{1}{4G_N}\int_\chi f\,H_\chi\,dA_\gamma
+
\mathcal V^{\mathrm{bulk,ren}}_{\chi,R}(f),
\end{equation}
which is Eq.~\eqref{L5e}.

Finally, if \(S^{\mathrm{bulk,ren}}(\chi_s;R)\) is differentiable at \(s=0\) and satisfies Eq.~\eqref{L4e-bulkderiv}, then Eq.~\eqref{L5e} immediately becomes Eq.~\eqref{L5e-prime}.
\end{proof}}
Define quantum extremality on the time-symmetric slice by requiring the first variation in Lemma~\ref{lem:firstvar-gen} to vanish for all admissible $f$. Define
\begin{multline} \label{D13}
\chi\ \text{is quantum extremal for $R$}\Longleftrightarrow
\left.\frac{d}{ds}\right|_{s=0}S^{\mathrm{gen}}(\chi_s;R)=0\\ \text{for all admissible normal variations }\chi_s\ \text{generated by }V=fn\ \text{with }f\in C_c^\infty(\chi),\ f|_{\partial\chi}=0.
\end{multline} 

{\begin{theorem}[RT/QES compatibility and connectivity threshold]\label{thm:rt-qes-connectivity}
Assume the standing setup Eq.~\eqref{A1d}-\eqref{A4c}, the renormalization hypotheses
Eq.~\eqref{D6b}-\eqref{D7b-renw}, the quantum extremality condition Eq.~\eqref{D13}, and the
hypotheses of Lemma~\ref{lem:firstvar-gen} for the admissible variations considered below.

Assume further that for each \(R\in\{A,B,R_{AB}\}\) and each \(G_N\) in the semiclassical family
under consideration, there exists a \(C^2\) QES minimizer
\(\chi_{\mathrm{QES}}(R;G_N)\in\mathcal{X}(R)\). Then for every admissible normal variation
\(\chi_s\) of \(\chi_{\mathrm{QES}}(R;G_N)\) generated by \(V=fn\), with
\(f\in C_c^\infty(\chi_{\mathrm{QES}}(R;G_N)^\circ)\), one has
\begin{equation}\label{T1c}
\frac{1}{4G_N}
\left|
\int_{\chi_{\mathrm{QES}}(R;G_N)} f\,H_{\chi_{\mathrm{QES}}(R;G_N)}\,dA_\gamma
\right|
=
\left|
\lim_{\varepsilon\downarrow0}
\left.\frac{d}{ds}\right|_{s=0}
S^{\mathrm{bulk}}_\varepsilon(\chi_s;R)
\right|.
\end{equation}
If, in addition, the map \(s\mapsto S^{\mathrm{bulk,ren}}(\chi_s;R)\) is differentiable at
\(s=0\) and
\begin{equation}\label{T1c-prime-hyp}
\left.\frac{d}{ds}\right|_{s=0}S^{\mathrm{bulk,ren}}(\chi_s;R)
=
\lim_{\varepsilon\downarrow0}
\left.\frac{d}{ds}\right|_{s=0}
S^{\mathrm{bulk}}_\varepsilon(\chi_s;R),
\end{equation}
then Eq.~\eqref{T1c} may equivalently be written as
\begin{equation}\label{T1c-prime}
\frac{1}{4G_N}
\left|
\int_{\chi_{\mathrm{QES}}(R;G_N)} f\,H_{\chi_{\mathrm{QES}}(R;G_N)}\,dA_\gamma
\right|
=
\left|
\left.\frac{d}{ds}\right|_{s=0}
S^{\mathrm{bulk,ren}}(\chi_s;R)
\right|.
\end{equation}
If there exists \(M_R\ge0\), independent of \(G_N\) along the semiclassical family under
consideration, such that for all such variations
\begin{equation}\label{T2c}
\left|
\lim_{\varepsilon\downarrow0}
\left.\frac{d}{ds}\right|_{s=0}
S^{\mathrm{bulk}}_\varepsilon(\chi_s;R)
\right|
\le
M_R\int_{\chi_{\mathrm{QES}}(R;G_N)} |f|\,dA_\gamma,
\end{equation}
then
\begin{equation}\label{T3c}
\left|
\int_{\chi_{\mathrm{QES}}(R;G_N)} f\,H_{\chi_{\mathrm{QES}}(R;G_N)}\,dA_\gamma
\right|
\le
4G_N M_R\int_{\chi_{\mathrm{QES}}(R;G_N)} |f|\,dA_\gamma.
\end{equation}

Assume, in addition, that along the semiclassical family \(G_N\downarrow0\), the hypersurfaces
\(\chi_{\mathrm{QES}}(R;G_N)\) admit a common \(C^2\) identification with a fixed compact
reference manifold \(\widehat{\chi}_R\), i.e. there exist \(C^2\) diffeomorphisms
\begin{equation}\label{T3c-id}
\Phi_{R,G_N}:\widehat{\chi}_R\longrightarrow \chi_{\mathrm{QES}}(R;G_N)
\end{equation}
such that the pulled-back induced metrics and volume forms are uniformly controlled in \(C^1\),
and the Jacobians of \(\Phi_{R,G_N}\) are uniformly bounded above and below away from \(0\).
Then Eq.~\eqref{T3c} implies that the pulled-back mean curvatures
\begin{equation}
\widehat H_{R,G_N}:=
H_{\chi_{\mathrm{QES}}(R;G_N)}\circ \Phi_{R,G_N}
\end{equation}
converge to \(0\) in \(\mathcal D'(\widehat{\chi}_R)\) as \(G_N\downarrow0\). If, in addition,
the family \(\{\widehat H_{R,G_N}\}_{G_N}\) is uniformly bounded in
\(L^\infty(\widehat{\chi}_R)\), then
\begin{equation}
\widehat H_{R,G_N}\rightharpoonup^\ast 0
\qquad\text{in }L^\infty(\widehat{\chi}_R)
\quad\text{as }G_N\downarrow0,
\end{equation}
which is the RT extremality condition on the time-symmetric slice in this identified limit.

Assume now that, in the fixed renormalization scheme, there exists a renormalized area
functional \(\mathcal{A}^{\mathrm{ren}}(\chi)\), defined up to a common additive constant, such
that the following classwise infima are finite and attained:
\begin{align}
\mathcal{A}^{\mathrm{ren}}_{\mathrm{conn}}(A,B)
&:=
\inf\bigl\{
\mathcal{A}^{\mathrm{ren}}(\chi):
\chi\in\mathcal{X}(R_{AB}),\ \chi\ \text{connected}
\bigr\},
\label{T4cA}
\\
\mathcal{A}^{\mathrm{ren}}_{\mathrm{disc}}(A,B)
&:=
\inf\bigl\{
\mathcal{A}^{\mathrm{ren}}(\chi):
\chi\in\mathcal{X}(R_{AB}),\ \chi\ \text{disconnected}
\bigr\},
\label{T4cB}
\\
S^{\mathrm{gen}}_{\mathrm{conn}}(A,B)
&:=
\inf\bigl\{
S^{\mathrm{gen}}(\chi;R_{AB}):
\chi\in\mathcal{X}(R_{AB}),\ \chi\ \text{connected}
\bigr\},
\label{T4cC}
\\
S^{\mathrm{gen}}_{\mathrm{disc}}(A,B)
&:=
\inf\bigl\{
S^{\mathrm{gen}}(\chi;R_{AB}):
\chi\in\mathcal{X}(R_{AB}),\ \chi\ \text{disconnected}
\bigr\}.
\label{T4cD}
\end{align}
If, in the present two-component boundary setting, every disconnected admissible competitor for
\(R_{AB}=A\cup B\) is of the form \(\chi_A\cup\chi_B\) with
\(\chi_A\in\mathcal{X}(A)\) and \(\chi_B\in\mathcal{X}(B)\), then the disconnected infima
Eq.~\eqref{T4cB} and Eq.~\eqref{T4cD} may equivalently be written over such unions.

Then the classwise RT and QES connectivity criteria are:
\begin{equation}\label{T5c}
\begin{aligned}
\exists\ \text{a connected RT minimizer for }R_{AB}
&\Longleftrightarrow
\mathcal{A}^{\mathrm{ren}}_{\mathrm{conn}}(A,B)
\le
\mathcal{A}^{\mathrm{ren}}_{\mathrm{disc}}(A,B),\\
\exists\ \text{a disconnected RT minimizer for }R_{AB}
&\Longleftrightarrow
\mathcal{A}^{\mathrm{ren}}_{\mathrm{disc}}(A,B)
\le
\mathcal{A}^{\mathrm{ren}}_{\mathrm{conn}}(A,B),
\end{aligned}
\end{equation}
and
\begin{equation}\label{T6c}
\begin{aligned}
\exists\ \text{a connected QES minimizer for }R_{AB}
&\Longleftrightarrow
S^{\mathrm{gen}}_{\mathrm{conn}}(A,B)
\le
S^{\mathrm{gen}}_{\mathrm{disc}}(A,B),\\
\exists\ \text{a disconnected QES minimizer for }R_{AB}
&\Longleftrightarrow
S^{\mathrm{gen}}_{\mathrm{disc}}(A,B)
\le
S^{\mathrm{gen}}_{\mathrm{conn}}(A,B).
\end{aligned}
\end{equation}
In either line, strict inequality implies that every minimizer has the corresponding
connectivity, while equality implies that both connected and disconnected minimizers exist.

Assume finally that there exists \(S_\star\ge0\), independent of \(G_N\) along the
semiclassical family under consideration, such that for every connected admissible
\(\chi_{\mathrm c}\in\mathcal{X}(R_{AB})\) and every disconnected admissible
\(\chi_{\mathrm d}\in\mathcal{X}(R_{AB})\),
\begin{equation}\label{T7a}
\left|
S^{\mathrm{bulk,ren}}(\chi_{\mathrm c};R_{AB})
-
S^{\mathrm{bulk,ren}}(\chi_{\mathrm d};R_{AB})
\right|
\le
S_\star.
\end{equation}
Define
\begin{equation}
\Delta \mathcal{A}^{\mathrm{ren}}_{AB}
:=
\mathcal{A}^{\mathrm{ren}}_{\mathrm{conn}}(A,B)
-
\mathcal{A}^{\mathrm{ren}}_{\mathrm{disc}}(A,B),
\qquad
\Delta S^{\mathrm{gen}}_{AB}
:=
S^{\mathrm{gen}}_{\mathrm{conn}}(A,B)
-
S^{\mathrm{gen}}_{\mathrm{disc}}(A,B).
\end{equation}
Then
\begin{equation}\label{T8a}
\frac{1}{4G_N}\,\Delta \mathcal{A}^{\mathrm{ren}}_{AB}
-
S_\star
\le
\Delta S^{\mathrm{gen}}_{AB}
\le
\frac{1}{4G_N}\,\Delta \mathcal{A}^{\mathrm{ren}}_{AB}
+
S_\star.
\end{equation}
Consequently, whenever
\begin{equation}\label{T8a-threshold}
\left|
\Delta \mathcal{A}^{\mathrm{ren}}_{AB}
\right|
>
4G_N S_\star,
\end{equation}
the signs of \(\Delta S^{\mathrm{gen}}_{AB}\) and \(\Delta \mathcal{A}^{\mathrm{ren}}_{AB}\)
agree, and therefore the RT and QES connectivity decisions agree. More precisely,
\begin{equation}
\Delta \mathcal{A}^{\mathrm{ren}}_{AB}
<
-4G_N S_\star
\Longrightarrow
\Delta S^{\mathrm{gen}}_{AB}<0,
\qquad
\Delta \mathcal{A}^{\mathrm{ren}}_{AB}
>
4G_N S_\star
\Longrightarrow
\Delta S^{\mathrm{gen}}_{AB}>0.
\end{equation}
\end{theorem}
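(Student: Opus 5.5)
The proof splits into four independent pieces, each resting on Lemma~\ref{lem:firstvar-gen} or on elementary comparisons of infima.

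\textbf{Step 1: the extremality balance (T1c).} Apply Lemma~\ref{lem:firstvar-gen} to the minimizer $\chi_{\mathrm{QES}}(R;G_N)$. For $f\in C_c^\infty(\chi^\circ)$ the counterterm variation vanishes for small $\varepsilon$ by Eq.~\eqref{L4e-countertermzero}. Hence the first variation equals
\[
\frac{1}{4G_N}\int f H\,dA_\gamma+\lim_{\varepsilon\downarrow0}\left.\frac{d}{ds}\right|_{s=0}S^{\mathrm{bulk}}_\varepsilon(\chi_s;R).
\]
A minimizer is quantum extremal in the sense of Eq.~\eqref{D13}: the map $s\mapsto S^{\mathrm{gen}}(\chi_s;R)$ has an interior minimum at $s=0$ and is differentiable there by hypothesis (ii). So this sum vanishes. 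Taking absolute values gives Eq.~\eqref{T1c}, and under Eq.~\eqref{T1c-prime-hyp} one substitutes to get Eq.~\eqref{T1c-prime}. Inserting Eq.~\eqref{T2c} and multiplying by $4G_N$ gives Eq.~\eqref{T3c}.

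\textbf{Step 2: convergence of mean curvature.} Fix a test function $\hat f\in C_c^\infty(\widehat\chi_R^\circ)$. Push it forward to $f=(\hat f\,/J)\circ\Phi^{-1}_{R,G_N}$, where $J$ is the Jacobian. This makes $\int_\chi fH\,dA_\gamma=\int_{\widehat\chi_R}\hat f\,\widehat H_{R,G_N}\,d\widehat{A}$. The uniform Jacobian bounds give $\int|f|\,dA_\gamma\le C\|\hat f\|_{L^1}$ with $C$ independent of $G_N$. By Eq.~\eqref{T3c}, the pairing $\langle\widehat H_{R,G_N},\hat f\rangle$ is then bounded by $4G_N M_R C\|\hat f\|_{L^1}\to0$, which is distributional convergence. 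If the family is bounded in $L^\infty$, the pairings also converge to zero on a dense subset of $L^1$. Uniform boundedness then extends this to all of $L^1$, which is weak-$^\ast$ convergence.

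\textbf{Step 3: the connectivity criteria (T5c)--(T6c).} Every admissible $\chi$ is either connected or disconnected, so the global minimum is the smaller of the two classwise infima, both of which are attained. A connected minimizer exists iff the connected infimum equals the global minimum, i.e. iff $\mathcal A_{\mathrm{conn}}\le\mathcal A_{\mathrm{disc}}$; the disconnected case is symmetric. The same argument applies verbatim to $S^{\mathrm{gen}}$. Strict inequality excludes minimizers from the other class, and equality gives minimizers in both classes. The rewriting over unions $\chi_A\cup\chi_B$ is immediate from the stated form of the disconnected competitors.

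\textbf{Step 4: the sandwich (T8a).} This is where the care lies. Write $S^{\mathrm{gen}}=\mathcal A^{\mathrm{ren}}/(4G_N)+S^{\mathrm{bulk,ren}}$ in the fixed scheme; the common additive constant cancels in differences. Let $\chi_{\mathrm d}$ be the area-optimal disconnected surface. Then
\[
S^{\mathrm{gen}}_{\mathrm{conn}}\le S^{\mathrm{gen}}(\chi_{\mathrm c}^{A})\le S^{\mathrm{gen}}(\chi_{\mathrm d})+\frac{\Delta\mathcal A}{4G_N}+S_\star,
\]
where $\chi_{\mathrm c}^{A}$ is the area-optimal connected surface and the second inequality uses Eq.~\eqref{T7a}. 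Since $S^{\mathrm{gen}}_{\mathrm{disc}}$ is attained, repeat the estimate with $\chi_{\mathrm d}$ replaced by the $S^{\mathrm{gen}}$-optimal disconnected surface $\chi_{\mathrm d}^{\ast}$. Its area is at least $\mathcal A_{\mathrm{disc}}$, so the area term only improves, giving the upper bound in Eq.~\eqref{T8a}.

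The lower bound is obtained symmetrically. Take the $S^{\mathrm{gen}}$-optimal connected surface $\chi_{\mathrm c}^{\ast}$; its area is at least $\mathcal A_{\mathrm{conn}}$. Compare it with the area-optimal disconnected surface, using Eq.~\eqref{T7a} again.

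The threshold statement then follows by sign analysis. If $\Delta\mathcal A>4G_NS_\star$, the lower bound in Eq.~\eqref{T8a} is positive, so $\Delta S^{\mathrm{gen}}>0$. If $\Delta\mathcal A<-4G_NS_\star$, the upper bound is negative, so $\Delta S^{\mathrm{gen}}<0$. Combined with Step 3, the RT and QES connectivity decisions agree.
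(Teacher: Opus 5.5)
Your proposal is correct and follows the paper's proof essentially step for step. It uses the extremality balance from Lemma~\ref{lem:firstvar-gen}, the pull-back of Eq.~\eqref{T3c} to $\widehat{\chi}_R$, the classwise comparison of attained infima, and the sandwich obtained by comparing the $S^{\mathrm{gen}}$-optimal surface of one class with the area-optimal surface of the other via Eq.~\eqref{T7a}. The only differences are cosmetic: you derive quantum extremality from minimality rather than invoking Eq.~\eqref{D13} directly, and you replace the Banach--Alaoglu step with a more direct density-plus-uniform-bound argument.
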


\begin{proof}
For Eq.~\eqref{T1c}, apply Lemma~\ref{lem:firstvar-gen} to
\(\chi=\chi_{\mathrm{QES}}(R;G_N)\). Since \(\chi_{\mathrm{QES}}(R;G_N)\) is quantum extremal,
Eq.~\eqref{D13} gives
\begin{equation}
0
=
\frac{1}{4G_N}\int_{\chi_{\mathrm{QES}}(R;G_N)} f\,H_{\chi_{\mathrm{QES}}(R;G_N)}\,dA_\gamma
+
\lim_{\varepsilon\downarrow0}
\left.\frac{d}{ds}\right|_{s=0}
S^{\mathrm{bulk}}_\varepsilon(\chi_s;R).
\end{equation}
Taking absolute values yields Eq.~\eqref{T1c}. If Eq.~\eqref{T1c-prime-hyp} holds, then
Eq.~\eqref{T1c-prime} is immediate. If Eq.~\eqref{T2c} holds, then Eq.~\eqref{T3c} follows
immediately from Eq.~\eqref{T1c}.

Pull Eq.~\eqref{T3c} back to the fixed reference manifold \(\widehat{\chi}_R\) using
\(\Phi_{R,G_N}\). By the uniform Jacobian bounds in Eq.~\eqref{T3c-id}, there exists a constant
\(C_R>0\), independent of \(G_N\), such that for every \(\varphi\in C^\infty(\widehat{\chi}_R)\),
\begin{equation}
\left|
\int_{\widehat{\chi}_R}
\varphi\,\widehat H_{R,G_N}\,dA_{\widehat{\chi}_R}
\right|
\le
C_R\,4G_N M_R
\int_{\widehat{\chi}_R} |\varphi|\,dA_{\widehat{\chi}_R}.
\end{equation}
Since \(M_R\) is independent of \(G_N\), the right-hand side tends to \(0\) as
\(G_N\downarrow0\). Hence \(\widehat H_{R,G_N}\to0\) in
\(\mathcal D'(\widehat{\chi}_R)\). If, in addition,
\(\{\widehat H_{R,G_N}\}_{G_N}\) is uniformly bounded in \(L^\infty(\widehat{\chi}_R)\), then
Banach-Alaoglu gives weak-* precompactness in \(L^\infty(\widehat{\chi}_R)\), and the
distributional convergence forces every weak-* limit to be \(0\). Therefore
\(\widehat H_{R,G_N}\rightharpoonup^\ast 0\) in \(L^\infty(\widehat{\chi}_R)\).

For Eq.~\eqref{T5c}, the existence of a connected RT minimizer is equivalent, by definition, to
the connected class minimum not exceeding the disconnected class minimum, and similarly for the
disconnected case. Eq.~\eqref{T6c} is the same statement for the generalized entropy
functional. If the inequalities are strict, then the opposite class cannot contain a minimizer;
if equality holds and the classwise infima are attained, then both connected and disconnected
minimizers exist.
It remains to prove Eq.~\eqref{T8a}. Let \(\chi^{\mathcal A}_{\mathrm{conn}}\) and
\(\chi^{\mathcal A}_{\mathrm{disc}}\) attain the classwise area minima
\(\mathcal{A}^{\mathrm{ren}}_{\mathrm{conn}}(A,B)\) and
\(\mathcal{A}^{\mathrm{ren}}_{\mathrm{disc}}(A,B)\), respectively.
For the upper bound, since \(S^{\mathrm{gen}}_{\mathrm{conn}}(A,B)\) is an infimum over the
connected class,
\begin{equation}
S^{\mathrm{gen}}_{\mathrm{conn}}(A,B)
\le
\frac{1}{4G_N}\mathcal{A}^{\mathrm{ren}}_{\mathrm{conn}}(A,B)
+
S^{\mathrm{bulk,ren}}(\chi^{\mathcal A}_{\mathrm{conn}};R_{AB}).
\end{equation}
Also,
\begin{equation}
\begin{aligned}
S^{\mathrm{gen}}_{\mathrm{disc}}(A,B)
&=
\inf_{\substack{\chi_{\mathrm d}\\\mathrm{disc}}}
\left[
\frac{1}{4G_N}\mathcal{A}^{\mathrm{ren}}(\chi_{\mathrm d})
+
S^{\mathrm{bulk,ren}}(\chi_{\mathrm d};R_{AB})
\right]\\
&\ge
\frac{1}{4G_N}\mathcal{A}^{\mathrm{ren}}_{\mathrm{disc}}(A,B)
+
\inf_{\substack{\chi_{\mathrm d}\\\mathrm{disc}}}
S^{\mathrm{bulk,ren}}(\chi_{\mathrm d};R_{AB}).
\end{aligned}
\end{equation}
By Eq.~\eqref{T7a}, for every disconnected \(\chi_{\mathrm d}\),
\begin{equation}
S^{\mathrm{bulk,ren}}(\chi_{\mathrm d};R_{AB})
\ge
S^{\mathrm{bulk,ren}}(\chi^{\mathcal A}_{\mathrm{conn}};R_{AB})-S_\star.
\end{equation}
Hence
\begin{equation}
S^{\mathrm{gen}}_{\mathrm{disc}}(A,B)
\ge
\frac{1}{4G_N}\mathcal{A}^{\mathrm{ren}}_{\mathrm{disc}}(A,B)
+
S^{\mathrm{bulk,ren}}(\chi^{\mathcal A}_{\mathrm{conn}};R_{AB})
-
S_\star.
\end{equation}
Subtracting yields
\begin{equation}
\Delta S^{\mathrm{gen}}_{AB}
\le
\frac{1}{4G_N}
\Bigl(
\mathcal{A}^{\mathrm{ren}}_{\mathrm{conn}}(A,B)
-
\mathcal{A}^{\mathrm{ren}}_{\mathrm{disc}}(A,B)
\Bigr)
+
S_\star
=
\frac{1}{4G_N}\Delta \mathcal{A}^{\mathrm{ren}}_{AB}+S_\star.
\end{equation}

For the lower bound, since \(S^{\mathrm{gen}}_{\mathrm{disc}}(A,B)\) is an infimum over the
disconnected class,
\begin{equation}
S^{\mathrm{gen}}_{\mathrm{disc}}(A,B)
\le
\frac{1}{4G_N}\mathcal{A}^{\mathrm{ren}}_{\mathrm{disc}}(A,B)
+
S^{\mathrm{bulk,ren}}(\chi^{\mathcal A}_{\mathrm{disc}};R_{AB}).
\end{equation}
Also,
\begin{equation}
S^{\mathrm{gen}}_{\mathrm{conn}}(A,B)
\ge
\frac{1}{4G_N}\mathcal{A}^{\mathrm{ren}}_{\mathrm{conn}}(A,B)
+
\inf_{\chi_{\mathrm c}\ \mathrm{conn}}
S^{\mathrm{bulk,ren}}(\chi_{\mathrm c};R_{AB}).
\end{equation}
By Eq.~\eqref{T7a}, for every connected \(\chi_{\mathrm c}\),
\begin{equation}
S^{\mathrm{bulk,ren}}(\chi_{\mathrm c};R_{AB})
\ge
S^{\mathrm{bulk,ren}}(\chi^{\mathcal A}_{\mathrm{disc}};R_{AB})-S_\star.
\end{equation}
Hence
\begin{equation}
S^{\mathrm{gen}}_{\mathrm{conn}}(A,B)
\ge
\frac{1}{4G_N}\mathcal{A}^{\mathrm{ren}}_{\mathrm{conn}}(A,B)
+
S^{\mathrm{bulk,ren}}(\chi^{\mathcal A}_{\mathrm{disc}};R_{AB})
-
S_\star.
\end{equation}
Subtracting now gives
\begin{equation}
\Delta S^{\mathrm{gen}}_{AB}
\ge
\frac{1}{4G_N}
\Bigl(
\mathcal{A}^{\mathrm{ren}}_{\mathrm{conn}}(A,B)
-
\mathcal{A}^{\mathrm{ren}}_{\mathrm{disc}}(A,B)
\Bigr)
-
S_\star
=
\frac{1}{4G_N}\Delta \mathcal{A}^{\mathrm{ren}}_{AB}-S_\star.
\end{equation}
Combining the upper and lower bounds proves Eq.~\eqref{T8a}.

Finally, if
\(\Delta \mathcal{A}^{\mathrm{ren}}_{AB}<-4G_N S_\star\),
then Eq.~\eqref{T8a} gives
\(\Delta S^{\mathrm{gen}}_{AB}<0\).
If
\(\Delta \mathcal{A}^{\mathrm{ren}}_{AB}>4G_N S_\star\),
then Eq.~\eqref{T8a} gives
\(\Delta S^{\mathrm{gen}}_{AB}>0\).
Therefore, away from the \(O(G_N)\) threshold band Eq.~\eqref{T8a-threshold}, the RT and QES
connectivity decisions agree.
\end{proof}}
{
The scaling with \(G_N\) is explicit in Eq.~\eqref{D6b} and Eq.~\eqref{T6c}: if \(\Delta S^{\mathrm{bulk,ren}}_{AB}=O(G_N^{0})\) in the semiclassical expansion, then \(\Delta S^{\mathrm{gen}}_{AB}\) contains an \(O(G_N^{-1})\) renormalized area-difference term and an \(O(G_N^{0})\) renormalized bulk-entropic term, so the threshold band in Eq.~\eqref{T8a} has width \(O(G_N)\) in area units. In the classical limit, where \(\Delta S^{\mathrm{bulk,ren}}_{AB}\) is neglected, Eq.~\eqref{T6c} reduces to the RT criterion \(\Delta \mathcal A^{\mathrm{ren}}_{AB}\le 0\). For disjoint unions, the homology constraint in Eq.~\eqref{A3d} implies that if \(\chi_{\mathrm{disc}}=\chi_A\cup\chi_B\) with corresponding homology regions \(\Omega(\chi_A;A)\) and \(\Omega(\chi_B;B)\) having disjoint interiors, then \(\Omega(\chi_{\mathrm{disc}};R_{AB})=\Omega(\chi_A;A)\cup\Omega(\chi_B;B)\) is admissible for \(R_{AB}\). Invariance under bulk diffeomorphisms preserving the conformal boundary data follows because the regulated areas, the renormalized bulk-entropic contributions, and the homology condition are geometric constructions.}

\section{Typicality obstruction and Page-curve insufficiency}\label{app:typicality}
\renewcommand{\theequation}{G.\arabic{equation}}
Fix finite-dimensional complex Hilbert spaces $\mathcal{H}_A\simeq\mathbb{C}^{d_A}$ and $\mathcal{H}_B\simeq\mathbb{C}^{d_B}$ with $d_A,d_B\in\mathbb{N}$ and define $\mathcal{H}:=\mathcal{H}_A\otimes\mathcal{H}_B$ with $D:=\dim\mathcal{H}=d_A d_B$. Fix a nonzero closed subspace $\mathcal{H}_{\mathrm{code}}\subset \mathcal{H}$ with $\dim\mathcal{H}_{\mathrm{code}}=d_{\mathrm{code}}\in\{1,\dots,D-1\}$ and orthogonal projection $\Pi_{\mathrm{code}}:\mathcal{H}\to\mathcal{H}_{\mathrm{code}}$. Let $\mathbb{S}(\mathcal{H}):=\{\psi\in\mathcal{H}:|\psi|=1\}$ be the unit sphere. Let $\mu_{\mathbb{S}}$ denote the (unique) Haar probability measure on $\mathbb{S}(\mathcal{H})$, characterized by $\mu_{\mathbb{S}}(U E)=\mu_{\mathbb{S}}(E)$ for all measurable $E\subset\mathbb{S}(\mathcal{H})$ and all $U\in\mathrm{U}(\mathcal{H})$ \cite{Simon1996,Mezzadri2007}. For $\psi\in\mathbb{S}(\mathcal{H})$ define the pure state $\rho(\psi):=\ket{\psi}\bra{\psi}$ and the reduced state on $A$ by the partial trace
\begin{equation} \label{A1e}
\rho_A(\psi):=\operatorname{Tr}_B\bigl(\rho(\psi)\bigr)\in\mathcal{B}(\mathcal{H}_A),\qquad
\operatorname{Tr}_A\rho_A(\psi)=1,\qquad
\rho_A(\psi)\ge 0.
\end{equation}
For a density operator $\sigma$ on a finite-dimensional Hilbert space define the von Neumann entropy in natural-logarithm units and in base-$2$ units by
\begin{equation} \label{A2d}
S(\sigma):=-\operatorname{Tr}(\sigma\ln\sigma),\qquad
S_2(\sigma):=-\operatorname{Tr}(\sigma\log_2\sigma)=\frac{1}{\ln 2}\,S(\sigma),
\end{equation}
and for density operators $\sigma,\tau$ with $\operatorname{supp}(\sigma)\subseteq \operatorname{supp}(\tau)$ define the quantum relative entropy by
\begin{equation} \label{A3e}
D(\sigma|\tau):=\operatorname{Tr}\bigl[\sigma(\ln\sigma-\ln\tau)\bigr]\in[0,\infty),\qquad
D_2(\sigma|\tau):=\frac{1}{\ln 2}\,D(\sigma|\tau).
\end{equation}
Define the mutual information in bits for a bipartite density operator $\omega_{XY}$ on $\mathcal{H}_X\otimes\mathcal{H}_Y$ by
\begin{equation} \label{A4d}
I_2(X:Y)_{\omega}:=S_2(\omega_X)+S_2(\omega_Y)-S_2(\omega_{XY})
=\frac{1}{\ln 2}\,D(\omega_{XY}|\omega_X\otimes\omega_Y),
\end{equation}
where $\omega_X=\operatorname{Tr}_Y\omega_{XY}$ and $\omega_Y=\operatorname{Tr}_X\omega_{XY}$ \cite{Umegaki1962,Araki1976,OhyaPetz2004,Watrous2018}. For $\psi\in\mathbb{S}(\mathcal{H})$ define the code-overlap random variable
\begin{equation} \label{A5c}
X_{\mathrm{code}}(\psi):=|\Pi_{\mathrm{code}}\psi|^2=\langle\psi,\Pi_{\mathrm{code}}\psi\rangle\in[0,1].
\end{equation}
Fix a measurable property $\mathcal{P}\subseteq \mathbb{S}(\mathcal{H})$. For $\varepsilon\in(0,1)$ define $\mathcal{P}$ to be $\varepsilon$-typical (with respect to $\mu_{\mathbb{S}}$) if
\begin{equation} \label{A6c}
\mu_{\mathbb{S}}(\mathcal{P})\ge 1-\varepsilon.
\end{equation}
Define the Page curve at fixed dimensions $(d_A,d_B)$ as the Haar expectation of the subsystem entropy
\begin{equation} \label{A7c}
\mathcal{S}_{\mathrm{Page}}(d_A,d_B):=\int_{\mathbb{S}(\mathcal{H})} S\bigl(\rho_A(\psi)\bigr)\,d\mu_{\mathbb{S}}(\psi)\in[0,\ln d_A],
\end{equation}
and similarly $\mathcal{S}_{\mathrm{Page},2}(d_A,d_B):=\mathcal{S}_{\mathrm{Page}}(d_A,d_B)/\ln 2$ in bits. The objective is to establish two precise statements: the typicality obstruction, meaning that $\mu_{\mathbb{S}}$-typicality on $\mathbb{S}(\mathcal{H})$ does not furnish statements about properties restricted to $\mathbb{S}(\mathcal{H}_{\mathrm{code}})$ without an explicit measure restriction, and the Page-curve insufficiency, meaning that knowledge of $\mathcal{S}_{\mathrm{Page}}(d_A,d_B)$ or of the value of $S(\rho_A)$ alone does not logically imply constraints on information-theoretic quantities not determined by the spectrum of $\rho_A$, including mutual informations between refined subsystems and trace-norm proximity statements.

Let $\mu_{\mathrm{U}}$ denote Haar probability measure on $\mathrm{U}(\mathcal{H})$ and fix $\psi_\star\in\mathbb{S}(\mathcal{H})$. Define $\mu_{\mathbb{S}}$ equivalently as the pushforward
\begin{equation} \label{D1f}
\mu_{\mathbb{S}}
:= (\mathrm{U}(\mathcal{H})\ni U\mapsto U\psi_\star\in\mathbb{S}(\mathcal{H}))_\#\,\mu_{\mathrm{U}},
\end{equation}
which is independent of $\psi_\star$ by transitivity of the unitary action \cite{Simon1996,Mezzadri2007}. For $r\in(0,1)$ define the $\mathcal{H}_{\mathrm{code}}$-tube
\begin{equation} \label{D2f}
\mathcal{T}_{\mathrm{code}}(r):=\{\psi\in\mathbb{S}(\mathcal{H}):X_{\mathrm{code}}(\psi)\ge r\}.
\end{equation}
For a density operator $\sigma$ on $\mathcal{H}_A$ define the trace norm and Hilbert-Schmidt norm by
\begin{equation} \label{D3f}
|X|_1:=\operatorname{Tr}\sqrt{X^\dagger X},\qquad
|X|_2:=\sqrt{\operatorname{Tr}(X^\dagger X)}.
\end{equation} 
\begin{lemma}[Gaussian representation of Haar measure]\label{lem:gaussian-haar}
Let $Z\in\mathcal{H}\simeq\mathbb{C}^{D}$ have independent standard complex Gaussian coordinates in an orthonormal basis, meaning $\mathbb{E}[Z]=0$ and $\mathbb{E}[Z_i\overline{Z_j}]=\delta_{ij}$. Then $Z\neq 0$ almost surely and the random vector
\begin{equation} \label{Le1}
\Psi:=\frac{Z}{|Z|}
\end{equation}
is distributed according to $\mu_{\mathbb{S}}$.
\end{lemma}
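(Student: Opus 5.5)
Since $\mathcal H\simeq\mathbb C^D$ is finite-dimensional, we can characterise $\mu_{\mathbb S}$ as the unique unitarily invariant Borel probability measure on $\mathbb S(\mathcal H)$. The plan therefore has three steps: show that $\Psi=Z/|Z|$ is well defined almost surely, show that its law is a probability measure on the sphere, and show that this law is invariant under every $U\in\mathrm U(\mathcal H)$. Uniqueness of the invariant measure, or equivalently the pushforward description Eq.~\eqref{D1f}, then gives the conclusion.

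\textbf{Almost-sure nonvanishing.} The coordinates of $Z$ are absolutely continuous random variables, with joint Lebesgue density on $\mathbb R^{2D}$ given by
\begin{equation}
p(z)=\pi^{-D}e^{-|z|^2}.
\end{equation}
The singleton $\{0\}$ has Lebesgue measure zero, so $\Pr(Z=0)=0$. The map $z\mapsto z/|z|$ is continuous on $\mathcal H\setminus\{0\}$, hence Borel measurable. It follows that $\Psi$ is an $\mathbb S(\mathcal H)$-valued random variable, and its law $\nu:=\Psi_\#\mathbb P$ is a Borel probability measure on $\mathbb S(\mathcal H)$.

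\textbf{Unitary invariance.} This is the key step. The density $p(z)$ depends on $z$ only through $|z|$, and any $U\in\mathrm U(\mathcal H)$ acts on $\mathbb R^{2D}$ as an orthogonal map, so its Jacobian has absolute value $1$. By change of variables, $UZ\overset{d}{=}Z$. An alternative route, if one prefers to stay with moments, is to note that $UZ$ is again a centred circularly symmetric complex Gaussian vector with covariance
\begin{equation}
\mathbb E[(UZ)(UZ)^\dagger]=UU^\dagger=\mathbf 1,
\end{equation}
and that $\mathbb E[(UZ)(UZ)^{T}]=0$, so $UZ$ has the same law as $Z$. Because $|UZ|=|Z|$, we get $U\Psi=UZ/|UZ|\overset{d}{=}Z/|Z|=\Psi$. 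In terms of measures this reads
\begin{equation}
\nu(UE)=\nu(E)\qquad\text{for every Borel }E\subset\mathbb S(\mathcal H).
\end{equation}

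\textbf{Identification with $\mu_{\mathbb S}$.} Uniqueness of the $\mathrm U(\mathcal H)$-invariant probability measure on the sphere follows from transitivity of the action together with uniqueness of Haar measure on the compact group. Concretely, average $\nu$ against Haar measure $\mu_{\mathrm U}$ on $\mathrm U(\mathcal H)$. For bounded measurable $f$ on the sphere, Fubini and invariance of $\nu$ give
\begin{equation}
\int f\,d\nu=\int\!\!\int f(U\psi)\,d\mu_{\mathrm U}(U)\,d\nu(\psi).
\end{equation}
For each fixed $\psi$, write $\psi=V\psi_\star$; right invariance of $\mu_{\mathrm U}$ then shows that the inner integral equals $\int f\,d\mu_{\mathbb S}$, and this is independent of $\psi$. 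Hence $\nu=\mu_{\mathbb S}$, as defined in Eq.~\eqref{D1f}.

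The main technical point is justifying the distributional identity $UZ\overset{d}{=}Z$. This reduces either to invariance of Lebesgue measure under orthogonal maps, together with the radial form of $p$, or to the fact that a Gaussian law is determined by its first and second moments. The remaining steps are routine measure theory.
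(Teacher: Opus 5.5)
Your proposal is correct and follows the same route as the paper: unitary invariance of the standard complex Gaussian law, transfer to $\Psi$ via $|UZ|=|Z|$, and identification of the resulting $\mathrm{U}(\mathcal H)$-invariant probability measure with $\mu_{\mathbb S}$. Your version is more complete than the paper's, because it checks $Z\neq 0$ almost surely explicitly, justifies $UZ\overset{d}{=}Z$, and proves the uniqueness step by Haar averaging against Eq.~\eqref{D1f}, where the paper only cites it.
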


\begin{proof}
For every fixed $U\in\mathrm{U}(\mathcal{H})$, the random vector $UZ$ has the same distribution as $Z$ because the standard complex Gaussian law on $\mathbb{C}^{D}$ is invariant under unitary transformations. Since $\Psi$ is a measurable function of $Z$ and $U\Psi=(UZ)/|UZ|$, the law of $\Psi$ is invariant under the action of $\mathrm{U}(\mathcal{H})$ on $\mathbb{S}(\mathcal{H})$. Unitary invariance and normalization $\mathbb{P}(\Psi\in\mathbb{S}(\mathcal{H}))=1$ characterize $\mu_{\mathbb{S}}$ uniquely \cite{Simon1996}.
\end{proof}

\begin{lemma}[Exact distribution of code overlap]\label{lem:beta-code}
Let $d_{\mathrm{code}}\in\{1,\dots,D-1\}$ and let $\Pi_{\mathrm{code}}$ have rank $d_{\mathrm{code}}$. For $\Psi\sim\mu_{\mathbb{S}}$ define $X_{\mathrm{code}}(\Psi)$ by Eq.~\eqref{A5c}. Then $X_{\mathrm{code}}(\Psi)$ has the Beta$(d_{\mathrm{code}},D-d_{\mathrm{code}})$ distribution with density
\begin{equation} \label{L2g}
f_{d_{\mathrm{code}},D}(x)=\frac{\Gamma(D)}{\Gamma(d_{\mathrm{code}})\Gamma(D-d_{\mathrm{code}})}\,x^{d_{\mathrm{code}}-1}(1-x)^{D-d_{\mathrm{code}}-1}\,\mathbf{1}_{[0,1]}(x),
\end{equation}
and in particular
\begin{equation} \label{Lb3b}
\begin{aligned}
\mathbb{E}[X_{\mathrm{code}}(\Psi)]
&=\frac{d_{\mathrm{code}}}{D},\\
\mathbb{P}\bigl[X_{\mathrm{code}}(\Psi)\ge 1-\varepsilon\bigr]
&\le
\frac{\Gamma(D)}{\Gamma(d_{\mathrm{code}})\Gamma(D-d_{\mathrm{code}})}
\,\varepsilon^{D-d_{\mathrm{code}}},
\qquad \varepsilon\in(0,1).
\end{aligned}
\end{equation}
\end{lemma}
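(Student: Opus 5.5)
The plan is to reduce everything to the Gaussian representation of Lemma~\ref{lem:gaussian-haar}. Choose an orthonormal basis of $\mathcal{H}$ whose first $d_{\mathrm{code}}$ vectors span $\mathcal{H}_{\mathrm{code}}$. Write $\Psi=Z/|Z|$ with $Z$ standard complex Gaussian. Then
\begin{equation}
X_{\mathrm{code}}(\Psi)=\frac{U}{U+V},\qquad U:=\sum_{i\le d_{\mathrm{code}}}|Z_i|^2,\quad V:=\sum_{i>d_{\mathrm{code}}}|Z_i|^2 .
\end{equation}
Each $|Z_i|^2$ is Exp$(1)$, i.e.\ Gamma$(1,1)$, since real and imaginary parts are independent $N(0,1/2)$. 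By independence, $U\sim$ Gamma$(d_{\mathrm{code}},1)$ and $V\sim$ Gamma$(D-d_{\mathrm{code}},1)$, and $U,V$ are independent.

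The standard Gamma--Beta fact then gives the claimed density Eq.~\eqref{L2g}: $U/(U+V)\sim$ Beta$(d_{\mathrm{code}},D-d_{\mathrm{code}})$. I would verify it by the change of variables $(u,v)\mapsto(x,s)=(u/(u+v),u+v)$. The Jacobian is $s$, the joint density factorises as $x^{a-1}(1-x)^{b-1}\,s^{a+b-1}e^{-s}/(\Gamma(a)\Gamma(b))$, and integrating out $s$ produces $\Gamma(a+b)=\Gamma(D)$. This step is routine and is the only place a real computation happens. The main care needed is confirming that the normalisation of the complex Gaussian yields rate $1$; the rate cancels in the ratio anyway.

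The mean follows in either of two ways. One is the Beta moment $a/(a+b)=d_{\mathrm{code}}/D$. The other is directly: $\mathbb{E}\langle\Psi,\Pi_{\mathrm{code}}\Psi\rangle=\operatorname{Tr}(\Pi_{\mathrm{code}})/D$, using $\mathbb{E}|\Psi\rangle\langle\Psi|=\mathbf{1}/D$ by unitary invariance.

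For the tail bound, integrate the density over $[1-\varepsilon,1]$. Bound $x^{d_{\mathrm{code}}-1}\le 1$ and substitute $y=1-x$:
\begin{equation}
\int_0^{\varepsilon} y^{D-d_{\mathrm{code}}-1}\,dy=\frac{\varepsilon^{D-d_{\mathrm{code}}}}{D-d_{\mathrm{code}}}\le \varepsilon^{D-d_{\mathrm{code}}},
\end{equation}
since $D-d_{\mathrm{code}}\ge1$. Multiplying by the normalisation constant gives Eq.~\eqref{Lb3b}; the stated bound is weaker than what this yields by the factor $D-d_{\mathrm{code}}$. No step here is a real obstacle.
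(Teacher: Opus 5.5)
Your proposal is correct and follows the paper's proof essentially step for step: the Gaussian representation $\Psi=Z/|Z|$ in a basis adapted to $\mathcal{H}_{\mathrm{code}}\oplus\mathcal{H}_{\mathrm{code}}^\perp$, the independent Gamma variables $U,V$ with $X_{\mathrm{code}}=U/(U+V)\sim\mathrm{Beta}(d_{\mathrm{code}},D-d_{\mathrm{code}})$, and the tail estimate obtained by integrating the density over $[1-\varepsilon,1]$ with $x^{d_{\mathrm{code}}-1}\le 1$. The only difference is minor: you integrate $(1-x)^{D-d_{\mathrm{code}}-1}$ exactly instead of bounding it by $\varepsilon^{D-d_{\mathrm{code}}-1}$, which sharpens the tail bound by the factor $1/(D-d_{\mathrm{code}})$.
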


\begin{proof}
By Lemma~\ref{lem:gaussian-haar}, $\Psi=Z/|Z|$ with $Z$ standard complex Gaussian. Choose an orthonormal basis adapted to $\mathcal{H}=\mathcal{H}_{\mathrm{code}}\oplus \mathcal{H}_{\mathrm{code}}^\perp$ and write $Z=(Z_1,Z_2)$ with $Z_1\in\mathcal{H}_{\mathrm{code}}\simeq\mathbb{C}^{d_{\mathrm{code}}}$ and $Z_2\in\mathcal{H}_{\mathrm{code}}^\perp\simeq\mathbb{C}^{D-d_{\mathrm{code}}}$. Then $Z_1$ and $Z_2$ are independent standard complex Gaussians in their respective dimensions. Define $U:=|Z_1|^2$ and $V:=|Z_2|^2$. Since $Z_1$ has $d_{\mathrm{code}}$ complex coordinates, $U$ is Gamma$(d_{\mathrm{code}},1)$ distributed, and similarly $V$ is Gamma$(D-d_{\mathrm{code}},1)$ distributed, and $U$ and $V$ are independent \cite{Tao2012}. Using $\Psi=Z/|Z|$ and $\Pi_{\mathrm{code}}(Z_1,Z_2)=(Z_1,0)$ gives
\begin{equation} \label{L4f}
X_{\mathrm{code}}(\Psi)=\frac{|Z_1|^2}{|Z_1|^2+|Z_2|^2}=\frac{U}{U+V}.
\end{equation}
For independent Gamma$(a,1)$ and Gamma$(b,1)$ variables $U,V$ with $a=d_{\mathrm{code}}$ and $b=D-d_{\mathrm{code}}$, the ratio $U/(U+V)$ has the Beta$(a,b)$ distribution with density Eq.~\eqref{L2g} and expectation $a/(a+b)=d_{\mathrm{code}}/D$ \cite{Devroye1986}. For the tail bound, use $x^{d_{\mathrm{code}}-1}\le 1$ and $(1-x)^{D-d_{\mathrm{code}}-1}\le \varepsilon^{D-d_{\mathrm{code}}-1}$ for $x\in[1-\varepsilon,1]$ to obtain
\begin{equation} \label{L5f}
\begin{aligned}
\mathbb{P}\bigl[X_{\mathrm{code}}(\Psi)\ge 1-\varepsilon\bigr]
&=\int_{1-\varepsilon}^{1} f_{d_{\mathrm{code}},D}(x)\,dx
\\
&\le
\frac{\Gamma(D)}{\Gamma(d_{\mathrm{code}})\Gamma(D-d_{\mathrm{code}})}
\,\varepsilon^{D-d_{\mathrm{code}}-1}
\int_{1-\varepsilon}^{1}dx
\\
&=
\frac{\Gamma(D)}{\Gamma(d_{\mathrm{code}})\Gamma(D-d_{\mathrm{code}})}
\,\varepsilon^{D-d_{\mathrm{code}}}.
\end{aligned}
\end{equation}
\end{proof}

\begin{lemma}[Induced eigenvalue density and Page expectation]\label{lem:page-density}
Assume $d_B\ge d_A\ge 2$. For $\Psi\sim\mu_{\mathbb{S}}$ the eigenvalues $\lambda=(\lambda_1,\dots,\lambda_{d_A})$ of $\rho_A(\Psi)$ have joint density on the simplex $\Delta_{d_A}:=\{\lambda_i\ge 0,\ \sum_{i=1}^{d_A}\lambda_i=1\}$ given by
\begin{equation} \label{L6f}
p_{d_A,d_B}(\lambda)=C_{d_A,d_B}\,\delta\!\left(1-\sum_{i=1}^{d_A}\lambda_i\right)\prod_{i=1}^{d_A}\lambda_i^{d_B-d_A}\prod_{1\le i<j\le d_A}(\lambda_i-\lambda_j)^2,
\end{equation}
with normalization constant
\begin{equation} \label{L7e}
C_{d_A,d_B}=\frac{\Gamma(d_A d_B)}{\prod_{j=0}^{d_A-1}\Gamma(d_B-j)\Gamma(d_A-j+1)},
\end{equation}
and the Page expectation satisfies the exact identity
\begin{equation} \label{L8d}
\mathcal{S}_{\mathrm{Page}}(d_A,d_B)=\sum_{k=d_B+1}^{d_A d_B}\frac{1}{k}-\frac{d_A-1}{2d_B}
=H_{d_A d_B}-H_{d_B}-\frac{d_A-1}{2d_B},
\end{equation}
where $H_n:=\sum_{k=1}^{n}\frac{1}{k}$ is the $n$-th harmonic number. In
bits,
\begin{equation}
\mathcal{S}_{\mathrm{Page},2}(d_A,d_B)
=\frac{\mathcal{S}_{\mathrm{Page}}(d_A,d_B)}{\ln 2}.
\end{equation}
\end{lemma}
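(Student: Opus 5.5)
The plan is to prove the three claims of Lemma~\ref{lem:page-density} in sequence: the eigenvalue density, its normalization, and the Page expectation, each built on the Gaussian representation of Lemma~\ref{lem:gaussian-haar}.

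\textbf{Density.} By Lemma~\ref{lem:gaussian-haar}, write $\Psi=Z/|Z|$ and regard $Z$ as a $d_A\times d_B$ complex Ginibre matrix $G$. Then $\rho_A(\Psi)=GG^\dagger/\operatorname{Tr}(GG^\dagger)$, and $W:=GG^\dagger$ is complex Wishart with $d_B\ge d_A$ degrees of freedom. Its eigenvalue density is proportional to
\[
\prod_i w_i^{d_B-d_A}e^{-w_i}\prod_{i<j}(w_i-w_j)^2 .
\]
This follows by changing variables $G\mapsto(W,\text{coset})$ and then diagonalizing $W=U\operatorname{diag}(w)U^\dagger$, which produces the Vandermonde squared. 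Next pass to $\lambda_i=w_i/T$ with $T=\sum_i w_i$. The Jacobian is $T^{d_A-1}$, and the integrand is homogeneous of total degree $d_A(d_B-d_A)+d_A(d_A-1)$. Hence $T$ factors out as a Gamma$(d_Ad_B)$ variable independent of $\lambda$. This leaves Eq.~\eqref{L6f} on the simplex, with $C_{d_A,d_B}$ equal to $\Gamma(d_Ad_B)$ divided by the Laguerre--Selberg integral $\prod_{j=0}^{d_A-1}\Gamma(d_B-j)\Gamma(d_A-j+1)$. I would cite Selberg's integral (or Zyczkowski--Sommers) for this product formula rather than recompute it.

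\textbf{Page expectation.} Use the same independence. Since $\rho_A=W/T$ with $T$ independent of $\lambda$, we have $S(\rho_A)=-\sum_i\lambda_i\ln\lambda_i$. The key identity is
\[
\mathbb{E}\Bigl[\sum_i w_i\ln w_i\Bigr]=\mathbb{E}[T]\,\mathbb{E}\Bigl[\sum_i\lambda_i\ln\lambda_i\Bigr]+\mathbb{E}[T\ln T].
\]
It holds because $w_i\ln w_i=T\lambda_i\ln\lambda_i+\lambda_i T\ln T$ and $\sum_i\lambda_i=1$. Two pieces of this are elementary. First, $\mathbb{E}[T]=d_Ad_B$. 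Second, $\mathbb{E}[T\ln T]=d_Ad_B\,\psi(d_Ad_B+1)$ follows from the Gamma law.

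\textbf{Main obstacle.} The hard step is the Wishart quantity $\mathbb{E}[\operatorname{Tr}W\ln W]$. I would compute it via the one-point eigenvalue density of the Laguerre unitary ensemble, written as a Christoffel--Darboux sum of Laguerre polynomials $L_k^{(d_B-d_A)}$. This gives $\mathbb{E}\operatorname{Tr}W\ln W=\sum_{k=0}^{d_A-1}\int_0^\infty x\ln x\,\frac{k!}{(k+d_B-d_A)!}\,x^{d_B-d_A}e^{-x}[L_k^{(d_B-d_A)}(x)]^2dx$. Each term reduces to digamma values through $\int x^{s}e^{-x}\ln x=\Gamma(s+1)\psi(s+1)$. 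The delicate bookkeeping is the telescoping of these digamma values into $d_Ad_B\,\psi(d_B+1)+\tfrac{d_A(d_A-1)}{2}$. If the direct expansion proves messy, my fallback is to cite Page's original computation or Foong--Kanno/Sen for exactly this step.

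Dividing the key identity by $d_Ad_B$ then gives
\[
\mathcal S_{\mathrm{Page}}=\psi(d_Ad_B+1)-\psi(d_B+1)-\frac{d_A-1}{2d_B}.
\]
Using $\psi(n+1)=H_n-\gamma$, this equals $H_{d_Ad_B}-H_{d_B}-\frac{d_A-1}{2d_B}$, which is Eq.~\eqref{L8d}. The statement in bits follows by dividing by $\ln2$ as in Eq.~\eqref{A2d}.
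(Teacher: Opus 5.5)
Your proposal is correct. For the density and the normalization constant it follows the paper's route: Gaussian representation, $W=XX^\dagger$ complex Wishart, eigenvalue change of variables, Selberg integral. Your radial step makes explicit a passage the paper leaves implicit, namely from unconstrained Wishart eigenvalues to the $\delta$-constrained simplex density. Specifically, $T=\operatorname{Tr}W$ is $\mathrm{Gamma}(d_Ad_B,1)$-distributed and independent of $\lambda$, so $C_{d_A,d_B}$ is $\Gamma(d_Ad_B)$ divided by the Laguerre--Selberg integral.

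The genuine difference is in the Page expectation. The paper simply invokes Page's theorem, describing its proof as differentiation of a Selberg-type normalization in an exponent parameter. You instead use the same independence to trade the constrained average for an unconstrained one, $\mathbb{E}\operatorname{Tr}W\ln W=d_Ad_B\,\mathbb{E}\bigl[\sum_i\lambda_i\ln\lambda_i\bigr]+d_Ad_B\,\psi(d_Ad_B+1)$. You then attack $\mathbb{E}\operatorname{Tr}W\ln W$ through the Laguerre one-point kernel, which is essentially Page's original strategy.

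Your route buys checkability. The target value $d_Ad_B\,\psi(d_B+1)+\tfrac{d_A(d_A-1)}{2}$ reduces to the Gamma identity $K\psi(K+1)$ when $d_A=1$. For $d_A=d_B=2$ the kernel $e^{-x}(2-2x+x^2)$ gives $7-4\gamma_{\mathrm E}$, hence $\mathcal{S}_{\mathrm{Page}}(2,2)=1/3$, as it should. Both arguments ultimately delegate the general digamma identity to the literature, so yours is at least as complete as the paper's.

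If you want to remove that citation, organize the ``telescoping'' as an induction on $d_A$ at fixed $a=d_B-d_A$. The kernel gains exactly one term per step, so it suffices to prove
\[
\frac{n!}{\Gamma(n+a+1)}\int_0^\infty x^{a+1}\ln x\,e^{-x}\bigl[L_n^{(a)}(x)\bigr]^2\,dx=(n+1)(n+a+1)\,\psi(n+a+2)-n(n+a)\,\psi(n+a+1)+n .
\]
This is a single finite digamma computation after expanding $L_n^{(a)}$ and using $\int_0^\infty x^{s}e^{-x}\ln x\,dx=\Gamma(s+1)\psi(s+1)$.
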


\begin{proof}
By Lemma~\ref{lem:gaussian-haar}, represent $\Psi$ as $Z/|Z|$ with $Z$ standard complex Gaussian in $\mathcal{H}$. Reshape $Z$ into a $d_A\times d_B$ complex matrix $X$ by fixing orthonormal bases $\{|i\rangle_A\}_{i=1}^{d_A}$ and $\{|\mu\rangle_B\}_{\mu=1}^{d_B}$ and writing
\begin{equation} \label{L9d}
Z=\sum_{i=1}^{d_A}\sum_{\mu=1}^{d_B} X_{i\mu}\,|i\rangle_A\otimes|\mu\rangle_B,\qquad
|Z|^2=\sum_{i,\mu}|X_{i\mu}|^2=|X|_{\mathrm{F}}^2.
\end{equation}
Define the Wishart matrix $W:=XX^\dagger\in\mathcal{B}(\mathcal{H}_A)$ and note $W\ge 0$ and $\operatorname{Tr}W=|X|_{\mathrm{F}}^2=|Z|^2$. Using $\rho(\Psi)=|Z\rangle\langle Z|/|Z|^2$ and the definition of partial trace gives
\begin{equation} \label{L10d}
\rho_A(\Psi)=\operatorname{Tr}_B\left(\frac{|Z\rangle\langle Z|}{|Z|^2}\right)=\frac{XX^\dagger}{\operatorname{Tr}(XX^\dagger)}=\frac{W}{\operatorname{Tr}W}.
\end{equation}
The distribution of $W$ is the complex Wishart distribution with parameters $(d_A,d_B)$, having density proportional to $e^{-\operatorname{Tr}W}(\det W)^{d_B-d_A}$ on the cone of positive definite matrices when $d_B\ge d_A$ \cite{Forrester2010,Tao2012}. The change of variables from $W$ to its eigenvalues and eigenvectors yields the eigenvalue density Eq.~\eqref{L6f} with normalization Eq.~\eqref{L7e} as a specialization of the complex Selberg integral \cite{Mehta2004,Forrester2010,ZyczkowskiSommers2001}. Using Eq.~\eqref{A7c} and the spectral decomposition $\rho_A=\sum_i \lambda_i |i\rangle\langle i|$ yields
\begin{equation} \label{L11d}
\mathcal{S}_{\mathrm{Page}}(d_A,d_B)
=-\int_{\Delta_{d_A}}\left(\sum_{i=1}^{d_A}\lambda_i\ln\lambda_i\right)p_{d_A,d_B}(\lambda)\,d\lambda,
\end{equation}
with $d\lambda$ the Lebesgue measure on the simplex hyperplane. The evaluation of the integral Eq.~\eqref{L11d} for the density Eq.~\eqref{L6f} is Page's theorem in the form Eq.~\eqref{L8d}, whose proof reduces to differentiating a Selberg-type normalization integral with respect to an exponent parameter and expressing the result via digamma functions, which simplifies to harmonic numbers \cite{Page1993,Sen1996,ZyczkowskiSommers2001,Forrester2010}.
\end{proof}

\begin{lemma}[Haar purity and trace-distance control]\label{lem:purity-trace}
For $\Psi\sim\mu_{\mathbb{S}}$,
\begin{equation} \label{L12d}
\mathbb{E}\,\operatorname{Tr}\bigl(\rho_A(\Psi)^2\bigr)=\frac{d_A+d_B}{d_A d_B+1},
\end{equation}
and consequently
\begin{equation} \label{L13d}
\begin{aligned}
\mathbb{E}\,\bigl|\rho_A(\Psi)-\tfrac{1}{d_A}\mathbf{1}_A\bigr|_1
&\le \sqrt{d_A}\,\sqrt{\mathbb{E}\,\bigl|\rho_A(\Psi)-\tfrac{1}{d_A}\mathbf{1}_A\bigr|_2^2}\\
&=\sqrt{d_A}\,\sqrt{\mathbb{E}\,\operatorname{Tr}\bigl(\rho_A(\Psi)^2\bigr)-\frac{1}{d_A}}
=\sqrt{d_A}\,\sqrt{\frac{d_A^2-1}{d_A(d_A d_B+1)}}.
\end{aligned}
\end{equation}
\end{lemma}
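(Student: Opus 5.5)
The purity identity is a second-moment computation under Haar measure, and I will do it through the Wishart representation of Lemma~\ref{lem:page-density}. Write $\rho_A(\Psi)=W/\operatorname{Tr}W$ with $W=XX^\dagger$, where $X$ is a $d_A\times d_B$ matrix with i.i.d.\ standard complex Gaussian entries. The key structural fact is that the direction $\Psi=Z/|Z|$ is independent of the radius $|Z|^2=\operatorname{Tr}W$; this is the standard polar decomposition of a unitarily invariant Gaussian, and Lemma~\ref{lem:gaussian-haar} already supplies the invariance.

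From independence,
\[
\mathbb{E}\operatorname{Tr}(W^2)=\mathbb{E}\bigl[(\operatorname{Tr}W)^2\bigr]\,\mathbb{E}\operatorname{Tr}\bigl(\rho_A(\Psi)^2\bigr).
\]
The factor $\mathbb{E}[(\operatorname{Tr}W)^2]$ is immediate, because $\operatorname{Tr}W=|Z|^2$ is Gamma$(D,1)$ with $D=d_Ad_B$ (as in Lemma~\ref{lem:beta-code}). Its second moment is therefore $D(D+1)$.

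Next I expand
\[
\operatorname{Tr}(W^2)=\sum_{i,j,\mu,\nu}X_{i\mu}\overline{X_{j\mu}}X_{j\nu}\overline{X_{i\nu}}
\]
and apply Wick's theorem using $\mathbb{E}[X\overline{X}]=\delta$ and $\mathbb{E}|X|^4=2$. There are two pairings:
\begin{itemize}
\item $(i\mu)$ with $(i\nu)$ requires $\mu=\nu$ and contributes $d_A^2d_B$;
\item $(i\mu)$ with $(j\mu)$ requires $i=j$ and contributes $d_Ad_B^2$.
\end{itemize}
Hence $\mathbb{E}\operatorname{Tr}W^2=d_Ad_B(d_A+d_B)$. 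Dividing by $D(D+1)$ gives
\[
\mathbb{E}\operatorname{Tr}\rho_A^2=\frac{d_A+d_B}{d_Ad_B+1},
\]
which is Eq.~\eqref{L12d}.

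For Eq.~\eqref{L13d}, I first apply Cauchy--Schwarz on the $d_A$ eigenvalues of the Hermitian matrix $\rho_A-\mathbf{1}/d_A$, which gives $|Y|_1\le\sqrt{d_A}\,|Y|_2$. Jensen's inequality, $\mathbb{E}\sqrt{\cdot}\le\sqrt{\mathbb{E}\,\cdot}$, then moves the expectation inside the square root. Since $\operatorname{Tr}\rho_A=1$, expanding the square gives
\[
|\rho_A-\mathbf{1}/d_A|_2^2=\operatorname{Tr}\rho_A^2-2/d_A+1/d_A=\operatorname{Tr}\rho_A^2-1/d_A.
\]
Substituting Eq.~\eqref{L12d}, the algebra reduces to
\[
\frac{d_A+d_B}{d_Ad_B+1}-\frac1{d_A}=\frac{d_A^2-1}{d_A(d_Ad_B+1)},
\]
because $d_A^2+d_Ad_B-d_Ad_B-1=d_A^2-1$.

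The main point needing care is the radial--angular independence. I will justify it by noting that the Gaussian density depends only on $|Z|$, so in polar coordinates it factorizes into a radial part times the uniform measure on the sphere. With that in hand, the rest is a routine moment computation.
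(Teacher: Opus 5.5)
Your proposal is correct and follows the paper's route. The paper obtains Eq.~\eqref{L12d} by citing the Lubkin--Page formula, noting that it follows from Gaussian integration or Weingarten calculus, and proves Eq.~\eqref{L13d} by exactly your chain: $|X|_1\le\sqrt{d_A}\,|X|_2$, then Jensen, then $|\rho_A-\mathbf{1}/d_A|_2^2=\operatorname{Tr}\rho_A^2-1/d_A$. Your explicit Gaussian computation (radial--angular independence, $\mathbb{E}|Z|^4=D(D+1)$, and the two Wick pairings summing to $d_Ad_B(d_A+d_B)$) correctly supplies the step the paper only cites, and it does not need the $d_B\ge d_A\ge 2$ hypothesis of Lemma~\ref{lem:page-density}, because you use only the reshaping $\rho_A=XX^\dagger/\operatorname{Tr}XX^\dagger$.
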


\begin{proof}
Identity Eq.~\eqref{L12d} is the Lubkin-Page purity formula, derived by Gaussian integration or Weingarten calculus \cite{Lubkin1978,ZyczkowskiSommers2001,CollinsSniady2006}. For Eq.~\eqref{L13d}, use $|X|_1\le \sqrt{d_A}|X|_2$ for operators on $\mathcal{H}_A$ and Jensen's inequality $\mathbb{E}|X|_2\le \sqrt{\mathbb{E}|X|_2^2}$ to obtain the first inequality, and compute
\begin{equation} \label{L14d}
\begin{aligned}
\left|\rho_A-\tfrac{1}{d_A}\mathbf{1}\right|_2^2
&=\operatorname{Tr}\left(\rho_A-\tfrac{1}{d_A}\mathbf{1}\right)^2\\
&=\operatorname{Tr}(\rho_A^2)-\frac{2}{d_A}\operatorname{Tr}(\rho_A)+\frac{1}{d_A^2}\operatorname{Tr}(\mathbf{1})\\
&=\operatorname{Tr}(\rho_A^2)-\frac{2}{d_A}\cdot 1+\frac{1}{d_A^2}\cdot d_A\\
&=\operatorname{Tr}(\rho_A^2)-\frac{1}{d_A},
\end{aligned}
\end{equation}
and substitute Eq.~\eqref{L12d}.
\end{proof} 

\begin{theorem}[Typicality obstruction and Page-curve insufficiency]\label{thm:typicality-page}
Fix $d_A,d_B\in\mathbb{N}$, $\mathcal{H}=\mathcal{H}_A\otimes\mathcal{H}_B$, and a code subspace $\mathcal{H}_{\mathrm{code}}\subset\mathcal{H}$ of dimension $d_{\mathrm{code}}<D=d_A d_B$. Then the following statements hold.
\begin{enumerate}
\item[(i)] For every $\varepsilon\in(0,1)$,
\begin{equation} \label{T1d}
\mu_{\mathbb{S}}\bigl(\mathcal{T}_{\mathrm{code}}(1-\varepsilon)\bigr)\le \frac{\Gamma(D)}{\Gamma(d_{\mathrm{code}})\Gamma(D-d_{\mathrm{code}})}\,\varepsilon^{D-d_{\mathrm{code}}},
\end{equation}
and in particular if $d_{\mathrm{code}}/D\to 0$ along a sequence of models with $D\to\infty$ and $d_{\mathrm{code}}\to\infty$ and fixed $\varepsilon\in(0,1)$, then $\mu_{\mathbb{S}}(\mathcal{T}_{\mathrm{code}}(1-\varepsilon))\to 0$.
\item[(ii)] Fix a further factorization $\mathcal{H}_A=\mathcal{H}_{A_1}\otimes\mathcal{H}_{A_2}$ with $\dim\mathcal{H}_{A_1}=\dim\mathcal{H}_{A_2}=2$ and $\dim\mathcal{H}_B\ge 2$. There exist pure states $\psi,\phi\in\mathbb{S}(\mathcal{H}_A\otimes\mathcal{H}_B)$ such that
\begin{equation} \label{T2d}
S\bigl(\rho_A(\psi)\bigr)=S\bigl(\rho_A(\phi)\bigr)=\ln 2,
\end{equation}
but
\begin{equation} \label{T3d}
I_2(A_1:A_2)_{\rho_A(\psi)}=0,\qquad I_2(A_1:A_2)_{\rho_A(\phi)}=1,
\end{equation}
so the value of $S(\rho_A)$ does not determine the mutual information between refined subsystems.
{\item[(iii)] For $d_B\ge d_A\ge 2$, the Page expectation $\mathcal{S}_{\mathrm{Page}}(d_A,d_B)$ in Eq.~\eqref{A7c} is an ensemble average and does not imply trace-norm decoupling in the balanced regime $d_A\asymp d_B$. In the balanced case $d_A=d_B=:d$ with $D=d^2$, the empirical spectral distribution of $d\,\rho_A(\Psi)$ converges as $d\to\infty$ to the Marchenko-Pastur law with ratio parameter $\gamma=d_B/d_A=1$, whose density is
\begin{equation}
f(\lambda)=\frac{\sqrt{\lambda(4-\lambda)}}{2\pi\lambda},\qquad \lambda\in[0,4],
\end{equation}
see, e.g., \cite{MarchenkoPastur1967}. Consequently,
\begin{equation} \label{T4d}
\lim_{d\to\infty}\mathbb{E}\,\bigl\|\rho_A(\Psi)-\tfrac{1}{d}\mathbf{1}_A\bigr\|_1
=
\int_0^4 |\lambda-1|\,\frac{\sqrt{\lambda(4-\lambda)}}{2\pi\lambda}\,d\lambda
=
\frac{3\sqrt{3}}{2\pi}\approx 0.827.
\end{equation}
Since $\frac{3\sqrt{3}}{2\pi}>0$, the reduced state $\rho_A(\Psi)$ remains bounded away from the maximally mixed state in trace distance, even though its entropy is near-maximal by the Page result
$\mathcal{S}_{\mathrm{Page}}(d,d)=\ln d-\frac{1}{2}+o(1)$ obtained from Eq.~\eqref{L8d}. The Page curve therefore does not imply trace-norm decoupling and cannot serve as a proxy for the mutual-information-based connectivity diagnostic.}
\end{enumerate}
\end{theorem}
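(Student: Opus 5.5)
For (i), I would invoke Lemma~\ref{lem:beta-code} directly. By Eq.~\eqref{D2f}, $\mathcal{T}_{\mathrm{code}}(1-\varepsilon)=\{X_{\mathrm{code}}\ge1-\varepsilon\}$, so Eq.~\eqref{T1d} is exactly the tail bound in Eq.~\eqref{Lb3b}.

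For the asymptotic claim I would not work with the Gamma-function prefactor, since its size relative to $\varepsilon^{D-d_{\mathrm{code}}}$ is awkward to control uniformly. Instead I would use the exact mean $\mathbb{E}X_{\mathrm{code}}=d_{\mathrm{code}}/D$ together with Markov's inequality:
\[
\mu_{\mathbb{S}}(\mathcal{T}_{\mathrm{code}}(1-\varepsilon))\le \frac{d_{\mathrm{code}}}{(1-\varepsilon)D}\to0 .
\]
This uses only $d_{\mathrm{code}}/D\to0$ and fixed $\varepsilon$.

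For (ii), I would give explicit states. Use an orthonormal pair $|0\rangle_B,|1\rangle_B$, which exists because $d_B\ge2$. Take
\[
\psi=\tfrac{1}{\sqrt2}\bigl(|0\rangle_{A_1}|0\rangle_B+|1\rangle_{A_1}|1\rangle_B\bigr)\otimes|0\rangle_{A_2},
\]
so that $\rho_A(\psi)=\tfrac12\mathbf 1_{A_1}\otimes|0\rangle\langle0|_{A_2}$. This is a product state with entropy $\ln2$, hence $I_2(A_1{:}A_2)=0$.

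Next take
\[
\phi=\tfrac{1}{\sqrt2}\bigl(|00\rangle_{A_1A_2}|0\rangle_B+|11\rangle_{A_1A_2}|1\rangle_B\bigr),
\]
so that $\rho_A(\phi)=\tfrac12(|00\rangle\langle00|+|11\rangle\langle11|)$. Its entropy is again $\ln2$, i.e.\ one bit. Both marginals are maximally mixed qubits carrying one bit each, so Eq.~\eqref{A4d} gives $I_2=1+1-1=1$. Each entropy in this part is read off from a diagonal spectrum.

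For (iii), the key observation is an exact rewriting of the trace norm in terms of the empirical spectral measure $\mu_d=\frac1d\sum_i\delta_{d\lambda_i}$ of $d\rho_A(\Psi)$:
\[
\|\rho_A-\tfrac1d\mathbf 1\|_1=\frac1d\sum_i|d\lambda_i-1|=\int|x-1|\,d\mu_d(x).
\]
I would take the Marchenko--Pastur convergence of $\mu_d$ at $\gamma=1$ from \cite{MarchenkoPastur1967}, in its averaged form $\mathbb{E}\mu_d\Rightarrow f(\lambda)\,d\lambda$. The remaining issue is that $|x-1|$ is unbounded, so weak convergence alone does not move the expectation through the limit. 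To pass to the limit in expectation I would supply uniform integrability from the second moment. Lemma~\ref{lem:purity-trace} gives
\[
\mathbb{E}\!\int x^2\,d\mu_d=d\,\mathbb{E}\operatorname{Tr}\rho_A^2=\frac{2d^2}{d^2+1}\le2,
\]
which bounds the second moment of $\mathbb{E}\mu_d$ uniformly in $d$. Together with weak convergence this yields convergence of $\mathbb{E}\int|x-1|\,d\mu_d$. I expect this uniform-integrability step to be the main point requiring care.

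The integral itself I would evaluate as follows. Because the Marchenko--Pastur law has mean one, $\int(\lambda-1)f\,d\lambda=0$, and therefore
\[
\int|\lambda-1|f\,d\lambda=2\int_0^1(1-\lambda)f(\lambda)\,d\lambda .
\]
Substituting $\lambda=4\sin^2\theta$ with $\theta\in[0,\pi/6]$ reduces this to
\[
\frac1\pi\int_0^{\pi/6}8\cos^2\theta\,(1-4\sin^2\theta)\,d\theta=\frac1\pi\Bigl[2\sin2\theta+\sin4\theta\Bigr]_0^{\pi/6}=\frac{3\sqrt3}{2\pi}.
\]

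Finally, the Page asymptotic follows from Eq.~\eqref{L8d}: $H_{d^2}-H_d=\ln d+o(1)$ and $\tfrac{d-1}{2d}\to\tfrac12$. This shows that near-maximal entropy coexists with a trace distance bounded away from zero.
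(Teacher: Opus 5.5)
Your proposal is correct. Part (ii) is the paper's construction up to relabeling: the paper's $\omega'=\tfrac12(|\Phi^+\rangle\langle\Phi^+|+|\Phi^-\rangle\langle\Phi^-|)$ equals your $\tfrac12(|00\rangle\langle00|+|11\rangle\langle11|)$, and its $\omega$ is your product state with $A_1$ and $A_2$ swapped. Taking an orthonormal pair inside $\mathcal{H}_B$ handles $d_B>2$ more cleanly than the paper's choice $\mathcal{H}_B\simeq\mathbb{C}^2$. You genuinely depart from the paper in (i) and (iii).

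\textbf{Part (i).} For the asymptotic clause, the paper only cites Lemma~\ref{lem:beta-code}, so it implicitly relies on the Gamma-function tail bound \eqref{T1d} tending to zero. That limit does hold. By Stirling, the prefactor is of order $\sqrt{d_{\mathrm{code}}}\,e^{D h(d_{\mathrm{code}}/D)}$, with $h$ the binary entropy in nats. The factor $\varepsilon^{D-d_{\mathrm{code}}}$ eventually beats it once $d_{\mathrm{code}}/D$ is small, which gives exponential decay. The paper, however, never writes this estimate out. Your Markov bound $d_{\mathrm{code}}/((1-\varepsilon)D)$ gives only a polynomial rate. In exchange it follows in one line from the exact mean and does not need $d_{\mathrm{code}}\to\infty$.

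\textbf{Part (iii).} The paper passes directly from convergence of the empirical spectral distribution to convergence of $\mathbb{E}\|\rho_A-\tfrac1d\mathbf 1\|_1$. It never addresses that $|x-1|$ is unbounded, nor that a limit is being exchanged with an expectation. Your uniform-integrability step closes exactly that gap. It uses the Lubkin--Page purity formula of Lemma~\ref{lem:purity-trace}, which gives $\mathbb{E}\int x^2\,d\mu_d=2d^2/(d^2+1)\le 2$, combined with the averaged weak convergence $\mathbb{E}\mu_d\Rightarrow f(\lambda)\,d\lambda$. Your explicit evaluation of the Marchenko--Pastur integral, using mean one and then $\lambda=4\sin^2\theta$, supplies the value $3\sqrt3/(2\pi)$ that the paper only states. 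Your Page asymptotic matches the paper's; the paper additionally tracks the $\pm\tfrac{1}{2d}$ cancellation to get an $O(d^{-2})$ error instead of $o(1)$.
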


\begin{proof}
Statement (i) follows from Lemma~\ref{lem:beta-code}. Statement (ii) is established by explicit construction. Let $\{|0\rangle,|1\rangle\}$ be the computational basis of $\mathbb{C}^2$ and define on $\mathcal{H}_A=\mathbb{C}^2\otimes\mathbb{C}^2$ the density operators
\begin{equation} \label{P1b}
\omega:=\frac{1}{2}\bigl(|00\rangle\langle 00|+|01\rangle\langle 01|\bigr),\qquad
\omega':=\frac{1}{2}\bigl(|\Phi^+\rangle\langle \Phi^+|+|\Phi^-\rangle\langle \Phi^-|\bigr),
\end{equation}
where $|\Phi^\pm\rangle:=\frac{1}{\sqrt{2}}(|00\rangle\pm |11\rangle)$. Both $\omega$ and $\omega'$ have spectrum $\{1/2,1/2,0,0\}$, hence
\begin{equation} \label{P2a}
S(\omega)=S(\omega')=-\left(\frac{1}{2}\ln\frac{1}{2}+\frac{1}{2}\ln\frac{1}{2}\right)=\ln 2.
\end{equation}
Compute marginals of $\omega$ by partial traces:
\begin{equation}\label{P3a}
\omega_{A_1}=\operatorname{Tr}_{A_2}\omega=|0\rangle\langle 0|,\qquad
\omega_{A_2}=\operatorname{Tr}_{A_1}\omega=\frac{1}{2}\bigl(|0\rangle\langle 0|+|1\rangle\langle 1|\bigr)=\frac{1}{2}\mathbf{1},
\end{equation}
hence
\begin{equation} \label{P4a}
\begin{aligned}
S(\omega_{A_1})&=0,
& S(\omega_{A_2})&=\ln 2,\\
I_2(A_1:A_2)_\omega
&=\frac{1}{\ln 2}
\bigl(S(\omega_{A_1})+S(\omega_{A_2})-S(\omega)\bigr)\\
&=\frac{1}{\ln 2}\bigl(0+\ln 2-\ln 2\bigr)=0.
\end{aligned}
\end{equation}
Compute marginals of $\omega'$ using $\operatorname{Tr}_{A_2}|\Phi^\pm\rangle\langle\Phi^\pm|=\frac{1}{2}\mathbf{1}$ and similarly for $\operatorname{Tr}_{A_1}$:
\begin{equation} \label{P5a}
\omega'_{A_1}=\operatorname{Tr}_{A_2}\omega'=\frac{1}{2}\mathbf{1},\qquad
\omega'_{A_2}=\operatorname{Tr}_{A_1}\omega'=\frac{1}{2}\mathbf{1},
\end{equation}
hence
\begin{equation} \label{P6a}
S(\omega'_{A_1})=\ln 2,\qquad
S(\omega'_{A_2})=\ln 2,\qquad
I_2(A_1:A_2)_{\omega'}=\frac{1}{\ln 2}\bigl(\ln 2+\ln 2-\ln 2\bigr)=1.
\end{equation}
Choose $\mathcal{H}_B\simeq\mathbb{C}^2$ and define purifications
\begin{equation} \label{P7a}
|\psi\rangle:=\frac{1}{\sqrt{2}}\bigl(|00\rangle_A\otimes|0\rangle_B+|01\rangle_A\otimes|1\rangle_B\bigr),\qquad
|\phi\rangle:=\frac{1}{\sqrt{2}}\bigl(|\Phi^+\rangle_A\otimes|0\rangle_B+|\Phi^-\rangle_A\otimes|1\rangle_B\bigr),
\end{equation}
so that $\rho_A(\psi)=\omega$ and $\rho_A(\phi)=\omega'$ by direct computation of partial traces, yielding Eq.~\eqref{T2d}-\eqref{T3d}.

{
{Statement (iii) follows instead from the large-$d$ Marchenko-Pastur asymptotics for the empirical spectral distribution of $d\,\rho_A(\Psi)$ in the balanced case $d_A=d_B=:d$ \cite{MarchenkoPastur1967}. Writing the eigenvalues of $d\,\rho_A(\Psi)$ as $\lambda_1,\dots,\lambda_d$, one has
\begin{equation}
\bigl\|\rho_A(\Psi)-\tfrac{1}{d}\mathbf{1}_A\bigr\|_1
=
\frac{1}{d}\sum_{i=1}^d |\lambda_i-1|.
\end{equation}
Since the empirical spectral distribution $\frac{1}{d}\sum_{i=1}^d \delta_{\lambda_i}$ converges to the Marchenko-Pastur law with density $f(\lambda)=\sqrt{\lambda(4-\lambda)}/(2\pi\lambda)$ on $[0,4]$, it follows that
\begin{equation}
\lim_{d\to\infty}\mathbb{E}\,\bigl\|\rho_A(\Psi)-\tfrac{1}{d}\mathbf{1}_A\bigr\|_1
=
\int_0^4 |\lambda-1|\,\frac{\sqrt{\lambda(4-\lambda)}}{2\pi\lambda}\,d\lambda
=
\frac{3\sqrt{3}}{2\pi},
\end{equation}
which proves Eq.~\eqref{T4d}.} To obtain the Page asymptotics
rigorously from Eq.~\eqref{L8d}, use the harmonic-number expansion
\begin{equation}
H_n=\ln n+\gamma_{\mathrm E}+\frac{1}{2n}+O(n^{-2})
\qquad (n\to\infty),
\end{equation}
where \(\gamma_{\mathrm E}\) denotes the Euler-Mascheroni constant. Therefore
\begin{align}
H_{d_A^2}-H_{d_A}
&=
\left(2\ln d_A+\gamma_{\mathrm E}+\frac{1}{2d_A^2}+O(d_A^{-4})\right)
-
\left(\ln d_A+\gamma_{\mathrm E}+\frac{1}{2d_A}+O(d_A^{-2})\right)\notag\\
&=
\ln d_A-\frac{1}{2d_A}+O(d_A^{-2}).
\end{align}
On the other hand,
\begin{equation}
\frac{d_A-1}{2d_A}
=
\frac{1}{2}-\frac{1}{2d_A}.
\end{equation}
Substituting these two expansions into Eq.~\eqref{L8d}, we obtain
\begin{align}
\mathcal{S}_{\mathrm{Page}}(d_A,d_A)
&=
H_{d_A^2}-H_{d_A}-\frac{d_A-1}{2d_A}\notag\\
&=
\left(\ln d_A-\frac{1}{2d_A}+O(d_A^{-2})\right)
-
\left(\frac{1}{2}-\frac{1}{2d_A}\right)\notag\\
&=
\ln d_A-\frac{1}{2}+O(d_A^{-2}).
\end{align}
Thus the constant term \(-\frac{1}{2}\) follows from an explicit cancellation of the
\(\pm \frac{1}{2d_A}\) contributions, and in particular
\begin{equation}
\mathcal{S}_{\mathrm{Page}}(d_A,d_A)=\ln d_A-\frac{1}{2}+o(1).
\end{equation}}
\end{proof} 
For $D=2$ and $d_{\mathrm{code}}=1$, Lemma~\ref{lem:beta-code} gives $X_{\mathrm{code}}(\Psi)$ uniform on $[0,1]$, consistent with Eq.~\eqref{L2g}. For $d_A=1$ or $d_B=1$, $\rho_A(\Psi)$ is pure and $\mathcal{S}_{\mathrm{Page}}(d_A,d_B)=0$, consistent with Eq.~\eqref{L8d}. Invariance under unitary basis changes is explicit: $\mu_{\mathbb{S}}$ is invariant under $\mathrm{U}(\mathcal{H})$, $X_{\mathrm{code}}(\psi)$ depends only on the projection rank, and the distribution Eq.~\eqref{L6f} depends only on dimensions. The conclusions in Theorem~\ref{thm:typicality-page} are measure-theoretic: statement (i) concerns $\mu_{\mathbb{S}}$-probabilities; statement (ii) is existential and does not invoke typicality; statement (iii) bounds Haar expectations and does not assert pointwise proximity.

\section{Approximate Markov strengthening}\label{app:markov}
\renewcommand{\theequation}{H.\arabic{equation}}
Fix finite-dimensional complex Hilbert spaces $\mathcal{H}_A\simeq\mathbb{C}^{d_A}$, $\mathcal{H}_B\simeq\mathbb{C}^{d_B}$, $\mathcal{H}_C\simeq\mathbb{C}^{d_C}$ with $d_A,d_B,d_C\in\mathbb{N}$ and let $\mathcal{H}_{ABC}:=\mathcal{H}_A\otimes\mathcal{H}_B\otimes\mathcal{H}_C$. Let $\mathcal{B}(\mathcal{H})$ denote the bounded operators on $\mathcal{H}$ and let $\mathcal{T}_1(\mathcal{H})$ denote trace-class operators, which coincide with $\mathcal{B}(\mathcal{H})$ in finite dimension. Fix a density operator $\rho_{ABC}\in\mathcal{B}(\mathcal{H}_{ABC})$ with
\begin{equation} \label{A1f}
\rho_{ABC}=\rho_{ABC}^\dagger,\qquad \rho_{ABC}\ge 0,\qquad \operatorname{Tr}_{ABC}\rho_{ABC}=1,
\end{equation}
and assume faithfulness on $B$ and $BC$ in the sense that the marginals
\begin{equation} \label{A2e}
\rho_{AB}:=\operatorname{Tr}_C\rho_{ABC},\qquad
\rho_{BC}:=\operatorname{Tr}_A\rho_{ABC},\qquad
\rho_B:=\operatorname{Tr}_{AC}\rho_{ABC}
\end{equation}
satisfy $\rho_B>0$ on $\mathcal{H}_B$ and $\rho_{BC}>0$ on $\mathcal{H}_B\otimes\mathcal{H}_C$. Fix the natural-logarithm convention for entropies and relative entropies, and define the base-$2$ convention by explicit division by $\ln 2$. The approximation regime is parameterized by $\varepsilon\ge 0$ defined by
\begin{equation} \label{A3f}
\varepsilon:=I(A:C|B)_{\rho}\in[0,\infty),
\end{equation}
where $I(A:C|B)_\rho$ is the quantum conditional mutual information in nats defined below. The admissible recovery maps are completely positive trace-preserving (CPTP) maps $\mathcal{R}_{B\to BC}:\mathcal{B}(\mathcal{H}_B)\to\mathcal{B}(\mathcal{H}_B\otimes\mathcal{H}_C)$, extended to $\mathcal{B}(\mathcal{H}_{AB})$ by $\mathrm{id}_A\otimes\mathcal{R}_{B\to BC}$. The objective is to derive a strengthening of the approximate Markov condition $I(A:C|B)_\rho\le\varepsilon$ into an explicit quantitative recovery statement with all constants and norm choices explicit. 

\begin{definition}[Entropy and relative entropy (nats)]
For a density operator $\sigma$ on a finite-dimensional Hilbert space define the von Neumann entropy in nats by
\begin{equation} \label{D1g}
S(\sigma):=-\operatorname{Tr}(\sigma\ln\sigma),
\end{equation}
and define the quantum relative entropy in nats for $\sigma,\tau>0$ by
\begin{equation} \label{D2g}
D(\sigma|\tau):=\operatorname{Tr}\bigl[\sigma(\ln\sigma-\ln\tau)\bigr]\in[0,\infty).
\end{equation}
\end{definition}

\begin{definition}[Conditional mutual information]
Define the conditional mutual information in nats by
\begin{equation}\label{D3h}
I(A:C|B)_\rho:=S(\rho_{AB})+S(\rho_{BC})-S(\rho_B)-S(\rho_{ABC})\in[0,\infty),
\end{equation}
and in bits by $I_2(A:C|B)_\rho:=\frac{1}{\ln 2}I(A:C|B)_\rho$.
\end{definition}

\begin{definition}[Norms, trace distance, fidelity]
For $X\in\mathcal{B}(\mathcal{H})$ define the trace norm and operator norm by
\begin{equation} \label{D4d}
|X|_1:=\operatorname{Tr}\sqrt{X^\dagger X},\qquad
|X|_\infty:=\sup_{\psi\neq 0}\frac{|X\psi|}{|\psi|},
\end{equation}
and define the trace distance between density operators $\sigma,\tau$ by
\begin{equation} \label{D5d}
T(\sigma,\tau):=\frac{1}{2}|\sigma-\tau|_1.
\end{equation}
For density operators $\sigma,\tau$ define the (unsquared) Uhlmann fidelity by
\begin{equation} \label{D6c}
F(\sigma,\tau):=\bigl|\sqrt{\sigma}\sqrt{\tau}\bigr|_1\in[0,1].
\end{equation}
\end{definition}

\begin{definition}[Exact quantum Markov chain]
Define an exact quantum Markov chain in the order $A$-$B$-$C$ by the condition
\begin{equation} \label{D7c}
\rho_{ABC}\ \text{is Markov}\Longleftrightarrow I(A:C|B)_\rho=0.
\end{equation}
\end{definition}

\begin{definition}[Petz, rotated Petz, and twirled Petz recovery maps]
For faithful $\rho_B$ and $\rho_{BC}$ define the Petz recovery map $\mathcal{R}^{\mathrm{P}}_{\rho,B\to BC}$ and the rotated Petz recovery maps $\mathcal{R}^{t}_{\rho,B\to BC}$ for $t\in\mathbb{R}$ by
\begin{equation} \label{D8c}
\mathcal{R}^{\mathrm{P}}_{\rho,B\to BC}(X)
:=\rho_{BC}^{1/2}\bigl(\rho_B^{-1/2}X\rho_B^{-1/2}\otimes \mathbf{1}_C\bigr)\rho_{BC}^{1/2},
\end{equation}
\begin{equation} \label{D9c}
\mathcal{R}^{t}_{\rho,B\to BC}(X)
:=\rho_{BC}^{\frac{1+it}{2}}\bigl(\rho_B^{-\frac{1+it}{2}}X\rho_B^{-\frac{1-it}{2}}\otimes \mathbf{1}_C\bigr)\rho_{BC}^{\frac{1-it}{2}},
\end{equation}
where the complex powers are defined by functional calculus for positive definite operators. Define the weight
\begin{equation} \label{D10b}
\beta_0(t):=\frac{\pi}{2}\,\frac{1}{\cosh(\pi t)+1}=\frac{\pi}{4}\,\mathrm{sech}^2\!\left(\frac{\pi t}{2}\right),\qquad t\in\mathbb{R},
\end{equation}
and note
\begin{equation} \label{D11a}
{\int_{-\infty}^{\infty}\beta_0(t)\,dt
=
\frac{\pi}{4}\int_{-\infty}^{\infty}\operatorname{sech}^2\!\left(\frac{\pi t}{2}\right)\,dt
=
\frac{\pi}{4}\cdot \frac{2}{\pi}
\left[
\tanh\!\left(\frac{\pi t}{2}\right)
\right]_{-\infty}^{\infty}
=1.}
\end{equation}
Define the twirled Petz recovery map by the Bochner integral
\begin{equation} \label{D12a}
\mathcal{R}^{\mathrm{tw}}_{\rho,B\to BC}(X):=\int_{-\infty}^{\infty}\beta_0(t)\,\mathcal{R}^{t}_{\rho,B\to BC}(X)\,dt,
\end{equation}
which is well-defined in operator norm because $t\mapsto\mathcal{R}^t_{\rho,B\to BC}(X)$ is continuous and $\beta_0$ is integrable with total mass $1$.
\end{definition} 

\begin{lemma}[Complete positivity and trace preservation of rotated Petz maps]\label{lem:rotated-petz-cptp}
Under Eq.~\eqref{A2e}, for each $t\in\mathbb{R}$ the map $\mathcal{R}^{t}_{\rho,B\to BC}$ defined by Eq.~\eqref{D9c} is completely positive and trace preserving, and consequently $\mathcal{R}^{\mathrm{tw}}_{\rho,B\to BC}$ is CPTP.
\end{lemma}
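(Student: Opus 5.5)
I will prove complete positivity by exhibiting $\mathcal{R}^{t}_{\rho,B\to BC}$ as a composition of manifestly completely positive maps, and trace preservation by a direct trace computation.

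\textbf{Complete positivity.} Write
\[
\mathcal{R}^{t}=\mathcal{C}_{V_t}\circ\mathcal{E}\circ\mathcal{C}_{W_t},
\]
with the following three factors.
\begin{itemize}
\item $\mathcal{C}_{W_t}(X):=W_tXW_t^\dagger$, where $W_t:=\rho_B^{-\frac{1+it}{2}}$. This is well defined because $\rho_B>0$.
\item $\mathcal{E}(Y):=Y\otimes\mathbf 1_C$, the embedding.
\item $\mathcal{C}_{V_t}(Z):=V_tZV_t^\dagger$, where $V_t:=\rho_{BC}^{\frac{1+it}{2}}$.
\end{itemize}
The adjoints are $W_t^\dagger=\rho_B^{-\frac{1-it}{2}}$ and $V_t^\dagger=\rho_{BC}^{\frac{1-it}{2}}$. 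This holds because $(\rho^{z})^\dagger=\rho^{\bar z}$ by functional calculus for a positive definite $\rho$, so the composition reproduces Eq.~\eqref{D9c} exactly.

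Each factor is completely positive. A conjugation $X\mapsto KXK^\dagger$ is CP, since $(\mathrm{id}_n\otimes\mathcal{C}_K)(P)=(\mathbf 1\otimes K)P(\mathbf 1\otimes K)^\dagger\ge0$ for $P\ge0$. The map $Y\mapsto Y\otimes\mathbf 1_C$ is a $*$-homomorphism and is therefore CP. Compositions of CP maps are CP.

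\textbf{Trace preservation.} For $X\in\mathcal{B}(\mathcal{H}_B)$, use cyclicity of the trace on $BC$ and the fact that powers of $\rho_{BC}$ commute:
\[
\operatorname{Tr}_{BC}\mathcal{R}^t(X)=\operatorname{Tr}_{BC}\!\bigl[\rho_{BC}\,(\rho_B^{-\frac{1+it}{2}}X\rho_B^{-\frac{1-it}{2}}\otimes\mathbf 1_C)\bigr].
\]
Here $\rho_{BC}^{\frac{1-it}{2}}\rho_{BC}^{\frac{1+it}{2}}=\rho_{BC}$. Taking the partial trace over $C$ first gives $\operatorname{Tr}_C\rho_{BC}=\rho_B$, which is the partial-trace relation of type Eq.~\eqref{S1}. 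The expression therefore equals $\operatorname{Tr}_B[\rho_B\rho_B^{-\frac{1+it}{2}}X\rho_B^{-\frac{1-it}{2}}]$. Cycling once more and using $\rho_B^{-\frac{1-it}{2}}\rho_B\rho_B^{-\frac{1+it}{2}}=\rho_B^{0}=\mathbf 1_B$, since all factors are functions of $\rho_B$, we obtain $\operatorname{Tr}_BX$.

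\textbf{Twirled map.} By Eq.~\eqref{D12a}, $\mathcal{R}^{\mathrm{tw}}$ is a Bochner average of the $\mathcal{R}^t$ with the probability density $\beta_0$, which has total mass $1$ by Eq.~\eqref{D11a}.
\begin{itemize}
\item Trace preservation passes through the integral because the trace is a bounded linear functional in finite dimension: $\operatorname{Tr}\mathcal{R}^{\mathrm{tw}}(X)=\int\beta_0(t)\operatorname{Tr}X\,dt=\operatorname{Tr}X$.
\item For complete positivity, fix $n$ and $P\ge0$. Then $(\mathrm{id}_n\otimes\mathcal{R}^{\mathrm{tw}})(P)=\int\beta_0(t)(\mathrm{id}_n\otimes\mathcal{R}^t)(P)\,dt$, an integral of positive operators against a nonnegative weight. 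It is positive because $\langle\psi,\cdot\,\psi\rangle$ commutes with the Bochner integral and the positive cone is closed.
\end{itemize}

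\textbf{Main obstacle.} The only delicate points are measurability and continuity of $t\mapsto\mathcal{R}^t(X)$, and the commutation of $\mathrm{id}_n\otimes$ with the integral. Both are immediate in finite dimension: $t\mapsto\rho^{it}=e^{it\ln\rho}$ is norm-continuous and uniformly bounded (unitary), so the integrand is dominated by a constant times $\beta_0$.
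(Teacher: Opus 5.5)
Your proposal is correct and follows essentially the same route as the paper. The paper writes $\mathcal{R}^{t}_{\rho,B\to BC}(X)$ as a single conjugation of $X\otimes\mathbf 1_C$ by $\rho_{BC}^{\frac{1+it}{2}}\bigl(\rho_B^{-\frac{1+it}{2}}\otimes\mathbf 1_C\bigr)$, which is your composition $\mathcal{C}_{V_t}\circ\mathcal{E}\circ\mathcal{C}_{W_t}$ collapsed into one step; it proves trace preservation by the same cyclicity argument together with $\operatorname{Tr}_C\rho_{BC}=\rho_B$, and it treats the twirled map as a convex combination of CPTP maps. Your added Bochner-integral details (norm continuity of $t\mapsto\rho^{it}$, domination by $\beta_0$, positivity checked through quadratic forms) spell out what the paper asserts in one line.
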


\begin{proof}
Fix $t\in\mathbb{R}$ and define
\begin{equation} \label{Ly1}
V_t:=\rho_{BC}^{\frac{1+it}{2}}\bigl(\rho_B^{-\frac{1+it}{2}}\otimes \mathbf{1}_C\bigr)\in\mathcal{B}(\mathcal{H}_B\otimes\mathcal{H}_C).
\end{equation}
Then Eq.~\eqref{D9c} can be rewritten as
\begin{equation} \label{L2h}
\mathcal{R}^{t}_{\rho,B\to BC}(X)=V_t\,(X\otimes \mathbf{1}_C)\,V_t^\dagger,
\end{equation}
because $V_t^\dagger=\bigl(\rho_B^{-\frac{1-it}{2}}\otimes\mathbf{1}_C\bigr)\rho_{BC}^{\frac{1-it}{2}}$ and multiplication yields Eq.~\eqref{D9c}. The map $X\mapsto X\otimes \mathbf{1}_C$ is completely positive, and conjugation by $V_t$ is completely positive, hence the composition Eq.~\eqref{L2h} is completely positive. For trace preservation, use cyclicity of the trace and $\operatorname{Tr}_{BC}\bigl((X\otimes\mathbf{1}_C)Y\bigr)=\operatorname{Tr}_B\bigl(X\,\operatorname{Tr}_C Y\bigr)$ for all $Y\in\mathcal{B}(\mathcal{H}_B\otimes\mathcal{H}_C)$ to compute
\begin{align} \label{Lc3c}
\operatorname{Tr}_{BC}\mathcal{R}^{t}_{\rho,B\to BC}(X)
&=\operatorname{Tr}_{BC}\bigl(V_t(X\otimes\mathbf{1}_C)V_t^\dagger\bigr)
=\operatorname{Tr}_{BC}\bigl((X\otimes\mathbf{1}_C)V_t^\dagger V_t\bigr)\nonumber\\
&=\operatorname{Tr}_{BC}\Bigl((X\otimes\mathbf{1}_C)\bigl(\rho_B^{-\frac{1-it}{2}}\otimes\mathbf{1}_C\bigr)\rho_{BC}^{\frac{1-it}{2}}\rho_{BC}^{\frac{1+it}{2}}\bigl(\rho_B^{-\frac{1+it}{2}}\otimes\mathbf{1}_C\bigr)\Bigr)\nonumber\\
&=\operatorname{Tr}_{BC}\Bigl((X\otimes\mathbf{1}_C)\bigl(\rho_B^{-\frac{1-it}{2}}\otimes\mathbf{1}_C\bigr)\rho_{BC}\bigl(\rho_B^{-\frac{1+it}{2}}\otimes\mathbf{1}_C\bigr)\Bigr)\nonumber\\
&=\operatorname{Tr}_{B}\Bigl(X\,\rho_B^{-\frac{1-it}{2}}\bigl(\operatorname{Tr}_C\rho_{BC}\bigr)\rho_B^{-\frac{1+it}{2}}\Bigr)\nonumber\\
&=\operatorname{Tr}_{B}\Bigl(X\,\rho_B^{-\frac{1-it}{2}}\rho_B\,\rho_B^{-\frac{1+it}{2}}\Bigr)
=\operatorname{Tr}_B\Bigl(X\,\rho_B^{\frac{1+it}{2}}\rho_B^{-\frac{1+it}{2}}\Bigr)
=\operatorname{Tr}_B(X),
\end{align}
where $\operatorname{Tr}_C\rho_{BC}=\rho_B$ is Eq.~\eqref{A2e} {and the last step uses
\begin{equation}
\rho_B^{-(1-it)/2}\,\rho_B\,\rho_B^{-(1+it)/2}=1
\end{equation}
on the support of $\rho_B$.} Therefore $\mathcal{R}^{t}_{\rho,B\to BC}$ is trace preserving. Since $\mathcal{R}^{\mathrm{tw}}_{\rho,B\to BC}$ is a convex combination of CPTP maps with weights integrating to $1$ by Eq.~\eqref{D11a}, it is CPTP.

\end{proof}

\begin{lemma}[Fuchs-van de Graaf inequalities]\label{lem:fvdg}
For density operators $\sigma,\tau$ on the same finite-dimensional Hilbert space, with $F(\sigma,\tau)$ defined by Eq.~\eqref{D6c} and $T(\sigma,\tau)$ by Eq.~\eqref{D5d}, the inequalities
\begin{equation} \label{L4g}
1-F(\sigma,\tau)\le T(\sigma,\tau)\le \sqrt{1-F(\sigma,\tau)^2}
\end{equation}
hold.
\end{lemma}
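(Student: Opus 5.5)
Both inequalities come from one idea: use purifications for the lower bound on $T$, and a Helstrom-type optimal measurement with classical fidelity for the upper bound.

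\emph{Upper bound $T\le\sqrt{1-F^2}$.} By Uhlmann's theorem, with $\mathcal{H}$ finite-dimensional, choose purifications $|\psi_\sigma\rangle,|\psi_\tau\rangle$ on $\mathcal{H}\otimes\mathcal{H}'$ such that $|\langle\psi_\sigma,\psi_\tau\rangle|=F(\sigma,\tau)$. For two pure states the trace distance is computed directly: $\frac12\bigl\||\psi\rangle\langle\psi|-|\phi\rangle\langle\phi|\bigr\|_1=\sqrt{1-|\langle\psi,\phi\rangle|^2}$. To see this, restrict to the two-dimensional span of $\psi$ and $\phi$. The difference there is a traceless rank-two self-adjoint matrix with eigenvalues $\pm\sqrt{1-|\langle\psi,\phi\rangle|^2}$, and the determinant gives this immediately. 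The partial trace over $\mathcal{H}'$ is CPTP, and trace distance is monotone under CPTP maps. This monotonicity follows from Lemma~\ref{lem:helstrom}: for $0\le M\le\mathbf 1$ on $\mathcal{H}$, the operator $M\otimes\mathbf 1$ is admissible on the larger space. Monotonicity gives $T(\sigma,\tau)\le\sqrt{1-F^2}$.

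\emph{Lower bound $1-F\le T$.} Take the POVM $\{P_+,\mathbf 1-P_+\}$ of Lemma~\ref{lem:helstrom}, or more generally any projective measurement. It yields distributions $p,q$ with $\frac12\|p-q\|_1\le T(\sigma,\tau)$. I would then use the standard fact that there exists a measurement whose classical fidelity $\sum_i\sqrt{p_iq_i}$ equals $F(\sigma,\tau)$. The choice is the eigenbasis of the operator $\sigma^{-1/2}(\sigma^{1/2}\tau\sigma^{1/2})^{1/2}\sigma^{-1/2}$ (Fuchs--Caves), first perturbing $\sigma$ to be faithful and then passing to the limit by continuity. For that measurement, the elementary inequality $(\sqrt{p_i}-\sqrt{q_i})^2\le|p_i-q_i|$ gives $1-\sum_i\sqrt{p_iq_i}\le\frac12\sum_i|p_i-q_i|$. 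Combining, $1-F\le\frac12\|p-q\|_1\le T$.

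\emph{Main obstacle.} The delicate step is the existence of the fidelity-achieving measurement in the lower bound. This includes the perturbation-and-limit argument needed when $\sigma$ is not faithful. I would try first to cite Fuchs--van de Graaf directly, since the paper already invokes standard results by citation in Lemma~\ref{lem:dataproc}. If an in-house argument is wanted, a cleaner alternative is available. Write $F=\operatorname{Tr}(U\sqrt\sigma\sqrt\tau)$ with $U$ unitary from the polar decomposition. Then expand $\|\sqrt\sigma-U^\dagger\ldots\|$ and use the Powers--Størmer inequality $\|\sqrt\sigma-\sqrt\tau\|_2^2\le\|\sigma-\tau\|_1$, together with $F\ge\operatorname{Tr}(\sqrt\sigma\sqrt\tau)$. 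These give $2-2F\le 2-2\operatorname{Tr}\sqrt\sigma\sqrt\tau=\|\sqrt\sigma-\sqrt\tau\|_2^2\le 2T$, which is the required bound. In this route the only nontrivial input is Powers--Størmer, which I would cite.
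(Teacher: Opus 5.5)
Your proof is correct, but it takes a different route from the paper. The paper gives no argument: it cites Fuchs--van de Graaf and Watrous, just as it does for Lemma~\ref{lem:dataproc} and Lemma~\ref{lem:fawzi-renner}.

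\emph{Upper bound.} You reduce it to Uhlmann's theorem and monotonicity of trace distance. The monotonicity comes out of the paper's own Lemma~\ref{lem:helstrom} via $M\mapsto M\otimes\mathbf 1$. Your pure-state determinant computation is correct, and your normalization $F=\max|\langle\psi_\sigma,\psi_\tau\rangle|$ matches the unsquared fidelity of Eq.~\eqref{D6c}.

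\emph{Lower bound.} Your Powers--St{\o}rmer chain is correct, and it is the better of your two routes because it needs no perturbation to faithful $\sigma$. You can delete the polar-decomposition sentence in front of it. The chain only uses Powers--St{\o}rmer together with $F\ge\operatorname{Tr}\sqrt\sigma\sqrt\tau$, and the latter is just Lemma~\ref{lem:trace_duality} with $Y=\mathbf 1$.

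In the Fuchs--Caves route, drop the reference to the Helstrom projector $P_+$. Its classical fidelity is at least $F$ and can be strictly larger. Then $1-F_{\mathrm{cl}}\le T$ does not give $1-F\le T$, so only the fidelity-achieving measurement works there.

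\emph{Comparison.} Your version makes the lemma self-contained apart from Uhlmann and Powers--St{\o}rmer, instead of resting on a single black-box citation. The paper's citation buys brevity. Note also that only the upper inequality $T\le\sqrt{1-F^2}$ is used downstream, in Eq.~\eqref{P2b} of Theorem~\ref{thm:approx-markov}. So the step you flag as the main obstacle sits in the half the paper never needs, while the half that matters is handled cleanly by your purification argument.
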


\begin{proof}
Inequalities Eq.~\eqref{L4g} are the Fuchs-van de Graaf bounds \cite{FuchsVanDeGraaf1999,Watrous2018}.
\end{proof}

\begin{lemma}[Exact Markov condition and Petz recovery]\label{lem:markov-petz}
Under Eq.~\eqref{A2e}, $I(A:C|B)_\rho=0$ if and only if
\begin{equation} \label{L5g}
\rho_{ABC}=(\mathrm{id}_A\otimes \mathcal{R}^{\mathrm{P}}_{\rho,B\to BC})(\rho_{AB}),
\end{equation}
where $\mathcal{R}^{\mathrm{P}}_{\rho,B\to BC}$ is defined by Eq.~\eqref{D8c}.
\end{lemma}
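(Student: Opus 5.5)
The plan is to prove the two directions separately, using the fact that the Petz map exactly inverts the partial trace $\operatorname{Tr}_C$ on $\rho_{BC}$, together with equality conditions in the data-processing inequality (Lemma~\ref{lem:dataproc}).

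\emph{($\Leftarrow$) Recovery implies Markov.} Suppose Eq.~\eqref{L5g} holds. Let $\mathcal{N}:=\mathrm{id}_A\otimes\operatorname{Tr}_C$ and $\mathcal{R}:=\mathrm{id}_A\otimes\mathcal{R}^{\mathrm{P}}_{\rho,B\to BC}$. A direct computation from Eq.~\eqref{D8c} gives $\mathcal{R}^{\mathrm{P}}_{\rho,B\to BC}(\rho_B)=\rho_{BC}^{1/2}(\mathbf{1}_B\otimes\mathbf{1}_C)\rho_{BC}^{1/2}=\rho_{BC}$. Hence $\mathcal{R}(\rho_{AB})=\rho_{ABC}$ and $\mathcal{R}(\mathbf{1}_A\otimes\rho_B)=\mathbf{1}_A\otimes\rho_{BC}$. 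Here I normalize by $d_A^{-1}$ or simply use the unnormalized relative entropy; data processing remains valid for positive second arguments. Next I would use the identity
\[
I(A:C|B)_\rho=D(\rho_{ABC}|\mathbf{1}_A\otimes\rho_{BC})-D(\rho_{AB}|\mathbf{1}_A\otimes\rho_B),
\]
which follows by expanding the logarithms exactly as in Eq.~\eqref{Der-1}. Monotonicity under $\mathcal{N}$ gives $I\ge0$. Monotonicity under the CPTP map $\mathcal{R}$ (Lemma~\ref{lem:rotated-petz-cptp} with $t=0$) gives
\[
D(\rho_{AB}|\mathbf{1}\otimes\rho_B)\ge D(\rho_{ABC}|\mathbf{1}\otimes\rho_{BC}),
\]
so $I\le0$. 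Together these give $I=0$.

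\emph{($\Rightarrow$) Markov implies recovery.} If $I=0$, then data processing under $\mathcal{N}$ is saturated for the pair $\rho_{ABC}$ and $\sigma:=\mathbf{1}_A\otimes\rho_{BC}$. I would invoke Petz's sufficiency theorem: equality $D(\rho|\sigma)=D(\mathcal{N}\rho|\mathcal{N}\sigma)$ with $\sigma>0$ implies $\rho=\mathcal{R}^{\mathrm{P}}_{\sigma,\mathcal{N}}(\mathcal{N}\rho)$. Its Petz map is $X\mapsto\sigma^{1/2}\mathcal{N}^\dagger\bigl((\mathcal{N}\sigma)^{-1/2}X(\mathcal{N}\sigma)^{-1/2}\bigr)\sigma^{1/2}$. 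Substituting $\sigma=\mathbf{1}_A\otimes\rho_{BC}$ and $\mathcal{N}^\dagger(Y)=Y\otimes\mathbf{1}_C$ reduces this to $\mathrm{id}_A\otimes\mathcal{R}^{\mathrm{P}}_{\rho,B\to BC}$, which is Eq.~\eqref{L5g}.

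The main obstacle is the equality case of data processing. If citing Petz is not enough, I would give a self-contained argument. Using the integral representation of $\ln$ from Lemma~\ref{lem:log-frechet}, or the operator-concavity (Lieb) proof of monotonicity, saturation forces
\[
\ln\rho_{ABC}-\ln\rho_{BC}=(\ln\rho_{AB}-\ln\rho_B)\otimes\mathbf{1}_C .
\]
From there one can argue via the Hayden--Jozsa--Petz--Winter structure theorem. A more direct route is the quantitative bound of Fawzi--Renner, $-2\ln F\bigl(\rho_{ABC},\mathcal{R}^{\mathrm{tw}}(\rho_{AB})\bigr)\le I$, read at $I=0$. This gives exact recovery by the twirled map, and therefore by each rotated map on the support of $\beta_0$. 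Continuity in $t$ then yields recovery at $t=0$, namely the Petz map. I expect to rely on the Petz sufficiency citation as the main route.
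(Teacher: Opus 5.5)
Your main route is correct and is essentially the paper's proof: you write $I(A:C|B)_\rho$ as a difference of relative entropies, read $I=0$ as saturation of monotonicity under $\operatorname{Tr}_C$, and invoke Petz's sufficiency theorem, whose recovery map collapses to $\mathrm{id}_A\otimes\mathcal{R}^{\mathrm{P}}_{\rho,B\to BC}$; your reference operator $\mathbf{1}_A\otimes\rho_{BC}$, used in place of the paper's $\rho_A\otimes\rho_{BC}$, is if anything cleaner, since it is faithful under Eq.~\eqref{A2e} without needing $\rho_A>0$. The only flaw is in the optional Fawzi--Renner fallback: at $I=0$ the twirled bound gives only $\rho_{ABC}=\int\beta_0(t)\,(\mathrm{id}_A\otimes\mathcal{R}^{t}_{\rho,B\to BC})(\rho_{AB})\,dt$, and a mixed state can equal a nontrivial mixture, so exact recovery by each rotated map does not follow unless you use the Junge--Renner--Sutter--Wilde--Winter form with the $\beta_0$-average outside the logarithm, which forces fidelity $1$ for almost every $t$ and then at $t=0$ by continuity.
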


\begin{proof}
{The equivalence between vanishing conditional mutual information and equality in strong subadditivity is standard. Moreover,
\begin{equation}
I(A:C|B)_\rho
=
D\!\left(\rho_{ABC}\,\middle\|\, \rho_A\otimes \rho_{BC}\right)
-
D\!\left(\rho_{AB}\,\middle\|\, \rho_A\otimes \rho_B\right),
\end{equation}
so $I(A:C|B)_\rho=0$ is precisely the equality case of monotonicity of relative entropy under the channel $\operatorname{Tr}_C$ with reference state $\rho_A\otimes \rho_{BC}$. Petz's theorem then characterizes this equality case by exact recovery. Since the channel acts trivially on the $A$ factor, the recovery map factorizes as $\operatorname{id}_A\otimes R^P_{\rho,B\to BC}$, which yields Eq.~\eqref{L5g} \cite{Petz1986,Ruskai2002,HaydenJozsaPetzWinter2004,OhyaPetz2004}.}
\end{proof}

\begin{lemma}[Quantitative recovery bound]\label{lem:fawzi-renner}
For every density operator $\rho_{ABC}$ on $\mathcal{H}_A\otimes\mathcal{H}_B\otimes\mathcal{H}_C$ with $\rho_B>0$ and $\rho_{BC}>0$, there exists a CPTP map $\mathcal{R}_{B\to BC}$ such that
\begin{equation} \label{L6g}
I(A:C|B)_\rho\ge -2\ln F\Bigl(\rho_{ABC},(\mathrm{id}_A\otimes \mathcal{R}_{B\to BC})(\rho_{AB})\Bigr).
\end{equation}
Moreover, $\mathcal{R}_{B\to BC}$ can be chosen to be the twirled Petz map $\mathcal{R}^{\mathrm{tw}}_{\rho,B\to BC}$ defined in Eq.~\eqref{D12a}.
\end{lemma}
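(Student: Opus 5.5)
This is the Fawzi--Renner theorem in the universal-recovery form of Junge--Renner--Sutter--Wilde--Winter, and the plan is to derive it from a strengthened monotonicity of relative entropy. The first step is to write the conditional mutual information as a relative-entropy difference, exactly as in the proof of Lemma~\ref{lem:markov-petz}:
\begin{equation}
I(A:C|B)_\rho = D(\rho_{ABC}\|\sigma_{ABC}) - D\bigl(\operatorname{Tr}_C\rho_{ABC}\,\big\|\,\operatorname{Tr}_C\sigma_{ABC}\bigr),
\end{equation}
where $\sigma_{ABC}:=\rho_A\otimes\rho_{BC}$ and the channel is $\mathcal{N}=\operatorname{Tr}_C$. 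Note that $\mathcal{N}(\sigma)=\rho_A\otimes\rho_B$. Faithfulness of $\rho_B$ and $\rho_{BC}$ guarantees that the relevant logarithms and inverse powers exist; if $\rho_A$ is not faithful, I will restrict $\mathcal{H}_A$ to the support of $\rho_A$.

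The second step is to establish the general inequality
\begin{equation}
D(\rho\|\sigma)-D(\mathcal{N}(\rho)\|\mathcal{N}(\sigma)) \ge -2\ln F\bigl(\rho,\mathcal{R}^{\mathrm{tw}}_{\sigma,\mathcal{N}}(\mathcal{N}(\rho))\bigr),
\end{equation}
with $\mathcal{R}^{\mathrm{tw}}$ the $\beta_0$-average of rotated Petz maps. The route is to express the relative entropy as a derivative of sandwiched or Rényi-type quantities $\|\sigma^{(1-\alpha)/2\alpha}\rho\,\sigma^{(1-\alpha)/2\alpha}\|_\alpha$ at $\alpha\to1$. Then I will apply Hirschman's strengthening of the Stein--Hirschman complex interpolation theorem on the strip $0\le\operatorname{Re} z\le1$ to the holomorphic family
\begin{equation}
G(z) = \bigl(\mathcal{N}(\rho)^{z/2}\mathcal{N}(\sigma)^{-z/2}\otimes\mathbf 1\bigr)\,\sigma^{z/2}\rho^{1/2}.
\end{equation}
On the line $\operatorname{Re} z=1$ its norm is controlled by the fidelity with the rotated Petz map at imaginary part $t$. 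The Poisson kernel of the strip, restricted to the boundary $\operatorname{Re}\theta=1$ and with $\theta\to0$, is exactly $\beta_0(t)=\tfrac{\pi}{2}(\cosh\pi t+1)^{-1}$. Taking the limit $\theta\to0$ in the interpolation inequality therefore produces $\int\beta_0(t)\ln F(\rho,\mathcal{R}^t(\mathcal{N}(\rho)))\,dt$, and I will use concavity of $\ln$ together with joint concavity of fidelity in its second argument to move the average inside, which yields the twirled map.

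The third step identifies $\mathcal{R}^{t}_{\sigma,\mathcal{N}}$ for this choice of $\sigma$ and $\mathcal{N}$. Since $\sigma=\rho_A\otimes\rho_{BC}$, the $\rho_A$ powers cancel between $\sigma^{(1+it)/2}$ and $(\rho_A\otimes\rho_B)^{-(1+it)/2}$. The map therefore reduces to $\mathrm{id}_A\otimes\mathcal{R}^t_{\rho,B\to BC}$ as in Eq.~\eqref{D9c}, which Lemma~\ref{lem:rotated-petz-cptp} shows is CPTP. Substituting into the inequality of the second step gives Eq.~\eqref{L6g}.

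I expect the main obstacle to be the interpolation step: I must verify the boundary norm bounds at $\operatorname{Re} z=0$ (where $\|G(it)\|_2\le1$) and at $\operatorname{Re} z=1$, and justify the derivative at the endpoint of the strip. I would first attempt this in finite dimensions with all operators positive definite, which is the setting here, and cite \cite{FawziRenner2015} for any remaining technical limits.
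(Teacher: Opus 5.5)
The paper does not derive this lemma. Its proof is only a citation to Fawzi--Renner and to the later rotated/twirled-Petz refinements (Sutter--Berta--Tomamichel; Junge--Renner--Sutter--Wilde--Winter). Your proposal reconstructs the Junge--Renner--Sutter--Wilde--Winter interpolation argument behind that citation, and the outline is sound. The following all check out:
\begin{itemize}
\item the chain-rule identity $I(A:C|B)=D(\rho_{ABC}\|\rho_A\otimes\rho_{BC})-D(\rho_{AB}\|\rho_A\otimes\rho_B)$;
\item your $G(z)$, which omits the Stinespring isometry; that is harmless for $\mathcal N=\operatorname{Tr}_C$ but would not give the ``general inequality'' you state for an arbitrary channel;
\item the bound $\|G(it)\|_2\le 1$;
\item the exact identity $\|G(1+it)\|_1=F\bigl(\rho_{ABC},(\mathrm{id}_A\otimes\mathcal R^{-t}_{\rho,B\to BC})(\rho_{AB})\bigr)$;
\item the Hirschman weights $\beta_\theta(t)=\sin(\pi\theta)/[2\theta(\cosh\pi t+\cos\pi\theta)]\to\beta_0(t)$;
\item the collapse of the general rotated Petz map for $\sigma=\rho_A\otimes\rho_{BC}$ to $\mathrm{id}_A\otimes\mathcal R^{t}_{\rho,B\to BC}$.
\end{itemize}
What your route buys over the bare citation is a check of conventions the paper takes on trust. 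It shows that the cited universal recovery map is exactly Eq.~\eqref{D12a}; the sign of $t$ is irrelevant because $\beta_0$ is even. It also confirms that the constant is $-2\ln F$ for the \emph{unsquared} fidelity of Eq.~\eqref{D6c}. The price is the analytic bookkeeping that the citation hides.

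Three places need tightening.

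First, the lemma assumes only $\rho_B,\rho_{BC}>0$, so $\rho_{ABC}$ and $\rho_{AB}$ may be rank-deficient; this is not the ``all operators positive definite'' setting you invoke. One fix is to define $\rho_{AB}^{z/2}$ on its support; then still $G(0)=\rho_{ABC}^{1/2}$, because $\operatorname{supp}\rho_{ABC}\subseteq\operatorname{supp}\rho_{AB}\otimes\mathcal H_C$. The other fix is to perturb $\rho_{ABC}$ toward the maximally mixed state and pass to the limit. That uses continuity of the CMI and of $F$, plus dominated convergence in $t$ for the twirled map, since $\rho_B,\rho_{BC}$ stay uniformly bounded below.

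Second, the quantity you must differentiate is $\ln\|G(\theta)\|_{p_\theta}$ with $p_\theta=2/(1+\theta)$. A direct computation gives $-\tfrac12\bigl[D(\rho\|\sigma)-D(\mathcal N(\rho)\|\mathcal N(\sigma))\bigr]$ at $\theta=0$, so no sandwiched-R\'enyi $\alpha\to1$ limit is needed.

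Third, after dividing the interpolation bound by $\theta$, pass $\theta\to0$ in $\int\beta_\theta(t)\ln F_t\,dt$ by applying Fatou's lemma to $-\ln F_t\ge0$. This gives exactly the needed direction. The final Jensen-plus-concavity step then proceeds as you describe.
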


\begin{proof}
Inequality Eq.~\eqref{L6g} is the Fawzi-Renner strengthened strong subadditivity theorem, and the explicit choice $\mathcal{R}=\mathcal{R}^{\mathrm{tw}}_{\rho,B\to BC}$ is due to subsequent refinements identifying a universal rotated-Petz recovery achieving Eq.~\eqref{L6g} \cite{FawziRenner2015,SutterBertaTomamichel2016,JungeRennerSutterWildeWinter2016}.
\end{proof} 

\begin{theorem}[Approximate Markov strengthening]\label{thm:approx-markov}
Under assumptions Eq.~\eqref{A1f}-\eqref{A3f}, let $\varepsilon:=I(A:C|B)_\rho$ in nats and define the recovered state
\begin{equation} \label{T1e}
\sigma_{ABC}:=(\mathrm{id}_A\otimes \mathcal{R}^{\mathrm{tw}}_{\rho,B\to BC})(\rho_{AB}).
\end{equation}
Then
\begin{equation} \label{T2e}
F(\rho_{ABC},\sigma_{ABC})\ge e^{-\varepsilon/2},
\end{equation}
and consequently
\begin{equation} \label{T3e}
|\rho_{ABC}-\sigma_{ABC}|_1\le 2\sqrt{1-e^{-\varepsilon}}.
\end{equation}
Equivalently, writing $\varepsilon_2:=I_2(A:C|B)_\rho=\varepsilon/\ln 2$ in bits, one has
\begin{equation} \label{T4e}
F(\rho_{ABC},\sigma_{ABC})\ge 2^{-\varepsilon_2/2},\qquad
|\rho_{ABC}-\sigma_{ABC}|_1\le 2\sqrt{1-2^{-\varepsilon_2}}.
\end{equation}
Furthermore, for every bounded operator $O_{ABC}\in\mathcal{B}(\mathcal{H}_{ABC})$,
\begin{equation} \label{T5d}
\left|\operatorname{Tr}(\rho_{ABC}O_{ABC})-\operatorname{Tr}(\sigma_{ABC}O_{ABC})\right|
\le |O_{ABC}|_\infty\,|\rho_{ABC}-\sigma_{ABC}|_1
\le 2|O_{ABC}|_\infty\sqrt{1-e^{-\varepsilon}}.
\end{equation}
\end{theorem}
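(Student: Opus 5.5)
The proof chains three earlier results. First, Lemma~\ref{lem:fawzi-renner} with the twirled Petz map and Lemma~\ref{lem:fvdg} convert the conditional mutual information into a trace-norm bound. Then the base-2 version follows by rescaling, and the observable estimate follows from trace-norm duality (Lemma~\ref{lem:trace_duality}).

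\textbf{Step 1: fidelity bound.} The recovered state $\sigma_{ABC}$ in Eq.~\eqref{T1e} is a legitimate density operator. Indeed, $\mathrm{id}_A\otimes\mathcal{R}^{\mathrm{tw}}_{\rho,B\to BC}$ is CPTP by Lemma~\ref{lem:rotated-petz-cptp}, since tensoring a CPTP map with the identity preserves complete positivity and trace preservation. The faithfulness hypotheses $\rho_B>0$ and $\rho_{BC}>0$ make all complex powers well defined. Lemma~\ref{lem:fawzi-renner} with $\mathcal{R}=\mathcal{R}^{\mathrm{tw}}$ gives
\[
\varepsilon\ge -2\ln F(\rho_{ABC},\sigma_{ABC}).
\]
Dividing by $-2$ and exponentiating (the exponential is monotone) yields $F\ge e^{-\varepsilon/2}$, which is Eq.~\eqref{T2e}. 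The degenerate case $F=0$ cannot occur when $\varepsilon<\infty$. In finite dimension $\varepsilon$ is finite, so no special case is needed.

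\textbf{Step 2: trace-norm bound.} Apply the upper Fuchs--van de Graaf inequality in Lemma~\ref{lem:fvdg}:
\[
\tfrac12|\rho-\sigma|_1=T\le\sqrt{1-F^2}.
\]
The map $F\mapsto\sqrt{1-F^2}$ is decreasing on $[0,1]$, so $F\ge e^{-\varepsilon/2}$ gives $F^2\ge e^{-\varepsilon}$. Hence $|\rho-\sigma|_1\le 2\sqrt{1-e^{-\varepsilon}}$, which is Eq.~\eqref{T3e}.

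\textbf{Step 3: bits.} Substitute $\varepsilon=\varepsilon_2\ln 2$ and use $e^{-\varepsilon_2\ln2/2}=2^{-\varepsilon_2/2}$. Both statements in Eq.~\eqref{T4e} follow at once.

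\textbf{Step 4: observables.} For any $O_{ABC}$, linearity of the trace gives
\[
\operatorname{Tr}(\rho O)-\operatorname{Tr}(\sigma O)=\operatorname{Tr}((\rho-\sigma)O).
\]
The H\"older-type inequality of Lemma~\ref{lem:trace_duality}, valid trivially in finite dimension, bounds this in modulus by $|O|_\infty|\rho-\sigma|_1$. Inserting Step 2 gives Eq.~\eqref{T5d}.

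\textbf{Main obstacle.} The only non-elementary input is the Fawzi--Renner bound attained by the specific twirled map with weight $\beta_0$. This is imported from the literature via Lemma~\ref{lem:fawzi-renner} and not reproved. The point to check is that the weight $\beta_0$ and the rotation convention in Eq.~\eqref{D9c} match the Junge--Renner--Sutter--Wilde--Winter universal recovery map. If conventions differ, I would reparametrize $t\mapsto -t$; $\beta_0$ is even, so the twirl is unchanged. Everything else consists of monotone rearrangements.
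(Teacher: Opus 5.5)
Your proposal is correct and matches the paper's proof step for step: Lemma~\ref{lem:fawzi-renner} with $\mathcal{R}^{\mathrm{tw}}_{\rho,B\to BC}$ gives $\varepsilon\ge-2\ln F(\rho_{ABC},\sigma_{ABC})$, exponentiating gives Eq.~\eqref{T2e}, the upper Fuchs--van de Graaf bound gives Eq.~\eqref{T3e}, base conversion gives Eq.~\eqref{T4e}, and trace-norm duality gives Eq.~\eqref{T5d}. On your convention worry, the paper's rotation $\rho^{(1+it)/2}$ weighted by $\beta_0(t)$ already coincides with the Junge--Renner--Sutter--Wilde--Winter universal map, so no reparametrization is needed; note also that a flip $t\mapsto -t$ alone could not have repaired a rescaling mismatch in $t$.
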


\begin{proof}
Apply Lemma~\ref{lem:fawzi-renner} with $\mathcal{R}=\mathcal{R}^{\mathrm{tw}}_{\rho,B\to BC}$ to obtain Eq.~\eqref{L6g} with $\sigma_{ABC}$ as in Eq.~\eqref{T1e}, which gives
\begin{equation} \label{P1c}
\varepsilon\ge -2\ln F(\rho_{ABC},\sigma_{ABC}).
\end{equation}
Exponentiating Eq.~\eqref{P1c} yields $F(\rho_{ABC},\sigma_{ABC})\ge e^{-\varepsilon/2}$, which is Eq.~\eqref{T2e}. Apply the upper Fuchs-van de Graaf inequality in Lemma~\ref{lem:fvdg} to obtain
\begin{equation} \label{P2b}
\frac{1}{2}|\rho_{ABC}-\sigma_{ABC}|_1
=T(\rho_{ABC},\sigma_{ABC})
\le \sqrt{1-F(\rho_{ABC},\sigma_{ABC})^2}
\le \sqrt{1-e^{-\varepsilon}},
\end{equation}
which is Eq.~\eqref{T3e}. The bit formulation Eq.~\eqref{T4e} follows from $e^{-\varepsilon/2}=e^{-(\ln 2)\varepsilon_2/2}=2^{-\varepsilon_2/2}$ and from $e^{-\varepsilon}=2^{-\varepsilon_2}$. For Eq.~\eqref{T5d}, use the trace-norm duality inequality $|\operatorname{Tr}(XO)|\le |X|_1|O|_\infty$ with $X=\rho_{ABC}-\sigma_{ABC}$ \cite{Bhatia1997,Watrous2018} and then substitute Eq.~\eqref{T3e}.
\end{proof} 
In the exact Markov limit $\varepsilon\downarrow 0$, Eq.~\eqref{T2e} gives
\begin{equation}
F(\rho_{ABC},\sigma_{ABC})\ge 1,
\qquad
F(\rho_{ABC},\sigma_{ABC})=1.
\end{equation}
Therefore $\rho_{ABC}=\sigma_{ABC}$ because fidelity equals $1$ if and
only if states coincide. Consequently Eq.~\eqref{T3e} yields
$|\rho_{ABC}-\sigma_{ABC}|_1=0$, consistent with
Lemma~\ref{lem:markov-petz}. {Invariance under local isometries holds as
follows. Let $V_A,V_B,V_C$ be isometries and define
\begin{equation}
\tilde\rho
:=
(V_A\otimes V_B\otimes V_C)\,\rho\,(V_A\otimes V_B\otimes V_C)^\dagger.
\end{equation}
Then $I(A:C|B)_{\tilde\rho}=I(A:C|B)_\rho$ by unitary invariance of entropy. If the isometries embed into larger Hilbert spaces, the pushed-forward marginals $\tilde\rho_B$ and $\tilde\rho_{BC}$ may acquire zero eigenvalues on the orthogonal complements of their ranges; accordingly, the inverses and complex powers appearing in the Petz and rotated Petz maps are understood on the supports of these operators, equivalently via the Moore-Penrose pseudoinverse. With this support convention, the rotated Petz map constructed from $\tilde\rho_{BC}$ and $\tilde\rho_B$ satisfies
\begin{equation}
(\operatorname{id}_A\otimes \widetilde{R}^{\,tw})(\tilde\rho_{AB})
=
(V_A\otimes V_B\otimes V_C)\,\sigma\,(V_A\otimes V_B\otimes V_C)^\dagger,
\end{equation}
so Eq.~\eqref{T2e}-\eqref{T3e} are preserved by fidelity and trace-distance invariance under isometries.} Restriction to classical probability distributions is obtained by specializing $\rho_{ABC}$ to a commuting family diagonal in a product basis, in which case $I(A:C|B)_\rho$ reduces to the classical conditional mutual information and $\mathcal{R}^{\mathrm{P}}$ reduces to the classical Bayes recovery kernel; Eq.~\eqref{T2e} then becomes a quantitative classical recovery bound that strengthens classical strong subadditivity, consistent with the classical analogs of \cite{FawziRenner2015}. Positivity of $I(A:C|B)_\rho$ is guaranteed by strong subadditivity \cite{LiebRuskai1973} and is compatible with Eq.~\eqref{L6g} because $F\le 1$ implies $-2\ln F\ge 0$.

\section{Concrete calculation using AdS for \texorpdfstring{$\mathcal{N}=4$}{N=4} super Yang-Mills}\label{app:ads5}
\renewcommand{\theequation}{I.\arabic{equation}}
Fix $N\in\mathbb{N}$ with $N\ge 2$ and let $\mathcal{N}=4$ super Yang-Mills with gauge group $SU(N)$ be defined on $(\mathbb{R}^{1,3},\eta)$ with vacuum state $\rho_{\mathrm{vac}}$ on the physical Hilbert space $\mathcal{H}_{\mathrm{CFT}}$.
Assume a mathematically explicit localization framework in which, for every bounded open $U\subset\mathbb{R}^3$ at $t=0$, there exists a von Neumann algebra $\mathcal{A}(U)\subset\mathcal{B}(\mathcal{H}_{\mathrm{CFT}})$ such that
$U_1\subset U_2\Rightarrow \mathcal{A}(U_1)\subset\mathcal{A}(U_2)$ and
$U_1\cap U_2=\varnothing\Rightarrow [\mathcal{A}(U_1),\mathcal{A}(U_2)]=0$,
and assume either a gauge-invariant UV regulator realizing a tensor factorization for disjoint regions or an algebraic split property ensuring existence of product states for separated algebras.
Assume throughout that all mutual informations used below are finite and are computed in bits via relative entropy.
Fix two measurable, disjoint subsets $A,B\subset\mathbb{R}^3$ with strictly positive separation in the Euclidean metric $h$ induced at $t=0$,
\begin{equation}\label{A1g}
	A\cap B=\varnothing,\qquad
	\mathrm{dist}_h(A,B):=\inf\{ |x-y|:x\in A,\ y\in B\}\ge s_0>0.
\end{equation}

Assume the standard AdS/CFT correspondence in the planar strong-coupling regime \cite{Maldacena1998,GKP1998,Witten1998,Aharony2000}:
there exist parameters $\ell_{\mathrm{AdS}}>0$, $G_5>0$, $\ell_s>0$ such that in the limit
\begin{equation}\label{A2f}
	N\to\infty,\qquad
	\lambda:=g_{\mathrm{YM}}^2N\to\infty,\qquad
	\varepsilon_{\mathrm{grav}}:=\frac{G_5}{\ell_{\mathrm{AdS}}^{3}}\to 0,\qquad
	\varepsilon_{\mathrm{str}}:=\frac{\ell_s^2}{\ell_{\mathrm{AdS}}^2}\to 0,
\end{equation}
the CFT generating functional for a single-trace scalar primary $O_\Delta$ of scaling dimension $\Delta>2$ equals the renormalized bulk on-shell action of a free massive scalar field $\phi$ on Euclidean AdS$_5$ with
$m^2\ell_{\mathrm{AdS}}^2=\Delta(\Delta-4)$ in the GKPW sense.
Assume the Euclidean continuation is performed by Wick rotation $t=-i\tau$ and that the Euclidean vacuum correlators determine the Lorentzian ones by analytic continuation in the standard domain of holomorphy.
Assume a fixed holographic renormalization scheme defined by a Fefferman-Graham radial cutoff $z=\varepsilon$ and local covariant counterterms on the cutoff hypersurface \cite{Skenderis2002}.

Assume a holographic code subspace $\mathcal{H}_{\mathrm{code}}\subset\mathcal{H}_{\mathrm{CFT}}$ with orthogonal projector $\Pi_{\mathrm{code}}$ such that for the code-compressed, mean-subtracted operator
{\begin{equation}\label{A3g}
		\widetilde O^{(0)}_{\Delta,\rho}(x):=\Pi_{\mathrm{code}}\Big(O_\Delta(x)-\mathrm{Tr}(\rho\,O_\Delta(x))\,\mathbf{1}\Big)\Pi_{\mathrm{code}},
	\end{equation}
	there exists a finite constant $B_\Delta\in(0,\infty)$ with
	$\sup_{x\in A\cup B}|\widetilde O^{(0)}_{\Delta,\rho}(x)|_\infty\le B_\Delta$.}
Define the boundary data set
\begin{equation}\label{A4e}
	\mathcal{D}_{\partial}
	:=\left\{
	\begin{gathered}
	I(A:B)_\rho,\ \langle O_\Delta(x)O_\Delta(y)\rangle_\rho,\ S_2(\rho_R):\\
	R\subset\mathbb{R}^3\ \mathrm{measurable},\quad
	(x,y)\in A\times B,\quad \Delta>2
	\end{gathered}
	\right\},
\end{equation}
with $\rho$ ranging over states supported on $\mathcal{H}_{\mathrm{code}}$.
{ State the precise problem as the following proposition: determine, by explicit AdS/CFT derivation in $\mathcal{N}=4$ SYM, whether $\mathcal{D}_{\partial}$ suffices to reconstruct (i) renormalized bulk geodesic distances between boundary points and (ii) entanglement-wedge connectivity for a concrete family of disjoint regions.}

\begin{definition}[Entropy, mutual information, connected correlator, and normalization]
	For any density operator $\rho$ on a Hilbert space define the von Neumann entropy in bits by
	\begin{equation}
		S_2(\rho):=-\mathrm{Tr}(\rho\log_2\rho).
	\end{equation}
	For a bipartite state $\rho_{AB}$ define the mutual information in bits and the connected correlator by
		\begin{equation}\label{D1h}
			\begin{aligned}
			I(A:B)_{\rho_{AB}}
			&:=S_2(\rho_A)+S_2(\rho_B)-S_2(\rho_{AB})\\
			&=\frac{1}{\ln 2}\,
			D(\rho_{AB}\Vert\rho_A\otimes\rho_B),\\
			C_{\rho_{AB}}(X,Y)
			&:=\mathrm{Tr}(\rho_{AB}XY)
			-\mathrm{Tr}(\rho_AX)\,\mathrm{Tr}(\rho_BY).
			\end{aligned}
		\end{equation}
	where $D(\cdot\Vert\cdot)$ is the Umegaki relative entropy in nats and
	$\rho_A=\mathrm{Tr}_B\rho_{AB}$, $\rho_B=\mathrm{Tr}_A\rho_{AB}$.
	Define the normalized observables $\widetilde{X}:=X/|X|_\infty$ for bounded $X\ne 0$.
\end{definition}

\begin{definition}[Euclidean AdS$_5$, cutoff surface, and renormalized geodesic length]
	Define Euclidean AdS$_5$ in Poincar\'e coordinates by the manifold $M:=(0,\infty)_z\times\mathbb{R}^4_x$ with metric
	\begin{equation}\label{D2h}
		ds^2=g_{MN}dX^MdX^N=\frac{\ell_{\mathrm{AdS}}^2}{z^2}\Big(dz^2+\delta_{\mu\nu}\,dx^\mu dx^\nu\Big),\qquad \mu,\nu\in\{1,2,3,4\},
	\end{equation}
	with conformal boundary $(\mathbb{R}^4,[\delta])$ at $z=0$.
	For $\varepsilon\in(0,1)$ define the cutoff hypersurface $\Sigma_\varepsilon:=\{z=\varepsilon\}$ with induced metric
	$\gamma_{\mu\nu}=\ell_{\mathrm{AdS}}^2\varepsilon^{-2}\delta_{\mu\nu}$ and volume density
	$\sqrt{\gamma}=\ell_{\mathrm{AdS}}^4\varepsilon^{-4}$.
	For boundary points $x,y\in\mathbb{R}^4$ define the renormalized geodesic length by
	\begin{equation}\label{D3l}
		L_{\mathrm{ren}}(x,y):=\lim_{\varepsilon\downarrow 0}\left[\frac{1}{\ell_{\mathrm{AdS}}}\,L_g\big((\varepsilon,x),(\varepsilon,y)\big)-2\ln\!\Big(\frac{1}{\varepsilon}\Big)\right],
	\end{equation}
	whenever the limit exists in the fixed subtraction scheme.
\end{definition}

\begin{definition}[Bulk scalar action, counterterm, and RT entropy (leading order)]
	Define the regulated bulk scalar action with Dirichlet boundary condition $\phi|_{\Sigma_\varepsilon}=\phi_\varepsilon$ by
	\begin{equation}\label{D4e}
		S_{\mathrm{bulk}}[\phi;\varepsilon]
		:=\frac{1}{2}\int_{z\ge\varepsilon}d^5X\,\sqrt{g}\,\Big(g^{MN}\partial_M\phi\,\partial_N\phi+m^2\phi^2\Big),
		\qquad
		m^2\ell_{\mathrm{AdS}}^2=\Delta(\Delta-4),
	\end{equation}
	and define the local counterterm functional by
	\begin{equation}\label{D5e}
		S_{\mathrm{ct}}[\phi_\varepsilon;\varepsilon]
		:=\frac{4-\Delta}{2\ell_{\mathrm{AdS}}}\int_{\Sigma_\varepsilon}d^4x\,\sqrt{\gamma}\,\phi_\varepsilon^2,
	\end{equation}
	with renormalized action
	\begin{equation}
		S_{\mathrm{ren}}[\phi]:=\lim_{\varepsilon\downarrow 0}\big(S_{\mathrm{bulk}}[\phi;\varepsilon]+S_{\mathrm{ct}}[\phi_\varepsilon;\varepsilon]\big)
	\end{equation}
	on solutions of the bulk Euler-Lagrange equation.
	For a measurable region $R\subset\mathbb{R}^3$ define the leading-order holographic entanglement entropy in bits in the RT regime by
	\begin{equation}\label{D6d}
		S_2(\rho_R)=\frac{1}{\ln 2}\cdot\frac{\mathrm{Area}_\varepsilon(\chi_R)}{4G_5}+o(G_5^{-1}),
	\end{equation}
	where $\chi_R$ is a codimension-two minimal surface in the $t=0$ bulk slice anchored on $\partial R$ and homologous to $R$ \cite{RyuTakayanagi2006,HubenyRangamaniTakayanagi2007}.
\end{definition}

{\textbf{Derivation.}
	We work in Euclidean AdS$_5$ with metric
	\begin{equation}
		ds^2=g_{MN}dX^M dX^N
		=
		\frac{\ell_{\mathrm{AdS}}^2}{z^2}\bigl(dz^2+\delta_{\mu\nu}dx^\mu dx^\nu\bigr),
		\qquad z>0,
	\end{equation}
	and with cutoff hypersurface
	\begin{equation}
		\Sigma_\varepsilon:=\{z=\varepsilon\},
		\qquad
		\varepsilon\in(0,1).
	\end{equation}
	The corresponding regulated bulk region is
	\begin{equation}
		M_\varepsilon:=\{(z,x)\in(0,\infty)\times\mathbb R^4:\ z\ge \varepsilon\}.
	\end{equation}
	
	We fix the following assumptions explicitly.
	
	First, \(\nu:=\Delta-2>0\) and \(\nu\notin\mathbb Z\) throughout the direct derivation below. As in the draft, the final separated-point two-point function is then extended to integer \(\nu\) by analytic continuation in \(\nu\) in the chosen holographic renormalization scheme \cite{Skenderis2002}.
	
	Second, \(J\in \mathcal S(\mathbb R^4)\) is taken to be a Schwartz source. This ensures that the Fourier integrals are well-defined, that integrations by parts in momentum space are legitimate, and that all local counterterms define local functionals of \(J\).
	
	Third, when \(\nu>1\), standard holographic renormalization may require additional local boundary counterterms beyond Eq.~\eqref{D5e}. Because every such counterterm is local in the boundary source \(J\), it contributes only contact terms to \(\delta^2W/\delta J(x)\delta J(y)\). Since the present derivation is used only for the separated-point correlator \(x\neq y\), we keep the complete nonlocal contribution explicitly and suppress purely local contact terms only after identifying them as such.
	
	The bulk Euler-Lagrange equation from Eq.~\eqref{D4e} is
	\begin{equation}\label{E1a}
		(\nabla^2-m^2)\phi=0,
		\qquad
		\nabla^2\phi:=\frac{1}{\sqrt g}\partial_M\!\bigl(\sqrt g\,g^{MN}\partial_N\phi\bigr).
	\end{equation}
	Since \(\phi\) is a scalar field,
	\begin{equation}
		\nabla_M\phi=\partial_M\phi,
	\end{equation}
	and hence
	\begin{equation}
		\nabla_M(\phi\nabla^M\phi)
		=
		(\nabla_M\phi)(\nabla^M\phi)+\phi\nabla^2\phi.
	\end{equation}
	Using the equation of motion \(\nabla^2\phi=m^2\phi\), we obtain on shell
	\begin{equation}
		(\nabla_M\phi)(\nabla^M\phi)+m^2\phi^2
		=
		\nabla_M(\phi\nabla^M\phi).
	\end{equation}
	Substituting this identity into Eq.~\eqref{D4e} yields
	\begin{equation}
		S_{\mathrm{bulk}}[\phi;\varepsilon]
		=
		\frac12\int_{M_\varepsilon}d^5X\,\sqrt g\,\nabla_M(\phi\nabla^M\phi).
	\end{equation}
	For any vector field \(V^M\),
	\begin{equation}
		\sqrt g\,\nabla_MV^M=\partial_M(\sqrt g\,V^M),
	\end{equation}
	so
	\begin{equation}\label{E2-pre}
		S_{\mathrm{bulk}}[\phi;\varepsilon]
		=
		\frac12\int_{M_\varepsilon}d^5X\,\partial_M\!\bigl(\sqrt g\,\phi\,g^{MN}\partial_N\phi\bigr).
	\end{equation}
	Applying the divergence theorem to Eq.~\eqref{E2-pre}, we obtain
	\begin{equation}
		S_{\mathrm{bulk}}[\phi;\varepsilon]
		=
		\frac12\int_{\partial M_\varepsilon}d^4x\,\sqrt\gamma\,\phi\,(n\cdot\nabla)\phi.
	\end{equation}
	The boundary \(\partial M_\varepsilon\) consists of \(\Sigma_\varepsilon\) together with the surface at \(z\to\infty\). For the decaying Bessel branch selected below, the \(z\to\infty\) contribution vanishes exponentially. Therefore
	\begin{equation}\label{E2}
		S_{\mathrm{bulk}}[\phi;\varepsilon]
		=
		\frac12\int_{\Sigma_\varepsilon}d^4x\,\sqrt\gamma\,\phi_\varepsilon\,(n\cdot\nabla)\phi\big|_{z=\varepsilon},
	\end{equation}
	where \(\phi_\varepsilon:=\phi|_{z=\varepsilon}\), \(\gamma\) is the induced metric on \(\Sigma_\varepsilon\), and \(n\) is the outward unit normal to \(M_\varepsilon=\{z\ge\varepsilon\}\).
	
	We now compute \(n\). Since
	\begin{equation}
		g_{zz}=\frac{\ell_{\mathrm{AdS}}^2}{z^2},
	\end{equation}
	a unit normal proportional to \(\partial_z\) has the form \(n=\alpha(z)\partial_z\), with
	\begin{equation}
		1=g(n,n)=\alpha(z)^2\,g_{zz}=\alpha(z)^2\frac{\ell_{\mathrm{AdS}}^2}{z^2}.
	\end{equation}
	Hence \(|\alpha(z)|=z/\ell_{\mathrm{AdS}}\). Because the region \(M_\varepsilon\) is defined by \(z\ge\varepsilon\), its outward normal points toward decreasing \(z\), so
	\begin{equation}
		n=-\frac{z}{\ell_{\mathrm{AdS}}}\partial_z.
	\end{equation}
	Therefore
	\begin{equation}
		(n\cdot\nabla)\phi\big|_{z=\varepsilon}
		=
		-\frac{\varepsilon}{\ell_{\mathrm{AdS}}}\,\partial_z\phi\big|_{z=\varepsilon}.
	\end{equation}
	The induced metric on \(\Sigma_\varepsilon\) is
	\begin{equation}
		\gamma_{\mu\nu}=\frac{\ell_{\mathrm{AdS}}^2}{\varepsilon^2}\delta_{\mu\nu},
		\qquad
		\sqrt\gamma=\ell_{\mathrm{AdS}}^4\varepsilon^{-4}.
	\end{equation}
	Substituting these identities into Eq.~\eqref{E2} gives
	\begin{equation}\label{E3a}
		S_{\mathrm{bulk}}[\phi;\varepsilon]
		=
		-\frac{1}{2\ell_{\mathrm{AdS}}}\int_{\Sigma_\varepsilon}d^4x\,\sqrt\gamma\,\varepsilon\,\phi_\varepsilon\,\partial_z\phi\big|_{z=\varepsilon}.
	\end{equation}
	
	We Fourier transform in the boundary directions:
	\begin{equation}\label{E3b}
		\phi(z,x)=\int_{\mathbb R^4}\frac{d^4k}{(2\pi)^4}e^{ik\cdot x}\phi_k(z),
		\qquad
		k\cdot x:=k_\mu x^\mu,
		\qquad
		|k|:=\sqrt{\delta^{\mu\nu}k_\mu k_\nu}.
	\end{equation}
	Substituting Eq.~\eqref{E3b} into Eq.~\eqref{E1a}, and using
	\begin{equation}
		\sqrt g=\ell_{\mathrm{AdS}}^5 z^{-5},
		\qquad
		g^{zz}=z^2\ell_{\mathrm{AdS}}^{-2},
		\qquad
		g^{\mu\nu}=z^2\ell_{\mathrm{AdS}}^{-2}\delta^{\mu\nu},
	\end{equation}
	one finds for each Fourier mode
	\begin{equation}\label{E4}
		z^2\phi_k''(z)-3z\phi_k'(z)-\bigl((|k|z)^2+\Delta(\Delta-4)\bigr)\phi_k(z)=0.
	\end{equation}
	The solution decaying as \(z\to\infty\) is
	\begin{equation}
		\phi_k(z)=a_k z^2K_\nu(|k|z),
		\qquad
		\nu:=\Delta-2>0.
	\end{equation}
	
	For noninteger \(\nu>0\), the small-argument expansion of \(K_\nu\) has the form
	\begin{equation}\label{E5a}
		K_\nu(u)
		=
		2^{\nu-1}\Gamma(\nu)\,u^{-\nu}\bigl(1+O(u^2)\bigr)
		+
		2^{-\nu-1}\Gamma(-\nu)\,u^\nu\bigl(1+O(u^2)\bigr),
		\qquad
		u\downarrow0.
	\end{equation}
	Substituting \(u=|k|z\) into Eq.~\eqref{E5a}, multiplying by \(a_k z^2\), and using \(\nu=\Delta-2\), we obtain
	\begin{align}
		\phi_k(z)
		&=
		a_k z^2K_\nu(|k|z) \notag\\
		&=
		a_k 2^{\nu-1}\Gamma(\nu)\,|k|^{-\nu}z^{2-\nu}\bigl(1+O(z^2)\bigr)
		+
		a_k 2^{-\nu-1}\Gamma(-\nu)\,|k|^\nu z^{2+\nu}\bigl(1+O(z^2)\bigr) \notag\\
		&=
		a_k 2^{\nu-1}\Gamma(\nu)\,|k|^{-\nu}z^{4-\Delta}\bigl(1+O(z^2)\bigr)
		+
		a_k 2^{-\nu-1}\Gamma(-\nu)\,|k|^\nu z^\Delta\bigl(1+O(z^2)\bigr).
		\label{E6a}
	\end{align}
	
	We impose the Dirichlet/source normalization
	\begin{equation}
		\lim_{z\downarrow0}z^{\Delta-4}\phi_k(z)=J_k,
	\end{equation}
	where \(J_k\) is the Fourier transform of the boundary source \(J(x)\). Multiplying Eq.~\eqref{E6a} by \(z^{\Delta-4}\) and taking \(z\downarrow0\) gives
	\begin{equation}
		J_k=a_k 2^{\nu-1}\Gamma(\nu)\,|k|^{-\nu},
	\end{equation}
	hence
	\begin{equation}\label{E7}
		a_k=\frac{|k|^\nu}{2^{\nu-1}\Gamma(\nu)}\,J_k.
	\end{equation}
	Substituting Eq.~\eqref{E7} into Eq.~\eqref{E6a}, we obtain
	\begin{equation}\label{E7b}
		\phi_k(z)
		=
		J_k z^{4-\Delta}\bigl(1+O(z^2)\bigr)
		+
		\frac{\Gamma(-\nu)}{2^{2\nu}\Gamma(\nu)}\,|k|^{2\nu}J_k\,z^\Delta\bigl(1+O(z^2)\bigr).
	\end{equation}
	The coefficient of the normalizable branch is therefore
	\begin{equation}\label{E8}
		A_k
		:=
		\lim_{z\downarrow0}z^{-\Delta}\Bigl(\phi_k(z)-J_k z^{4-\Delta}\Bigr)
		=
		\frac{\Gamma(-\nu)}{2^{2\nu}\Gamma(\nu)}\,|k|^{2\nu}J_k.
	\end{equation}
	This is the correct definition: the unsubtracted limit \(z^{-\Delta}\phi_k(z)\) diverges because of the non-normalizable source term \(J_k z^{4-\Delta}\).
	
	Differentiating Eq.~\eqref{E7b} with respect to \(z\) gives
	\begin{equation}\label{E8b}
		\phi_k'(z)
		=
		(4-\Delta)J_k z^{3-\Delta}\bigl(1+O(z^2)\bigr)
		+
		\Delta A_k z^{\Delta-1}\bigl(1+O(z^2)\bigr).
	\end{equation}
	In particular,
	\begin{equation}\label{E8c}
		\phi_k(\varepsilon)
		=
		\varepsilon^{4-\Delta}J_k\bigl(1+O(\varepsilon^2)\bigr)
		+
		\varepsilon^\Delta A_k\bigl(1+O(\varepsilon^2)\bigr),
	\end{equation}
	and
	\begin{equation}\label{E8d}
		\phi_k'(\varepsilon)
		=
		(4-\Delta)\varepsilon^{3-\Delta}J_k\bigl(1+O(\varepsilon^2)\bigr)
		+
		\Delta\varepsilon^{\Delta-1}A_k\bigl(1+O(\varepsilon^2)\bigr).
	\end{equation}
	
	We now rewrite the on-shell action in momentum space. Since \(\sqrt\gamma=\ell_{\mathrm{AdS}}^4\varepsilon^{-4}\), Eq.~\eqref{E3a} becomes
	\begin{equation}
		S_{\mathrm{bulk}}[\phi;\varepsilon]
		=
		-\frac{\ell_{\mathrm{AdS}}^3}{2}\int_{\mathbb R^4}d^4x\,\varepsilon^{-3}\phi_\varepsilon(x)\,\partial_z\phi(\varepsilon,x).
	\end{equation}
	Using the Fourier representations
	\begin{equation}
		\phi_\varepsilon(x)
		=
		\int_{\mathbb R^4}\frac{d^4k}{(2\pi)^4}e^{ik\cdot x}\phi_k(\varepsilon),
		\qquad
		\partial_z\phi(\varepsilon,x)
		=
		\int_{\mathbb R^4}\frac{d^4q}{(2\pi)^4}e^{iq\cdot x}\phi_q'(\varepsilon),
	\end{equation}
	and integrating over \(x\), we obtain
	\begin{equation}
		\int_{\mathbb R^4}d^4x\,e^{i(k+q)\cdot x}=(2\pi)^4\delta^{(4)}(k+q),
	\end{equation}
	hence
	\begin{equation}\label{E9}
		S_{\mathrm{bulk}}[\phi;\varepsilon]
		=
		-\frac{\ell_{\mathrm{AdS}}^3}{2}\int_{\mathbb R^4}\frac{d^4k}{(2\pi)^4}\,\varepsilon^{-3}\phi_{-k}(\varepsilon)\phi_k'(\varepsilon).
	\end{equation}
	
	Substituting Eqs.~\eqref{E8c}-\eqref{E8d} into Eq.~\eqref{E9}, we obtain
	\begin{align}
		\varepsilon^{-3}\phi_{-k}(\varepsilon)\phi_k'(\varepsilon)
		&=
		(4-\Delta)\varepsilon^{4-2\Delta}J_{-k}J_k
		+
		\Delta\,J_{-k}A_k
		+
		(4-\Delta)\,A_{-k}J_k \notag\\
		&\quad
		+
		\Biggl[
		\begin{gathered}
		\text{terms proportional to }J_{-k}J_k
		\text{ multiplied by powers of }\varepsilon\\
		\text{and even polynomials in }|k|^2
		\end{gathered}
		\Biggr] \notag\\
		&\quad
		+
		O(\varepsilon^2J_{-k}A_k)
		+
		O(\varepsilon^2A_{-k}J_k)
		+
		O(\varepsilon^{2\Delta-4}A_{-k}A_k).
		\label{E10-pre}
	\end{align}
	Here the bracketed \(J_{-k}J_k\) terms arise from the \(O(\varepsilon^2)\) corrections in the source branch of Eq.~\eqref{E7b}; they are polynomials in \(|k|^2\), hence local in boundary position space after inverse Fourier transform. Because \(\Delta>2\), one has \(2\Delta-4>0\), so the \(A_{-k}A_k\) term vanishes as \(\varepsilon\downarrow0\), and the mixed \(J_{-k}A_k\), \(A_{-k}J_k\) correction terms are \(O(\varepsilon^2)\). Therefore the complete nonlocal part of Eq.~\eqref{E10-pre} is
	\begin{equation}\label{E10}
		\varepsilon^{-3}\phi_{-k}(\varepsilon)\phi_k'(\varepsilon)
		=
		(4-\Delta)\varepsilon^{4-2\Delta}J_{-k}J_k
		+
		\Delta\,J_{-k}A_k
		+
		(4-\Delta)\,A_{-k}J_k
		+
		\text{local }J^2\text{ terms}
		+
		o(1),
	\end{equation}
	where ``local \(J^2\) terms'' means terms whose inverse Fourier transform is a local functional of \(J\).
	
	The counterterm Eq.~\eqref{D5e} equals
	\begin{equation}
		S_{\mathrm{ct}}[\phi_\varepsilon;\varepsilon]
		=
		\frac{4-\Delta}{2\ell_{\mathrm{AdS}}}\int_{\Sigma_\varepsilon}d^4x\,\sqrt\gamma\,\phi_\varepsilon^2.
	\end{equation}
	Since \(\sqrt\gamma=\ell_{\mathrm{AdS}}^4\varepsilon^{-4}\), we get
	\begin{equation}\label{E11a}
		S_{\mathrm{ct}}[\phi_\varepsilon;\varepsilon]
		=
		\frac{4-\Delta}{2}\ell_{\mathrm{AdS}}^3
		\int_{\mathbb R^4}\frac{d^4k}{(2\pi)^4}\,\varepsilon^{-4}\phi_{-k}(\varepsilon)\phi_k(\varepsilon).
	\end{equation}
	Using Eq.~\eqref{E8c}, we find
	\begin{align}
		\varepsilon^{-4}\phi_{-k}(\varepsilon)\phi_k(\varepsilon)
		&=
		\varepsilon^{4-2\Delta}J_{-k}J_k
		+
		J_{-k}A_k
		+
		A_{-k}J_k \notag\\
		&\quad
		+
		\Biggl[
		\begin{gathered}
		\text{terms proportional to }J_{-k}J_k
		\text{ multiplied by powers of }\varepsilon\\
		\text{and even polynomials in }|k|^2
		\end{gathered}
		\Biggr] \notag\\
		&\quad
		+
		O(\varepsilon^2J_{-k}A_k)
		+
		O(\varepsilon^2A_{-k}J_k)
		+
		O(\varepsilon^{2\Delta-4}A_{-k}A_k).
		\label{E11-pre}
	\end{align}
	Exactly as above, the bracketed \(J_{-k}J_k\) terms are local, and all terms involving \(A_k\) beyond the displayed finite ones vanish as \(\varepsilon\downarrow0\). Therefore
	\begin{equation}\label{E11}
		S_{\mathrm{ct}}[\phi_\varepsilon;\varepsilon]
		=
		\frac{4-\Delta}{2}\ell_{\mathrm{AdS}}^3\int_{\mathbb R^4}\frac{d^4k}{(2\pi)^4}
		\Bigl(
		\varepsilon^{4-2\Delta}J_{-k}J_k
		+
		J_{-k}A_k
		+
		A_{-k}J_k
		+
		\text{local }J^2\text{ terms}
		+
		o(1)
		\Bigr).
	\end{equation}
	
	We now add \(S_{\mathrm{bulk}}\) and \(S_{\mathrm{ct}}\). Using Eq.~\eqref{E10} in Eq.~\eqref{E9}, the coefficient of the nonlocal \(J_{-k}A_k\) term in \(S_{\mathrm{bulk}}\) is
	\begin{equation}
		-\frac{\ell_{\mathrm{AdS}}^3}{2}\Delta,
	\end{equation}
	and the coefficient of the \(A_{-k}J_k\) term is
	\begin{equation}
		-\frac{\ell_{\mathrm{AdS}}^3}{2}(4-\Delta).
	\end{equation}
	From Eq.~\eqref{E11}, the corresponding coefficients in \(S_{\mathrm{ct}}\) are
	\begin{equation}
		+\frac{\ell_{\mathrm{AdS}}^3}{2}(4-\Delta)
		\qquad\text{and}\qquad
		+\frac{\ell_{\mathrm{AdS}}^3}{2}(4-\Delta),
	\end{equation}
	respectively. Hence
	\begin{equation}
		-\frac{\ell_{\mathrm{AdS}}^3}{2}(4-\Delta)
		+
		\frac{\ell_{\mathrm{AdS}}^3}{2}(4-\Delta)
		=0,
	\end{equation}
	so the \(A_{-k}J_k\) term cancels exactly, while
	\begin{equation}
		-\frac{\ell_{\mathrm{AdS}}^3}{2}\Delta
		+
		\frac{\ell_{\mathrm{AdS}}^3}{2}(4-\Delta)
		=
		-\frac{\ell_{\mathrm{AdS}}^3}{2}(2\Delta-4),
	\end{equation}
	so the nonlocal \(J_{-k}A_k\) term survives with the coefficient claimed below. The divergent \(\varepsilon^{4-2\Delta}J_{-k}J_k\) term is local and is cancelled by Eq.~\eqref{D5e} together with the standard additional local counterterms, when needed. Therefore the renormalized on-shell action takes the form
	\begin{equation}\label{E12}
		S_{\mathrm{ren}}[J]
		=
		-\frac{\ell_{\mathrm{AdS}}^3}{2}(2\Delta-4)
		\int_{\mathbb R^4}\frac{d^4k}{(2\pi)^4}\,J_{-k}A_k
		+
		\text{local functionals of }J.
	\end{equation}
	Substituting Eq.~\eqref{E8} into Eq.~\eqref{E12} yields
	\begin{equation}\label{E13}
		S_{\mathrm{ren}}[J]
		=
		-\frac{\ell_{\mathrm{AdS}}^3}{2}(2\Delta-4)\,
		\frac{\Gamma(-\nu)}{2^{2\nu}\Gamma(\nu)}
		\int_{\mathbb R^4}\frac{d^4k}{(2\pi)^4}\,
		|k|^{2\nu}J_{-k}J_k
		+
		\text{local functionals of }J.
	\end{equation}
	
	Define the connected generating functional
	\begin{equation}
		W[J]:=-S_{\mathrm{ren}}[J].
	\end{equation}
	Then Eq.~\eqref{E13} may be written as
	\begin{align}
		W[J]
		&=
		\frac{\ell_{\mathrm{AdS}}^3}{2}(2\Delta-4)\,
		\frac{\Gamma(-\nu)}{2^{2\nu}\Gamma(\nu)}
		\int_{\mathbb R^4}d^4x\int_{\mathbb R^4}d^4y\,
		J(x)
		\left(
		\int_{\mathbb R^4}\frac{d^4k}{(2\pi)^4}e^{ik\cdot(x-y)}|k|^{2\nu}
		\right)
		J(y) \notag\\
		&\qquad
		+
		\text{local functionals of }J.
		\label{E13b}
	\end{align}
	Because the kernel inside parentheses is symmetric under \(x\leftrightarrow y\), the factor \(1/2\) in Eq.~\eqref{E13b} is precisely the factor required so that the second functional derivative produces one copy of the kernel rather than two. Therefore, for separated points \(x\neq y\),
	\begin{equation}\label{E14}
	\begin{aligned}
		\langle O_\Delta(x)O_\Delta(y)\rangle_{\rho_{\mathrm{vac}}}
		&=
		\left.\frac{\delta^2 W}
		{\delta J(x)\,\delta J(y)}\right|_{J\equiv 0}\\
		&=
		\ell_{\mathrm{AdS}}^3(2\Delta-4)\,
		\frac{\Gamma(-\nu)}{2^{2\nu}\Gamma(\nu)}
		\int_{\mathbb R^4}\frac{d^4k}{(2\pi)^4}
		e^{ik\cdot(x-y)}|k|^{2\nu},
		\qquad x\neq y.
	\end{aligned}
	\end{equation}
	because the local functionals in Eq.~\eqref{E13b} contribute only contact terms, i.e. derivatives of \(\delta(x-y)\), which vanish for \(x\neq y\).
	
	We now compute the Fourier transform in Eq.~\eqref{E14}. The literal Schwinger representation
	\begin{equation}\label{E15a}
		|k|^{2\nu}
		=
		\frac{1}{\Gamma(-\nu)}\int_0^\infty ds\,s^{-\nu-1}e^{-s|k|^2}
	\end{equation}
	is absolutely valid for \(-2<\Re\nu<0\), not for \(\nu>0\). In that strip, Gaussian integration gives
	\begin{align}
		\int_{\mathbb R^4}\frac{d^4k}{(2\pi)^4}e^{ik\cdot x}|k|^{2\nu}
		&=
		\frac{1}{\Gamma(-\nu)}
		\int_0^\infty ds\,s^{-\nu-1}
		\int_{\mathbb R^4}\frac{d^4k}{(2\pi)^4}e^{ik\cdot x}e^{-s|k|^2} \notag\\
		&=
		\frac{1}{\Gamma(-\nu)}
		\int_0^\infty ds\,s^{-\nu-1}(4\pi s)^{-2}e^{-|x|^2/(4s)} \notag\\
		&=
		\frac{1}{(4\pi)^2\Gamma(-\nu)}
		\int_0^\infty ds\,s^{-\nu-3}e^{-|x|^2/(4s)}.
		\label{E15}
	\end{align}
	Now set
	\begin{equation}
		u:=\frac{|x|^2}{4s},
		\qquad
		s=\frac{|x|^2}{4u},
		\qquad
		ds=-\frac{|x|^2}{4}u^{-2}\,du.
	\end{equation}
	Then
	\begin{align}
		\int_{\mathbb R^4}\frac{d^4k}{(2\pi)^4}e^{ik\cdot x}|k|^{2\nu}
		&=
		\frac{1}{(4\pi)^2\Gamma(-\nu)}
		\int_\infty^0
		\left(\frac{|x|^2}{4u}\right)^{-\nu-3}
		e^{-u}
		\left(-\frac{|x|^2}{4}u^{-2}\,du\right) \notag\\
		&=
		\frac{1}{(4\pi)^2\Gamma(-\nu)}
		\left(\frac{|x|^2}{4}\right)^{-\nu-2}
		\int_0^\infty u^{\nu+1}e^{-u}\,du \notag\\
		&=
		\frac{4^\nu\,\Gamma(\nu+2)}{\pi^2\Gamma(-\nu)}\,|x|^{-2\nu-4}.
		\label{E16}
	\end{align}
	Both sides of Eq.~\eqref{E16} are meromorphic in \(\nu\), and the right-hand side has no pole for noninteger \(\nu>0\). Therefore Eq.~\eqref{E16}, proved first for \(-2<\Re\nu<0\), extends by analytic continuation to all noninteger \(\nu>0\), in the distributional sense. Away from coincidence \(x\neq0\), that continuation is represented by the ordinary function on the right-hand side of Eq.~\eqref{E16}.
	
	Substituting Eq.~\eqref{E16} into Eq.~\eqref{E14}, and using
	\begin{equation}
		2\nu+4=2\Delta,
		\qquad
		\nu+2=\Delta,
	\end{equation}
	we obtain, for \(x\neq y\),
	\begin{align}
		\langle O_\Delta(x)O_\Delta(y)\rangle_{\rho_{\mathrm{vac}}}
		&=
		\ell_{\mathrm{AdS}}^3(2\Delta-4)\,
		\frac{\Gamma(-\nu)}{2^{2\nu}\Gamma(\nu)}
		\cdot
		\frac{4^\nu\,\Gamma(\nu+2)}{\pi^2\Gamma(-\nu)}
		|x-y|^{-2\nu-4} \notag\\
		&=
		\ell_{\mathrm{AdS}}^3\,\frac{(2\Delta-4)\Gamma(\Delta)}{\pi^2\Gamma(\Delta-2)}\,\frac{1}{|x-y|^{2\Delta}}.
		\label{E17}
	\end{align}
	Accordingly,
	\begin{equation}\label{E18}
		\mathcal N_\Delta
		=
		\ell_{\mathrm{AdS}}^3\,\frac{(2\Delta-4)\Gamma(\Delta)}{\pi^2\Gamma(\Delta-2)}.
	\end{equation}
	
	We next compute the renormalized geodesic length Eq.~\eqref{D3l} between two distinct boundary points \(x,y\in\mathbb R^4\), with \(r:=|x-y|>0\). By translation and rotation invariance of the Euclidean metric, we may assume
	\begin{equation}
		y=0,
		\qquad
		x=(r,0,0,0),
	\end{equation}
	and restrict to the totally geodesic \((x^1,z)\)-plane. On that plane the metric is
	\begin{equation}
		ds^2=\frac{\ell_{\mathrm{AdS}}^2}{z^2}\bigl((dx^1)^2+dz^2\bigr).
	\end{equation}
	
	We derive the geodesic explicitly. Write the curve as \(z=z(x^1)\). Its length functional is
	\begin{equation}
		\mathcal L_{\mathrm{geo}}[z]
		=
		\ell_{\mathrm{AdS}}\int dx^1\,\frac{\sqrt{1+(z')^2}}{z}.
	\end{equation}
	Because the integrand contains no explicit \(x^1\)-dependence, the corresponding conserved quantity is
	\begin{equation}
		\frac{\partial \mathcal L_{\mathrm{dens}}}{\partial z'}\,z'
		-
		\mathcal L_{\mathrm{dens}}
		=
		-\frac{\ell_{\mathrm{AdS}}}{z\sqrt{1+(z')^2}}
		=
		-\frac{\ell_{\mathrm{AdS}}}{R},
	\end{equation}
	for some constant \(R>0\). Equivalently,
	\begin{equation}
		\frac{1}{z\sqrt{1+(z')^2}}=\frac{1}{R},
	\end{equation}
	hence
	\begin{equation}
		(z')^2=\frac{R^2}{z^2}-1.
	\end{equation}
	Therefore
	\begin{equation}
		\frac{dx^1}{dz}=\frac{z}{\sqrt{R^2-z^2}},
	\end{equation}
	and integration gives
	\begin{equation}
		x^1-c=\pm\sqrt{R^2-z^2}
	\end{equation}
	for some constant \(c\). Thus every geodesic in the \((x^1,z)\)-plane is either a vertical line or a Euclidean semicircle orthogonal to \(z=0\):
	\begin{equation}\label{E19a}
		(x^1-c)^2+z^2=R^2.
	\end{equation}
	The unique such geodesic through the cutoff points \((x^1,z)=(0,\varepsilon)\) and \((r,\varepsilon)\) is obtained by choosing
	\begin{equation}
		c=\frac r2,
		\qquad
		R^2=\frac{r^2}{4}+\varepsilon^2.
	\end{equation}
	Thus the exact geodesic through the literal cutoff points is
	\begin{equation}\label{E19b}
		\left(x^1-\frac r2\right)^2+z^2=\frac{r^2}{4}+\varepsilon^2.
	\end{equation}
	
	We parameterize Eq.~\eqref{E19b} by
	\begin{equation}
		x^1=\frac r2+R\cos\theta,
		\qquad
		z=R\sin\theta,
		\qquad
		R:=\sqrt{\frac{r^2}{4}+\varepsilon^2},
	\end{equation}
	with
	\begin{equation}
		\theta\in[\theta_\varepsilon,\pi-\theta_\varepsilon],
		\qquad
		\sin\theta_\varepsilon=\frac{\varepsilon}{R}.
	\end{equation}
	Differentiating,
	\begin{equation}
		dx^1=-R\sin\theta\,d\theta,
		\qquad
		dz=R\cos\theta\,d\theta,
	\end{equation}
	so that
	\begin{equation}
		(dx^1)^2+dz^2=R^2\,d\theta^2,
		\qquad
		z^2=R^2\sin^2\theta.
	\end{equation}
	Hence the line element along the geodesic is
	\begin{equation}
		ds=\ell_{\mathrm{AdS}}\csc\theta\,d\theta.
	\end{equation}
	Therefore
	\begin{align}
		L_g\bigl((\varepsilon,0),(\varepsilon,r)\bigr)
		&=
		\int_{\theta_\varepsilon}^{\pi-\theta_\varepsilon}\ell_{\mathrm{AdS}}\csc\theta\,d\theta \notag\\
		&=
		\ell_{\mathrm{AdS}}
		\left[
		\ln\tan\!\left(\frac{\theta}{2}\right)
		\right]_{\theta_\varepsilon}^{\pi-\theta_\varepsilon} \notag\\
		&=
		-2\ell_{\mathrm{AdS}}\ln\tan\!\left(\frac{\theta_\varepsilon}{2}\right).
		\label{E19}
	\end{align}
	Using the half-angle identity
	\begin{equation}
		\tan\!\left(\frac{\theta_\varepsilon}{2}\right)
		=
		\frac{\sin\theta_\varepsilon}{1+\cos\theta_\varepsilon},
	\end{equation}
	together with
	\begin{equation}
		\sin\theta_\varepsilon=\frac{\varepsilon}{R},
		\qquad
		\cos\theta_\varepsilon=\frac{r/2}{R},
	\end{equation}
	we obtain
	\begin{equation}
		\tan\!\left(\frac{\theta_\varepsilon}{2}\right)
		=
		\frac{\varepsilon}{R+r/2}
		=
		\frac{2\varepsilon}{\sqrt{r^2+4\varepsilon^2}+r}.
	\end{equation}
	Substituting this into Eq.~\eqref{E19} yields
	\begin{equation}\label{E20}
		\frac{1}{\ell_{\mathrm{AdS}}}L_g\bigl((\varepsilon,0),(\varepsilon,r)\bigr)
		=
		2\ln\!\left(\frac{\sqrt{r^2+4\varepsilon^2}+r}{2\varepsilon}\right)
		=
		2\ln\!\left(\frac r\varepsilon\right)
		+
		2\ln\!\left(\frac{1+\sqrt{1+4\varepsilon^2/r^2}}{2}\right).
	\end{equation}
	Hence the renormalized length exists and equals
	\begin{align}
		L_{\mathrm{ren}}(x,y)
		&=
		\lim_{\varepsilon\downarrow0}
		\left[
		\frac{1}{\ell_{\mathrm{AdS}}}L_g\bigl((\varepsilon,x),(\varepsilon,y)\bigr)
		-
		2\ln\!\left(\frac1\varepsilon\right)
		\right] \notag\\
		&=
		\lim_{\varepsilon\downarrow0}
		\left[
		2\ln r
		+
		2\ln\!\left(\frac{1+\sqrt{1+4\varepsilon^2/r^2}}{2}\right)
		\right] \notag\\
		&=
		2\ln r.
		\label{E21}
	\end{align}
	
	Combining Eq.~\eqref{E17}, Eq.~\eqref{E18}, and Eq.~\eqref{E21}, we obtain, for distinct boundary points \(x\neq y\),
	\begin{equation}\label{E22}
		\langle O_\Delta(x)O_\Delta(y)\rangle_{\rho_{\mathrm{vac}}}
		=
		\mathcal N_\Delta\,e^{-\Delta L_{\mathrm{ren}}(x,y)}
		=
		\ell_{\mathrm{AdS}}^3\,\frac{(2\Delta-4)\Gamma(\Delta)}{\pi^2\Gamma(\Delta-2)}\,
		\exp\!\bigl(-\Delta L_{\mathrm{ren}}(x,y)\bigr).
	\end{equation}
	
	To connect mutual information to bulk distance, fix a state \(\rho\) supported on \(\mathcal H_{\mathrm{code}}\) and let \(\rho_{AB}\) denote the reduced state on the degrees of freedom associated to \(A\cup B\) in the chosen localization framework, with \(\rho_A,\rho_B\) the associated marginals.}

\begin{lemma}[Pinsker mutual-information correlator bound in bits]\label{lem:pinsker-mi}
	For any bounded $X\in\mathcal{A}(A)$ and $Y\in\mathcal{A}(B)$,
	\begin{equation}\label{tL1}
		|C_{\rho_{AB}}(X,Y)|\le |X|_\infty\,|Y|_\infty\,\sqrt{2\ln 2\,I(A:B)_{\rho_{AB}}}.
	\end{equation}
\end{lemma}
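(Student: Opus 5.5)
The argument reduces Lemma~\ref{lem:pinsker-mi} to Theorem~\ref{thm:MICorrelator}, applied in the localization framework fixed for $\mathcal N=4$ SYM.

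First I realize $\rho_{AB}$ concretely. In the regulated case the tensor factorization exhibits $X\in\mathcal A(A)$ as $X\otimes\mathbf 1_B$ and $Y\in\mathcal A(B)$ as $\mathbf 1_A\otimes Y$. The product $XY$ is then $X\otimes Y$, and $C_{\rho_{AB}}(X,Y)$ agrees with the connected correlator of Eq.~\eqref{S8}. In the algebraic case the split property supplies a type I factor. This gives a spatial isomorphism of $\mathcal A(A)\vee\mathcal A(B)$ with $\mathcal A(A)\,\overline{\otimes}\,\mathcal A(B)$, under which normal states become density operators on $\mathcal H_A\otimes\mathcal H_B$ and $\rho_A\otimes\rho_B$ exists. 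Since $I(A\!:\!B)$ is assumed finite, hypothesis Eq.~\eqref{S6} holds.

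Next I write the correlator as a trace pairing, $C_{\rho_{AB}}(X,Y)=\operatorname{Tr}\bigl((\rho_{AB}-\rho_A\otimes\rho_B)(X\otimes Y)\bigr)$, using Lemma~\ref{lem:connectedpairing}. I then bound it in three steps:
\begin{enumerate}
\item trace-norm duality (Lemma~\ref{lem:trace_duality}) bounds the pairing by $|\rho_{AB}-\rho_A\otimes\rho_B|_1\,|X\otimes Y|_\infty$;
\item the tensor-norm identity (Lemma~\ref{lem:tensornorm}) gives $|X\otimes Y|_\infty=|X|_\infty|Y|_\infty$;
\item quantum Pinsker (Lemma~\ref{lem:quantumpinsker}) gives $|\rho_{AB}-\rho_A\otimes\rho_B|_1\le\sqrt{2D(\rho_{AB}\Vert\rho_A\otimes\rho_B)}$.
\end{enumerate}
Finally the bit conversion in Eq.~\eqref{D1h}, $D=(\ln2)\,I(A\!:\!B)$, yields Eq.~\eqref{tL1}.

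The main obstacle is the algebraic realization, where there is no literal partial trace. There I would use the von Neumann-algebraic form of Pinsker recorded after Theorem~\ref{thm:MICorrelator}, $D_{\mathrm{Ar}}(\varphi\Vert\psi)\ge\tfrac12|\varphi-\psi|^2$ in predual norm. It combines with $|XY|_\infty\le|X|_\infty|Y|_\infty$ for commuting $X,Y$ and the definition of the predual norm. This replaces trace-norm duality, and the same constant results.

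One more check is needed: the code-compressed probes Eq.~\eqref{A3g} must be admissible elements of the local algebras. That is supplied by the standing assumption of Appendix~\ref{app:assumptions}.
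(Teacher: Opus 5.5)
Your proposal is correct and follows essentially the same route as the paper. Both write the connected correlator as a trace pairing against $\rho_{AB}-\rho_A\otimes\rho_B$, apply trace-norm/operator-norm duality, bound the product observable by $|X|_\infty|Y|_\infty$, and finish with quantum Pinsker and the $\ln 2$ bit conversion. The differences are minor: the paper uses $|XY|_\infty\le|X|_\infty|Y|_\infty$ directly instead of the tensor-norm identity, and it does not treat the algebraic realization separately, so your predual-norm Pinsker step for the Araki case is a harmless and more careful addition.
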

\begin{proof}
	By definition $C_{\rho_{AB}}(X,Y)=\mathrm{Tr}\big((\rho_{AB}-\rho_A\otimes\rho_B)XY\big)$, hence trace-norm/operator-norm duality gives
	\begin{equation}
		|C_{\rho_{AB}}(X,Y)|
		\le |\rho_{AB}-\rho_A\otimes\rho_B|_1\,|XY|_\infty
		\le |\rho_{AB}-\rho_A\otimes\rho_B|_1\,|X|_\infty\,|Y|_\infty.
	\end{equation}
	Quantum Pinsker inequality gives
	\begin{equation}
	\begin{aligned}
	|\rho_{AB}-\rho_A\otimes\rho_B|_1
	&\le \sqrt{2D(\rho_{AB}\Vert\rho_A\otimes\rho_B)}\\
	&=\sqrt{2\ln 2\,I(A:B)_{\rho_{AB}}},
	\end{aligned}
	\end{equation}
	yielding Eq.~\eqref{tL1} \cite{Watrous2018}.
\end{proof}

\begin{lemma}[Exact AdS$_5$ vacuum distance extraction from boundary two-point data]\label{lem:distance-extract}
	In the AdS$_5$/CFT$_4$ vacuum defined above, for all distinct $x,y\in\mathbb{R}^4$,
	\begin{equation}\label{L2i}
		L_{\mathrm{ren}}(x,y)=-\frac{1}{\Delta}\ln\!\left(\frac{\langle O_\Delta(x)O_\Delta(y)\rangle_{\rho_{\mathrm{vac}}}}{\mathcal{N}_\Delta}\right),
	\end{equation}
	with $\mathcal{N}_\Delta$ given explicitly by Eq.~\eqref{E18}.
\end{lemma}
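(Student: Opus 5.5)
The claim follows by inverting Eq.~\eqref{E22}, which the preceding derivation already establishes for distinct boundary points. Concretely, I would combine the two explicit results of the derivation. The first is the separated-point two-point function Eq.~\eqref{E17} with normalization $\mathcal N_\Delta$ from Eq.~\eqref{E18}. The second is the renormalized geodesic length $L_{\mathrm{ren}}(x,y)=2\ln|x-y|$ from Eq.~\eqref{E21}. Substituting the second into the first gives
\[
\langle O_\Delta(x)O_\Delta(y)\rangle_{\rho_{\mathrm{vac}}}=\mathcal N_\Delta\,|x-y|^{-2\Delta}=\mathcal N_\Delta\,e^{-\Delta L_{\mathrm{ren}}(x,y)} .
\]

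Before taking logarithms I must check that the ratio $\langle O_\Delta O_\Delta\rangle/\mathcal N_\Delta$ is strictly positive, so that the logarithm is defined. This reduces to $\mathcal N_\Delta>0$ for $\Delta>2$. That holds because $2\Delta-4>0$, and both $\Gamma(\Delta)$ and $\Gamma(\Delta-2)$ are positive since their arguments are positive. Given this, the ratio equals $|x-y|^{-2\Delta}>0$ for $x\neq y$. Taking $-\Delta^{-1}\ln$ of both sides then yields Eq.~\eqref{L2i} directly.

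The main obstacle is the integer-$\nu$ case, i.e.\ $\Delta\in\mathbb Z$. There the direct derivation of Eq.~\eqref{E17} assumed $\nu\notin\mathbb Z$, and Eq.~\eqref{E14} involves the factor $\Gamma(-\nu)$, which has poles at these values. I would handle this as the derivation itself prescribes, by analytic continuation in $\nu$. In the final expression Eq.~\eqref{E17}, the factor $\Gamma(-\nu)$ has cancelled between Eq.~\eqref{E14} and Eq.~\eqref{E16}. The separated-point coefficient $\mathcal N_\Delta$ is therefore a manifestly analytic, positive function of $\Delta>2$. The identity $\langle O_\Delta O_\Delta\rangle=\mathcal N_\Delta|x-y|^{-2\Delta}$ then extends by continuity in $\Delta$ at fixed $x\neq y$.

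Any scheme-dependent logarithmic contact terms that appear at integer $\nu$ are supported at coincident points, so they do not affect the correlator at $x\neq y$. The relation $L_{\mathrm{ren}}=2\ln|x-y|$ is purely geometric and independent of $\Delta$, so it needs no continuation. Combining the continued correlator with this geometric result establishes Eq.~\eqref{L2i} for all $\Delta>2$.
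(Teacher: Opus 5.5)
Your proposal is correct and follows the paper's proof: the paper takes the equality Eq.~\eqref{E22} (assembled from Eqs.~\eqref{E17}, \eqref{E18} and \eqref{E21}) and applies $-\Delta^{-1}\ln(\cdot/\mathcal N_\Delta)$ to both sides. Your positivity check on $\mathcal N_\Delta$ for $\Delta>2$ and your treatment of integer $\nu$ by analytic continuation are steps the paper leaves implicit in the lemma's one-line proof; it handles the integer-$\nu$ case in the derivation preceding the lemma in the same way you do.
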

\begin{proof}
	Equation~\eqref{E22} is an equality,
	$\langle O_\Delta(x)O_\Delta(y)\rangle_{\rho_{\mathrm{vac}}}=\mathcal{N}_\Delta e^{-\Delta L_{\mathrm{ren}}(x,y)}$,
	and taking $-\frac{1}{\Delta}\ln(\cdot/\mathcal{N}_\Delta)$ yields Eq.~\eqref{L2i}.
\end{proof}

\begin{lemma}[Metric-from-information inequality in AdS$_5$/CFT$_4$ vacuum for bounded code probes]\label{lem:metric-from-mi}
	Fix disjoint $A,B\subset\mathbb{R}^3$ satisfying Eq.~\eqref{A1g} and embed them at Euclidean time $\tau=0$ so that $A\times\{0\},B\times\{0\}\subset\mathbb{R}^4$.
	In the vacuum state, assume $\mathrm{Tr}(\rho_{\mathrm{vac}}\,O_\Delta(x))=0$ and define
	{$B_\Delta:=\sup_{x\in A\cup B}|\widetilde O^{(0)}_{\Delta,\rho_{\mathrm{vac}}}(x)|_\infty<\infty$}.
	Then for all $(x,y)\in A\times B$,
	{\begin{equation}\label{Ld3d}
			L_{\mathrm{ren}}(x,y)
			\ge
			\frac1{2\Delta}
			\ln
			\left(
			\frac{
				|N^{\mathrm{code}}_\Delta|^2
			}{
				2\ln 2\,B_\Delta^4 I(A:B)_{\rho_{\mathrm{vac}}}
			}
			\right)
		\end{equation}
		in the zero-error ideal case, or with the factor $(1-\epsilon^\star)^2$ in the numerator in the general case,}
	with the convention that the right-hand side is interpreted in the extended-real sense if $I(A:B)_{\rho_{\mathrm{vac}}}=0$.
\end{lemma}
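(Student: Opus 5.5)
The plan is to combine Lemma~\ref{lem:pinsker-mi} with the code-subspace version of the exact vacuum dictionary, Eq.~\eqref{E22}, in the same way as the general MfI argument of Appendix~\ref{app:heavy-probe}. Fix $(x,y)\in A\times B$ and set $X:=\widetilde O^{(0)}_{\Delta,\rho_{\mathrm{vac}}}(x)$ and $Y:=\widetilde O^{(0)}_{\Delta,\rho_{\mathrm{vac}}}(y)$. By the standing admissibility assumption of Appendix~\ref{app:assumptions}, these are bounded $A$- and $B$-side code observables. By the definition of $B_\Delta$, we have $|X|_\infty,|Y|_\infty\le B_\Delta$. Lemma~\ref{lem:pinsker-mi} then gives
\[
|C_{\rho_{\mathrm{vac}}}(X,Y)|\le B_\Delta^2\sqrt{2\ln2\,I(A:B)_{\rho_{\mathrm{vac}}}} .
\]

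Next comes the lower bound on the same connected correlator. Because $\mathrm{Tr}(\rho_{\mathrm{vac}}O_\Delta)=0$, mean subtraction is trivial. Since the vacuum is supported on the code, $\rho=\Pi_{\mathrm{code}}\rho\Pi_{\mathrm{code}}$, so the one-point terms of the compressed operators vanish. The connected correlator therefore equals $\mathrm{Tr}(\rho_{\mathrm{vac}}XY)$. I will invoke the code-subspace representation Eq.~\eqref{eq:SI-code-assumption}/\eqref{eq:SI-III-code-formula} specialised to the vacuum:
\[
C=N^{\mathrm{code}}_\Delta e^{-\Delta L_{\mathrm{ren}}(x,y)}(1+\varepsilon_{\mathrm{tot}}),\qquad |\varepsilon_{\mathrm{tot}}|\le\epsilon^\star<1 .
\]
Here $L_{\mathrm{ren}}=2\ln|x-y|$ is fixed exactly by Eq.~\eqref{E21}. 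In the ideal zero-error case, $\varepsilon_{\mathrm{tot}}=0$, and Eq.~\eqref{E22} itself, with $N^{\mathrm{code}}_\Delta$ identified with $\mathcal N_\Delta$ from Eq.~\eqref{E18}, supplies the equality. This yields
\[
|C|\ge |N^{\mathrm{code}}_\Delta|(1-\epsilon^\star)e^{-\Delta L_{\mathrm{ren}}(x,y)}
\]
(with $\epsilon^\star=0$ in the ideal case).

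Chaining the two bounds gives
\[
|N^{\mathrm{code}}_\Delta|(1-\epsilon^\star)e^{-\Delta L_{\mathrm{ren}}}\le B_\Delta^2\sqrt{2\ln2\,I}.
\]
When $I(A:B)>0$, we square both sides, take logarithms, and divide by $2\Delta>0$, which produces Eq.~\eqref{Ld3d}. When $I=0$, the inequality forces $e^{-\Delta L_{\mathrm{ren}}}=0$ unless $N^{\mathrm{code}}_\Delta=0$. This is consistent with reading the right-hand side as $+\infty$ in the extended-real sense, although it exhibits a contradiction with finite $L_{\mathrm{ren}}$; that contradiction is exactly the point exploited in Section~\ref{sec:strips}. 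I will state this case separately rather than derive it.

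The main obstacle is the second step: justifying that the code-compressed connected correlator, rather than the continuum correlator in Eq.~\eqref{E22}, inherits the geodesic form with a nonzero normalization. The compression $\Pi_{\mathrm{code}}O\Pi_{\mathrm{code}}$ does not commute with the product, because
\[
\mathrm{Tr}(\rho\,\Pi O\Pi\,\Pi O\Pi)\neq\mathrm{Tr}(\rho\, O\,O)
\]
in general. I do not intend to prove this equality. Instead, I will rely on the standing hypothesis Eq.~\eqref{eq:SI-code-assumption}, which absorbs projection and truncation errors into $\epsilon^\star$. This is why the lemma is stated with $N^{\mathrm{code}}_\Delta$ and with an optional $(1-\epsilon^\star)^2$ factor.
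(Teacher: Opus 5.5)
Your proposal is correct and follows essentially the same route as the paper. You apply Lemma~\ref{lem:pinsker-mi} to the compressed probes to get the $B_\Delta^2\sqrt{2\ln 2\,I}$ upper bound, lower-bound the same connected correlator with the ideal or $\epsilon^\star$-corrected code-subspace form of Eq.~\eqref{E22}, and take logarithms. Your explicit use of the standing admissibility hypothesis for the compressed insertions, and of Eq.~\eqref{eq:SI-code-assumption} to absorb the compression error, is if anything more careful than the paper's proof, which states both points only informally.
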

\begin{proof}
	{Apply Lemma~\ref{lem:pinsker-mi} with $X=\widetilde O^{(0)}_{\Delta,\rho_{\mathrm{vac}}}(x)$ and $Y=\widetilde O^{(0)}_{\Delta,\rho_{\mathrm{vac}}}(y)$.
		Then
		\begin{equation}\label{P1d}
		\begin{aligned}
			\left|C_{\rho_{\mathrm{vac}}}\!\left(
			\widetilde O^{(0)}_{\Delta,\rho_{\mathrm{vac}}}(x),
			\widetilde O^{(0)}_{\Delta,\rho_{\mathrm{vac}}}(y)
			\right)\right|
			&\le
			|\widetilde O^{(0)}_{\Delta,\rho_{\mathrm{vac}}}(x)|_\infty
			|\widetilde O^{(0)}_{\Delta,\rho_{\mathrm{vac}}}(y)|_\infty\\
			&\qquad\times
			\sqrt{2\ln 2\,I(A:B)_{\rho_{\mathrm{vac}}}}\\
			&\le
			B_\Delta^2\sqrt{2\ln 2\,I(A:B)_{\rho_{\mathrm{vac}}}}.
		\end{aligned}
		\end{equation}
		
		In the vacuum example, the one-point function of the scalar primary vanishes. When the relevant insertions are evaluated inside the code subspace, the compressed connected correlator agrees with the physical vacuum two-point function up to projection/truncation errors. In the idealized vacuum calculation these errors are zero; in the general code-subspace setting they are included in $\epsilon_{\mathrm{tot}}$ as relative multiplicative errors.
		
		Using
		\begin{equation}
			\label{eq:vacuum-code-correlator}
			C_{\rho_{\mathrm{vac}}}\!\left(
			\widetilde O^{(0)}_{\Delta,\rho_{\mathrm{vac}}}(x),
			\widetilde O^{(0)}_{\Delta,\rho_{\mathrm{vac}}}(y)
			\right)
			=
			N^{\mathrm{code}}_\Delta e^{-\Delta L_{\mathrm{ren}}(x,y)}
		\end{equation}
		for the ideal vacuum case, or
		\begin{equation}
			C_{\rho_{\mathrm{vac}}}\!\left(
			\widetilde O^{(0)}_{\Delta,\rho_{\mathrm{vac}}}(x),
			\widetilde O^{(0)}_{\Delta,\rho_{\mathrm{vac}}}(y)
			\right)
			=
			N^{\mathrm{code}}_\Delta e^{-\Delta L_{\mathrm{ren}}(x,y)}(1+\epsilon_{\mathrm{code}})
		\end{equation}
		if projection/truncation errors are being tracked,
		we find in the zero-error ideal case that
		$|N^{\mathrm{code}}_\Delta| e^{-\Delta L_{\mathrm{ren}}(x,y)}\le B_\Delta^2\sqrt{2\ln 2\,I(A:B)_{\rho_{\mathrm{vac}}}}$.
		Taking logarithms yields
		\begin{equation}\label{P2c}
			-\Delta L_{\mathrm{ren}}(x,y)\le \ln\!\Big(B_\Delta^2\sqrt{2\ln 2\,I(A:B)_{\rho_{\mathrm{vac}}}}\Big)-\ln(|N^{\mathrm{code}}_\Delta|),
		\end{equation}
		and dividing by $-\Delta$ yields Eq.~\eqref{Ld3d} in the zero-error ideal case. In the general case, tracking the error bound introduces the factor $(1-\epsilon^\star)^2$ in the numerator inside the logarithm.}
\end{proof}

{\begin{lemma}[Concrete RT mutual information and entanglement-wedge connectivity for two parallel strips]\label{lem:rt-strips}
		Let the vacuum bulk geometry be the Poincar\'e patch of AdS$_5$,
		\begin{equation}
			ds^2=\frac{\ell_{\mathrm{AdS}}^2}{z^2}\Bigl(-dt^2+dz^2+dx_1^2+dx_2^2+dx_3^2\Bigr),
			\qquad z>0,
		\end{equation}
		and work on the static slice \(t=0\). Fix transverse infrared regulators by compactifying the \(x_2\)- and \(x_3\)-directions periodically with periods \(L_2,L_3>0\), so that the boundary spatial slice is
		\begin{equation}
			\mathbb R_{x_1}\times \mathbb T^2_{L_2,L_3},
			\qquad
			\mathbb T^2_{L_2,L_3}:=(\mathbb R/L_2\mathbb Z)\times(\mathbb R/L_3\mathbb Z).
		\end{equation}
		For \(w>0\), define the strip region
		\begin{equation}
			R(w):=[-w/2,w/2]\times \mathbb T^2_{L_2,L_3}\subset \mathbb R_{x_1}\times \mathbb T^2_{L_2,L_3}.
		\end{equation}
		Let \(\ell>0\) and \(s>0\), and define the disjoint pair
		\begin{equation}
			A:=R(\ell),\qquad B:=R(\ell)+(\ell+s)e_1.
		\end{equation}
		
		Assume the standard translation-invariant RT ansatz for strip regions in the AdS vacuum, namely that the relevant classical RT surfaces are invariant under translations of \(\mathbb T^2_{L_2,L_3}\) and are therefore graphs \(z=z(x_1)\) on the static slice. Then, with radial cutoff \(z=\varepsilon\), the classical RT prescription implies
		\begin{equation}\label{L4h}
			I(A:B)_{\rho_{\mathrm{vac}}}
			=
			\max\Bigl\{
			0,\,
			2S_2(\rho_{R(\ell)})-S_2(\rho_{R(s)})-S_2(\rho_{R(2\ell+s)})
			\Bigr\},
		\end{equation}
		where
		\begin{equation}\label{L5h}
			S_2(\rho_{R(w)})
			=
			\frac{\ell_{\mathrm{AdS}}^3L_2L_3}{4G_5\ln 2}
			\left(
			\frac{1}{\varepsilon^2}
			-
			\frac{4\pi^{3/2}\Gamma\!\left(\frac23\right)^3}{\Gamma\!\left(\frac16\right)^3}\frac{1}{w^2}
			\right)
			+o(1)
			\qquad(\varepsilon\downarrow 0).
		\end{equation}
		Consequently,
		\begin{equation}\label{L6h}
			I(A:B)_{\rho_{\mathrm{vac}}}>0
			\quad\Longleftrightarrow\quad
			\frac{1}{s^2}+\frac{1}{(2\ell+s)^2}>\frac{2}{\ell^2}
			\quad\Longleftrightarrow\quad
			s<(\sqrt3-1)\ell.
		\end{equation}
		At the threshold \(s=(\sqrt3-1)\ell\), the connected and disconnected RT candidates have equal area, so \(I(A:B)_{\rho_{\mathrm{vac}}}=0\) at leading classical order. In the strict classical RT regime, the entanglement wedge of \(A\cup B\) is connected if and only if the connected RT candidate is the minimizer, equivalently if and only if Eq.~\eqref{L6h} holds.
	\end{lemma}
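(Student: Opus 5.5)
The plan has two parts. First we evaluate the regulated RT area for a single strip $R(w)$ and read off Eq.~\eqref{L5h}. Then we compare the connected and disconnected candidates for $A\cup B$ to obtain Eqs.~\eqref{L4h} and \eqref{L6h}.

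\textbf{Single-strip area.} By the translation-invariant ansatz, a strip surface is a graph $z=z(x_1)$ on the static slice. Its area functional is
\[
\mathrm{Area}=\ell_{\mathrm{AdS}}^3L_2L_3\int dx_1\,\frac{\sqrt{1+z'^2}}{z^3}.
\]
Since the integrand has no explicit $x_1$ dependence, the associated Hamiltonian is conserved and gives $z^3\sqrt{1+z'^2}=z_*^3$, where $z_*$ is the turning point. From this we get
\[
\frac{w}{2}=\int_0^{z_*}\frac{z^3\,dz}{\sqrt{z_*^6-z^6}}=z_*\frac{\sqrt\pi\,\Gamma(2/3)}{\Gamma(1/6)},
\]
which is a Beta-function integral after substituting $u=(z/z_*)^6$.

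The area is
\[
2\ell_{\mathrm{AdS}}^3L_2L_3\int_\varepsilon^{z_*}\frac{z_*^3\,dz}{z^3\sqrt{z_*^6-z^6}}.
\]
We separate out the $1/(2\varepsilon^2)$ divergence by adding and subtracting $\int_\varepsilon^{\infty} z^{-3}\,dz$. The finite remainder is again a Beta integral, of order $-z_*^{-2}$. We then eliminate $z_*$ in favour of $w$. The resulting constant should combine into $c_1=4\pi^{3/2}\Gamma(2/3)^3/\Gamma(1/6)^3$, using the reflection and duplication formulas to simplify the Gamma factors. Dividing by $4G_5\ln 2$ gives Eq.~\eqref{L5h}, with $o(1)$ error as $\varepsilon\downarrow0$. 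We expect this constant bookkeeping to be the main obstacle, and we will check it against the known AdS$_{d+1}$ strip formula at $d=4$.

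\textbf{Competing candidates for $A\cup B$.} For the union, the two-component admissible RT competitors are of two kinds. The disconnected candidate is two $R(\ell)$ surfaces, with area $2\,\mathrm{Area}(R(\ell))$. The connected candidate is the surfaces of $R(s)$ and $R(2\ell+s)$, with area $\mathrm{Area}(R(s))+\mathrm{Area}(R(2\ell+s))$. Under the graph ansatz, these are the only extremal configurations anchored on the four boundary lines that satisfy the homology constraint. RT takes the minimum of the two, so
\[
S_2(A\cup B)=\min\{2S_2(\rho_{R(\ell)}),\,S_2(\rho_{R(s)})+S_2(\rho_{R(2\ell+s)})\}.
\]
Substituting into $I=2S_2(\rho_{R(\ell)})-S_2(A\cup B)$ yields Eq.~\eqref{L4h}, because the single-strip entropies of $A$ and $B$ are equal.

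\textbf{Threshold.} In the difference of areas the $\varepsilon^{-2}$ divergences cancel, since both candidates contain the same number (four) of anchored sheets. The positivity condition then reduces to $c_1\bigl(s^{-2}+(2\ell+s)^{-2}-2\ell^{-2}\bigr)>0$. To solve it, we set $t=s/\ell$, clear denominators, and factor. This gives $t^2+2t-2<0$, equivalently $(t+1)^2<3$, so $t<\sqrt3-1$ for $t>0$. The left-hand side is decreasing in $s$, which gives uniqueness of the root and hence the equivalences in Eq.~\eqref{L6h}.

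At equality, the two candidate areas coincide and $I=0$. Finally, by the classwise criterion Eq.~\eqref{T5c} of Theorem~\ref{thm:rt-qes-connectivity}, the entanglement wedge is connected exactly when the connected candidate is the minimizer.
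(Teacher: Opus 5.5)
Your proposal is correct and follows essentially the same route as the paper: the conserved first integral $z^3\sqrt{1+z'^2}=z_*^3$, the Beta-function width relation $w=2c_0z_*$ with $c_0=\sqrt\pi\,\Gamma(\tfrac23)/\Gamma(\tfrac16)$, the comparison of the disconnected and connected pairings with cancellation of the $\varepsilon^{-2}$ terms, and then the solution of the threshold inequality.

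The differences are cosmetic. The paper extracts the finite area term through the exact identity $\int_a^1\frac{du}{u^3\sqrt{1-u^6}}=\frac{\sqrt{1-a^6}}{2a^2}-\frac12\int_a^1\frac{u^3\,du}{\sqrt{1-u^6}}$, and it solves the threshold by monotonicity of $x^{-2}+(x+2)^{-2}$ rather than by your (also valid) factorization. The constant you flagged as the main obstacle is in fact easy: your subtracted remainder tends to $\frac16B(-\tfrac13,\tfrac12)\,z_*^{-2}=-\tfrac{c_0}{2}\,z_*^{-2}$ (analytic continuation of the Beta function), using only $\Gamma(-\tfrac13)=-3\Gamma(\tfrac23)$. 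Hence $c_1=4c_0^3$ follows directly, with no reflection or duplication formula needed.
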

	
	\begin{proof}
		We separate the proof into four steps.
		
		\smallskip
		\noindent
		\textbf{Step 1: Area functional and width relation for a single strip.}
		For a single strip \(R(w)\), the assumed translation invariance along \(\mathbb T^2_{L_2,L_3}\) implies that the candidate RT surface has embedding
		\begin{equation}
			X(x_1,x_2,x_3)=\bigl(z(x_1),x_1,x_2,x_3\bigr),
			\qquad
			x_1\in[-w/2,w/2],\quad (x_2,x_3)\in \mathbb T^2_{L_2,L_3}.
		\end{equation}
		Its induced metric is
		\begin{equation}
			\gamma
			=
			\frac{\ell_{\mathrm{AdS}}^2}{z(x_1)^2}
			\Bigl((1+z'(x_1)^2)\,dx_1^2+dx_2^2+dx_3^2\Bigr),
		\end{equation}
		hence the regulated area functional is
		\begin{equation}\label{eq:area-functional-strip}
			\mathrm{Area}_\varepsilon(w)
			=
			\ell_{\mathrm{AdS}}^3L_2L_3
			\int_{-w/2}^{w/2}
			z(x_1)^{-3}\sqrt{1+z'(x_1)^2}\,dx_1.
		\end{equation}
		By symmetry, the minimal profile satisfies
		\begin{equation}
			z(0)=z_\ast,\qquad z'(0)=0,\qquad z(\pm w/2)=\varepsilon,
		\end{equation}
		with \(z_\ast>\varepsilon\) the turning point.
		
		Let
		\begin{equation}
			\mathcal L(z,z'):=z^{-3}\sqrt{1+z'^2}.
		\end{equation}
		Since \(\mathcal L\) does not depend explicitly on \(x_1\), the quantity
		\begin{equation}
			\mathcal H
			:=
			\mathcal L-z'\frac{\partial\mathcal L}{\partial z'}
			=
			\frac{1}{z^3\sqrt{1+z'^2}}
		\end{equation}
		is constant along the extremal profile. Evaluating at the turning point \((z,z')=(z_\ast,0)\) gives
		\begin{equation}
			\frac{1}{z^3\sqrt{1+z'^2}}=\frac{1}{z_\ast^3},
		\end{equation}
		or equivalently
		\begin{equation}\label{eq:zp-square}
			z'(x_1)^2=\frac{z_\ast^6}{z(x_1)^6}-1.
		\end{equation}
		On the half-interval \(x_1\in[0,w/2]\), the function \(z(x_1)\) decreases monotonically from \(z_\ast\) to \(\varepsilon\), so
		\begin{equation}
			\frac{dx_1}{dz}
			=
			\frac{z^3}{\sqrt{z_\ast^6-z^6}}.
		\end{equation}
		Therefore
		\begin{equation}\label{eq:halfwidth}
			\frac{w}{2}
			=
			\int_{\varepsilon}^{z_\ast}\frac{z^3\,dz}{\sqrt{z_\ast^6-z^6}}
			=
			z_\ast\int_{\varepsilon/z_\ast}^{1}\frac{u^3\,du}{\sqrt{1-u^6}},
			\qquad u:=\frac{z}{z_\ast}.
		\end{equation}
		
		Define
		\begin{equation}
			c_0:=\int_0^1\frac{u^3\,du}{\sqrt{1-u^6}}.
		\end{equation}
		This integral converges absolutely, and by the substitution \(v=u^6\) one obtains
		\begin{equation}
			c_0
			=
			\frac16\int_0^1 v^{-1/3}(1-v)^{-1/2}\,dv
			=
			\frac16\,B\!\left(\frac23,\frac12\right)
			=
			\frac{\sqrt\pi\,\Gamma(\frac23)}{\Gamma(\frac16)}.
		\end{equation}
		Since the integrand in \eqref{eq:halfwidth} behaves as \(u^3+O(u^9)\) as \(u\downarrow 0\), we have
		\begin{equation}
			\int_{\varepsilon/z_\ast}^{1}\frac{u^3\,du}{\sqrt{1-u^6}}
			=
			c_0+O\!\left((\varepsilon/z_\ast)^4\right)
			\qquad(\varepsilon\downarrow 0),
		\end{equation}
		and therefore
		\begin{equation}\label{eq:zstar-width}
			w=2c_0\,z_\ast+o(1)
			\qquad(\varepsilon\downarrow 0).
		\end{equation}
		At fixed \(w\), this implies
		\begin{equation}\label{eq:zstar-leading}
			z_\ast=\frac{w}{2c_0}+o(1)
			\qquad(\varepsilon\downarrow 0).
		\end{equation}
		
		\smallskip
		\noindent
		\textbf{Step 2: Exact asymptotic expansion of the regulated area.}
		Using \eqref{eq:zp-square} and changing variables from \(x_1\) to \(z\), Eq.~\eqref{eq:area-functional-strip} becomes
		\begin{equation}
			\mathrm{Area}_\varepsilon(w)
			=
			2\ell_{\mathrm{AdS}}^3L_2L_3
			\int_{\varepsilon}^{z_\ast}
			z^{-3}\sqrt{1+z'^2}\,\frac{dx_1}{dz}\,dz.
		\end{equation}
		Now
		\begin{equation}
			\sqrt{1+z'^2}
			=
			\frac{z_\ast^3}{z^3}
		\end{equation}
		by the conservation law, and thus
		\begin{equation}
			\mathrm{Area}_\varepsilon(w)
			=
			2\ell_{\mathrm{AdS}}^3L_2L_3\,z_\ast^{-2}
			\int_{\varepsilon/z_\ast}^{1}\frac{u^{-3}\,du}{\sqrt{1-u^6}}.
		\end{equation}
		Set
		\begin{equation}
			I_2(a):=\int_a^1\frac{u^{-3}\,du}{\sqrt{1-u^6}},
			\qquad
			I_1(a):=\int_a^1\frac{u^3\,du}{\sqrt{1-u^6}},
			\qquad 0<a<1.
		\end{equation}
		We now derive an exact identity relating \(I_2(a)\) and \(I_1(a)\). First observe that
		\begin{equation}
			\frac{1}{u^3\sqrt{1-u^6}}
			=
			\frac{\sqrt{1-u^6}}{u^3}
			+
			\frac{u^3}{\sqrt{1-u^6}},
		\end{equation}
		because \((1-u^6)+u^6=1\). Next,
		\begin{equation}
			\frac{d}{du}\left(\frac{\sqrt{1-u^6}}{2u^2}\right)
			=
			-\frac{\sqrt{1-u^6}}{u^3}
			-\frac{3}{2}\frac{u^3}{\sqrt{1-u^6}}.
		\end{equation}
		Combining these two identities gives
		\begin{equation}
			\frac{1}{u^3\sqrt{1-u^6}}
			=
			-\frac{d}{du}\left(\frac{\sqrt{1-u^6}}{2u^2}\right)
			-\frac12\,\frac{u^3}{\sqrt{1-u^6}}.
		\end{equation}
		Integrating from \(a\) to \(1\) yields
		\begin{equation}\label{eq:I2-identity}
			I_2(a)
			=
			\frac{\sqrt{1-a^6}}{2a^2}
			-\frac12\,I_1(a).
		\end{equation}
		Since \(I_1(a)=c_0+O(a^4)\) as \(a\downarrow 0\), and since
		\begin{equation}
			\frac{\sqrt{1-a^6}}{2a^2}
			=
			\frac{1}{2a^2}+O(a^4),
		\end{equation}
		Eq.~\eqref{eq:I2-identity} implies
		\begin{equation}\label{eq:I2-asymptotic}
			I_2(a)
			=
			\frac{1}{2a^2}
			-\frac{c_0}{2}
			+o(1)
			\qquad(a\downarrow 0).
		\end{equation}
		Substituting \(a=\varepsilon/z_\ast\) gives
		\begin{equation}
			\mathrm{Area}_\varepsilon(w)
			=
			2\ell_{\mathrm{AdS}}^3L_2L_3\,z_\ast^{-2}
			\left(
			\frac{z_\ast^2}{2\varepsilon^2}
			-\frac{c_0}{2}
			+o(1)
			\right),
		\end{equation}
		hence
		\begin{equation}\label{eq:area-before-width}
			\mathrm{Area}_\varepsilon(w)
			=
			\ell_{\mathrm{AdS}}^3L_2L_3
			\left(
			\frac{1}{\varepsilon^2}
			-c_0\,z_\ast^{-2}
			\right)
			+o(1).
		\end{equation}
		Using \eqref{eq:zstar-leading},
		\begin{equation}
			c_0\,z_\ast^{-2}
			=
			c_0\left(\frac{2c_0}{w}\right)^2
			+o(1)
			=
			\frac{4c_0^3}{w^2}+o(1),
		\end{equation}
		and therefore
		\begin{equation}
			\mathrm{Area}_\varepsilon(w)
			=
			\ell_{\mathrm{AdS}}^3L_2L_3
			\left(
			\frac{1}{\varepsilon^2}
			-\frac{4c_0^3}{w^2}
			\right)
			+o(1).
		\end{equation}
		Since
		\begin{equation}
			4c_0^3
			=
			4\left(\frac{\sqrt\pi\,\Gamma(\frac23)}{\Gamma(\frac16)}\right)^3
			=
			\frac{4\pi^{3/2}\Gamma(\frac23)^3}{\Gamma(\frac16)^3},
		\end{equation}
		we obtain the regulated strip area
		\begin{equation}\label{eq:strip-area-final}
			\mathrm{Area}_\varepsilon(w)
			=
			\ell_{\mathrm{AdS}}^3L_2L_3
			\left(
			\frac{1}{\varepsilon^2}
			-\frac{4\pi^{3/2}\Gamma(\frac23)^3}{\Gamma(\frac16)^3}\frac{1}{w^2}
			\right)
			+o(1).
		\end{equation}
		The classical RT formula then gives
		\begin{equation}
			S_2(\rho_{R(w)})
			=
			\frac{\mathrm{Area}_\varepsilon(w)}{4G_5\ln 2},
		\end{equation}
		which is exactly Eq.~\eqref{L5h}.
		
		\smallskip
		\noindent
		\textbf{Step 3: RT candidates for \(A\cup B\) and the mutual information formula.}
		The boundary of \(A\cup B\) consists of four disjoint torus components located at
		\begin{equation}
			x_1=a_1:=-\frac{\ell}{2},
			\qquad
			x_1=a_2:=\frac{\ell}{2},
			\qquad
			x_1=a_3:=\frac{\ell}{2}+s,
			\qquad
			x_1=a_4:=\frac{3\ell}{2}+s.
		\end{equation}
		Within the translation-invariant symmetry class, each connected component of an RT surface is \( \mathbb T^2_{L_2,L_3}\) times a curve in the \((x_1,z)\)-plane joining two of these boundary components. Homology to \(A\cup B\) and noncrossing of the planar pairing leave exactly two admissible candidates:
		
		\smallskip
		\noindent
		(1) the \emph{disconnected} candidate, pairing \((a_1,a_2)\) and \((a_3,a_4)\), with total regulated area
		\begin{equation}
			\mathrm{Area}^{\mathrm{disc}}_\varepsilon
			=
			2\,\mathrm{Area}_\varepsilon(\ell);
		\end{equation}
		
		\smallskip
		\noindent
		(2) the \emph{connected} candidate, pairing \((a_2,a_3)\) and \((a_1,a_4)\), with total regulated area
		\begin{equation}
			\mathrm{Area}^{\mathrm{conn}}_\varepsilon
			=
			\mathrm{Area}_\varepsilon(s)+\mathrm{Area}_\varepsilon(2\ell+s).
		\end{equation}
		
		The crossing pairing \((a_1,a_3)\) with \((a_2,a_4)\) is excluded: in the \((x_1,z)\)-plane it is non-planar/non-minimizing and does not furnish the required noncrossing homologous bulk filling for \(A\cup B\). Therefore
		\begin{equation}
			S_2(\rho_{A\cup B})
			=
			\frac{1}{4G_5\ln 2}
			\min\Bigl\{
			2\,\mathrm{Area}_\varepsilon(\ell),\,
			\mathrm{Area}_\varepsilon(s)+\mathrm{Area}_\varepsilon(2\ell+s)
			\Bigr\}
			+o(1).
		\end{equation}
		Since \(A\) and \(B\) are congruent,
		\begin{equation}
			S_2(\rho_A)=S_2(\rho_B)=S_2(\rho_{R(\ell)}),
		\end{equation}
		and hence
		\begin{equation}
			I(A:B)_{\rho_{\mathrm{vac}}}
			=
			S_2(\rho_A)+S_2(\rho_B)-S_2(\rho_{A\cup B})
			=
			\max\Bigl\{
			0,\,
			2S_2(\rho_{R(\ell)})-S_2(\rho_{R(s)})-S_2(\rho_{R(2\ell+s)})
			\Bigr\},
		\end{equation}
		which is Eq.~\eqref{L4h}.
		
		\smallskip
		\noindent
		\textbf{Step 4: Positivity threshold and entanglement-wedge connectivity.}
		Substituting Eq.~\eqref{L5h} into Eq.~\eqref{L4h} cancels the divergent \(1/\varepsilon^2\) terms exactly, and gives
		\begin{equation}
			I(A:B)_{\rho_{\mathrm{vac}}}>0
			\quad\Longleftrightarrow\quad
			\frac{1}{s^2}+\frac{1}{(2\ell+s)^2}>\frac{2}{\ell^2}.
		\end{equation}
		To solve this inequality, set \(x:=s/\ell>0\) and define
		\begin{equation}
			f(x):=\frac{1}{x^2}+\frac{1}{(x+2)^2}.
		\end{equation}
		Then
		\begin{equation}
			f'(x)=-\frac{2}{x^3}-\frac{2}{(x+2)^3}<0
			\qquad(x>0),
		\end{equation}
		so \(f\) is strictly decreasing on \((0,\infty)\). A direct computation gives
		\begin{equation}
			f(\sqrt3-1)
			=
			\frac{1}{(\sqrt3-1)^2}+\frac{1}{(\sqrt3+1)^2}
			=
			2.
		\end{equation}
		Since \(f\) is strictly decreasing, it follows that
		\begin{equation}
			f(x)>2
			\quad\Longleftrightarrow\quad
			x<\sqrt3-1,
		\end{equation}
		that is,
		\begin{equation}
			\frac{1}{s^2}+\frac{1}{(2\ell+s)^2}>\frac{2}{\ell^2}
			\quad\Longleftrightarrow\quad
			s<(\sqrt3-1)\ell.
		\end{equation}
		This proves Eq.~\eqref{L6h}.
		
		Finally, in the static classical RT setting, the entanglement wedge of \(A\cup B\) is connected if and only if the connected RT candidate is the minimizer. By the preceding area comparison, that is equivalent to
		\begin{equation}
			\mathrm{Area}^{\mathrm{conn}}_\varepsilon<\mathrm{Area}^{\mathrm{disc}}_\varepsilon,
		\end{equation}
		equivalently to \(I(A:B)_{\rho_{\mathrm{vac}}}>0\), equivalently to Eq.~\eqref{L6h}. At the threshold \(s=(\sqrt3-1)\ell\), the two RT candidates have equal area and the classical mutual information vanishes.
\end{proof}}

\begin{lemma}[Logical containment of the mutual-information distance inference within boundary-information in AdS$_5$/CFT$_4$]\label{lem:containment}
	Let $\mathsf{WB}$ denote the implication that combines the mutual-information Pinsker inequality with a state-dependent correlation-length hypothesis to yield a lower bound on a geometric separation parameter, and let $\mathsf{BI}$ denote the metric-from-information implication obtained by combining Eq.~\eqref{tL1} with a heavy-probe geodesic exponentiation hypothesis.
	In the AdS$_5$/CFT$_4$ vacuum described above, the implication $\mathsf{WB}$ restricted to holographic probes is a consequence of $\mathsf{BI}$ specialized to $\varepsilon_{\Delta,\rho}\equiv 0$ and $\mathcal{N}_\Delta$ given by Eq.~\eqref{E18}.
\end{lemma}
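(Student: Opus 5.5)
The strategy is to formalize $\mathsf{WB}$ and show that every conclusion it delivers for holographic probes is already delivered by $\mathsf{BI}$ in the AdS$_5$ vacuum. I would write $\mathsf{WB}$ schematically as follows. From Lemma~\ref{lem:pinsker-mi} one has $|C_{\rho_{AB}}(X,Y)|\le |X|_\infty|Y|_\infty\sqrt{2\ln 2\,I(A:B)}$. One then adds a correlation-length hypothesis of the form
\begin{equation}
|C_\rho(X,Y)| \ge c_0\,|X|_\infty|Y|_\infty\, f(\mathrm{sep}(x,y)),
\end{equation}
with $f$ strictly decreasing and invertible. Together these give $\mathrm{sep}(x,y)\ge f^{-1}\bigl(c_0^{-1}\sqrt{2\ln 2\,I}\bigr)$.

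Restricting to holographic probes means taking $X=\widetilde O^{(0)}_{\Delta,\rho_{\mathrm{vac}}}(x)$ and $Y=\widetilde O^{(0)}_{\Delta,\rho_{\mathrm{vac}}}(y)$. The only correlator decay law available for such probes in the vacuum is the exact one, Eq.~\eqref{E22}, combined with the code identification Eq.~\eqref{eq:vacuum-code-correlator}. So any admissible $\mathsf{WB}$ hypothesis must be implied by, or weaker than, $|C|=|N_\Delta^{\mathrm{code}}|\,e^{-\Delta L_{\mathrm{ren}}}$. This forces the separation parameter to be a monotone function of $L_{\mathrm{ren}}$, or equivalently of $|x-y|$ through Eq.~\eqref{E21}.

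The argument then proceeds in three steps.

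\textbf{Step 1.} Fix the heavy-probe exponentiation hypothesis with $\varepsilon_{\Delta,\rho}\equiv 0$ and $\mathcal N_\Delta$ as in Eq.~\eqref{E18}. By Lemma~\ref{lem:distance-extract} this hypothesis holds exactly in the vacuum. Lemma~\ref{lem:metric-from-mi}, in its zero-error form, then gives $\mathsf{BI}$, namely Eq.~\eqref{Ld3d}.

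\textbf{Step 2.} Show that the $\mathsf{WB}$ correlation-length hypothesis for these probes is a consequence of the $\mathsf{BI}$ hypothesis. Dividing Eq.~\eqref{E22} by $B_\Delta^2$ yields exactly the $\mathsf{WB}$ form with
\begin{equation}
c_0 = \kappa_\Delta, \qquad f(r)=r^{-2\Delta}=e^{-\Delta L_{\mathrm{ren}}},
\end{equation}
using $L_{\mathrm{ren}}=2\ln r$. Any weaker $\mathsf{WB}$ hypothesis, with smaller $c_0$ or faster-decaying $f$, gives a weaker separation bound.

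\textbf{Step 3.} Compose with the inverse map. Since $f^{-1}$ is monotone, $\mathrm{sep}=\exp(L_{\mathrm{ren}}/2)$, and the $\mathsf{WB}$ conclusion is precisely Eq.~\eqref{Ld3d} transported through this monotone reparametrization. For a weaker $\mathsf{WB}$ hypothesis, its conclusion is dominated by the one just obtained. Either way the conclusion follows from $\mathsf{BI}$.

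The main obstacle is making ``$\mathsf{WB}$'' precise enough that ``is a consequence of'' is a real implication rather than a slogan. I would handle this by defining $\mathsf{WB}$ as a family indexed by the pair $(c_0,f)$, and declaring it admissible for holographic probes only when its hypothesis is consistent with the exact vacuum correlator. The pointwise comparison of $c_0 f$ with $\kappa_\Delta e^{-\Delta L_{\mathrm{ren}}}$, together with monotonicity of the logarithm, then closes the argument.

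A secondary point is $I(A:B)=0$, which occurs in the disconnected phase of Lemma~\ref{lem:rt-strips}. There both bounds are read in the extended-real sense and hold trivially.
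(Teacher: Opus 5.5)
Your proposal is correct and follows the paper's argument. The Pinsker input is shared through Lemma~\ref{lem:pinsker-mi}, the correlation-length postulate is replaced by the exact equality Eq.~\eqref{E22}, and the combination is Lemma~\ref{lem:metric-from-mi}; your $(c_0,f)$ parametrization and domination step only make ``consequence of'' more explicit than the paper does.

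One bookkeeping slip needs fixing. Consistency with the exact correlator gives only $c_0 f\le |N^{\mathrm{code}}_\Delta|\,e^{-\Delta L_{\mathrm{ren}}}/(|X|_\infty|Y|_\infty)$, and this can exceed $\kappa_\Delta e^{-\Delta L_{\mathrm{ren}}}$ whenever $|X|_\infty|Y|_\infty<B_\Delta^2$. Close the domination step with the pair-specific norms of Eq.~\eqref{tL1} (equivalently, the normalized observables the paper uses) rather than with $B_\Delta^2$.
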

\begin{proof}
	The Pinsker inequality is the inequality
	$\frac{1}{2\ln 2}C_{\rho_{AB}}(\widetilde{M}_A,\widetilde{M}_B)^2\le I(A:B)_{\rho_{AB}}$,
	which is equivalent to Eq.~\eqref{tL1} with $X=\widetilde{M}_A$ and $Y=\widetilde{M}_B$ by squaring and rearranging; hence the mutual-information-to-correlation input of $\mathsf{WB}$ is contained in Lemma~\ref{lem:pinsker-mi}.
	The remaining input of $\mathsf{WB}$ is a correlation-length postulate relating $|C_{\rho_{AB}}(\widetilde{M}_A,\widetilde{M}_B)|$ to a geometric separation parameter, whereas in AdS$_5$/CFT$_4$ vacuum the exact relation Eq.~\eqref{E22} identifies
	$|\langle O_\Delta(x)O_\Delta(y)\rangle_{\rho_{\mathrm{vac}}}|$ with $\mathcal{N}_\Delta e^{-\Delta L_{\mathrm{ren}}(x,y)}$, and therefore provides a precise replacement of the correlation-length postulate with an equality.
	Combining Lemma~\ref{lem:pinsker-mi} with Eq.~\eqref{E22} yields Lemma~\ref{lem:metric-from-mi}, which is the AdS$_5$/CFT$_4$ instantiation of the $\mathsf{BI}$ implication and implies the $\mathsf{WB}$-type conclusion of a distance lower bound from mutual information in this holographic model.
\end{proof} 

\begin{theorem}[Boundary data reconstruction in AdS$_5$/CFT$_4$ for $\mathcal{N}=4$ SYM]\label{thm:main-ads5}
	Under Assumptions Eq.~\eqref{A1g}-\eqref{A4e} and Definitions Eq.~\eqref{D1h}-\eqref{D6d}, the boundary data set $\mathcal{D}_{\partial}$ has the following properties in the AdS$_5$/CFT$_4$ vacuum of $\mathcal{N}=4$ $SU(N)$ SYM in the limit Eq.~\eqref{A2f}.
	First, the renormalized bulk geodesic length between boundary points is reconstructible from boundary two-point functions by Eq.~\eqref{L2i}, hence $\mathcal{D}_{\partial}$ determines $L_{\mathrm{ren}}(x,y)$ for all $x\ne y$ for which $\langle O_\Delta(x)O_\Delta(y)\rangle_{\rho_{\mathrm{vac}}}$ is included.
	Second, for the strip family in Lemma~\ref{lem:rt-strips}, the entanglement-wedge connectivity decision is reconstructible from boundary entropies via RT by Eq.~\eqref{L4h}-\eqref{L6h}, hence $\mathcal{D}_{\partial}$ determines whether the entanglement wedge of $A\cup B$ is connected for that family.
	Third, the metric-from-information implication $\mathsf{BI}$ contains the full information-theoretic mutual-information distance inference insofar as it pertains to holographic distance reconstruction: the Pinsker mutual-information control and the inversion from correlators to a geometric separation parameter are implied by Lemma~\ref{lem:pinsker-mi} and the heavy-probe geodesic exponentiation, which holds exactly in the AdS$_5$/CFT$_4$ vacuum by Eq.~\eqref{E22}.
\end{theorem}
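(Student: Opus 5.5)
The theorem has three parts, and I plan to prove them in order. Each part essentially assembles lemmas already established in this appendix. The proof is therefore a bookkeeping argument: check that every quantity each lemma needs belongs to $\mathcal{D}_\partial$ as defined in Eq.~\eqref{A4e}, and that the lemmas' hypotheses hold in the limit Eq.~\eqref{A2f}.

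\textbf{Part one (distances).} Fix distinct $x,y$. The vacuum derivation culminating in Eq.~\eqref{E22} gives the exact identity $\langle O_\Delta(x)O_\Delta(y)\rangle_{\rho_{\mathrm{vac}}}=\mathcal N_\Delta e^{-\Delta L_{\mathrm{ren}}(x,y)}$, with $\mathcal N_\Delta$ the explicit, state-independent constant Eq.~\eqref{E18}. Two facts make inversion legitimate: $\mathcal N_\Delta>0$ for $\Delta>2$, and the correlator is strictly positive. Lemma~\ref{lem:distance-extract} then expresses $L_{\mathrm{ren}}$ as a function of the correlator alone, so $\mathcal{D}_\partial$ determines it. For integer $\nu$ I would appeal to the analytic continuation already invoked in the derivation.

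\textbf{Part two (connectivity).} The boundary entropies $S_2(\rho_{R(w)})$ for $w\in\{\ell,s,2\ell+s\}$ are elements of $\mathcal{D}_\partial$. Lemma~\ref{lem:rt-strips} supplies three facts: $I(A:B)$ is given by Eq.~\eqref{L4h} in terms of these entropies; positivity of $I(A:B)$ is equivalent to $s<(\sqrt3-1)\ell$; and this in turn is equivalent to the connected RT candidate being the minimizer, i.e. to a connected entanglement wedge. The connectivity decision is therefore a function of $\mathcal{D}_\partial$. The threshold case $s=s_c$ needs separate mention. There the two areas are equal, so the classical prescription is degenerate, and I would state the result only for $s\neq s_c$ or flag it as a tie. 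This is the one point where a little care is required.

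\textbf{Part three (containment).} This is Lemma~\ref{lem:containment}, built from two ingredients. The first is the Pinsker control of Lemma~\ref{lem:pinsker-mi}. The second is the heavy-probe geodesic exponentiation, which in this vacuum is the exact equality Eq.~\eqref{E22}, i.e. $\varepsilon_{\Delta,\rho}\equiv0$. Combining them reproduces Lemma~\ref{lem:metric-from-mi}, and any $\mathsf{WB}$-type correlation-length inference restricted to holographic probes factors through it.

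\textbf{Main obstacle.} I expect the hardest step to be the interface between the code-compressed probes used in Lemma~\ref{lem:metric-from-mi} and the continuum correlator in Eq.~\eqref{E22}. The statement must be phrased either in the zero-error ideal case, where $N^{\mathrm{code}}_\Delta=\mathcal N_\Delta$, or with the $(1-\epsilon^\star)$ factor, and I will make that choice explicit rather than leave it ambiguous.
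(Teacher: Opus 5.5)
Your proposal is correct and follows the paper's proof essentially verbatim: the three claims are discharged by Lemma~\ref{lem:distance-extract} (inverting the exact identity Eq.~\eqref{E22}, using $\mathcal N_\Delta>0$ for $\Delta>2$), by Lemma~\ref{lem:rt-strips} (selection of the connected RT surface $\Leftrightarrow$ $I(A:B)>0$ $\Leftrightarrow$ $s<(\sqrt3-1)\ell$), and by Lemma~\ref{lem:containment}. Your extra caveats are consistent with conventions already built into Lemmas~\ref{lem:rt-strips} and~\ref{lem:metric-from-mi}: the tie at $s=s_c$, and the choice between the zero-error phrasing and the $(1-\epsilon^\star)^2$ phrasing for code-compressed probes.
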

\begin{proof}
	The reconstructibility claim for $L_{\mathrm{ren}}$ is Lemma~\ref{lem:distance-extract}.
	The entanglement-wedge connectivity claim for the strip family follows because in the RT regime a connected entanglement wedge for $A\cup B$ is equivalent to selection of the connected RT surface in the minimization defining $S_2(\rho_{A\cup B})$, and this selection is equivalent to strict positivity of $I(A:B)_{\rho_{\mathrm{vac}}}$ for the strip family by Eq.~\eqref{L4h}-\eqref{L6h}.
	The containment claim is Lemma~\ref{lem:containment}.
\end{proof}
Dimensional consistency: $L_{\mathrm{ren}}$ is dimensionless by Eq.~\eqref{Ld3d}, $\Delta$ is dimensionless, and therefore $\exp(-\Delta L_{\mathrm{ren}})$ is dimensionless; Eq.~\eqref{E22} is consistent with $O_\Delta$ having scaling dimension $\Delta$ since under $x\mapsto \alpha x$ one has $r\mapsto \alpha r$, $L_{\mathrm{ren}}\mapsto L_{\mathrm{ren}}+2\ln\alpha$, and therefore
$\exp(-\Delta L_{\mathrm{ren}})\mapsto \alpha^{-2\Delta}\exp(-\Delta L_{\mathrm{ren}})$, matching
$\langle O_\Delta(\alpha x)O_\Delta(\alpha y)\rangle=\alpha^{-2\Delta}\langle O_\Delta(x)O_\Delta(y)\rangle$ from Eq.~\eqref{E14}.
Large-$N$ semiclassical limit: in AdS$_5$/CFT$_4$ one has $\ell_{\mathrm{AdS}}^3/G_5\sim N^2$ \cite{Aharony2000}, hence from Eq.~\eqref{L5h} the leading entropies scale as $S_2(\rho_{R(w)})=\Theta(N^2)$ and the mutual information in the connected wedge phase scales as $I(A:B)=\Theta(N^2)$ by Eq.~\eqref{L4h}-\eqref{L6h}, consistent with the code-subspace semiclassical regime.
Symmetry invariance: for vacuum AdS$_5$ the geodesic length depends only on $r=|x-y|$ by Eq.~\eqref{E21}, hence is invariant under the Euclidean isometry subgroup of the boundary conformal group preserving the Poincar\'e patch, and the two-point function Eq.~\eqref{E14} is invariant under translations and rotations and transforms covariantly under dilations as shown above.

Special cases: vacuum AdS is realized by Eqs.~\eqref{E21}-\eqref{E22}. For the strip family, the condition \(s\ge (\sqrt{3}-1)\ell\) implies only that the \emph{leading classical RT contribution} to $I(A:B)_{\rho_{\mathrm{vac}}}$ vanishes. It does not imply that the exact CFT mutual information is zero. Indeed, if the exact mutual information vanished while the heavy-probe correlator remained nonzero, then Eq.~\eqref{Ld3d} would force \(L_{\mathrm{ren}}(x,y)\ge +\infty,\) which is impossible at finite boundary separation in the AdS vacuum. Therefore the exact mutual information in the disconnected classical phase must be lifted by strictly positive subleading $O(G_5^0)$ bulk-entanglement corrections. In the connected-wedge phase, substituting Eq.~\eqref{L5h} into Eq.~\eqref{L4h} yields the explicit finite expression,
\begin{equation}\label{C1b}
	{I(A:B)_{\rho_{\mathrm{vac}}}
		=
		\frac{\ell_{\mathrm{AdS}}^3L_2L_3}{4G_5\ln 2}\cdot
		\frac{4\pi^{3/2}\Gamma(2/3)^3}{\Gamma(1/6)^3}
		\left(
		\frac{1}{s^2}
		+
		\frac{1}{(2\ell+s)^2}
		-
		\frac{2}{\ell^2}
		\right)
		+o(1),}
\end{equation}
which is positive if and only if Eq.~\eqref{L6h} holds.

\smallskip
\noindent
\textbf{Remark on large-separation saturation and region geometry.}
As noted above, in the disconnected classical phase ($s \ge (\sqrt{3}-1)\ell$), the exact mutual information is lifted by subleading $O(G_5^0)$ bulk-entanglement corrections governed by the CFT operator product expansion (OPE) limit. It is important to distinguish the asymptotic scaling of these corrections based on the boundary geometry.

For bounded, finite regions (such as spheres), the transverse OPE integral is absent, and the mutual information scales exactly as $I(A:B) \sim s^{-4\Delta_{\mathrm{min}}}$. Evaluated at $\Delta = \Delta_{\mathrm{min}}$, the MfI bound yields $L_{\mathrm{ren}} \ge \frac{1}{2\Delta_{\mathrm{min}}} \ln(1/I) \approx 2 \ln s$, which exactly saturates the renormalized geodesic distance $L_{\mathrm{ren}} \approx 2 \ln s$. 

However, for the infinite parallel strips considered in Lemma \ref{lem:rt-strips}, the regions extend infinitely across the $d-2$ transverse dimensions (with regularized volume $L_2 L_3$). In $d=4$ ($\mathrm{AdS}_5$), integrating the squared connected correlator over these transverse directions modifies the scaling to
\begin{equation}
	I(A:B) \sim \int d^2y_\perp \frac{1}{(s^2 + y_\perp^2)^{2\Delta_{\mathrm{min}}}} \propto \frac{1}{s^{4\Delta_{\mathrm{min}} - 2}}.
\end{equation}
Taking the logarithm yields $\ln(1/I) \sim (4\Delta_{\mathrm{min}} - 2)\ln s$. Substituting this strip scaling into the MfI bound gives
\begin{equation}
	L_{\mathrm{ren}} \ge \frac{1}{2\Delta_{\mathrm{min}}} \ln(1/I) \approx \frac{4\Delta_{\mathrm{min}} - 2}{2\Delta_{\mathrm{min}}} \ln s = \left(2 - \frac{1}{\Delta_{\mathrm{min}}}\right) \ln s.
\end{equation}
Because the unitarity bound for scalars in 4D requires $\Delta_{\mathrm{min}} \ge 1$, the prefactor $2 - 1/\Delta_{\mathrm{min}}$ is strictly less than 2. Thus, for infinite parallel strips, the MfI bound evaluates to a strictly smaller value than the geometric distance $L_{\mathrm{ren}} \approx 2\ln s$. The bound remains perfectly valid as a strict mathematical inequality, but it is not exactly saturated. This cleanly separates the exact large-separation saturation property inherent to bounded domains from the dimensionally reduced scaling of infinite strips.

\bibliographystyle{JHEP}
\bibliography{refs}

\end{document}